\newtheorem{thm}{Theorem}
\newtheorem*{thm*}{Theorem}
\newtheorem{letterthm}{Theorem} 
\newtheorem{lem}[thm]{Lemma}
\newtheorem{cor}[thm]{Corollary}
\newtheorem{prop}[thm]{Proposition}
\newtheorem*{prob*}{Problem}
\newtheorem{defin}[thm]{Definition}
\theoremstyle{definition}
\newtheorem{rem}[thm]{Remark}
\newtheorem{exa}[thm]{Example}
\theoremstyle{remark}
\newtheorem*{exa*}{Example}
\newtheorem*{rem*}{Remark}
\newcommand*{\1}{\text{\usefont{U}{bbold}{m}{n}1}}
\renewcommand{\bar}[1]{\overline{#1}}
\newcommand\CC{\mathbb C}
\newcommand\D{\mathcal D}
\newcommand\eps\varepsilon
\renewcommand\Im{\operatorname{Im}}
\renewcommand\H{\mathcal H}
\newcommand\NN{\mathbb N}
\def\placeholder{\,\cdot\,}
\newcommand\RR{\mathbb R}
\newcommand\restrictedto\upharpoonright
\renewcommand\Re{\operatorname{Re}}
\newcommand\oo\infty
\newcommand\ox\otimes
\newcommand\ZZ{\mathbb Z}
\DeclareMathOperator{\Ran}{Ran}
\NewDocumentCommand\TC{o}{{\IfNoValueTF{#1}{\mathcal{T}(\mathcal{H})}{\mathcal{T}(#1)}}}
\newcommand\boundedoperatorsymbol{{\mathcal B}}
\NewDocumentCommand\BO{mo}{ \IfNoValueTF{#2}{\boundedoperatorsymbol(#1)}{\boundedoperatorsymbol(#1,#2)}}
\def\up#1{^{(#1)}}
\let\mc\mathcal
\newcommand\mat[1]{ \begin{pmatrix} #1 \end{pmatrix} }
\renewcommand\limsup\varlimsup
\renewcommand\liminf\varliminf
\DeclareMathOperator\id{id}
\DeclareMathOperator{\Sp}{Sp}
\newcommand*\quotient[2]{{^{\textstyle #1}\big/_{\textstyle #2}}}
\let\lim\relax
\NewDocumentCommand\lim{o}{\IfNoValueTF{#1}{\mathop{\textup{lim}}}{{#1}\!-\!\mathop{\textup{lim}}}}
\DeclareFontFamily{U}{matha}{\hyphenchar\font45}
\DeclareFontShape{U}{matha}{m}{n}{ <-6> matha5 <6-7> matha6 <7-8> matha7 <8-9> matha8 <9-10> matha9 <10-12> matha10 <12-> matha12 }{}
\DeclareSymbolFont{matha}{U}{matha}{m}{n}
\DeclareFontFamily{U}{mathx}{\hyphenchar\font45}
\DeclareFontShape{U}{mathx}{m}{n}{ <-6> mathx5 <6-7> mathx6 <7-8> mathx7 <8-9> mathx8 <9-10> mathx9 <10-12> mathx10 <12-> mathx12 }{}
\DeclareSymbolFont{mathx}{U}{mathx}{m}{n}
\DeclareMathDelimiter{\vvvert} {0}{matha}{"7E}{mathx}{"17}%
\DeclarePairedDelimiterX{\normiii}[1]{\vvvert}{\vvvert} {\ifblank{#1}{\:\cdot\:}{#1}}
\DeclarePairedDelimiter\paren\lparen\rparen
\DeclarePairedDelimiter\bracks\lbrack\rbrack
\DeclarePairedDelimiterX\norm[1]\lVert\rVert{\ifblank{#1}{\placeholder}{#1}}
\DeclarePairedDelimiter\abs\lvert\rvert
\DeclarePairedDelimiter\braces\lbrace\rbrace
\DeclarePairedDelimiterX\ip[2]{\langle}{\rangle}{#1 , #2}
\newcommand\qandq{\quad\text{and}\quad}
\DeclarePairedDelimiter\ket\vert\rangle
\DeclarePairedDelimiter\bra\langle\vert
\DeclarePairedDelimiterX\braket[2]\langle\rangle{ #1 \delimsize\vert #2}
\DeclarePairedDelimiterX\brAAket[3]\langle\rangle{ #1 \delimsize\vert #2 \delimsize\vert #3 }
\DeclarePairedDelimiterX\ketbra[2]\vert\vert{ #1 \delimsize\rangle\delimsize\langle #2 }
\DeclarePairedDelimiterX\kettbra[1]\vert\vert{ #1 \delimsize\rangle\delimsize\langle #1 }
\providecommand\given{}  
\newcommand{\SetSymbol}[1][]{ \nonscript\ #1\vert \allowbreak \nonscript\ \mathopen{} } 
\DeclarePairedDelimiterX{\set}[1]\{\}{ \renewcommand\given{\SetSymbol[\delimsize]} #1 }
\newcommand\case[1]{{\begin{cases} #1 \end{cases}}}
\newcommand\hide[1]{}
\def\1{{\mathbf1}}
\let\idty\1
\def\Nl{{\mathbb N}}\def\Cx{{\mathbb C}}
\DeclareMathOperator\sym{sym}
\def\flow{{\mathcal F}}
\renewcommand\AA{{\mathcal A}}
\renewcommand\limsup\varlimsup
\renewcommand\liminf\varliminf
\newcommand\dom[1]{D(#1)}
\newcommand\A{\mathcal A}
\newcommand\fA{\mathfrak A}
\renewcommand\L{\mathcal L}     
\renewcommand\D{{\mathcal D}}
\newcommand\cE{{\mathcal E}}
\newcommand\fF{{\mathfrak F}}
\newcommand\fS{{\mathfrak S}}
\newcommand\C{\mathscr{C}}      
\newcommand\nets{{\mathbf{N}}}  
\def\oo{\infty}
\let\d\relax
\def\blob{{\hspace{0.0pt}{\mathbin{\vcenter{\hbox{\scalebox{0.5}{$\bullet$}}}}}\hspace{+1pt}}}
\def\d{_{\blob{\hspace{-1pt}}}} 
\def\j#1#2{j_{#1#2}}
\def\jt{\ensuremath{\j{}{}}}   \def\jts{\ensuremath{\j{}{}^*}}   
\def\jlim{\jt\mkern-1mu\hbox{-}\mkern-3mu\lim}
\def\jslim{\jt\mkern-.5mu\hbox{$^*$}\mkern-3mu\hbox{-}\mkern-3mu\lim}
\def\ojt{{\ensuremath{\tilde\jmath}}}
\def\oj#1#2{\ojt_{#1#2}}
\def\ojlim{\ojt\mkern-0mu\hbox{-}\mkern-3mu\lim}
\DeclareMathOperator\LIM{\mathop{LIM}}
\def\wslim{w^*\mkern-3mu\hbox{-}\mkern-3mu\lim}
\newcommand\vep{\varepsilon}
\newcommand\h{_\hbar}
\newcommand\hp{_{\hbar'}}
\newcommand\hb{\ensuremath{\hbar}}
\newcommand\hbp{\ensuremath{\hbar'}}
\renewcommand\TC{{\mathfrak T}}
\newcommand\TT{{\mathcal T}}
\newcommand\LL{{\mathcal L}}
\newcommand\R{{\mathcal R}}
\newcommand\MM{{\mathbb M}}
\newcommand\T{{\mathbb T}}
\newcommand\CAR{\textup{CAR}}
\newcommand\sign{\textup{sign}}
\newcommand\Ad{\textup{Ad}}
\newcommand\fh{{\mathfrak h}}
\newcommand\ltwo{\ell^{2}}
\DeclarePairedDelimiterX\dual[2]{\langle}{\rangle}{#1 , #2}     
\let\seminorm\normiii
\title{Convergence of Dynamics \\ on Inductive Systems of Banach Spaces}
\author{Lauritz van Luijk, Alexander Stottmeister, Reinhard F. Werner}
\date{
\small
    Institut f\"ur Theoretische Physik, Leibniz Universit\"at Hannover, \\
    Appelstraße 2, 30167 Hannover, Germany
\\[2ex]\today
}
\begin{document}

\maketitle

\begin{abstract}
    \noindent
Many features of physical systems, both qualitative and quantitative, become sharply defined or tractable only in some limiting situation. Examples are phase transitions in the thermodynamic limit, the emergence of classical mechanics from quantum theory at large action, and continuum quantum field theory arising from renormalization group fixed points. It would seem that few methods can be useful in such diverse applications. However, we here present a flexible modeling tool for the limit of theories, soft inductive limits, constituting a generalization of inductive limits of Banach spaces. In this context, general criteria for the convergence of dynamics will be formulated, and these criteria will be shown to apply in the situations mentioned and more.
\end{abstract}

\tableofcontents

\section{Introduction}\label{sec:intro}

Many features of physical theories become really clear only in some limiting situation. Quite often, the limit is not just the limit of some parameters in a fixed framework, but the structure of the theory changes in the limit. For example, sharp phase transitions appear only in an infinite volume limit, quantum theory becomes classical in the limit $\hb\to0$, or a quantum field theory is defined after removing a length-scale cutoff $\vep\to0$. In such cases, one has to define carefully what the limit means, and it would seem that the required techniques differ significantly from case to case.
However, it turns out that for the limit of dynamical evolutions, there is a common technical core, an abstract limit theorem, hereafter called the \emph{evolution theorem}, that considerably simplifies and unifies the proofs of limit theorems in quite diverse settings, from the classical and mean field limits to the thermodynamic limit and renormalization-group limits. This abstract evolution theorem is the topic of the current paper and will be illustrated by salient examples.

The structural limits are essentially obtained via inductive-limit constructions for a sequence of Banach spaces that describe either states or observables of the approximating systems. Convergent sequences have an element in each of these spaces.
It will be convenient to generalize the notion of inductive limits further and tolerate norm-small deviations in a convergent sequence.  Hence the limits will be ``in norm'' in the sense that sequences whose norm difference goes to zero have the same limit. The resulting \emph{soft inductive limits} then generalize the completion construction.

The essential technical notion underlying soft inductive limits is a generalization of the concept of Cauchy sequences to inductive limits.
For an inductive system $(E,j)$ of Banach spaces $E_n$ with connecting map $\j nm$, we coin this notion \emph{\jt-convergence}.
It relies on the fact that the connecting maps $\j nm$ allow for a notion of distance for elements in different approximating spaces.
We construct the limit space $E_{\oo}$ as the quotient of the Banach space of convergent sequences with respect to null sequences, which is automatically complete. Thereby, we obtain an explicit description for all its elements, in contrast with the standard inductive limits where limit space arises as the completion of $\bigcup_n E_n$.

This is particularly beneficial in the context of the evolution theorem for dynamics on the limit space when the latter does not happen to be uniformly continuous, as this requires a discussion of unbounded operators and their domains.
The main realization behind the evolution theorem is that central notions of the semigroup theory on Banach spaces have a version for sequences in an inductive limit.
This splits the problem into showing that, on the one hand, the generators have a dense set of convergent sequences in their domain, allowing the definition of the generator in the limit space, and, on the other hand, the resolvent of this limit operator has dense range. For this second step, one only needs to work in the limit space, which is often easier than tracking properties of the approximating systems. Informally the evolution theorem can be stated as follows:
\begin{letterthm}[The evolution theorem, cf.~\cref{thm:evolution}] \label{thm:A}
Given a (soft) inductive system $(E,\jt)$ along with approximating dynamics $T_{n}(t)$ admitting generators $A_{n}$. Then we have the following equivalent characterizations of the dynamics on the limit space $E_{\oo}$:
    \begin{enumerate}[(1)]
     	\item
            The approximating dynamics $T_{n}(t)$ preserves \jt-convergence, and the resulting limit dynamics $T_\oo(t)$ is strongly continuous in $t$.
        \item
            The resolvents $R_n(\lambda)=(\lambda-A_n)^{-1}$ preserve \jt-convergence, and the resulting limit operators $R_\oo(\lambda)$ have dense range.
        \item
            There is a dense subspace of \jt-convergent sequences $\D$ such that $(\lambda-A_{n})\D$ is also a dense subspace of \jt-convergent sequences.
        \item
            The limit generator $A_\oo$ is well-defined and generates a strongly continuous dynamics $T_{\oo}(t)$.
    \end{enumerate}
We conclude that the limit dynamics $T_\oo(t)$ is a strongly continuous one-parameter semigroup with generator $A_\oo$. The latter's domain is obtained by acting with the limit resolvent $R_{\oo}(\lambda)$ on \jt-convergent sequences.
\end{letterthm}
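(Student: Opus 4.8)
The plan is to prove the four conditions equivalent via the cycle (1) $\Rightarrow$ (2) $\Rightarrow$ (3) $\Rightarrow$ (4) $\Rightarrow$ (1), working throughout in the complete Banach space $E_\oo$, the quotient of the $\jt$-convergent sequences by the null sequences; the whole argument is an inductive-limit refinement of the Trotter--Kato approximation theorem. Two structural facts will be used repeatedly. First, a uniformly bounded family of operators that preserves $\jt$-convergence on a dense set of $\jt$-convergent sequences already preserves it everywhere and descends to a bounded operator on $E_\oo$; this follows from completeness of the space of $\jt$-convergent sequences. Second, for a $\jt$-convergent sequence $z=(z_n)_n$ the limit-space norm satisfies $\norm{[z]}_{E_\oo}=\limsup_n\norm{z_n}$, which lets me convert $\limsup$-estimates over the index $n$ into genuine norm estimates in $E_\oo$. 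I will also invoke the uniform growth bound $\norm{T_n(t)}\le Me^{\omega t}$ from the hypotheses of \cref{thm:evolution}, which yields $\norm{R_n(\lambda)^k}\le M(\Re\lambda-\omega)^{-k}$ uniformly in $n$ for $\Re\lambda>\omega$.

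For (1) $\Rightarrow$ (2) I write the resolvent as a Laplace transform $R_n(\lambda)x_n=\int_0^\oo e^{-\lambda t}T_n(t)x_n\,dt$ for $\Re\lambda>\omega$. For $\jt$-convergent $x=(x_n)_n$ the integrand is $\jt$-convergent for every $t$ and is dominated, uniformly in $n$, by the integrable function $t\mapsto Me^{(\omega-\Re\lambda)t}\sup_n\norm{x_n}$; interchanging the $\jt$-limit with the integration --- justified by completeness of the space of $\jt$-convergent sequences --- shows $(R_n(\lambda)x_n)_n$ to be $\jt$-convergent with $\jt$-limit $\int_0^\oo e^{-\lambda t}T_\oo(t)[x]\,dt$, which is the resolvent of the generator $A_\oo$ of $T_\oo(t)$; hence $R_\oo(\lambda)=(\lambda-A_\oo)^{-1}$ has range $D(A_\oo)$, which is dense. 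For (2) $\Rightarrow$ (3) I fix $\Re\lambda>\omega$, pick a dense subspace $\D_0$ of $\jt$-convergent sequences, and set $\D=\{(R_n(\lambda)x_n)_n:(x_n)_n\in\D_0\}$. Since $R_n(\lambda)$ preserves $\jt$-convergence, $\D$ consists of $\jt$-convergent sequences lying in $D(A_n)$, and its image in $E_\oo$ is $R_\oo(\lambda)$ applied to a dense set, hence dense because $R_\oo(\lambda)$ has dense range. Finally $(\lambda-A_n)\D=\D_0$ by the identity $(\lambda-A_n)R_n(\lambda)=\id$, so $(\lambda-A_n)\D$ is dense as well.

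For (3) $\Rightarrow$ (4) I first upgrade (3) to a resolvent statement: the uniformly bounded $R_n(\lambda)$ maps the dense $\jt$-convergent set $(\lambda-A_n)\D$ onto $\D$, so by the extension principle it preserves $\jt$-convergence everywhere and induces $R_\oo(\lambda)\in\BO{E_\oo}$, whose range contains the dense image of $\D$. The family $\{R_\oo(\lambda)\}$ inherits both the resolvent identity and the bounds $\norm{R_\oo(\lambda)^k}\le M(\Re\lambda-\omega)^{-k}$ in the limit. A pseudo-resolvent with dense range and these Hille--Yosida bounds is the resolvent of a unique densely defined operator generating a strongly continuous semigroup; this defines $A_\oo$ with $R_\oo(\lambda)=(\lambda-A_\oo)^{-1}$, produces $T_\oo(t)$, and identifies $D(A_\oo)=\Ran R_\oo(\lambda)$ --- the domain obtained by applying $R_\oo(\lambda)$ to $\jt$-limits. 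In particular $A_\oo$ is single-valued with no separate closability discussion, being recovered from the injective $R_\oo(\lambda)$.

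The main obstacle is (4) $\Rightarrow$ (1), the reconstruction of the dynamics from the generator, which I handle by a Yosida approximation made uniform in the inductive index. Put $A_n^{(\mu)}=\mu^2R_n(\mu)-\mu\id$ and $T_n^{(\mu)}(t)=e^{tA_n^{(\mu)}}$; being norm-convergent power series in the $\jt$-preserving operators $R_n(\mu)$, each $T_n^{(\mu)}(t)$ preserves $\jt$-convergence, with limit the Yosida semigroup of $A_\oo$. For $x=(x_n)_n\in\D$, so that $(A_nx_n)_n$ is $\jt$-convergent, the standard comparison estimate gives
\begin{equation*}
\limsup_n\,\norm{T_n(t)x_n-T_n^{(\mu)}(t)x_n}\le tM^2e^{\omega t}\,\norm{(\mu R_\oo(\mu)-\id)[A_\oo x]}_{E_\oo},
\end{equation*}
where the crucial passage from the uniform-in-$n$ left-hand side to the $E_\oo$-norm on the right uses $\limsup_n\norm{z_n}=\norm{[z]}_{E_\oo}$ together with the fact that $\mu R_n(\mu)-\id$ preserves $\jt$-convergence. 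Since $\mu R_\oo(\mu)\to\id$ strongly on $E_\oo$, the right-hand side tends to $0$ as $\mu\to\oo$, uniformly for $t$ in compacts. Thus $(T_n(t)x_n)_n$ is a uniform-in-$n$ limit of $\jt$-convergent sequences and hence itself $\jt$-convergent; the extension principle propagates this from the dense set $\D$ to all $\jt$-convergent sequences, and the resulting limit dynamics coincides with the semigroup generated by $A_\oo$ because the two share the resolvent $R_\oo(\lambda)$. This yields (1) and, together with the previous steps, the stated conclusion that $T_\oo(t)$ is a strongly continuous one-parameter semigroup with generator $A_\oo$ whose domain is $R_\oo(\lambda)$ applied to the $\jt$-convergent sequences.
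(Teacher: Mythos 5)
Your overall strategy is sound and stays fairly close to the paper's: both arguments are Trotter--Kato-type proofs routed through the resolvents, and your steps (1)$\Rightarrow$(2), (2)$\Rightarrow$(3) and the first half of (3)$\Rightarrow$(4) essentially reproduce the paper's (1)$\Rightarrow$(2$'$), (4)$\Rightarrow$(3) and (3)$\Rightarrow$(2). The genuine methodological difference lies in recovering the dynamics from the resolvents: the paper uses the Kato/Chernoff approximation bound controlled by $\norm{A_n^2x_n}$ on the subspace $R\d(\lambda)\dom{A\d}$ together with the Lumer--Phillips theorem (dissipativity plus dense range for a single $\lambda$), whereas you use Yosida approximants and Hille--Yosida bounds for a ray of $\lambda$'s. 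Both routes work in principle, but yours creates obligations that the written argument does not discharge.

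The concrete gap is in (4)$\Rightarrow$(1). You build $T_n^{(\mu)}(t)$ as a power series "in the \jt-preserving operators $R_n(\mu)$" and later identify $\limsup_n\norm{(\mu R_n(\mu)-\1)A_nx_n}$ with $\norm{(\mu R_\oo(\mu)-\id)A_\oo x_\oo}$; both steps presuppose statement (2) (that $R\d(\mu)$ is \jt\jt-convergent and that its limit is the resolvent of $A_\oo$), which is not available under hypothesis (4) in your cycle — invoking it here is circular. The repair is short but must be made explicit: given $y\d\in\C(E,j)$, pick $x\d\in\dom{A\d}$ with $(\lambda-A_\oo)x_\oo=y_\oo$; then $R\d(\lambda)y\d=x\d+R\d(\lambda)\bigl(y\d-(\lambda-A\d)x\d\bigr)$, and the second term is a null net by the uniform bound on the resolvents, so $R\d(\lambda)$ preserves \jt-convergence with $R_\oo(\lambda)=(\lambda-A_\oo)^{-1}$ (equivalently, first prove (4)$\Rightarrow$(3) as the paper does and reuse your own (3)$\Rightarrow$(4) argument). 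Two smaller omissions of the same kind: in (3)$\Rightarrow$(4) your extension principle gives \jt\jt-convergence of $R\d(\lambda)$ only for the one $\lambda$ appearing in (3), while the pseudoresolvent/Hille--Yosida argument needs it on a ray — this requires the Neumann-series step the paper carries out in (2)$\Rightarrow$(2$'$); and you must still verify that the operator recovered from the pseudoresolvent coincides with the net-limit operator $A_\oo$ of \cref{def:unbounded} (in particular that $A_\oo$ is well-defined), which the paper checks by an explicit computation on null nets.
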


The density mentioned in the third item is with respect to a natural seminorm topology on \jt-convergent nets.
Another important aspect is the observation that soft inductive limits are categorically well-behaved:
If the inductive system of Banach spaces carries additional structure which is respected by the dynamics, the same will hold in the limit.
As an example, consider an inductive system where the Banach spaces are C*-algebras, and the connecting maps are asymptotically multiplicative and completely positive so that the limit space is again a C*-algebra.
If the dynamics is completely positive at every scale and convergent then the limiting dynamics will also be completely positive.

Besides their relevance in physics, inductive limits are a major constructive tool in mathematics. In particular, in the theory of operator algebras, various interesting objects can be constructed from simple building blocks using inductive limits \cite{blackadar2006operator}. For example, restricting to matrix algebras as approximating objects leads to the class of AF algebras and hyperfinite factors.
To allow for greater flexibility, generalized inductive limits have been proposed \cite{blackadar1997generalized} emphasizing the importance of asymptotic concepts to define the limit object, thereby allowing for the construction of NF algebras. Only recently, it has been shown that the asymptotic properties of generalized inductive systems can be further relaxed, allowing for all separable nuclear C*-algebras to be characterized by an inductive limit construction \cite{courtney2023completely, courtney2023nuclearity}.
\\

Our paper is organized as follows:
In \cref{sec:strict}, we introduce and discuss the notion of \jt-convergence, which is a Cauchy-type criterion for sequences in an inductive system, and use it to construct the limit space.
This section deals with inductive systems of Banach spaces, i.e., systems where the transitivity relation $\j nm\j ml=\j nl$ holds exactly.
We show in \cref{sec:soft} that one can relax this criterion to an asymptotic version, which still allows for constructing the limit space using \jt-convergent sequences.
This relaxed transitivity is the defining property of soft inductive systems.
Instead of introducing the concepts of soft inductive systems and that of \jt-convergence at once, we split them into separate chapters for pedagogical reasons.
In \cref{sec:operations}, we analyze the convergence of sequences of operations between inductive systems leading to operations between the limit spaces.
This is necessary as we are ultimately interested in the convergence of dynamics on inductive systems, which is considered in \cref{sec:dynamics}.
Because of its central importance to our applications, we explicitly consider dynamics and \jt-convergence for (soft) inductive systems of C*-algebras with completely positive connecting maps in \cref{sec:cp}.
To illustrate our results, we discuss in detail the four examples which motivated us to analyze the convergence of dynamics abstractly in \cref{sec:ex} and illustrate how our abstract evolution theorem helps in concrete situations.
In \cref{sec:comparison}, we conclude by comparing our evolution theorem with a result on the convergence of dynamics in limits of Banach spaces by Kurtz, our notion of \jt-convergence with continuous fields of Banach spaces, and soft inductive limits of C*-algebras with generalized inductive limits of C*-algebras.
For the convenience of the reader, we discuss the convergence of implemented dynamics in GNS representations in \cref{sec:appendix_implementors} and interchangeability of  inductive limits and Lie-Trotter limits of convergent dynamics in \cref{sec:trotter}.
\\

\section{Convergent nets in inductive systems}\label{sec:strict}

An inductive system $(E,j)$ of Banach spaces over a directed set $(N,\le)$ is a collection $\set{E_n}_{n\in N}$ of Banach spaces together with connecting maps $\set{\j nm}_{n<m}$ which are linear contractions $\j nm : E_m \to E_n$, $\norm{\j nm}\leq 1$, whenever $n>m$, such that
\begin{equation}\label{eq:strict}
    \j nl = \j nm \circ \j ml, \quad n>m>l.
\end{equation}
For every inductive system, there is a limit space $E_\oo$ and a net of contractions $\j\oo n: E_n\to E_\oo$ such that $\j\oo m =  \j\oo n\circ \j nm$ whenever $n>m$.
If the $\j nm$ are isometric, it may be constructed by completing the union $\bigcup_n E_n$ with respect to its natural norm.
We will, however, discuss a different construction of the limit space in terms of convergent nets.
This construction has several advantages. For example, it offers a direct description of every element of $E_\oo$ (no completion is necessary).

A net in $(E,j)$ is a net $(x_n)_{n\in N}$ of elements $x_n\in E_n$. We often denote nets in $(E,j)$ by $x\d$.
We denote the space of uniformly bounded nets by $\nets(E,j)$ and equip it with the norm
\begin{equation}\label{eq:sup_norm}
    \norm{x\d}_\nets \coloneqq \sup_n \ \norm{x_n}.
\end{equation}
It is easy to see that $(\nets(E,j),\norm{}_\nets)$ is again a Banach space. In fact, it is nothing but the Banach space product $\Pi_{n\in N} E_n$.
We further equip $\nets(E,j)$ with the following seminorm
\begin{equation}\label{eq:seminorm}
    \seminorm{x\d} \coloneqq \limsup_n \norm{x_n},
\end{equation}
where $\limsup_n$ denotes the limit superior along the directed set $N$.

We define $\j nn$ as the identity map on $E_n$ for all $n\in N$, and we set $\j nm = 0$ whenever $n \ngeq m$.
Given some $m\in N$ and some $x_{m}\in E_{m}$, we define a uniformly bounded net $\j{\blob} m x_m$ in an obvious way, i.e., at index $n$ it is equal to $\j nm x_m$.
We refer to such nets as {\bf basic nets}.
We say that a net is {\bf \jt-convergent} if it can be approximated in seminorm by basic nets, and we denote the space of \jt-convergent nets by $\C(E,j)$, i.e., $x\d$ is \jt-convergent if
\begin{equation}\label{eq:convergent_nets}
    x\d \in \C(E,j) \coloneqq \Bar{ \set{y\d\in\nets(E,j) \given y\d \text{ is basic }}}^{\seminorm{}}.
\end{equation}
It is straightforward to check that $\C(E,j)$ is a vector space, and we equip it with the topology induced by the seminorm.
Note that this turns $\j\blob m$ into linear contraction from $E_m$ to $\C(E,j)$.

\begin{lem}\label{thm:j_convergence}
    A net $x\d\in \nets(E,j)$ is \jt-convergent if and only if
    \begin{equation}\label{eq:j_convergence}
        \lim_{n\gg m} \norm{x_n - \j nm x_m} = 0,
    \end{equation}
    where $\lim_{n\gg m} =  \lim_m \limsup_n$. If $x\d$ is \jt-convergent then the limit $\lim_n \norm{x_n}$ exists (hence is equal to $\seminorm{x\d}$).
\end{lem}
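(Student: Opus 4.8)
The plan is to read condition \eqref{eq:j_convergence} as the statement that the basic nets $\j{\blob}m x_m$ converge to $x\d$ in the seminorm \eqref{eq:seminorm}. Indeed, since $\limsup_n$ ignores the convention made for $n\ngeq m$, one has $\seminorm{x\d - \j{\blob}m x_m} = \limsup_n \norm{x_n - \j nm x_m}$, so that \eqref{eq:j_convergence} is precisely the assertion $\lim_m \seminorm{x\d - \j{\blob}m x_m} = 0$. With this reformulation the implication from \eqref{eq:j_convergence} to \jt-convergence is immediate: the net $\bigl(\j{\blob}m x_m\bigr)_m$ of basic nets converges to $x\d$ in seminorm, so $x\d$ lies in the seminorm-closure \eqref{eq:convergent_nets}.

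For the converse I would start from an arbitrary basic approximant: given $\eps>0$, \jt-convergence furnishes some $k\in N$ and $y_k\in E_k$ with $\seminorm{x\d - \j{\blob}k y_k}<\eps$. The key structural input is the exact transitivity \eqref{eq:strict}: for $m\ge k$ and $n\ge m$ it gives $\j nk y_k = \j nm \j mk y_k$, whence
\[
    \norm{x_n - \j nm x_m} \le \norm{x_n - \j nk y_k} + \norm{\j nm\bigl(\j mk y_k - x_m\bigr)} \le \norm{x_n - \j nk y_k} + \norm{x_m - \j mk y_k},
\]
using $\norm{\j nm}\le1$ in the last step. Taking $\limsup_n$ bounds $\limsup_n\norm{x_n - \j nm x_m}$ by $\seminorm{x\d - \j{\blob}k y_k} + \norm{x_m - \j mk y_k}$. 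The first summand is $<\eps$ and independent of $m$; the second is the $m$-th entry of the net $x\d - \j{\blob}k y_k$, whose $\limsup$ is $<\eps$, so it is $<\eps$ for all $m$ beyond some $m_1$. Choosing, by directedness, an $m_0\ge k$ with $m_0\ge m_1$ shows $\limsup_n\norm{x_n - \j nm x_m}<2\eps$ for all $m\ge m_0$, i.e.\ $\lim_{n\gg m}\norm{x_n-\j nm x_m}=0$.

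For the final assertion, the main observation is a monotonicity: for fixed $m$ the map $n\mapsto\norm{\j nm x_m}$ is non-increasing on $\set{n\ge m}$, because $\j{n'}m = \j{n'}n\,\j nm$ and $\norm{\j{n'}n}\le1$ give $\norm{\j{n'}m x_m}\le\norm{\j nm x_m}$ for $n'\ge n\ge m$; hence $\lim_n\norm{\j nm x_m}$ exists. Now fix $\eps>0$ and, using the equivalence just proved, pick $m$ with $\limsup_n\norm{x_n-\j nm x_m}<\eps$, so that $\norm{x_n-\j nm x_m}<\eps$ for all $n$ past some $n_0\ge m$. The reverse triangle inequality then pins $\norm{x_n}$ to within $\eps$ of $\norm{\j nm x_m}$ for $n\ge n_0$, and letting $n\to\oo$ forces $\limsup_n\norm{x_n}-\liminf_n\norm{x_n}\le 2\eps$. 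As $\eps$ was arbitrary, $\lim_n\norm{x_n}$ exists and equals $\limsup_n\norm{x_n}=\seminorm{x\d}$.

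I expect the only real friction to be bookkeeping rather than ideas: carefully unwinding the iterated limit $\lim_m\limsup_n$, repeatedly passing from a seminorm ($\limsup$) bound on a whole net to an eventual pointwise bound on its entries, and invoking directedness of $N$ to find common upper bounds of the finitely many threshold indices. The one genuinely structural ingredient is the strict transitivity \eqref{eq:strict}, which is exactly what lets a single basic approximant at level $k$ be pushed forward to approximants at all higher levels $m$; I would expect this to be the step that must be revisited once that relation is only required to hold asymptotically in \cref{sec:soft}.
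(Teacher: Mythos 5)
Your proof is correct. For the main equivalence it is essentially the paper's argument: the forward direction is the observation that \eqref{eq:j_convergence} says the basic nets $\j\blob m x_m$ approximate $x\d$ in seminorm, and the converse is a three-term triangle inequality around a single basic approximant $\j\blob k y_k$. The one substantive difference is that you discharge the middle term by invoking the exact transitivity $\j nk=\j nm\j mk$ pointwise, whereas the paper only uses the seminorm statement $\lim_m\seminorm{y\d-\j\blob m y_m}=0$ for the basic net $y\d=\j\blob k y_k$ --- which is the form of the hypothesis that survives in the soft setting of \cref{sec:soft}, and is why the paper's proof transfers verbatim there while yours needs the revision you correctly anticipate. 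For the final assertion your route is genuinely different: you first prove that $n\mapsto\norm{\j nm x_m}$ is non-increasing (again via strict transitivity) so that it converges, and then squeeze $\norm{x_n}$ against it; the paper instead gets it in one line from $\norm{x_n-\j nm x_m}\ge\norm{x_n}-\norm{x_m}$, which yields $0\ge\limsup_n\norm{x_n}-\liminf_m\norm{x_m}$ after applying $\lim_{n\gg m}$. The paper's version uses only contractivity of the $\j nm$, so it too is soft-system-proof; your monotonicity observation is a nice extra fact about strict systems but is not available once transitivity is only asymptotic. Both arguments are complete and correct as written for the strict case treated in this section.
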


It is clear that basic sequences $x\d =  \j\blob m x_m$ always satisfy \eqref{eq:j_convergence} since the norm difference becomes zero for all $n\geq m$.
We can rewrite \eqref{eq:j_convergence} in terms of the seminorm as
\begin{equation}\label{eq:j_convergence_seminorm}
    \lim_m \seminorm{x\d -\j\blob m x_m} = 0.
\end{equation}
We will later see that there is another equivalent definition provided that the connecting maps are (asymptotically) isometric.

\begin{proof}
    \cref{eq:j_convergence_seminorm} already proves that nets satisfying \eqref{eq:j_convergence} can be approximated by basic nets.
    The converse follows from the triangle inequality:
    For $\eps>0$ pick $y_l\in E_l$ such that $\seminorm{x\d - y\d}  < \eps$ where $y\d= \j\blob l y_l$.
    \begin{align*}
        \lim_m \seminorm{x\d - \j\blob m x_m}
        &\le \lim_m \paren[\Big]{\seminorm{x\d - y\d}+\seminorm{y\d - \j\blob m y_m} + \seminorm{\j\blob m (x_m - y_m) }}  \\
        &\le \eps + 0 + \lim_m \seminorm{x_m - y_m } =  \eps.
    \end{align*}
    To see that the limit of the norms exists, note that $\liminf_n \norm{x_n} \ge \limsup_n \norm{x_n}$ follows from
    \[
        0 =  \lim_{n\gg m} \norm{x_n - \j nm x_m} \ge \lim_{n\gg m} (\norm{x_n} - \norm{x_m}) =  \limsup_n \norm{x_n} - \liminf_m \norm{x_m}. \qedhere
    \]
\end{proof}

We say that a net $x\d$ is a (\jt-){\bf null net}, if $\seminorm{x\d}= \lim_n \norm{x\d} = 0$ and we denote the subspace of null nets by $\C_0(E,j)$.
Since the zero net $0$ is basic and since $\seminorm{x\d-0}= \seminorm{x\d}=0$, all null nets are \jt-convergent.
Two nets $x\d$ and $y\d$ have vanishing seminorm distance if and only if $(x\d - y\d)$ is a null net.

Since the subspace $\C_0(E,j)$ is the preimage of $\set0$ under the seminorm, it is a closed subspace of $\C(E,j)$ and we can consider the quotient space
\begin{equation}\label{eq:limit_space}
    E_\infty \coloneqq \quotient{\C(E,j)}{\C_0(E,j)}.
\end{equation}
We will show that $E_\oo$ is the limit space of the inductive system $(E,j)$, and we will refer to the equivalence class of a \jt-convergent net $x\d$ as its (\jt-)limit.
The natural projection onto the quotient will be deoted by $\jlim: \C(E,j)\to E_\oo$ and we write
\begin{equation}\label{eq:jlimit_notation}
    \jlim_n x_n \coloneqq \jlim x\d \in E_\oo, \quad x\d \in \C(E,j).
\end{equation}
To keep notation concise, we often denote \jt-limit of a net $x\d$ simply by $x_\oo$.
The quotient structure induces a map $\j\oo m$ from $E_m$ to $E_\oo$ by assigning to $x_m$ the equivalence class of the basic sequence $\j\blob m x_m$, i.e.,
 \begin{equation}\label{eq:embedding_into_limit}
    \j\oo m x_m = \jlim_n \j nm x_m.
\end{equation}
Equivalently, $\j\oo m \coloneqq \jlim\circ \j\blob m$.
The seminorm induces a norm on $E_\oo$, namely
\begin{equation}\label{eq:limit_space_norm}
    \norm{x_\oo} = \seminorm{x\d}, \quad x\d\in\C(E,j).
\end{equation}
The next result states that $E_\oo$ is a Banach space and satisfies a property which
\begin{prop}
    $E_\oo$ is a Banach space with the norm defined in \eqref{eq:limit_space_norm}.
    The pair $(E_\oo,\j\oo\blob)$ is uniquely determined up to isometric isomorphism by the following universal property:

    Let $F$ be a Banach space and let $T_n:E_n \to F$ be a net of contractions such that $T_m =  T_n\j nm $ whenever $n>m$, then there is a contraction $T_\oo: E_\oo \to F$ such that $T_\oo \j\oo n =  T_n$ for all $n$.
\end{prop}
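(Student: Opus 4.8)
The plan is to prove the three assertions in turn: that $E_\oo$ is complete, that it has the stated factorization property, and that this property pins down $(E_\oo,\j\oo\blob)$ up to isometric isomorphism. Because $\C_0(E,j)$ is exactly the kernel of the seminorm, the quotient norm \eqref{eq:limit_space_norm} is automatically a genuine norm, so the only substantive point in the first assertion is completeness. I would establish this through the absolutely-convergent-series criterion. Given classes $[x^{(k)}\d]$ with $\sum_k \seminorm{x^{(k)}\d}<\oo$, the key preliminary move is to replace each representative by a $\jt$-equivalent net with a good sup-norm bound: since \cref{thm:j_convergence} guarantees that $\lim_n\norm{x^{(k)}_n}$ exists and equals $c_k \coloneqq \seminorm{x^{(k)}\d}$, setting the net equal to $0$ at all indices below some sufficiently large $n_k$ changes it only by a null net (the difference vanishes on a tail, so its seminorm is $0$) while enforcing $\norm{x^{(k)}_n}\le c_k+2^{-k}$ for every $n$. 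With the sup-norms now summable, the coordinatewise series $y_n\coloneqq\sum_k x^{(k)}_n$ converges in each Banach space $E_n$ and defines a uniformly bounded net $y\d$.

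It then remains to check that $y\d$ represents the sum. The partial sums $S_K\d=\sum_{k\le K}x^{(k)}\d$ lie in the vector space $\C(E,j)$ and satisfy $\seminorm{S_K\d-y\d}\le\sum_{k>K}(c_k+2^{-k})\to0$; since $\C(E,j)$ is seminorm-closed by its very definition as a closure, we get $y\d\in\C(E,j)$ and $[S_K\d]\to[y\d]$ in $E_\oo$, so $E_\oo$ is complete. For the factorization property, given a compatible net of contractions $T_n\colon E_n\to F$, I would define $T_\oo(\jlim x\d)\coloneqq\lim_n T_n x_n$. The crucial point is convergence in the complete space $F$: approximating $x\d$ in seminorm by a basic net $\j\blob m x_m$ using \eqref{eq:j_convergence_seminorm}, the relation $T_n\j nm=T_m$ collapses the basic part, so for large $n,n'$ one obtains $\norm{T_n x_n-T_{n'}x_{n'}}\le\norm{x_n-\j nm x_m}+\norm{x_{n'}-\j{n'}m x_m}$, which is small; hence $(T_n x_n)_n$ is Cauchy. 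This map is well defined on classes (contractions send null nets to null nets), linear, and contractive, the latter because $\norm{T_\oo(\jlim x\d)}=\lim_n\norm{T_n x_n}\le\lim_n\norm{x_n}=\seminorm{x\d}$ by \cref{thm:j_convergence}; evaluating on basic nets gives $T_\oo\j\oo n=T_n$.

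For uniqueness I would first record that $\bigcup_n\Ran(\j\oo n)$ is dense in $E_\oo$, which is precisely \eqref{eq:j_convergence_seminorm} reread as $\norm{\jlim x\d-\j\oo m x_m}\to0$. Density forces the factoring contraction to be unique, since any two such maps agree on a dense subspace. The uniqueness of $(E_\oo,\j\oo\blob)$ is then the standard initial-object argument: given a second pair $(\hat E_\oo,\hat\jmath_\blob)$ with the same property, the two universal properties produce mutually factoring contractions $\Phi\colon E_\oo\to\hat E_\oo$ and $\Psi\colon\hat E_\oo\to E_\oo$; uniqueness of factoring maps forces $\Psi\Phi=\id_{E_\oo}$ and $\Phi\Psi=\id_{\hat E_\oo}$, and a bijective contraction whose inverse is also a contraction is necessarily isometric.

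I expect the completeness step to be the main obstacle, and within it the passage to coordinatewise-summable representatives. The difficulty is that the seminorm is a $\limsup$ and therefore controls no individual coordinate, so the naive coordinatewise sum of arbitrary representatives need not converge; the argument only works after exploiting the existence of $\lim_n\norm{x^{(k)}_n}$ from \cref{thm:j_convergence} to truncate each net into a uniformly bounded one without leaving its class. The remaining arguments are routine triangle-inequality and universal-property bookkeeping.
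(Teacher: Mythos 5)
Your proof is correct, but the completeness step takes a genuinely different route from the paper's. The paper defers completeness to \cref{sec:soft} and obtains it structurally: it shows that $\C(E,j)$ and $\C_0(E,j)$ are closed subspaces of the Banach space $\nets(E,j)=\prod_n E_n$ in the sup-norm $\norm{}_\nets$ and that the resulting Banach-space quotient norm coincides with the seminorm (\cref{eq:quotient_norm}), so $E_\oo$ inherits completeness from the quotient construction. You instead verify the absolutely-convergent-series criterion directly in $(E_\oo,\seminorm{})$; your truncation of each representative at an index $n_k$ is essentially the same device as the paper's null nets $y\d\up m$ (equal to $-x_n$ for $n\not\ge m$ and $0$ otherwise) used there to prove \cref{eq:quotient_norm}, and your appeal to $\lim_n\norm{x_n\up k}=\seminorm{x\d\up k}$ from \cref{thm:j_convergence} is exactly the right ingredient to make the coordinatewise sum converge. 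What the paper's route buys is the $\norm{}_\nets$-closedness of $\C(E,j)$ and $\C_0(E,j)$ and the norm identity, both reused later; what yours buys is a self-contained argument at this point in the text. The factorization part coincides with the paper's (your Cauchy estimate is theirs, modulo a typo in their displayed inequality). One caveat on uniqueness: the initial-object argument also needs uniqueness of the mediating contraction \emph{out of the competing pair} $(\hat E_\oo,\hat\jmath_\blob)$, i.e.\ density of $\bigcup_n\Ran(\hat\jmath_n)$ in $\hat E_\oo$; with the universal property read as bare existence this can fail (take $\hat E_\oo=E_\oo\oplus\CC$ with $\hat\jmath_n=\j\oo n\oplus 0$). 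This looseness is shared by the proposition as stated in the paper and is only resolved by the more careful formulation in \cref{thm:soft_uniqueness}, so it is a defect of the statement rather than of your proof, but it is worth flagging.
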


The universal property can even be strengthened, under the assumption on $(F,T\d)$ stated in the proposition, $T\d$ is guaranteed to map \jt-convergent nets to Cauchy nets in $F$ and the limit operator satisfies $T_\oo \jlim_n x_n =  \lim_n T_n x_n$.
We stress that completeness means that every element of the limit space arises as the limit of some \jt-convergent sequence.
This is not the case in the standard construction where controlling elements outside the union $\bigcup_n E_n$ is cumbersome. This control is, however, much needed for the discussion of unbounded operators on $E_\oo$.

\begin{proof}
    The proof of completeness of $E_\oo$ will be given under weaker assumptions in \cref{sec:soft}.
    That $T\d$ maps \jt-convergent nets to Cauchy nets in $F$ follows from the estimate $\norm{T_n x_n - \j nm T_mx_m}= \norm{T_n(x_n - \j nmx_m)}\le \norm{x_n -\j nm x_m}$.
    We obtain a well-defined linear contraction $T_\oo:E_\oo\to F$ through $T_\oo \jlim_nx_n=\lim_n T_nx_n$.
    That $\j\oo n T_n = T_\oo \j\oo n$ follows directly from $T_n\j nm = \j nm T_m$.
\end{proof}

In the case of isometric connecting maps, we have the following equivalent definition of \jt-convergence in terms of the maps $\j\oo n$:
\begin{lem}
    Assume that the $\j nm$ are isometric, then a net $x\d$ is \jt-convergent if and only if the net $\j\oo n x_n$ converges in $E_\oo$.
\end{lem}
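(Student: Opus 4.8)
The plan is to reduce both implications to the Cauchy-type criterion \eqref{eq:j_convergence} of \cref{thm:j_convergence}, using a single algebraic identity together with the observation that isometry of the connecting maps propagates to the canonical embeddings $\j\oo n$. First I would record that, under the isometry hypothesis, each $\j\oo n : E_n\to E_\oo$ is itself isometric: from \eqref{eq:embedding_into_limit} and \eqref{eq:limit_space_norm} we have $\norm{\j\oo n x_n}=\seminorm{\j\blob n x_n}=\limsup_k\norm{\j kn x_n}=\norm{x_n}$, since $\norm{\j kn x_n}=\norm{x_n}$ on the cofinal set of indices $k\ge n$ which alone governs the $\limsup_k$. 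Combined with the compatibility $\j\oo m=\j\oo n\circ\j nm$ for $n>m$, this yields the key identity
\[
    \norm{\j\oo n x_n-\j\oo m x_m}=\norm{\j\oo n(x_n-\j nm x_m)}=\norm{x_n-\j nm x_m},\qquad n>m,
\]
which converts norm differences of the embedded net into exactly the quantities controlled by \eqref{eq:j_convergence}.

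For the forward implication, suppose $x\d$ is \jt-convergent with limit $x_\oo=\jlim_k x_k$. Because $\C(E,j)$ is a vector space and $\jlim$ is linear, the difference $\j\oo n x_n-x_\oo$ is the \jt-limit of the net $(\j kn x_n-x_k)_k$, so by \eqref{eq:limit_space_norm} its norm equals $\limsup_k\norm{x_k-\j kn x_n}$ (again only the tail $k\ge n$ matters). Passing to the limit in $n$ yields $\lim_n\norm{\j\oo n x_n-x_\oo}=\lim_n\limsup_k\norm{x_k-\j kn x_n}=\lim_{k\gg n}\norm{x_k-\j kn x_n}$, which vanishes by \cref{thm:j_convergence} since it is the criterion \eqref{eq:j_convergence} up to renaming indices. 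Hence $\j\oo n x_n\to x_\oo$ in $E_\oo$; note this direction uses no isometry.

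For the converse I would invoke that $E_\oo$ is a Banach space, so convergence of $(\j\oo n x_n)$ is equivalent to it being Cauchy. Reading the Cauchy condition through the key identity, it states precisely that $\norm{x_n-\j nm x_m}$ is small whenever $n>m$ are both large; fixing such an $m$ and taking $\limsup_n$ over the tail $n\ge m$ then gives $\lim_m\limsup_n\norm{x_n-\j nm x_m}=0$, i.e.\ criterion \eqref{eq:j_convergence}, so $x\d$ is \jt-convergent by \cref{thm:j_convergence}.

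I expect the main obstacle, and the only place where the isometry hypothesis is genuinely needed, to be this converse. Without isometry the embedding $\j\oo n$ may be strictly contractive, so $\norm{\j\oo n x_n-\j\oo m x_m}$ can be far smaller than $\norm{x_n-\j nm x_m}$, and a convergent (or merely Cauchy) embedded net need not arise from a \jt-convergent net. It is precisely the \emph{equality} in the key identity, rather than an inequality, that allows the norm topology of $E_\oo$ to detect \jt-convergence, and securing this equality via the isometry of $\j\oo n$ is the crux of the argument.
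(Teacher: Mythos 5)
Your proposal is correct and follows essentially the same route as the paper: the forward direction is the general fact that $\j\oo n x_n\to x_\oo$ for any \jt-convergent net (item \ref{it:convergence_of_embeddings} of \cref{thm:density_thm}), and the converse uses the isometry of the induced embeddings $\j\oo n$ together with $\j\oo m=\j\oo n\j nm$ to translate Cauchy-ness of $(\j\oo n x_n)$ into criterion \eqref{eq:j_convergence}. The only difference is that the paper proves the converse in the more general asymptotically isometric setting, where your key identity becomes a two-sided asymptotic estimate rather than an exact equality.
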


In fact, we will prove in the next section that it suffices if the $\j nm$ are asymptotically isometric in the sense that
\begin{align*}
    \lim_{n\gg m} \paren[\Big]{\inf_{\norm{x_m}=1} \norm{\j nm x_m}} =1\,.
\end{align*}
\null

Before moving on, we discuss the notions we introduced above for the elementary example of a constant inductive system.
We will see that the notion of \jt-convergence becomes equivalent to being a Cauchy net.

\begin{exa}[Constant inductive systems]\label{exa:constant}
    Let $E$ be a Banach space, then the trivial inductive system $(E,\id)$ is obtained by setting $E_n = E$ and $\j nm = \id_E$, and we also set $N = \NN$ for simplicity.
    The space $\nets(E,\id)$ consists of uniformly bounded sequences in $E$.
    Basic sequences are just constant sequences, so a sequence $(x_n)$ is \jt-convergent if and only if it is approximated in seminorm by constant sequences, which is equivalent to being Cauchy in $E$, i.e., being \jt-convergent is equivalent to being a convergent sequence because $E$ is complete.
    In particular, we see that for a sequence $(x_n)$ condition \eqref{eq:j_convergence} with $\j nm = \id$ is equivalent to being a Cauchy sequence, which can also be seen directly from more elementary arguments.
    In standard sequence space notation, the three space $\nets(E,\id)$, $\C(E,\id)$ and $\C_0(E,\id)$ are are equal to $\ell^\oo(\NN;E)$, $c(\NN;E)$ and $c_0(\NN;E)$, respectively.

    Therefore, the construction of the limit space $E_\oo$ corresponds to the standard construction of considering first the space of Cauchy sequences and then taking the quotient with respect to null sequences, which unsurprisingly shows that $E_\oo \cong E$.
    In fact, this motivated the above construction for the general case.
\end{exa}

We collect some useful properties that we will repeatedly use later on. The proof will be given in the next section in the more general setting of soft inductive limits.

\begin{prop}\label{thm:density_thm}
    \begin{enumerate}[(1)]
        \item\label{it:convergence_seminorm}
            Let $x\d$ be \jt-convergent and let $(x\d\up\alpha)_\alpha$ be a net of \jt-convergent nets. The following are equivalent,
            \begin{enumerate}[(i)]
                \item $\seminorm{x\d - x\d\up\alpha}\to 0$ as $\alpha\to\oo$,
                \item $x_\oo\up\alpha \to x_\oo$ in $E_\oo$ as $\alpha\to\oo$,
                \item there are $\tilde x\d\up\alpha\in \C(E,j)$ such that $\tilde x_\oo\up\alpha = x_\oo\up\alpha$ for all $\alpha$ and such that
                    \[
                        \norm{x\d - x\d\up\alpha}_\nets \to 0\quad\text{as $\alpha\to\oo$}.
                    \]
            \end{enumerate}
        \item\label{it:convergence_of_embeddings}
            For every \jt-convergent net $x\d$, the net $(\j\oo n x_n)_{n\in N}\subset E_\oo$ converges to $x_\oo$.
        \item\label{it:subspace_density}
            A subspace $\D\subset \C(E,j)$ is seminorm dense if and only if $\D_\oo = \set{x_\oo \given x\d\in \D}$ is dense in $E_\oo$.
        \item\label{it:basics_startpoint}
            For any finite collection $x\up1\d,\ldots,x\up k\d$ of \jt-convergent sequences and any $\eps>0$, there are $m\in N$ and $x_m\up1,\ldots,x_m\up k\in E_k$ such that $\seminorm{x\d\up i-\j\blob m x_m\up i} <\eps$.
    \end{enumerate}
\end{prop}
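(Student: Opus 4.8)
The plan is to reduce every assertion to the defining norm identity $\norm{x_\oo}=\seminorm{x\d}$ of \eqref{eq:limit_space_norm} and to the characterisation of \jt-convergence in \cref{thm:j_convergence}. Parts (2)--(4) are then short, and the real substance lies in part (1), specifically in the implication (ii)$\Rightarrow$(iii).

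For part (1) I would first record that, since $\C(E,j)$ is a vector space and $\jlim$ is linear, the difference $x\d-x\d\up\alpha$ is \jt-convergent with limit $x_\oo-x_\oo\up\alpha$, so \eqref{eq:limit_space_norm} yields
\[
    \norm{x_\oo-x_\oo\up\alpha}=\seminorm{x\d-x\d\up\alpha}.
\]
This makes (i)$\Leftrightarrow$(ii) immediate. For (iii)$\Rightarrow$(i) I would combine $\seminorm{\blub}\le\norm{\blub}_\nets$ with the identity above and the hypothesis $\tilde x_\oo\up\alpha=x_\oo\up\alpha$ to obtain $\seminorm{x\d-x\d\up\alpha}=\seminorm{x\d-\tilde x\d\up\alpha}\le\norm{x\d-\tilde x\d\up\alpha}_\nets\to0$.

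The crux is (ii)$\Rightarrow$(iii): from seminorm convergence one must manufacture representatives that converge in the stronger sup-norm $\norm{\blub}_\nets$. Set $\delta_\alpha\coloneqq\seminorm{x\d-x\d\up\alpha}=\limsup_n\norm{x_n-x_n\up\alpha}$, which tends to $0$. The guiding idea is that the \jt-limit only depends on the tail of a net, so one may overwrite a net on an initial set of indices without changing its limit, thereby converting $\limsup$-smallness into $\sup$-smallness. Concretely, for each $\alpha$ with $\delta_\alpha>0$ I would use the definition of $\limsup$ along $N$ to choose $m_\alpha$ with $\sup_{n\ge m_\alpha}\norm{x_n-x_n\up\alpha}\le2\delta_\alpha$, and define $\tilde x_n\up\alpha\coloneqq x_n\up\alpha$ for $n\ge m_\alpha$ and $\tilde x_n\up\alpha\coloneqq x_n$ otherwise (taking $\tilde x\d\up\alpha\coloneqq x\d$ when $\delta_\alpha=0$). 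Then $\tilde x\d\up\alpha-x\d\up\alpha$ vanishes for every $n\ge m_\alpha$ and is therefore a null net, so $\tilde x\d\up\alpha\in\C(E,j)$ with $\tilde x_\oo\up\alpha=x_\oo\up\alpha$, while $\norm{x\d-\tilde x\d\up\alpha}_\nets\le2\delta_\alpha\to0$ by construction. I expect this cut-and-paste — keeping the \jt-limit fixed while simultaneously controlling the full supremum — to be the only genuinely delicate point of the proposition.

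The remaining parts are quick consequences. For part (2), the same rewriting gives $\norm{\j\oo nx_n-x_\oo}=\seminorm{x\d-\j\blob nx_n}$, which tends to $0$ by \eqref{eq:j_convergence_seminorm}. For part (3), the relation $\seminorm{x\d-d\d}=\norm{x_\oo-d_\oo}$ together with surjectivity of $\jlim$ shows that seminorm-approximation of an arbitrary $x\d\in\C(E,j)$ by elements of $\D$ coincides with norm-approximation of $x_\oo$ by elements of $\D_\oo$; since every element of $E_\oo$ arises as some $x_\oo$, this is precisely the asserted equivalence of densities. For part (4), I would apply \eqref{eq:j_convergence_seminorm} to each of the finitely many nets $x\d\up i$ to obtain indices $m_i$ beyond which $\seminorm{x\d\up i-\j\blob m x_m\up i}<\eps$, and then invoke directedness of $N$ to select a single $m\ge m_1,\dots,m_k$, taking $x_m\up i\in E_m$ to be the $m$-th entry of $x\d\up i$.
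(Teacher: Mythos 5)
Your proof is correct, and its overall architecture---reducing everything to the identity $\norm{x_\oo}=\seminorm{x\d}$ of \eqref{eq:limit_space_norm} and to the characterisation $\lim_m\seminorm{x\d-\j\blob m x_m}=0$ of \jt-convergence---matches the paper's; parts (2) and (3) coincide with its arguments essentially verbatim. Where you genuinely diverge is the crux (i)$\Rightarrow$(iii) of part (1): you produce the representative by a cut-and-paste, overwriting $x\d\up\alpha$ with $x_n$ on the non-cofinal set $\set{n\given n\not\ge m_\alpha}$, where $m_\alpha$ is chosen so that $\sup_{n\ge m_\alpha}\norm{x_n-x_n\up\alpha}\le 2\delta_\alpha$; since the modification $\tilde x\d\up\alpha-x\d\up\alpha$ vanishes on a tail it is a null net, so the \jt-limit is unchanged. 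The paper instead rescales radially, setting $\tilde x_n\up\alpha=x_n-\delta_\alpha\,(x_n-x_n\up\alpha)/\norm{x_n-x_n\up\alpha}$, which forces $\norm{x_n-\tilde x_n\up\alpha}=\delta_\alpha$ at \emph{every} index and uses the existence of $\lim_n\norm{x_n-x_n\up\alpha}$ (\cref{thm:j_convergence}) to see that the modification is null. Both constructions deliver a representative of $x_\oo\up\alpha$ within sup-norm distance $O(\delta_\alpha)$ of $x\d$; yours is the more elementary (only the definition of $\limsup$ along a directed set is needed) at the cost of the constant $2$, the paper's is sharper but leans on the existence of the limit of the norms. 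In part (4) you also take a small shortcut: you use \eqref{eq:j_convergence_seminorm} to take $x_m\up i$ to be the $m$-th entry of $x\d\up i$ at a single sufficiently large $m$ obtained by directedness, whereas the paper first approximates by arbitrary basic nets at indices $m_i$ and then pushes them forward to a common index via the (soft) transitivity; your version is valid and avoids that extra step. I see no gaps.
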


We now turn to a brief discussion of the dual notion of weak* convergence for nets of continuous linear functionals associated with $\C(E,j)$ (cf.~\cref{sec:quantscale}). To this end, we denote the continuous duals of the Banach spaces $E_n$ by $E_n'$, and we denote the dual pairing between $E_n$ and $E_n'$ by $\dual\placeholder\placeholder$.
A net of continuous linear functionals $\varphi\d$ associated with $\C(E,j)$ is a collection $\set{\varphi_n}_{n\in N}$ with $\varphi_n\in E_n'$.

\begin{defin}\label{def:js_conv}
    A uniformly bounded net $\varphi\d$ of functionals $\varphi_n\in E_n'$ is {\bf \jts-convergent}, if for all $x\d\in \C(E,j)$ the limit $\lim_n \dual{x_n}{\varphi_n}$ exists.
    In this case $x_\oo \mapsto \lim_n \dual{x_n}{\varphi_n}$ for some $x\d$ with \jt-limit $x_\oo$ defines a bounded linear functional $\varphi_\oo$ on $E_\oo$ with $\norm{\varphi_\oo}\leq\sup_{n}\norm{\varphi_n}$, which we also denote by $\jslim_n \varphi_n$.
\end{defin}

We have the following easy consequences of the definition.

\begin{lem}\label{thm:js_conv}
    \begin{enumerate}[(1)]
        \item \label{it:projectively_consistent}
            Let $\varphi_\oo\in E_\oo'$ and set $\varphi_n =  \varphi_\oo\circ \j\oo m\in E_n'$. Then $\varphi\d$ is \jts-convergent and  $\jslim_n \varphi_n =  \varphi_\oo$.
        \item
            Let $\varphi\d$ be a uniformly bounded net of continuous linear functionals.
            Then $\varphi\d$ is \jts-convergent if and only if for all $m\in N$, the nets $(\varphi_n\circ \j nm)$ are $w^*$-convergent in the limit $n\to\oo$.
            In this case
            \begin{equation}\label{eq:js_limit}
                \wslim_n \big( \varphi_n\circ \j nm \big) = \big(\jslim_n \varphi_n\big) \circ \j \oo m.
            \end{equation}
    \end{enumerate}
\end{lem}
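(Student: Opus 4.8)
The plan is to handle the two items in turn, with item~(1) and the forward direction of item~(2) following quickly from the material already in place, and the reverse direction of item~(2) carrying the actual work. For item~(1) (reading $\varphi_n = \varphi_\oo\circ\j\oo n$, the only typing that makes the composition meaningful), I would first note uniform boundedness: each $\j\oo n$ is a contraction, so $\norm{\varphi_n}\le\norm{\varphi_\oo}$. Then, for any $x\d\in\C(E,j)$, I would rewrite the pairing as $\dual{x_n}{\varphi_n} = \varphi_\oo(\j\oo n x_n)$ and invoke item~\ref{it:convergence_of_embeddings} of \cref{thm:density_thm}, which gives $\j\oo n x_n\to x_\oo$ in $E_\oo$. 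Continuity of $\varphi_\oo$ then yields $\lim_n\dual{x_n}{\varphi_n} = \varphi_\oo(x_\oo)$, which is exactly the assertion that $\varphi\d$ is \jts-convergent with $\jslim_n\varphi_n = \varphi_\oo$.

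For the forward implication of item~(2), I would specialize the defining property of \jts-convergence to the basic nets $\j\blob m x_m$, which are \jt-convergent by construction and whose \jt-limit is $\j\oo m x_m$. Thus $\lim_n\dual{\j nm x_m}{\varphi_n}$ exists for every $x_m\in E_m$, which is precisely $w^*$-convergence of the net $(\varphi_n\circ\j nm)$ in $E_m'$; reading off the value of this limit from the definition of $\jslim$ then gives \eqref{eq:js_limit}.

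The substance lies in the reverse implication of item~(2), where I would run a uniform-boundedness-plus-density argument. Set $C=\sup_n\norm{\varphi_n}<\oo$ and fix $x\d\in\C(E,j)$; the goal is to show that the scalar net $(\dual{x_n}{\varphi_n})_n$ is Cauchy, hence convergent by completeness of the scalar field. Given $\eps>0$, I would approximate $x\d$ in seminorm by a basic net, choosing $m$ and $y_m\in E_m$ with $\seminorm{x\d-\j\blob m y_m}<\eps$. The estimate $\abs{\dual{x_n-\j nm y_m}{\varphi_n}}\le C\norm{x_n-\j nm y_m}$ has $\limsup$ over $n$ at most $C\eps$, so it controls $\dual{x_n}{\varphi_n}$ by $\dual{\j nm y_m}{\varphi_n}$ for all large $n$, while the hypothesis makes $(\dual{\j nm y_m}{\varphi_n})_n$ itself Cauchy. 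Splitting $\dual{x_n}{\varphi_n}-\dual{x_{n'}}{\varphi_{n'}}$ through the basic net by the triangle inequality then bounds it by a quantity of order $(2C+1)\eps$ once $n,n'$ are large enough, establishing the Cauchy property; \eqref{eq:js_limit} is afterwards inherited from the forward direction.

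The main obstacle is the interchange of the $n$-limit with the seminorm approximation: since the seminorm is a $\limsup$, the approximation error has to be controlled \emph{eventually} in $n$ rather than merely along a subnet, and the approximating index $m$ must be allowed to vary as $\eps$ shrinks. It is precisely the uniform bound $C$, together with the density of basic nets defining $\C(E,j)$, that lets a single estimate absorb the error uniformly and close the Cauchy argument.
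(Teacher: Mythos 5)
Your proof is correct, and since the paper omits the argument entirely (the lemma is introduced as an ``easy consequence of the definition''), your route --- continuity of $\varphi_\oo$ together with $\j\oo n x_n\to x_\oo$ for item~(1), testing \jts-convergence on basic nets for the forward direction of item~(2), and a uniform-boundedness-plus-density $3\eps$ Cauchy argument for the converse --- is precisely the intended one. You are also right that $\varphi_n=\varphi_\oo\circ\j\oo n$ is the only reading of the statement that type-checks.
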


Nets of functionals that arise from a functional $\varphi_\oo$ on $E_\oo$ as in \ref{it:projectively_consistent} are called projectively consistent \cite{takeda1955inductive} and can equivalently be characterized as those nets that satisfy $\varphi_n\circ\j nm = \varphi_m$ for all $n>m$.

\begin{rem}
\label{rem:strictc}
Given a strict inductive system $(E, j)$, we may assume that each $\j nm$ is injective because if we had $x_{m}\in\ker\j nm$ for some $n\geq m$, the basic sequence $\j{\blob}m(x_{m})$ would be null and, therefore, we would have $\jlim_{n}\j nm(x_{m}) = 0$. This, in turn, would entail that we could restrict the inductive system to $\tilde{E}_{m} = E_{m}/\cup_{n\geq m}\ker\j nm$.
\end{rem}

\section{Soft inductive limits}\label{sec:soft}

The construction of the limit space and the theory of \jt-convergence works in a much more general setting if the assumptions of inductive systems are weakened to what we call soft inductive systems.
This weaker notion is characterized by relaxing the equality $\j nl =  \j nm \j ml$ to an asymptotic version.
Readers interested only in standard inductive systems can skip this section and may ignore the word ``soft'' in subsequent sections.
All results stated for inductive systems in this paper hold in this generalized setting if not explicitly said otherwise, and all proofs are already written to be valid in this setting.

\begin{defin}\label{def:soft}
    A {\bf soft inductive system} of Banach spaces over a directed set $(N,\leq)$ is a tuple $(E,j)$, where $E= \set{E_n}_{n\in N}$ is a family of Banach spaces and where $j= \set{\j nm}_{n\geq m}$ is a collection of linear contractions such that
    \begin{equation}\label{eq:soft_inductive_sys}
        \lim_{n\gg m} \norm{(\j nl  - \j nm \j ml) x_l} = 0\quad\forall l\in N,\  x_l\in E_l.
    \end{equation}
\end{defin}

Standard inductive systems of Banach spaces are soft inductive systems with the additional property that $\j nl =  \j nm \j ml$. These will be called {\bf strict} to emphasize that they are a special case.
Under these assumptions, the usual construction of the limit space as a completion of the union fails, and the limit space's universal property becomes meaningless.
In fact, the latter can be replaced by an asymptotic version much in the same way as \eqref{eq:soft_inductive_sys} is an asymptotic version of \eqref{eq:strict}.
The way of constructing the limit space via equivalence classes of \jt-convergent nets will, however, still be possible by the same arguments.
In fact, we will not have to change a single word in the proofs presented in \cref{sec:strict} as they directly apply to this generalized setting.

As in the case of strict inductive systems, we denote by $\nets(E,j)$ the Banach space of uniformly bounded nets with the sup-norm $\norm{}_\nets$.
Basic nets are nets of the form $x\d =  \j\blob m x_m$ for some $m$, $x_m\in E_m$, where we set $\j mm= \id_{E_m}$ and $\j nm= 0$ if $n \ngeq m$.
We now come to the importance of the assumption \eqref{eq:soft_inductive_sys}, which can be rewritten as $\lim_{n\gg m} \seminorm{ x\d - \j\blob m x_m}=0$ for all basic nets with the seminorm defined as in \eqref{eq:seminorm}.
This guarantees that the proof of \cref{thm:j_convergence} still works and we get:

\begin{lem}
    Let $(E,j)$ be a soft inductive system. Then a net $x\d\in\nets(E,j)$ can be approximated in seminorm by basic nets if and only if
    \begin{equation}\label{eq:j_convergence_2}
        \lim_{n\gg m} \norm{x_n - \j nm x_m} =  \lim_m \seminorm{x\d- \j\blob mx_m} = 0.
    \end{equation}
\end{lem}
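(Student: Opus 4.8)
The plan is to mirror the proof of \cref{thm:j_convergence} almost verbatim, the only new ingredient being the soft transitivity relation \eqref{eq:soft_inductive_sys}. First I would observe that \eqref{eq:soft_inductive_sys} can be rephrased, using the seminorm $\seminorm{\cdot}$ from \eqref{eq:seminorm}, as $\lim_{n\gg m}\seminorm{\j{\blob}{l} x_l - \j{\blob}{m}\j{m}{l} x_l}=0$ for every basic net $\j{\blob}{l} x_l$; that is, applying a connecting map $\j{m}{l}$ and then embedding via $\j{\blob}{m}$ gives a net that is seminorm-close to the original basic net $\j{\blob}{l}x_l$ for large $m$. This is exactly the statement that basic nets themselves satisfy the Cauchy-type condition \eqref{eq:j_convergence_2}, which in the strict case was immediate (the difference was eventually zero) but here holds only asymptotically.

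Next I would prove the backward direction: if $x\d$ satisfies \eqref{eq:j_convergence_2}, then the identity $\lim_m \seminorm{x\d - \j{\blob}{m} x_m} = 0$ says directly that $x\d$ is approximated in seminorm by the basic nets $\j{\blob}{m} x_m$, hence $x\d\in\C(E,j)$. For the forward direction I would copy the triangle-inequality argument from the proof of \cref{thm:j_convergence}: given $\eps>0$, choose a basic net $y\d = \j{\blob}{l} y_l$ with $\seminorm{x\d - y\d} < \eps$, and estimate
\begin{align*}
    \lim_m \seminorm{x\d - \j{\blob}{m} x_m}
    &\le \lim_m \paren[\Big]{ \seminorm{x\d - y\d} + \seminorm{y\d - \j{\blob}{m} y_m} + \seminorm{\j{\blob}{m}(x_m - y_m)} } \\
    &\le \eps + 0 + \lim_m \seminorm{x_m - y_m} = \eps,
\end{align*}
where the middle term vanishes precisely because basic nets satisfy \eqref{eq:j_convergence_2} by the reformulation of \eqref{eq:soft_inductive_sys} established in the first step, and the last term is controlled using $\norm{\j{\blob}{m}}\le 1$ together with $\seminorm{x\d - y\d}<\eps$. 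Since $\eps>0$ was arbitrary, the limit is zero.

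The step I expect to require the most care is the vanishing of the middle term $\seminorm{y\d - \j{\blob}{m} y_m}$ for a basic net $y\d = \j{\blob}{l} y_l$. In the strict setting this was literally zero for $m\ge l$, whereas here it is only asymptotically small, and one must check that \eqref{eq:soft_inductive_sys} delivers exactly $\lim_m \seminorm{\j{\blob}{l} y_l - \j{\blob}{m}\j{m}{l} y_l}=0$ with the iterated limit $\lim_m\limsup_n$ matching the definition of $\lim_{n\gg m}$. I would also note, as in \cref{thm:j_convergence}, that the reformulation of \eqref{eq:soft_inductive_sys} already shows basic nets are \jt-convergent, so the equivalence \eqref{eq:j_convergence_2} is genuinely the soft analogue of the Cauchy criterion and the remaining assertions (such as existence of $\lim_n \norm{x_n}$) carry over unchanged from the strict proof. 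No essentially new idea beyond the asymptotic replacement is needed, which is why the text remarks that not a single word of the earlier proof must change.
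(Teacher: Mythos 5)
Your proposal is correct and follows the paper's own route exactly: the paper likewise notes that \eqref{eq:soft_inductive_sys} is precisely the statement $\lim_m\seminorm{y\d-\j\blob m y_m}=0$ for basic nets $y\d$, and then reruns the triangle-inequality argument from the proof of \cref{thm:j_convergence} unchanged, the middle term now vanishing only asymptotically in $m$ rather than identically. Nothing is missing.
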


As in the strict case, such nets will be called \jt-convergent, and the space of \jt-convergent nets is denoted $\C(E,j)$. An important class of \jt-convergent nets are the null nets which are the nets such that $\lim_n \norm{x_n}= 0$, and we denote the subspace of null nets again by $\C_0(E,j)$.
The limit space $E_\oo$ is defined as the quotient of $\C(E,j)$ by $\C_0(E,j)$ as before and equip it with the induced norm (see \cref{eq:limit_space_norm}).
The projection onto the quotient is denoted by $\jlim$ and we write $\jlim_n x_n =  \jlim x\d$ and the maps $\j\oo m : E_m \to E_\oo$ are defined by setting $\j\oo m x_m =  \jlim_n \j nm x_m$.

\begin{lem}
    Both $\C(E,j)$ and $\C_0(E,j)$ are closed with respect to the $\norm{}_\nets$-norm topology and the Banach space quotient is isomometrically isomorphic to $E_\oo$, i.e.,
    \begin{equation}\label{eq:quotients}
        \quotient{(\C(E,j),\norm{}_\nets)}{(\C_0(E,j),\norm{}_\nets)} \cong E_\oo.
    \end{equation}
    In particular, $E_\oo$ is a Banach space, and the norm is given by
    \begin{equation}\label{eq:quotient_norm}
        \norm{x_\oo} =  \seminorm{x\d} = \inf_{y\d \in \C_0(E,j)} \norm{x\d+y\d}_\nets,\quad x\d\in\C(E,j).
    \end{equation}
\end{lem}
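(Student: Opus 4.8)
The plan is to work with the two comparable topologies on $\C(E,j)$—the seminorm $\seminorm{}$ and the genuine norm $\norm{}_\nets$—and to exploit the elementary domination $\seminorm{x\d}=\limsup_n\norm{x_n}\le\sup_n\norm{x_n}=\norm{x\d}_\nets$. This inequality makes the seminorm Lipschitz continuous for the $\norm{}_\nets$-topology, and in particular shows that every $\norm{}_\nets$-convergent net is also seminorm convergent; that single observation will yield both closedness statements almost for free.

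First I would settle closedness. Since $\C(E,j)$ is by definition the seminorm closure of the basic nets, it is seminorm closed; and because any $\norm{}_\nets$-limit of elements of $\C(E,j)$ is, by the domination above, also a seminorm limit, such a limit again lies in $\C(E,j)$. Hence $\C(E,j)$ is $\norm{}_\nets$-closed in $\nets(E,j)$. For $\C_0(E,j)$ I would use that for \jt-convergent nets the limit $\lim_n\norm{x_n}$ exists and equals $\seminorm{x\d}$ (\cref{thm:j_convergence}), so that inside $\C(E,j)$ the subspace of null nets is exactly the zero set $\seminorm{}^{-1}(\{0\})$. As the preimage of a closed set under the continuous seminorm this is closed, and combined with the $\norm{}_\nets$-closedness of $\C(E,j)$ it follows that $\C_0(E,j)$ is $\norm{}_\nets$-closed in $\nets(E,j)$.

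The heart of the argument is the identity of the two quotient norms: for $x\d\in\C(E,j)$,
\[
    \seminorm{x\d}\;=\;\inf_{y\d\in\C_0(E,j)}\norm{x\d+y\d}_\nets .
\]
The inequality ``$\le$'' is immediate, since for any null net $y\d$ one has $\seminorm{x\d}=\seminorm{x\d+y\d}\le\norm{x\d+y\d}_\nets$. For ``$\ge$'' I would invoke the existence of $\lim_n\norm{x_n}=\seminorm{x\d}$: given $\eps>0$, choose an index $n_0$ with $\norm{x_n}\le\seminorm{x\d}+\eps$ for all $n\ge n_0$, and set $y_n=-x_n$ for $n\not\ge n_0$ and $y_n=0$ otherwise. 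This $y\d$ is uniformly bounded, vanishes on the cofinal tail $\{n\ge n_0\}$ and is therefore a null net, while $\norm{x\d+y\d}_\nets=\sup_{n\ge n_0}\norm{x_n}\le\seminorm{x\d}+\eps$; letting $\eps\to0$ gives ``$\ge$''. I expect this to be the only delicate point: one must realize the seminorm as the exact $\norm{}_\nets$-distance to the null nets, and it is precisely the existence of the honest limit $\lim_n\norm{x_n}$ (rather than merely a $\limsup$) that lets a single null net cancel the indices on which $\norm{x_n}$ overshoots $\seminorm{x\d}$.

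Finally I would assemble the conclusion. The underlying vector space of $E_\oo$ and that of the Banach-space quotient $(\C(E,j),\norm{}_\nets)/(\C_0(E,j),\norm{}_\nets)$ coincide, both being $\C(E,j)/\C_0(E,j)$, and the displayed identity says exactly that the induced norm $\norm{x_\oo}=\seminorm{x\d}$ agrees with the quotient norm $\inf_{y\d\in\C_0(E,j)}\norm{x\d+y\d}_\nets$; hence the canonical map on equivalence classes is an isometric isomorphism. Since $(\nets(E,j),\norm{}_\nets)$ is a Banach space and $\C_0(E,j)\subset\C(E,j)$ are closed subspaces, the quotient is complete, which proves that $E_\oo$ is a Banach space with the asserted norm.
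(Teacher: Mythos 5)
Your proposal is correct and follows essentially the same route as the paper: the heart of both arguments is the identity $\seminorm{x\d}=\inf_{y\d\in\C_0(E,j)}\norm{x\d+y\d}_\nets$, established by the same truncating null net that cancels $x\d$ below a sufficiently large index. The only cosmetic differences are that you derive closedness from the domination $\seminorm{}\le\norm{}_\nets$ together with the definition of $\C(E,j)$ as a seminorm closure (the paper instead runs an explicit triangle-inequality estimate on $\lim_{n\gg m}\norm{x_n-\j nm x_m}$), and your closing remark that the honest limit $\lim_n\norm{x_n}$ is needed is a slight overstatement—the $\limsup$ already suffices, as the paper's computation $\limsup_n\norm{x_n}=\lim_m\sup_{n>m}\norm{x_n}$ shows.
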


\begin{proof}
    Suppose $x\d\up\alpha$ is a $\norm{}_\nets$-Cauchy sequence in $\C(E,j)$ and let $x\d$ be its limit in $\nets(E,j)$.
    Then $x\d\in\C(E,j)$ follows from
    \begin{align*}
        \lim_{n\gg m}\norm{x_n-\j nm x_m}
        &\le \lim_{n\gg m} \Big( \norm{x_n-x_n\up\alpha}_\nets + \norm{x_n\up\alpha - \j nm x_m\up\alpha} + \norm{\j nm(x_m - x_m\up\alpha)}\Big) \\
        &\le 2\norm{x\d - x\d\up\alpha} \xrightarrow{\alpha\to\oo} 0.
    \end{align*}
    Now suppose that $x\d\up\alpha$ are all null nets. Then $\lim_n \norm{x_n} \le \lim_n \big( \norm{x_n - x_n\up\alpha} + \norm{x_n\up\alpha}\big) \le \norm{x\d-x\d\up\alpha}\to 0$ and thus $x\d\in\C_0(E,j)$.

    To show the norm equality, let $x\d\in \C(E,j)$ and define a net $y\d\up m\in\C_0(E,j)$ by setting $y_n\up m = - x_n$ if $n< m$ and $y_n\up m = 0$ else.
    We have
    \[
        \norm{x_\oo}= \limsup_n \ \norm{x_n} =  \lim_m \ \sup_{n>m} \ \norm{x_n} = \lim_m \ \norm{x\d - y\d\up m}_\nets \ge \inf_{y\d\in\C_0(E,j)} \norm{x\d + y\d}
    \]
    and thus
    \[
        \norm{x_\oo} \ge \inf_{y\d\in\C_0(E,j)} \norm{x\d+y\d}_\nets \ge \inf_{y\d\in\C_0(E,j)} \seminorm{x\d+y\d} = \seminorm{x\d} = \norm{x_\oo}.
    \]
\end{proof}

\begin{lem}
    \cref{thm:density_thm} holds for soft inductive systems.
\end{lem}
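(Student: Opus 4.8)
The plan is to derive all four items directly from two facts that survive the passage to soft systems unchanged: the characterization \eqref{eq:j_convergence_2} of \jt-convergence, and the identity $\norm{x_\oo} = \seminorm{x\d} = \inf_{y\d\in\C_0(E,j)}\norm{x\d+y\d}_\nets$ from \eqref{eq:quotient_norm}, under which $\jlim$ is an isometric quotient map. Since neither ingredient invokes the strict transitivity \eqref{eq:strict} — only its asymptotic form \eqref{eq:soft_inductive_sys} enters, and only through \eqref{eq:j_convergence_2} — the arguments used in the strict case apply verbatim, which is exactly what this lemma asserts.

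For item (1), the equivalence of (i) and (ii) is purely definitional: because $\C(E,j)$ is a vector space and $\jlim$ is linear, $x\d\up\alpha - x\d$ is again \jt-convergent with limit $x_\oo\up\alpha - x_\oo$, and \eqref{eq:quotient_norm} gives $\norm{x_\oo\up\alpha - x_\oo} = \seminorm{x\d\up\alpha - x\d}$, so the two conditions coincide. The implication (iii) $\Rightarrow$ (i) is immediate from $\seminorm{\placeholder}\le\norm{\placeholder}_\nets$ together with $\seminorm{x\d - \tilde x\d\up\alpha} = \norm{x_\oo - x_\oo\up\alpha}$, which forces $\norm{x_\oo - x_\oo\up\alpha}\to 0$.

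The one genuinely substantive step — and the main obstacle — is (ii) $\Rightarrow$ (iii), where sup-norm representatives must be manufactured from mere convergence in $E_\oo$. Here I would use the infimum form of \eqref{eq:quotient_norm}: for each $\alpha$ choose $y\d\up\alpha\in\C_0(E,j)$ with $\norm{(x\d - x\d\up\alpha)+y\d\up\alpha}_\nets \le \norm{x_\oo - x_\oo\up\alpha} + \delta_\alpha$ for some $\delta_\alpha\to 0$ (needed since the infimum need not be attained), and set $\tilde x\d\up\alpha \coloneqq x\d\up\alpha - y\d\up\alpha$. Adding a null net leaves the \jt-limit unchanged, so $\tilde x_\oo\up\alpha = x_\oo\up\alpha$, while $\norm{x\d - \tilde x\d\up\alpha}_\nets = \norm{(x\d - x\d\up\alpha) + y\d\up\alpha}_\nets \le \norm{x_\oo - x_\oo\up\alpha} + \delta_\alpha \to 0$ by (ii).

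Item (3) follows from the same isometry: seminorm density of $\D$ says every $x\d\in\C(E,j)$ is seminorm-approximated by elements of $\D$, which by \eqref{eq:quotient_norm} is equivalent to every $x_\oo = \jlim x\d$ being norm-approximated by $\D_\oo$; as every element of $E_\oo$ is of this form, this is precisely density of $\D_\oo$. For item (2), writing $\j\oo n x_n = \jlim(\j\blob n x_n)$ yields $\norm{\j\oo n x_n - x_\oo} = \seminorm{\j\blob n x_n - x\d}$, which tends to $0$ as $n\to\oo$ directly by \eqref{eq:j_convergence_2}. Finally, for item (4) I would apply \eqref{eq:j_convergence_2} to each net $x\up i\d$ separately to obtain indices $m_i$ with $\seminorm{x\up i\d - \j\blob m x_m\up i} < \eps$ for all $m\ge m_i$, where $x_m\up i$ is the $m$-th component; directedness of $N$ then provides a common upper bound $m\ge m_1,\dots,m_k$ for which all $k$ estimates hold simultaneously with the components $x_m\up i$.
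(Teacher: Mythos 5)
Your proof is correct, and it covers all four items, but it takes a genuinely different route in the two places where the paper's argument has actual content. For (1)(ii)$\Rightarrow$(iii) you manufacture the sup-norm representatives by choosing a near-optimal null net $y\d\up\alpha$ via the infimum identity \eqref{eq:quotient_norm} and setting $\tilde x\d\up\alpha = x\d\up\alpha - y\d\up\alpha$; the paper instead writes down an explicit representative by radially rescaling the pointwise difference, replacing $x_n - x_n\up\alpha$ by a vector of length exactly $\seminorm{x\d - x\d\up\alpha}$ in the same direction. Both are valid; yours leans on the previously established quotient-norm lemma (legitimately, since it is proved just before this one), while the paper's is self-contained and avoids any selection. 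One small wrinkle in your version: for a general index net $\alpha$ one cannot always prescribe positive slacks $\delta_\alpha\to 0$ in advance (e.g.\ if the index set has a greatest element), so the clean choice is $\delta_\alpha = \norm{x_\oo - x_\oo\up\alpha}$ when this is positive, and $y\d\up\alpha = -(x\d - x\d\up\alpha)$ when it vanishes (the difference is then itself a null net); this yields the bound $2\norm{x_\oo - x_\oo\up\alpha}\to 0$ and closes the gap. For item (4) you take the approximants to be the $m$-th components $x_m\up i$ of the given nets themselves, invoking \eqref{eq:j_convergence_2} and directedness of $N$; the paper instead first picks basic approximants $\j\blob{m_i}x_{m_i}$ at possibly different starting indices and then pushes them to a common index using the soft transitivity \eqref{eq:soft_inductive_sys}. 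Your version is more direct and matches the literal statement of item (4), which asks for elements of $E_m$ realizing the approximation. Items (2) and (3) coincide with the paper's arguments.
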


\begin{proof}
    \ref{it:convergence_seminorm}: The equivalence of (i) and (ii) is obvious since $\norm{x_\oo-x_\oo\up\alpha}= \seminorm{x\d-x\d\up\alpha}$ and it is clear that (iii) implies (i).
    To show (i) implies (iii), we define the net $\tilde x\d\up\alpha$ as follows:
    \[
        \tilde x_n\up\alpha \coloneqq
        \case{ x_n - \seminorm{x\d-x\d\up\alpha}\, \frac{x_n-x_n\up\alpha}{\norm{x_n-x_n\up\alpha}} , & \text{ if } x_n\up\alpha \neq x_n\\ x_n, & \text{ if } x_n\up\alpha = x_n}\ .
    \]
    This definition clearly guarantees that $\tilde x_\oo\up\alpha = x_\oo\up\alpha$ and $\norm{x_n - x_n\up\alpha} \leq \seminorm{x\d-x\d\up\alpha}$ which goes to zero as $\alpha\to\oo$ but is independent of $n$.

    \ref{it:convergence_of_embeddings}: This follows immediately from \eqref{eq:j_convergence} because $\lim_m \norm{x_\oo - \j\oo m x_m} = \lim_{n\gg m} \norm{x_n - \j nm x_m} = 0$.

    \ref{it:subspace_density}: Observe that $\D_\oo =  \jlim \D$. Since the projection of a dense subspace onto a quotient is dense, this shows that density of $\D$ implies density of $\D_\oo$.
    For the converse, let $x\d\in\C(E,j)$ be given. Then, by the density of $\D_\oo$, there are $x\d\up\alpha$ so that $x_\oo\up\alpha \to x_\oo$.
    But this implies that $\seminorm{x\d - x\d\up\alpha} = \norm{x_\oo - x_\oo\up\alpha} \to 0$ and thus that density of $\D_\oo$ implies density of $\D$.

    \ref{it:basics_startpoint}: Pick basic sequences $\j\blob {m_i} x_{m_i}$ such that $\seminorm{x\up i -\j\blob {m_i}x_{m_i}} < \eps/2$ for all $i=1,\ldots,k$.
    Now pick $m$ large enough so that $\seminorm{\j \blob {m_i}x_{m_i}-\j\blob m \j m{m_i} x_{m_i}} < \eps/2$ for all $i=1,\ldots,k$.
    We set $x_m\up i\j m {m_i}x_{m_i}$, and the claim follows from the triangle inequality.
\end{proof}

The limit space of a soft inductive system satisfies the following universal property.

\begin{prop}\label{thm:soft_universal_property}
    Let $(E,j)$ be a soft inductive system of Banach spaces, and let $F$ be another Banach space.
    Let $T\d$ be a net of linear contractions $T_n:E_n\to F$ which maps \jt-convergent nets to Cauchy nets in $F$, then there is an operator $T_\oo:E_\oo \to F$ such that $T_\oo(\jlim_n x_n)=\lim_n T_n(x_n)$.
\end{prop}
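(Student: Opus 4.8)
The plan is to construct $T_\oo$ by first defining an auxiliary map on the whole space of \jt-convergent nets and then descending to the quotient $E_\oo = \C(E,j)/\C_0(E,j)$. Concretely, I would define $\tilde T : \C(E,j) \to F$ by $\tilde T(x\d) = \lim_n T_n x_n$. This is well-defined precisely because of the hypothesis: $T\d$ sends the \jt-convergent net $x\d$ to a Cauchy net $(T_n x_n)_n$ in $F$, and since $F$ is complete this Cauchy net has a genuine limit. Linearity of $\tilde T$ is then immediate, since $\C(E,j)$ is a vector space, each $T_n$ is linear, and limits in $F$ respect linear combinations.

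The key step is to verify that $\tilde T$ annihilates the null nets, so that it factors through the quotient. If $x\d \in \C_0(E,j)$, then $\lim_n \norm{x_n} = 0$ by definition, and because each $T_n$ is a contraction we have $\norm{T_n x_n} \le \norm{x_n} \to 0$, whence $\tilde T(x\d) = 0$. Consequently $\tilde T$ descends to a well-defined linear map $T_\oo : E_\oo \to F$ satisfying $T_\oo(\jlim_n x_n) = \lim_n T_n x_n$: indeed, if two \jt-convergent nets $x\d$ and $y\d$ have the same \jt-limit, their difference lies in $\C_0(E,j)$, so $\tilde T$ assigns them the same value. Finally, I would check that $T_\oo$ is a contraction, hence a bona fide bounded operator. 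For a \jt-convergent net $x\d$ the limit $\lim_n \norm{x_n}$ exists and equals $\seminorm{x\d} = \norm{x_\oo}$ by \cref{thm:j_convergence}; since $T_n x_n \to T_\oo x_\oo$ in $F$ and $\norm{T_n x_n}\le\norm{x_n}$ termwise, passing to the limit gives $\norm{T_\oo x_\oo} = \lim_n \norm{T_n x_n} \le \limsup_n \norm{x_n} = \norm{x_\oo}$.

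I do not expect a genuine obstacle here, as the hypothesis is tailored exactly to make the definition of $\tilde T$ possible; the only points that require care are the completeness of $F$ (to guarantee the limit exists) and the contraction property of the $T_n$ (used both to kill null nets and to obtain the norm bound). It is worth emphasizing that the soft, as opposed to strict, structure of the inductive system plays no explicit role in this argument: all of it has been absorbed into the definitions of $\C(E,j)$, $\C_0(E,j)$, and $E_\oo$, so the proof is word-for-word identical to the strict case and in particular re-proves the universal property stated earlier.
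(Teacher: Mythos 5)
Your proof is correct and takes essentially the same route as the paper, which obtains this proposition as the special case of \cref{thm:jjconv} with the constant target system $(F,\id_F)$: there too, well-definedness on the quotient follows from the contraction estimate on null nets, and the norm bound from the termwise inequality $\norm{T_nx_n}\le\norm{x_n}$ together with $\lim_n\norm{x_n}=\seminorm{x\d}$. Your closing remark that the soft structure plays no explicit role is likewise consistent with the paper's treatment.
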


This is a special case of a general result on the convergence of operations between inductive systems (see \cref{thm:jjconv}).
The importance of this property is that it uniquely determines the limit space $E_\oo$ and the maps $\j\oo \blob$:

\begin{prop}\label{thm:soft_uniqueness}
    Let $(E,j)$ be a soft inductive system of Banach spaces.
    Let $\tilde E_\oo$ be a Banach space and let $\oj\oo n :E_n \to \tilde E_\oo$ be linear contractions, such that $\lim_n\norm{(\oj\oo n\j nm -\oj\oo m)x_m}=0$ for all $x_m\in E_m$.
    Then $\ojlim:\C(E,j)\to \tilde E_\oo,\,x\d\mapsto \ojlim_n x_n = \lim_n \oj\oo nx_n$ is a linear contraction.
    If \cref{thm:soft_uniqueness} holds for $\tilde E_\oo$ and $\ojlim$, then there is an isometric isomorphism $\psi: E_\oo\to\tilde{E}_\oo$ such that $\psi\circ\j\oo n=\oj\oo n$.
\end{prop}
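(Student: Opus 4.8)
The statement is a uniqueness-up-to-isometry result of the kind that always accompanies a universal property, so the plan is to run the standard back-and-forth argument, reading the (mistyped) self-reference as the universal property of \cref{thm:soft_universal_property}, which $(E_\oo,\jlim)$ satisfies automatically and which is now additionally assumed for $(\tilde E_\oo,\ojlim)$. Before that, the first assertion — that $\ojlim$ is a well-defined linear contraction — must be checked, and this reduces to showing that $\oj\oo\blob$ sends $\jt$-convergent nets to Cauchy nets in $\tilde E_\oo$. Given $x\d\in\C(E,j)$ and $\eps>0$, I would use the soft convergence criterion \cref{eq:j_convergence_2} to pick $m$ with $\limsup_n\norm{x_n-\j nm x_m}<\eps$ and then estimate
\[
  \norm{\oj\oo n x_n-\oj\oo m x_m}\le\norm{x_n-\j nm x_m}+\norm{(\oj\oo n\j nm-\oj\oo m)x_m},
\]
where the first term is $\eps$-small in $\limsup_n$ and the second tends to $0$ by the compatibility hypothesis. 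Thus $\limsup_n\norm{\oj\oo n x_n-\oj\oo m x_m}\le\eps$, which makes $(\oj\oo n x_n)_n$ Cauchy; completeness of $\tilde E_\oo$ gives the limit, linearity is immediate, and $\norm{\ojlim_n x_n}=\lim_n\norm{\oj\oo n x_n}\le\limsup_n\norm{x_n}=\seminorm{x\d}$ yields the contraction bound. This is exactly the hypothesis needed to apply \cref{thm:soft_universal_property} with $T_n=\oj\oo n$ and $F=\tilde E_\oo$.

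Next I would construct the comparison maps. Applying \cref{thm:soft_universal_property} to $\oj\oo\blob$ produces a contraction $\psi\colon E_\oo\to\tilde E_\oo$ with $\psi\circ\jlim=\ojlim$; evaluating on a basic net $\j\blob n x_n$ and using the compatibility hypothesis gives $\psi(\j\oo n x_n)=\lim_m\oj\oo m\j mn x_n=\oj\oo n x_n$, i.e.\ the required relation $\psi\circ\j\oo n=\oj\oo n$. In the other direction, the net $\j\oo\blob$ maps $\jt$-convergent nets to convergent (hence Cauchy) nets in $E_\oo$ by \cref{thm:density_thm}\,\ref{it:convergence_of_embeddings}, so the assumed universal property of $(\tilde E_\oo,\ojlim)$ furnishes a contraction $\phi\colon\tilde E_\oo\to E_\oo$ with $\phi\circ\ojlim=\jlim$. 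Composing, $\phi\circ\psi$ fixes $\jlim_n x_n$ for every $\jt$-convergent net, and since $\jlim$ is the surjective quotient map onto $E_\oo$ this gives $\phi\circ\psi=\id_{E_\oo}$ outright.

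It remains to show $\psi\circ\phi=\id_{\tilde E_\oo}$ and to deduce that $\psi$ is isometric. The composite $\psi\circ\phi$ fixes the range of $\ojlim$, and once equality everywhere is known the two contractions with $\phi\circ\psi=\id$ and $\psi\circ\phi=\id$ force $\norm{x_\oo}=\norm{\phi\psi x_\oo}\le\norm{\psi x_\oo}\le\norm{x_\oo}$, so $\psi$ preserves norms. The main obstacle is precisely extending $\psi\circ\phi$ from the range of $\ojlim$ to all of $\tilde E_\oo$: the construction gives no direct control off that range. I would resolve it through the \emph{uniqueness} half of the universal property of $\tilde E_\oo$ — both $\id_{\tilde E_\oo}$ and $\psi\circ\phi$ factor the net $\oj\oo\blob$ through $\ojlim$, hence coincide — and would flag that this is where "the universal property holds for $\tilde E_\oo$" must be read as including uniqueness of the factoring operator, equivalently that $\ojlim$ has dense range. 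For $E_\oo$ this uniqueness is free because $\jlim$ is surjective, but for the abstract $\tilde E_\oo$ it is a genuine part of the hypothesis, without which one could enlarge $\tilde E_\oo$ by a summand invisible to $\oj\oo\blob$.
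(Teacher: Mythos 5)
Your proof follows essentially the same route as the paper's: first show that $\oj\oo\blob$ takes \jt-convergent nets to Cauchy nets (via the estimate $\norm{\oj\oo n x_n-\oj\oo m x_m}\le\norm{x_n-\j nm x_m}+\norm{(\oj\oo n\j nm-\oj\oo m)x_m}$), then apply the two universal properties to obtain the mutually inverse contractions $\psi$ and $\phi$ (the paper's $\oj\oo\oo$ and $\j\oo\oo$), and finally verify $\psi\circ\j\oo m=\oj\oo m$ on basic nets exactly as the paper does. You also correctly read the self-reference in the statement as pointing to \cref{thm:soft_universal_property}.

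The one place you go beyond the paper is the surjectivity of $\psi$. The paper's proof derives $\j\oo\oo\circ\psi=\id_{E_\oo}$ and $\psi\circ\j\oo\oo=\id$ on $\Ran(\ojlim)$ and then simply declares $\psi$ an isometric isomorphism; as you observe, this only gives $\Ran(\psi)=\overline{\Ran(\ojlim)}$, and the existence-only form of the universal property does not force this closure to be all of $\tilde E_\oo$ (your $\tilde E_\oo=E_\oo\oplus X$ example is a genuine counterexample to the literal statement). Your resolution --- reading the hypothesis as including uniqueness of the factoring operator, equivalently density of $\Ran(\ojlim)$ --- is the right fix and makes your write-up more careful than the paper's own argument on this point.
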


The isomorphism $\psi$ is simply given by $\jlim_n x_n \mapsto \ojlim_n x_n$.

\begin{proof}
    We claim that for any $x\d\in\C(E,j)$, the limit $\ojlim_nx_n\coloneqq \lim_n\oj\oo n x_n$ exists in $F$. This follows from
    \[
        \lim_{n\gg m} \norm{\oj\oo nx_n - \oj\oo m x_m} = \lim_{n\gg m} \norm{\oj\oo n(x_n - \j nm x_m)} + 0 \le \lim_{n\gg m} \norm{x_n-\j nm x_m}= 0.
    \]
    It is also clear, that for any two $x\d,y\d\in\C(E,j)$ with $\seminorm{x\d-y\d}=0$ we get $\ojlim_nx_n=\ojlim_ny_n$ (just check that $\ojlim_n z_n =0$ for all $z\in\C_0(E,j)$).

    Consider the nets $\j\oo n :E_n\to E_\oo$ and $\oj\oo n :E_n\to E_\oo$ which take \jt-convergent nets to Cauchy nets and apply the universal properties of $\tilde E_\oo$ and $E_\oo$, respectively.
    Therefore, there are contractions $\j\oo\oo  : \tilde E_\oo \to E_\oo$ and $\psi\coloneqq \oj\oo\oo:E_\oo \to\tilde E_\oo$, such that
    \[
        \j\oo\oo (\ojlim_n x_n) = \lim_n\, \j\oo n x_n = \jlim_n x_n
        \qandq \oj\oo\oo (\jlim_n x_n) = \ojlim_n x_n.
    \]
    Therefore $\psi$ is an isometric isomorphism between $E_\oo$ and $\tilde E_\oo$.
    It remains to be shown that $\oj\oo n$ is equal to $\j\oo n$ up to this isomorphism.
    This follows from
    \[
    \psi(\j\oo m x_m) = \psi \paren[\big]{\jlim_n \j nm x_m} =  \ojlim_n \j nm x_m
        = \lim_n \oj\oo n \j nm x_m = \oj\oo m x_m.\qedhere
    \]
\end{proof}

\begin{rem}
    Only the asymptotic properties of the connecting maps $\j nm$ and the spaces $E_n$ matter for the structure of the limit space (see also \cref{thm:equivalent_js}).
    One can even relax the assumption that all $E_n$ are Banach spaces and just assume a soft inductive system of normed spaces.
    In this case, the limit space will automatically be complete, i.e., a Banach space, and it agrees with the limit space of the system of Banach spaces that is obtained by completion of the normed spaces and continuous extension of the connecting maps.
\end{rem}

\begin{lem}
    Assume that the connecting contractions $\j nm$ are asymptotically isometric in the sense that
     \begin{equation}\label{eq:asymptotically_isometric}
         \lim_{n\gg m} \lambda_{nm } =1,\quad \lambda_{nm}=\inf_{\substack{x_m\in E_m\\\norm{x_m}=1}} \norm{\j nm x_m}.
    \end{equation}
    Then a net $x\d$ is \jt-convergent if and only if $\lim_n \j\oo nx_n$ exists (and one has $\jlim_n x_n =\lim_n \j\oo nx_n$).
\end{lem}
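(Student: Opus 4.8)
The statement is a biconditional, and I would first note that one implication is essentially free. If $x\d$ is \jt-convergent, then part \ref{it:convergence_of_embeddings} of \cref{thm:density_thm} says precisely that the net $(\j\oo n x_n)$ converges in $E_\oo$ to $x_\oo=\jlim_n x_n$; in particular $\lim_n\j\oo n x_n$ exists and already equals the \jt-limit. This direction uses nothing about isometry. Hence the whole content of the lemma, and the only place where the asymptotic isometry hypothesis \eqref{eq:asymptotically_isometric} enters, is the converse implication together with the identification of the two limits.

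The engine of the converse is a reverse norm estimate for the embeddings, and the plan is to prove it first: for every uniformly bounded net $z\d\in\nets(E,j)$ one has $\norm{\j\oo n z_n}\ge\mu_n\norm{z_n}$, where $\mu_n\coloneqq\limsup_k\lambda_{kn}$. This follows by unwinding the definitions. Since $\j\oo n z_n$ is the \jt-limit of the basic net $\j\blob n z_n$, we have $\norm{\j\oo n z_n}=\seminorm{\j\blob n z_n}=\limsup_k\norm{\j kn z_n}$, and because $\norm{\j kn z_n}\ge\lambda_{kn}\norm{z_n}$ for every $k\ge n$ (which is all the $\limsup_k$ sees), passing to $\limsup_k$ gives the claim. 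The hypothesis \eqref{eq:asymptotically_isometric} is exactly $\lim_m\mu_m=1$, so $\mu_n\to1$ along $N$; this is what will let me invert the estimate.

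With the estimate in hand, the converse assembles as follows. Assume $\xi\coloneqq\lim_n\j\oo n x_n$ exists in $E_\oo$. Because $E_\oo$ is by construction the space of \jt-limits, I would choose a \jt-convergent net $y\d\in\C(E,j)$ with $y_\oo=\xi$; applying the already-proved forward direction to $y\d$ gives $\j\oo n y_n\to\xi$. Setting $z\d\coloneqq x\d-y\d\in\nets(E,j)$ and using linearity of $\j\oo n$, I obtain $\j\oo n z_n=\j\oo n x_n-\j\oo n y_n\to\xi-\xi=0$, so $\norm{\j\oo n z_n}\to0$. The reverse estimate then forces $\norm{z_n}\to0$ (eventually $\mu_n\ge\tfrac12$, so $\norm{z_n}\le2\norm{\j\oo n z_n}$), i.e. $z\d$ is a null net and hence \jt-convergent. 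Consequently $x\d=y\d+z\d$ lies in the vector space $\C(E,j)$ and is \jt-convergent, with $x_\oo=y_\oo+z_\oo=\xi$; this simultaneously yields $\jlim_n x_n=\lim_n\j\oo n x_n$.

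The one genuinely load-bearing step is the reverse norm estimate and, with it, the passage from $\norm{\j\oo n z_n}\to0$ to $\norm{z_n}\to0$. The main obstacle will be the bookkeeping around the $\limsup_k$ defining the $E_\oo$-norm: one must check that it only probes indices $k\ge n$ (so that $\lambda_{kn}$ is defined) and that $\mu_n$ is eventually bounded below, which is exactly where \eqref{eq:asymptotically_isometric} is indispensable—without asymptotic isometry $\j\oo n$ could annihilate a non-null net and the converse would fail outright. Everything else (existence of $y\d$, linearity of the embeddings, and closure of $\C(E,j)$ under adding a null net) is routine.
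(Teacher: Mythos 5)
Your proof is correct. The forward direction and the identification $\jlim_n x_n=\lim_n\j\oo n x_n$ are, as you note, exactly item \ref{it:convergence_of_embeddings} of \cref{thm:density_thm}, and your reverse norm estimate $\norm{\j\oo n z_n}=\limsup_k\norm{\j kn z_n}\ge \mu_n\norm{z_n}$ with $\mu_n=\limsup_k\lambda_{kn}\to1$ is precisely where \eqref{eq:asymptotically_isometric} does its work; the paper relies on the same lower bound. Where you genuinely diverge is in how the converse is assembled. The paper applies the estimate to the nets $x_n-\j nm x_m$ and verifies the intrinsic Cauchy criterion \eqref{eq:j_convergence} directly, splitting $\norm{\j\oo n(x_n-\j nm x_m)}$ by the triangle inequality into the Cauchy property of $(\j\oo n x_n)$ and the defining property \eqref{eq:j_convergence_2} of the basic net $\j\blob m x_m$. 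You instead choose a \jt-convergent representative $y\d$ of the limit $\xi$ --- which uses the surjectivity of $\jlim:\C(E,j)\to E_\oo$, i.e.\ the completeness feature of the quotient construction --- and show that $z\d=x\d-y\d$ is a null net, so that $x\d\in\C(E,j)$ because $\C(E,j)$ is a vector space containing $\C_0(E,j)$. Both arguments are short and hinge on the same inequality; yours makes the structural point ``$x\d$ differs from a convergent net by a null net'' explicit but leans on the specific realization of $E_\oo$, whereas the paper's verification of the Cauchy-type criterion never needs to know that every element of $E_\oo$ arises as a \jt-limit. Your bookkeeping remarks (the $\limsup_k$ only probes $k\ge n$, and $\mu_n$ is eventually bounded below) are exactly the right points to flag, and both are handled correctly.
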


\begin{proof}
    The "only if" part holds for all soft inductive systems.
    For the converse, assume that the limit $\lim_n \j\oo nx_n$ exists.
    Now \eqref{eq:asymptotically_isometric} implies that $\lim_m \norm{x_m}=\lim_{n\gg m}\norm{\j nm x_m}$.
    We find that
    \begin{align*}
        \lim_{n\gg m}\norm{x_n - \j nmx_m}
        &\le \lim_{n\gg m} \norm{\j\oo n (x_n-\j nm x_m)} \\
        &\le \lim_{n\gg m} \norm{\j\oo n x_n- \j\oo mx_m} + \lim_{n\gg m}\norm{\j\oo m x_m - \j\oo n \j nm x_m} =0.
    \end{align*}
    The first term is zero since we assumed $\j\oo n x_n$ to be Cauchy, and the second one vanishes (already in the limit $n\to \oo$) because basic sequences satisfy \eqref{eq:j_convergence_2}.
\end{proof}

\begin{rem}
    If the directed set $N$ is countable and all Banach spaces $E_n$ are separable, then so is $E_\oo$.
    This can be seen by noting that a collection of a countable total subset $\set{x_i\up n\given i\in \NN}$ in each $E_n$ gives us a countable subset $\set{\j\blob n a_i\up i\given i\in\NN,\ n\in N}\subset\C(E,j)$ with seminorm-dense span.
\end{rem}

We are interested in the degree to which the notion of \jt-convergence and the limit space depend on the explicit choice of connecting maps $\j nm$.

\begin{prop}\label{thm:equivalent_js}
    Let $\set{E_n}_{n\in N}$ be a family of Banach spaces indexed by a directed set $N$ and let $\j nm, \oj nm:E_n\to E_m$ be two families of connecting maps so that $(E,j)$ and $(E,\ojt\,)$ both are soft inductive systems.
    The following are equivalent
    \begin{enumerate}[(1)]
        \item\label{eq:equivalence_of_connecting_maps}
            \jt-convergence is equivalent to \ojt-convergence, i.e.\ $\C(E,j)=\C(E,\ojt\,)$.
        \item\label{eq:equivalence_of_basics}
            For all $l$ and $x_l\in E_l$ one has
            \begin{equation}
                \lim_{n\gg m}\norm{(\j nm\oj ml-\j nl)x_l} = \lim_{n\gg m}\norm{(\oj nm\j ml - \oj nl)x_l} =0.
            \end{equation}
    \end{enumerate}
    In this case, the limit spaces of both inductive systems are isometrically isomorphic via the identification
    \begin{equation}\label{eq:equivalence_of_limit}
        \jlim_n x_n \longleftrightarrow \ojlim_n x_n.
    \end{equation}
    If one constructs the limit space as in \eqref{eq:limit_space}, then they are even equal (not just isomorphic).
\end{prop}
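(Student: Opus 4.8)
The plan is to reduce the equivalence to a comparison of basic nets, and then to read off both limit spaces from the same underlying data. The observation I would exploit throughout is that the seminorm $\seminorm{\blub}=\limsup_n\norm{\blub}$ from \eqref{eq:seminorm}, and therefore also the subspace $\C_0$ of null nets, depends only on the family $\set{E_n}$ and not at all on the connecting maps. Since by \eqref{eq:convergent_nets} the space $\C(E,j)$ is the $\seminorm{\blub}$-closure of the $\jmath$-basic nets and $\C(E,\ojt\,)$ is the $\seminorm{\blub}$-closure of the $\ojt$-basic nets, both taken with respect to this one seminorm, and since a seminorm-closure is itself seminorm-closed, I would first show that (1) is equivalent to the intermediate assertion that \emph{every $\ojt$-basic net is \jt-convergent and every $\jmath$-basic net is \ojt-convergent}. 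One direction is immediate, since basic nets always lie in their own convergence space; for the other, the inclusion of all $\ojt$-basic nets into the seminorm-closed set $\C(E,j)$ forces $\C(E,\ojt\,)\subseteq\C(E,j)$ by monotonicity of the closure, and symmetry gives the reverse inclusion.

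Next I would match this intermediate assertion with (2). Applying the \jt-convergence criterion \cref{thm:j_convergence}, which remains valid for soft systems, to the $\ojt$-basic net $\oj\blob l x_l$ — whose value at index $n$ is $\oj nl x_l$ and at index $m$ is $\oj ml x_l$ — shows that it is \jt-convergent if and only if
\[
    \lim_{n\gg m}\norm{(\j nm\oj ml-\oj nl)x_l}=0,
\]
which, ranging over all $l$ and all $x_l\in E_l$, is the first of the two limits in (2). The symmetric computation applied to the $\jmath$-basic nets yields $\lim_{n\gg m}\norm{(\oj nm\j ml-\j nl)x_l}=0$, the second limit. Combined with the reduction of the first paragraph, this gives the equivalence of (1) and (2).

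For the statements about the limit spaces I would argue directly. Assuming (1), and using that $\C_0(E,j)=\C_0(E,\ojt\,)$ holds unconditionally (a null net is one with $\lim_n\norm{x_n}=0$, again a map-independent condition), both limit spaces built as the quotients \eqref{eq:limit_space} are the quotient of the single space $\C(E,j)=\C(E,\ojt\,)$ by the single space $\C_0$, carrying the identical norm \eqref{eq:quotient_norm}; hence $E_\oo=\tilde E_\oo$ on the nose and the correspondence $\jlim_n x_n\mapsto\ojlim_n x_n$ is literally the identity. If the limit spaces are instead produced abstractly, the same formula defines a map $\psi\colon E_\oo\to\tilde E_\oo$ which is well defined and isometric because $\norm{\jlim_n x_n}=\seminorm{x\d}=\norm{\ojlim_n x_n}$ (so $\seminorm{x\d-y\d}=0$ forces the two limits to agree), and which is surjective because (1) makes every \ojt-convergent net \jt-convergent.

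The main point is conceptual rather than analytic: the whole argument rests on recognising that $\C(E,j)$ is determined by its basic nets together with the \emph{map-independent} seminorm, so that comparing two inductive structures on a fixed family collapses to testing the basic nets of each against the other. The only computation requiring care is the passage between the seminorm form $\lim_m\seminorm{\oj\blob l x_l-\j\blob m\oj ml x_l}=0$ and the pointwise form in (2), i.e.\ the handling of the iterated limit $\lim_{n\gg m}=\lim_m\limsup_n$. Finally, I would be careful \emph{not} to route the last step through the uniqueness result \cref{thm:soft_uniqueness}: its hypothesis forces the embeddings to intertwine ($\psi\circ\j\oo n=\oj\oo n$), which is strictly stronger than (1) and can fail even when $\C(E,j)=\C(E,\ojt\,)$, whereas the quotient argument above delivers exactly the weaker identification that is claimed.
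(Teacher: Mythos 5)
Your proof follows the same route as the paper's (which compresses the whole argument into three lines): reduce \ref{eq:equivalence_of_connecting_maps} to the statement that each system's basic nets are convergent in the other system, translate that statement via the Cauchy-type criterion of \cref{thm:j_convergence}, and then observe that the null nets are map-independent so that under \ref{eq:equivalence_of_connecting_maps} the two quotients \eqref{eq:limit_space} literally coincide. The extra care you take — the seminorm and $\C_0$ depend only on the family $\set{E_n}$, the closure argument for $\C(E,\ojt\,)\subseteq\C(E,j)$, and the warning against routing the last step through \cref{thm:soft_uniqueness} — is all sound and fills in exactly what the paper leaves implicit.

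There is, however, one place where you assert a match that does not hold as written. Testing the $\ojt$-basic net $\oj\blob l x_l$ for \jt-convergence gives, as you correctly compute,
\[
\lim_{n\gg m}\norm{(\j nm\oj ml-\oj nl)x_l}=0,
\]
whereas the first limit displayed in \ref{eq:equivalence_of_basics} reads $\lim_{n\gg m}\norm{(\j nm\oj ml-\j nl)x_l}=0$, with $\j nl$ in place of $\oj nl$ (and symmetrically for the second limit). These are genuinely different conditions. For the constant family $E_n=\CC$ over $\NN$ with $\j nm=1$ and $\oj nm=e^{i(1/n-1/m)}$, both systems are strict, \ref{eq:equivalence_of_connecting_maps} holds (both notions of convergence reduce to the Cauchy condition since $e^{-i/n}\to1$), and your derived condition holds, yet $\lim_{n\gg m}\abs{(\j nm\oj ml-\j nl)x_l}=\abs{e^{-i/l}-1}\,\abs{x_l}\neq0$. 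So the condition printed in \ref{eq:equivalence_of_basics} is not equivalent to \ref{eq:equivalence_of_connecting_maps}; the subtrahends in the display are evidently transposed, and the paper's own proof — which only says the basic-net condition ``is precisely'' the displayed one without writing it out — carries the same silent mismatch. Your argument proves the corrected statement; you should flag the discrepancy explicitly rather than claim your formula ``is the first of the two limits in (2)'', since no argument can bridge the gap. Everything else, in particular the identification of the limit spaces via the common quotient, is correct.
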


\begin{proof}
   \ref{eq:equivalence_of_connecting_maps} is equivalent to $\j nm$ and $\oj nm$ defining the same notion of convergence.
   This is, in turn, equivalent to \jt-convergence of all \ojt-basic nets and \ojt-convergence of all \jt-basic nets, which is precisely the condition \ref{eq:equivalence_of_basics}.

   The second part is clear: Both the \jt-convergent and null nets are the same for both systems so that the quotients $\C(E,j)/\C_0(E,j)$ and $\C(E,\ojt\,)/\C_0(E,\ojt\,)$ agree.
\end{proof}

Equipped with this, we briefly discuss the notion of a split inductive system.
This is an extra structure that singles out a \jt-convergent net for every point of the limit space.
This structure is present in several of the examples that we discuss in \cref{sec:ex} and does not trivialize in the case of strict systems.

\begin{defin} \label{def:split}
    A {\bf split} inductive system $(E,j,s)$ is soft inductive system $(E,j)$ together with a linear contraction $s\d : E_\oo \to\C(E,j)$ so that $\jlim_n s_n(y) = y$ for all  $y\in E_\oo$, i.e., $s\d$ is a right inverse of $\jlim : \C(E,j)\to E_\oo$.
\end{defin}

We have the following application of \cref{thm:equivalent_js}:

\begin{cor}\label{thm:split}
    Let $(E,j,s)$ be a split inductive system and define new connecting maps $\oj nm = \j\oo n \circ s_n$.
    Then $(E,\ojt\,)$ is a soft inductive system, and the notions of \jt-convergence and \ojt-convergence are equivalent (hence they define the same limit space).
    Furthermore, the following stronger version of \eqref{eq:soft_inductive_sys} holds for the $\ojt$-maps
    \begin{equation}\label{eq:split_soft}
        \lim_m \sup_n \,\norm{(\oj nl - \oj nm \oj ml)x_l} =0 \quad\forall l\in N, \, x_l\in E_l.
    \end{equation}
\end{cor}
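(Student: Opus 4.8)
The plan is to trace every claim back to the single fact that $s\d$ splits $\jlim$, made quantitative by \cref{thm:density_thm}. For the domains and codomains to match I read the connecting maps as $\oj nm = s_n\circ\j\oo m\colon E_m\to E_n$ (that is, $E_m\xrightarrow{\j\oo m}E_\oo\xrightarrow{s_n}E_n$). The one consequence of the splitting hypothesis I need is
\begin{equation}\label{eq:split_key}
    \lim_n \j\oo n s_n(y) = y\qquad\text{for every }y\in E_\oo,
\end{equation}
which is immediate: the net $s\d(y)$ is \jt-convergent with \jt-limit $y$, so \cref{thm:density_thm}\,\ref{it:convergence_of_embeddings} gives $\j\oo n s_n(y)\to y$ in $E_\oo$.

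First I would dispose of the soft-system axiom together with its strengthening \eqref{eq:split_soft}. Setting $y\coloneqq\j\oo l x_l$ and using linearity,
\[
    (\oj nl-\oj nm\oj ml)x_l = s_n\paren[\big]{y-\j\oo m s_m(y)},
\]
so contractivity of $s_n$ gives $\norm{(\oj nl-\oj nm\oj ml)x_l}\le\norm{y-\j\oo m s_m(y)}$ with a bound independent of $n$. Taking $\sup_n$ and then $\lim_m$ annihilates the right-hand side by \eqref{eq:split_key}; this is exactly \eqref{eq:split_soft}, and since $\lim_{n\gg m}\le\lim_m\sup_n$ it in particular yields \eqref{eq:soft_inductive_sys}, so $(E,\ojt\,)$ is a soft inductive system. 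What makes this painless is that the outermost map is $s_n$, so all $n$-dependence is absorbed by one contraction and the estimate collapses to a statement in $E_\oo$.

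It remains to identify the two notions of convergence, for which I would invoke criterion \ref{eq:equivalence_of_basics} of \cref{thm:equivalent_js} (now applicable, as both systems are soft) and check both $\lim_{n\gg m}\norm{(\oj nm\j ml-\oj nl)x_l}=0$ and $\lim_{n\gg m}\norm{(\j nm\oj ml-\j nl)x_l}=0$. The first is again of the effortless kind, since $s_n$ sits outermost: $(\oj nm\j ml-\oj nl)x_l=s_n\paren[\big]{\j\oo m\j ml x_l-\j\oo l x_l}$, and $\norm{\j\oo m\j ml x_l-\j\oo l x_l}=\limsup_n\norm{(\j nm\j ml-\j nl)x_l}\to0$ as $m\to\oo$ is merely the soft transitivity of $(E,j)$ rephrased through the limit maps via $\norm{u_\oo}=\seminorm{u\d}$. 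The second condition is the genuine obstacle, because now the outer maps are the $\j nm$ and nothing can be pulled out. Here I would read $\j\blob m s_m(y)$ and $\j\blob l x_l$ (again $y=\j\oo l x_l$) as basic, hence \jt-convergent, nets with \jt-limits $\j\oo m s_m(y)$ and $y$, and exploit the isometry $\seminorm{u\d-v\d}=\norm{u_\oo-v_\oo}$ from \eqref{eq:quotient_norm} to compute
\[
    \limsup_n\norm{\j nm s_m(y)-\j nl x_l}=\seminorm{\j\blob m s_m(y)-\j\blob l x_l}=\norm{\j\oo m s_m(y)-y}\xrightarrow{\ m\to\oo\ }0
\]
by \eqref{eq:split_key}. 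This establishes \ref{eq:equivalence_of_basics}, so \cref{thm:equivalent_js} gives $\C(E,j)=\C(E,\ojt\,)$ and hence a common limit space. The only real difficulty is thus the asymmetry in this last basic-net condition, which I resolve by passing to seminorms of basic nets rather than to pointwise contraction estimates.
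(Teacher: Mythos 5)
Your proof is correct and follows exactly the route the paper indicates (the corollary is stated as an application of \cref{thm:equivalent_js} with the verification left unwritten): you check condition \ref{eq:equivalence_of_basics} of that proposition together with the strengthened estimate \eqref{eq:split_soft}, and your reading $\oj nm = s_n\circ\j\oo m$ is the only type-correct interpretation of the (evidently misprinted) definition in the statement. All three estimates are sound --- factoring the contraction $s_n$ out front gives the uniform-in-$n$ bound needed for \eqref{eq:split_soft} and for the first basic-net condition, and identifying $\limsup_n\norm{\j nm s_m(y)-\j nl x_l}$ with $\norm{\j\oo m s_m(y)-y}$ via the seminorm of a difference of basic nets correctly handles the asymmetric second condition.
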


A natural way to obtain a split inductive system is the following\footnote{This is the abstract version of the soft inductive system that we will use for the classical limit in \cref{sec:classical_limit}. Here the limit space is a space of functions on the classical phase space, and the maps $i$ and $p$ are suitable quantization and dequantization maps.}:
Suppose that we are given a net of Banach spaces $E_n$ a space $E_\oo$ and nets of contractions $i_n:E_n\to E_\oo$ and $p_n:E_\oo\to E_n$ so that $i_n\circ p_n \to \id_{E_\oo}$ strongly.
Then we obtain a soft inductive system by setting $\j nm = p_n\circ i_m$.
In fact, $(E,j,p)$ is a split inductive system, and by \cref{thm:split}, all split inductive systems are essentially of this form.

\section{Nets of operations on inductive systems}\label{sec:operations}

The universal property of the limit space of a strict inductive system can be regarded as a convergence result for certain operations.
But if we regard it as such, the assumption $T_n\j nm = \j nm T_m$ for all $n>m$ is unnecessarily restrictive.
We will now define convergence for nets of operators between two inductive systems $(E,j)$ and $(\tilde E,\ojt\,)$ but we will almost exclusively work with the cases where either both are the same system or one is constant (as in \cref{exa:constant}).
Whenever we consider two inductive systems, we assume they are defined w.r.t.\ the same directed set.
All definitions, statements, and proofs given in this section also apply to the broader class of soft inductive systems introduced in the previous section.

\begin{defin}
    Let $(E,j)$ and $(\tilde E,\ojt\,)$ be inductive systems and let $T\d$ be a uniformly bounded net of linear operators $T_n : E_n \to \tilde E_n$.
    We say that $T\d$ is {\bf \jt\ojt-convergent} if it maps \jt-convergent nets to \ojt-convergent nets, i.e., if for every $x\d\in\C(E,j)$ one has $T\d x\d\in \C(\tilde E,\ojt\,)$.
\end{defin}

In the case that $(\tilde E,\ojt\,)=(E,j)$, this \jt\jt-convergence means holds if and only if $T\d$ preserves \jt-convergence.
It follows that there always is a well-defined limit for such operations which is an operator between the limit spaces.

\begin{prop}\label{thm:jjconv}
    Let $(E,j)$ and $(\tilde E,\ojt\,)$ be inductive systems and let $T\d$ be an \jt\ojt-convergent net of operators.
    Then there is linear operator $T_\oo : E_\oo \to \tilde E_\oo$, such that
    \begin{equation}\label{eq:jj_limit}
        T_\oo \big(\jlim_n x_n \big) = \ojlim_n T_n x_n, \quad x\d\in \C(E,j).
    \end{equation}
    Its norm is bounded by $\norm{T_\oo}_{\L(E_\oo,\tilde E_\oo)} \le \limsup_n \norm{T_n}_{\L(E_n,\tilde E_n)}$.
\end{prop}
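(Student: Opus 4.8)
The plan is to obtain $T_\oo$ by factoring the induced map on nets through the quotient that defines $E_\oo$. By hypothesis $T\d$ sends $\C(E,j)$ into $\C(\tilde E,\ojt\,)$, and since each $T_n$ is linear, the assignment $x\d\mapsto T\d x\d = (T_nx_n)_n$ is a linear map $\C(E,j)\to\C(\tilde E,\ojt\,)$. Composing it with the quotient projection $\ojlim$ yields a linear map
\[
    \Phi:\C(E,j)\to\tilde E_\oo,\qquad \Phi(x\d) = \ojlim_n T_nx_n .
\]
Because $E_\oo = \C(E,j)/\C_0(E,j)$, the operator $T_\oo$ will be the factorization of $\Phi$ through this quotient, and then \eqref{eq:jj_limit} holds by construction. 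Thus the whole statement reduces to two points: that $\Phi$ descends to $E_\oo$, and the norm bound. (Note that I cannot shortcut this through \cref{thm:soft_universal_property}, since that proposition is itself a special case of the present one.)

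The key step — and the only place where uniform boundedness of $T\d$ enters — is to show that $\Phi$ annihilates the null nets $\C_0(E,j)$, for then it factors through $E_\oo$. I would take $x\d\in\C_0(E,j)$, i.e.\ $\lim_n\norm{x_n}=0$. Null nets are \jt-convergent, so $T\d x\d\in\C(\tilde E,\ojt\,)$ by assumption; and with $M\coloneqq\sup_n\norm{T_n}<\oo$ one gets $\norm{T_nx_n}\le M\norm{x_n}\to 0$, whence $T\d x\d\in\C_0(\tilde E,\ojt\,)$ and $\Phi(x\d)=0$. Consequently $\Phi$ is constant on the cosets of $\C_0(E,j)$ and descends to a well-defined linear operator $T_\oo:E_\oo\to\tilde E_\oo$ satisfying $T_\oo(\jlim_n x_n)=\ojlim_n T_nx_n$. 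This is the crux of the argument: without a uniform bound on the $\norm{T_n}$ the map $\Phi$ need not kill null nets, and $T_\oo$ would fail to be well-defined.

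It then remains to establish the norm estimate, which is routine. For $x\d\in\C(E,j)$ the norm on $\tilde E_\oo$ is the seminorm of any representative (see \eqref{eq:limit_space_norm}), so
\[
    \norm{T_\oo x_\oo} = \seminorm{T\d x\d} = \limsup_n\norm{T_nx_n} \le \limsup_n\big(\norm{T_n}\,\norm{x_n}\big).
\]
By \cref{thm:j_convergence} the limit $\lim_n\norm{x_n}$ exists and equals $\norm{x_\oo}$; since this factor converges, the $\limsup$ of the product splits as $\big(\limsup_n\norm{T_n}\big)\lim_n\norm{x_n} = \big(\limsup_n\norm{T_n}\big)\norm{x_\oo}$. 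Taking the supremum over $\norm{x_\oo}\le 1$ yields $\norm{T_\oo}\le\limsup_n\norm{T_n}$, which completes the plan.
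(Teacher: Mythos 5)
Your proposal is correct and follows essentially the same route as the paper: well-definedness of $T_\oo$ via uniform boundedness of $T\d$ (the paper phrases it as two representatives with the same limit having equal images, which is equivalent to your "$\Phi$ kills null nets"), followed by the same seminorm computation for the norm bound. Your remark that the $\limsup$ of the product splits because $\lim_n\norm{x_n}$ exists is a slightly more careful justification of a step the paper leaves implicit.
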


\begin{proof}
    Well-definedness follows from uniform boundedness of the net $T\d$: If $x\d$ and $y\d$ are \jt-convergent with the same limit, then $\norm{\jlim_n T_nx_n - \jlim_n T_n y_n} \le \lim_n \norm{T_n} \norm{x_n-y_n} = 0$.
    The norm bound is also immediate: $\norm{T_\oo \jlim_n x_n} = \lim_n \norm{T_n x_n} \le \limsup_n \norm{T_n} \norm{x_n} = \limsup_n \norm{T_n}\norm{\jlim_n x_n}$.
\end{proof}

\cref{thm:soft_universal_property}, which states that $(E_\oo,\j\oo\blob)$ satisfies the universal property of the limit space, follows as the special case where $(\tilde E,\ojt\,)=(F,\id_F)$ is a constant inductive system.
Another application is the notion of \jts-convergence discussed in \cref{def:js_conv} which is \jt\ojt-convergence if $(\tilde E,\ojt\,) =  (\CC,\id)$.
One can view \jt\ojt-convergence as a generalization of strong convergence:

\begin{exa}
    Let $(E,\id_E)$ and $(F,\id_F)$ be constant inductive systems. Then a uniformly bounded sequence of operators $T_n : E\to F$ is $\id_E\id_F$-convergent if and only if it is strongly convergent and the limit $T_\oo$ is the strong limit.
\end{exa}

Another special case of \jt\ojt-convergence is the convergence of operations $T_{n}:E_n \to E_{f(n)}$, which change the index within a fixed inductive system.
This allows for greater flexibility in describing operations on the limit space (see, for example, \cref{sec:thompson}).
If $f:N\to N$ is a monotone cofinal mapping, i.e.,  $n\le m\implies f(n)\le f(m)$ and $\lim_n f(n)=\oo$, then we obtain a soft inductive system by setting $\tilde E_n = E_{f(n)}$ and $\oj nm = \j{f(n)}{f(m)}$.
The limit space of this inductive system $(\tilde E,\ojt\,)$ is canonically isomorphic to $E_\oo$.
Thus a \jt\ojt-convergent net of operations defines an operation $T_\oo$ on $E_\oo$.

When it comes to proving \jt\ojt-convergence, we have the following criterion:

\begin{lem}
    Let $(E,j)$ and $(\tilde E,\ojt\,)$ be inductive systems and let $T\d$ be a uniformly bounded net of linear operators $T_n: E_n\to \tilde E_n$.
    \begin{enumerate}[(1)]
        \item\label{it:basics_suffice}
            If $T\d$ maps \jt-basic sequences to \ojt-convergent sequences, it is \jt\ojt-convergent.
        \item\label{it:jj_formula}
            $T\d$ is \jt\ojt-convergent if and only if
            \begin{equation}\label{eq:jj_conv}
                \lim_{n\gg m} \norm{(\oj nm T_m - T_n\j nm)x_m} = 0 \quad \forall x\d\in \C(E,j).
            \end{equation}
        \item\label{it:jj_composition}
            If $T\d$ is \jt\ojt-convergent and $S\d$ is a $\ojt\hat\jmath$-convergent net of uniformly bounded operators from $(\tilde E,\ojt\,)$ to an inductive system $(\hat E, \hat\jmath)$, then the composition $S\d T\d$ is $j\hat\jmath$-convergent and $(ST)_\oo = S_\oo T_\oo$.
    \end{enumerate}
\end{lem}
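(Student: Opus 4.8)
For part \ref{it:basics_suffice} the plan is to exploit that, by its very definition in \eqref{eq:convergent_nets}, the space $\C(\tilde E,\ojt\,)$ is the seminorm-closure of the basic nets and hence seminorm-closed, combined with the fact that a uniformly bounded $T\d$ is a seminorm-contraction up to the uniform bound: $\seminorm{T\d y\d}=\limsup_n\norm{T_n y_n}\le\paren[\big]{\sup_n\norm{T_n}}\seminorm{y\d}$. Given $x\d\in\C(E,j)$, I would approximate it in seminorm by basic nets $\j\blob m x_m$; applying this estimate shows that $T\d x\d$ (which is uniformly bounded, hence lies in $\nets(\tilde E,\ojt\,)$) is the seminorm-limit of the nets $T\d(\j\blob m x_m)$, each of which is \ojt-convergent by hypothesis. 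Seminorm-closedness of $\C(\tilde E,\ojt\,)$ then forces $T\d x\d\in\C(\tilde E,\ojt\,)$, which is exactly \jt\ojt-convergence.

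For part \ref{it:jj_formula} the key device is the telescoping identity
\[
    T_n x_n - \oj nm T_m x_m = T_n(x_n - \j nm x_m) + (T_n\j nm - \oj nm T_m)x_m,
\]
valid for every net $x\d$. I would take $\lim_{n\gg m}=\lim_m\limsup_n$ of the norms and use subadditivity of $\limsup$ at both the inner ($n$) and outer ($m$) level. The first summand is dominated by $\paren[\big]{\sup_n\norm{T_n}}\lim_{n\gg m}\norm{x_n-\j nm x_m}$, which vanishes precisely because $x\d$ is \jt-convergent. Consequently $\lim_{n\gg m}\norm{T_n x_n-\oj nm T_m x_m}$ and $\lim_{n\gg m}\norm{(T_n\j nm-\oj nm T_m)x_m}$ bound each other. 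The vanishing of the former for all $x\d\in\C(E,j)$ is literally the statement that $T\d x\d\in\C(\tilde E,\ojt\,)$, i.e.\ \jt\ojt-convergence, whereas the vanishing of the latter is the criterion \eqref{eq:jj_conv}; thus the two are equivalent. I would stress that the criterion is correctly phrased as a condition over all $x\d\in\C(E,j)$, because the entry $x_m$ sweeps out the whole \jt-convergent net as $m\to\oo$, so it is a genuine statement about nets and no information about individual spaces $E_m$ is discarded.

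Part \ref{it:jj_composition} is a definition chase. Since $T\d$ maps $\C(E,j)$ into $\C(\tilde E,\ojt\,)$ and $S\d$ maps $\C(\tilde E,\ojt\,)$ into $\C(\hat E,\hat\jmath\,)$, the uniformly bounded composite $S\d T\d$ maps $\C(E,j)$ into $\C(\hat E,\hat\jmath\,)$ and is therefore $j\hat\jmath$-convergent. For the operator identity I would invoke \cref{thm:jjconv} three times: $(ST)_\oo\jlim_n x_n$ equals the $\hat\jmath$-limit of $S_n T_n x_n$, which by the defining relation of $S_\oo$ equals $S_\oo\paren[\big]{\ojlim_n T_n x_n}=S_\oo T_\oo\jlim_n x_n$; surjectivity of $\jlim$ onto $E_\oo$ then upgrades this to $(ST)_\oo=S_\oo T_\oo$.

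The only genuinely delicate step is \ref{it:jj_formula}: one must track the iterated limit $\lim_m\limsup_n$ carefully, applying subadditivity separately at the $n$- and $m$-levels and noting that the left-hand quantities are nonnegative so that control of the outer $\limsup_m$ already yields the claimed $\lim_{n\gg m}=0$. Parts \ref{it:basics_suffice} and \ref{it:jj_composition} are routine once the seminorm-closure property of $\C(\tilde E,\ojt\,)$ and \cref{thm:jjconv} are available.
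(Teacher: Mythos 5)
Your proposal is correct and follows essentially the same route as the paper: part (1) by approximating $x\d$ with basic nets and using the uniform bound $\sup_n\norm{T_n}$ together with seminorm-closedness of $\C(\tilde E,\ojt\,)$, part (2) by the same telescoping/triangle-inequality decomposition of $T_nx_n-\oj nm T_mx_m$, and part (3) as a direct definition chase via \cref{thm:jjconv}.
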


\begin{proof}
    \ref{it:jj_composition} is clear.
    \ref{it:basics_suffice}:
    We assume that $T\d$ maps basic sequences to \jt-convergent ones and set $M =\sup_n \allowbreak \norm{T_n}_{\L(E_n,\tilde E_n)}$. Let $x\d$ be \jt-convergent.
    For $\eps>0$, pick a basic net $y\d$ such that $\seminorm{x\d-y\d}<\eps$, then
    \[
        \lim_{m}\,\seminorm{T\d x\d - \j\blob m T_m x_m} \le \, M \seminorm{x\d - y\d} + \lim_m \seminorm{y\d-\j\blob m T_mx_m} + M \lim_m \norm{x_m-y_m} < 2M \eps.
    \]

    \ref{it:jj_formula}
    Assume $T\d$ to be \jt\ojt-convergent, then
    \[
        \lim_{n\gg m}\norm{(\oj nm T_m - T_n\j nm)x_m} \le
        \lim_{n\gg m} \paren[\Big]{\norm{T_nx_n - \oj nm T_m x_m} + \norm{T_n}\norm{x_n -\j nm x_m}} = 0.
    \]
    For the converse, we have
    \[
        \lim_{n\gg m}\,\norm{(T_nx_n - \oj nm T_m x_m} \le \lim_{n\gg m} \,\norm{T_n}\norm{x_n -\j nmx_m} + 0=0.
    \]
\end{proof}

The universal property of the soft inductive limit space follows from \cref{thm:jjconv} as the special case with a constant inductive system.

Let us assume that we have a, say, isometric action of a group $G$ on each Banach space $E_n$.
If the action preserves \jt-convergence (i.e.\ if $x\d\mapsto g\cdot x\d$ is \jt\jt-convergent), then the limit operations form an action of $G$ on $E_\oo$.
Actually, the same holds if $G$ is just a semigroup.
If the group action is strongly continuous with respect to some topology on $G$, then we would want the same to hold for the limiting action.
In the one-parameter case, we will reduce the problem to studying the convergence properties of the infinitesimal generators.
This requires generalizing \jt\jt-convergence to nets of unbounded operators:

\begin{defin}\label{def:unbounded}
    Let $(E,j)$ and $(\tilde E,\ojt\,)$ be inductive systems and let $A\d$ be a net of unbounded operators $A_n: \dom{A_n} \to \tilde E_n$.
    We define the {\bf net domain} of $A\d$ as
    \begin{equation}\label{eq:net_domain}
        \dom{A\d} \coloneqq \set{x\d \in \C(E,j) \given x_n \in\dom{A_n},\ A\d x\d \in \C(\tilde E,\ojt\,)\,}.
    \end{equation}
    We say that {\bf $A_\oo$ is  well-defined} if $\jlim_n A\d x\d$ is the same for all $x\d\in \dom{A\d}$ with the same \jt-limit.
    In this case we define the limit operator on $\dom{A_\oo}= \set{\jlim_n x_n \given x\d\in\dom{A\d}}$ by
    \begin{equation}\label{eq:limit_operator}
        A_\oo : \dom{A_\oo} \to \tilde E_\oo,\ A_\oo(\jlim_n x_n) \coloneqq \jlim_n A_nx_n.
    \end{equation}
\end{defin}

Observe that the well-definedness of $T_\oo$ is always guaranteed for a net of contractions $T\d$ and that $T\d$ is \jt\ojt-convergent if and only if $\dom{T\d}= \C(E,j)$.

\begin{lem}\label{thm:unb_op_con}
    Let $(E,j)$ be an inductive system and let $A\d$ be a net of operators $A_n: \dom{A_n}\to E_n$. Then:
    \begin{enumerate}[(1)]
        \item\label{it:well-def}
            $A_\oo$ is well-defined if and only if for all  $x\d\in \dom{A\d}$ with $\lim_n \norm{x_n}= 0$ we have $\lim_n \norm{A_nx_n}= 0$,
        \item\label{it:closed_dom} if all $A_n$ are closed operators, then $\Gamma_\oo =\set{\jlim_n x_n\oplus \jlim_n A_nx_n \given x\d\in\dom{A\d}}$ is a closed subspace of $E_\oo\oplus E_\oo$.
        \item\label{it:w_graph_con}
            Assume that there is a $w^*$-dense subset of the dual space $E_\oo^*$ which arises as limits of \jts-convergent sequences $\varphi\d$ with the properties that $\varphi_n\in\dom{A_n^*}$ and that $A\d^*\varphi\d$ is \jts-convergent. Then $A_\oo$ is well-defined.
    \end{enumerate}
    Now assume that $A_\oo$ is well-defined. Then
    \begin{enumerate}[resume*]
        \item\label{it:dens-def} $A_\oo$ is densely defined if and only if $\dom{A\d}$ is seminorm dense in $\C(E,j)$.
        \item\label{it:closed} if all $A_n$ are closable, then $A_\oo$ is a closed operator,
        \item\label{it:core} if for each $n$, $\D_n$ is a core for $A_n$, then every $x_\oo \in \dom{A_\oo}$ is the \jt-limit of an $x\d\in \dom{A\d}$ such that $x_n\in \D_n$ for all $n$.
    \end{enumerate}
\end{lem}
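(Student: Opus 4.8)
The plan is to treat the six assertions in turn, using the direct-sum system $(E\oplus E,\jt\oplus\jt)$ — with spaces $E_n\oplus E_n$, connecting maps $\j nm\oplus\j nm$, and limit space $E_\oo\oplus E_\oo$ — as the main organising device, since the graph of $A\d$ naturally lives there. For part~\ref{it:well-def} I would simply exploit linearity: $\dom{A\d}$ is a subspace and $A\d$ acts linearly, so $A_\oo$ is well-defined precisely when $A\d$ sends the null nets it contains to null nets. Writing $z\d=x\d-y\d$ for two representatives of the same \jt-limit, $z\d\in\dom{A\d}$ is a null net and the well-definedness condition $\jlim_n A_nx_n=\jlim_n A_ny_n$ becomes $\lim_n\norm{A_nz_n}=0$, which is exactly the stated criterion. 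Part~\ref{it:dens-def} is then immediate: $\dom{A_\oo}=(\dom{A\d})_\oo$, and by part~\ref{it:subspace_density} of \cref{thm:density_thm} this is dense in $E_\oo$ if and only if the subspace $\dom{A\d}$ is seminorm dense in $\C(E,j)$.

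For part~\ref{it:closed_dom} I would set $G=\set{x\d\oplus A\d x\d\given x\d\in\dom{A\d}}\subset\C(E\oplus E)$, so that $\Gamma_\oo=\jlim(G)$. Two steps. First, since each $A_n$ is closed, each graph $\Gamma(A_n)\subset E_n\oplus E_n$ is closed, so the nets taking values in these graphs form a $\norm{}_\nets$-closed set; intersecting with the closed subspace $\C(E\oplus E)$ shows $G$ is $\norm{}_\nets$-closed, hence a Banach space, as is $G\cap\C_0$. Second, the induced injection $G/(G\cap\C_0)\to E_\oo\oplus E_\oo$ is isometric: for $w\d=(x\d,A\d x\d)\in G$, the truncation net $u\d$ (equal to $-x\d$ below $m$ and zero from $m$ on) lies in $\dom{A\d}$ and is null, so $(u\d,A\d u\d)\in G\cap\C_0$, and running the computation behind \eqref{eq:quotient_norm} gives $\inf_{g\d\in G\cap\C_0}\norm{w\d+g\d}_\nets=\seminorm{w\d}=\norm{\jlim w\d}$. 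Thus $\Gamma_\oo$ is the isometric image of a Banach space and therefore closed.

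For part~\ref{it:w_graph_con} I would verify the criterion of part~\ref{it:well-def}. Let $x\d\in\dom{A\d}$ be a null net and put $y_\oo=\jlim_n A_nx_n$; I must show $y_\oo=0$. Since a vector annihilated by a $w^*$-dense subset of $E_\oo'$ is zero, it suffices to pair $y_\oo$ against the given functionals. For $\varphi_\oo=\jslim_n\varphi_n$ in that set, \cref{def:js_conv} yields $\dual{y_\oo}{\varphi_\oo}=\lim_n\dual{A_nx_n}{\varphi_n}=\lim_n\dual{x_n}{A_n^*\varphi_n}$, the last equality being the adjoint relation at each $n$ (using $x_n\in\dom{A_n}$, $\varphi_n\in\dom{A_n^*}$). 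As $A\d^*\varphi\d$ is \jts-convergent and $x\d$ is null with $x_\oo=0$, this limit equals $\dual{x_\oo}{\jslim_n A_n^*\varphi_n}=0$, so $y_\oo=0$.

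Finally, for part~\ref{it:closed} the well-definedness of $A_\oo$ makes $\Gamma_\oo=\Gamma(A_\oo)$ a genuine graph, so closedness of $A_\oo$ is closedness of $\Gamma_\oo$. With only closable $A_n$ the set $G$ above is no longer $\norm{}_\nets$-closed, so part~\ref{it:closed_dom} does not apply directly; the plan is to pass to the closures $\bar A_n$, apply part~\ref{it:closed_dom} to get that $\Gamma_\oo(\bar A)$ is closed, and then identify $\Gamma(A_\oo)$ with $\Gamma_\oo(\bar A)$. To realise a limit of graph points as one honest graph net I would telescope a fast Cauchy sequence $(x\up k_\oo,A_\oo x\up k_\oo)$: choose representatives of the successive differences in $\dom{A\d}$ and truncate each at an index $m_k\to\oo$, so that the series $x\d=\sum_k\hat e\up k\d$ converges in $\norm{}_\nets$ while reducing to a \emph{finite} sum at each fixed index (whence $x_n\in\dom{A_n}$ genuinely); closability is what keeps the $A\d$-images under control and rescues the step when the truncation indices cannot be chosen cofinally in $N$. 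Part~\ref{it:core} is the easier approximation: from a representative $y\d\in\dom{A\d}$ of $x_\oo$, use that each $\D_n$ is a core to pick $z_n\in\D_n$ whose graph-distance to $y_n$ tends to $0$ along $N$; then $z\d-y\d$ and $A\d z\d-A\d y\d$ are null, so $z\d\in\dom{A\d}$ has the same \jt-limit $x_\oo$ with $z_n\in\D_n$ for all $n$. I expect the main obstacle to be the telescoping construction in part~\ref{it:closed}: simultaneously keeping each $x_n$ in $\dom{A_n}$, controlling the sup-norm of the series, and matching both \jt-limits for a general directed index set is the delicate point, and is exactly where closability and part~\ref{it:closed_dom} are needed.
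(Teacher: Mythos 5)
Your proposal is correct on all six points, and for parts \ref{it:well-def}, \ref{it:dens-def} and \ref{it:core} it coincides with the paper's argument (difference of representatives is a null net in the domain; \cref{thm:density_thm}\ref{it:subspace_density}; graph-norm approximation by core elements). The genuine divergence is in part \ref{it:closed_dom}: the paper proves completeness of $\Gamma_\oo$ by checking that absolutely summable series converge, pulling each summand back to a representative in $\dom{A\d}$ with controlled $\norm{}_\nets$-norm and summing inside each closed graph $\Gamma(A_n)$; you instead observe that the graph-net space $G=\set{x\d\oplus A\d x\d}$ is $\norm{}_\nets$-closed (closedness of each $\Gamma(A_n)$ plus $\norm{}_\nets$-closedness of $\C$), and that the truncation nets used for \eqref{eq:quotient_norm} stay inside $G\cap\C_0$, so that $\Gamma_\oo$ is the isometric image of the Banach quotient $G/(G\cap\C_0)$. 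Both are valid; yours buys a cleaner reuse of the quotient machinery already established for $E_\oo$ itself, while the paper's series argument is what generalizes most directly to the closable case. For part \ref{it:w_graph_con} your formulation (pair $y_\oo$ against a $w^*$-dense separating set and invoke $w^*$-continuity) is actually slightly more careful than the paper's, which writes the norm of $y_\oo$ as a supremum over the given functionals — an equality that would require the set to be norming, whereas only separation is needed. For part \ref{it:closed} the paper says only ``follows from \ref{it:closed_dom}''; your route — pass to $\bar A_n$, apply \ref{it:closed_dom}, and identify $\Gamma(A_\oo)$ with $\Gamma_\oo(\bar A)$ via the core argument of \ref{it:core} with $\D_n=\dom{A_n}$ — is the argument the paper leaves implicit, and is preferable to the telescoping construction you also sketch (which, as you note, is awkward over a general directed set; the same caveat about choosing a positive net $\eps_n\to0$ already afflicts the paper's own proof of \ref{it:core}). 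No gap; where your write-up hedges, the paper is at least as terse.
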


\begin{proof}
Denote the graph of $A_n$ by $\Gamma_n\subset E_n\oplus E_n$, and by $\norm{\placeholder}_{A_n}$ the restriction of the norm on $E_n\oplus E_n$ to $\Gamma_n$,  i.e.,\ $\norm{x_n}_{A_n} =  \norm{x_n}+\norm{A_n x_n}$.

    \ref{it:well-def}: This is immediate because the difference of two nets in $\dom{A\d}$ with the same limit is an element of $\dom{A\d}$ such that $\jlim_n x_n= 0$.

    \ref{it:closed_dom}: We will show that $\Gamma_\oo$ is complete by proving that every absolutely summable series converges in $\Gamma_\oo$ \cite[Thm.~III.3]{reed_simon_1} which will follow from the same property for all $\Gamma_n$.
    Let $x_\oo\up k\oplus y_\oo\up k \in \Gamma_\oo$ be an absolutely summable sequence, i.e., $\sum_k \norm{x_\oo\up k\oplus y_\oo\up k}_{E_\oo\oplus E_\oo}<\oo$.
    We can pick $x\d\up k \in\dom{A\d}$ so that $\jlim_n x_n\up k = x_\oo\up k$, $\jlim_n A_nx_n\up k =y_\oo\up k$ and $\norm{x_n\up k\oplus A_nx_n\up k}_{E_n\oplus E_n} = \norm{x_\oo\up k\oplus y_\oo\up k}_{E_\oo\oplus E_\oo}$.
    Then the series $\sum_k (x_n\up k\oplus A_n x_n\up k)$ is absolutely and hence converges to some $x_n\up\oo\oplus A_nx_n\up\oo\in\Gamma_n$.
    We claim that $x\d\up\oo\in\dom{A\d}$ and that $\sum_k (x_\oo\up k\oplus y_\oo\up k) = x_\oo\up\oo\oplus\jlim_n (A_nx_n\up\oo)$.
    This follows from the fact that these series approximate their limits uniformly in the index $n$.

    \ref{it:w_graph_con}: Let $x\d\in\dom{A\d}$ converges to zero, then
    \begin{align*}
        \norm{\jlim_n A_n x_n}
        &= \sup \lim_n \abs{\dual{A_nx_n}{\varphi_n}} \\
        &= \sup \lim_n \abs{\dual{x_n}{A_n^*(\varphi_n)}}
        = \sup\abs{\dual{x_\oo}{\jslim_n A_n(\varphi_n)}} =0
    \end{align*}
    where the supremum is over all nets $\varphi\d$ with  $\norm{\jslim_n\varphi_n}=1$ which satisfy the specified assumptions.

    \ref{it:dens-def}: This follows item \ref{it:subspace_density} of \cref{thm:density_thm}.

    \ref{it:closed}: Follows from item \ref{it:closed_dom}.

    \ref{it:core}: Let $\eps_n \searrow 0$ as $n\to 0$ and let $x\d\in \dom{A\d}$.
    Pick for each $n$ a $y_n\in\D_n$ such that $\norm{x_n - y_n}_{A_n} < \eps_n$.
    Then $\lim_{n\gg m}\norm{y_n - \j nm y_m} \le \lim_{n\gg m} (\eps_n+\eps_m+\norm{x_n -\j nm x_m}  ) =0$, i.e., $y \in \C(E,j)$ and $\norm{x_\oo - y_\oo} < \lim_n \eps_n =0$.
    One gets $A\d y\d\in \C(E,j)$ by a similar argument.
\end{proof}

\section{Dynamics on inductive systems}\label{sec:dynamics}

We say that $\set{T(t)}_{t\ge0}$ is a dynamical semigroup on a Banach space if it is a strongly continuous one-parameter semigroup such that each $T(t)$ is a contraction.
The generator of a dynamical semigroup is the closed dissipative operator
\begin{equation}\label{eq:generator}
    A x = \lim_{t\searrow0} \, \frac{T(t)x-x}t, \quad
    \dom{A} = \set*{x\in E \given t\mapsto T(t)x \text{ is in } C^1(\RR_+,E)\,}.
\end{equation}
The resolvents of the generator are the operators $R(\lambda) = (\lambda-A)^{-1}$, $\lambda\in\CC$ with $\Re\lambda>0$, they are an important tool in the theory of dynamical semigroups and enjoy many nice properties, e.g., they are analytic in $\lambda$ and one has $\dom{A}=\Ran R(\lambda)$.
For an introduction to dynamical semigroups, we refer the reader to \cite{engelnagel}.

Consider now a dynamical semigroup $T_n(t)$ on each Banach space $E_n$ of an inductive system $(E,j)$ and let $A_n$ be the net of generators.
We will need the notion of the net domain for nets of unbounded operators introduced in \cref{def:unbounded}.
The main result of this section is the following theorem which relates \jt-convergence preservation of the semigroups to a type of convergence of the infinitesimal generators and \jt-convergence preservation of the resolvents.
It can be regarded as a generalization of the Trotter-Kato approximation theorems, to which our result reduces in the case of a constant inductive system.

\begin{thm}\label{thm:evolution}
    Let $(E,\jt)$ be an inductive system and let $T\d(t)$ be a net of dynamical semigroups with $A\d$ the corresponding net of generators.
    Let $\lambda\in\CC$ be such that $\Re\lambda > 0$. The following are equivalent:
    \begin{enumerate}[(1)]
        \item\label{it:semigroups}
            $T\d(t)$ is \jt\jt-convergent and the limit operators $T_\oo(t)$ are strongly continuous in $t$.
        \item\label{it:resolvents}
            The net $R\d(\lambda)$ of resolvents $R_n(\lambda)=(\lambda-A_n)^{-1}$ is \jt\jt-convergent and the limit operation $R_\oo(\lambda)$ has dense range.
        \item\label{it:net_core}
            There is a seminorm dense subspace $\D\subset \dom{A\d}$ such that $(\lambda-A\d)\D$ is also seminorm dense.
        \item\label{it:generator}
            $A_\oo$ is well-defined, hence closed, and generates a dynamical semigroup.
    \end{enumerate}
    If these hold, the limit operations $T_\oo(t)$ form a strongly continuous one-parameter semigroup, and their generator is $A_\oo$.
    Furthermore, the net domain is given by $\dom{A\d}=R\d(\lambda)\C(E,j)$ and the limits $\R_\oo(\lambda)$ of the resolvents are the resolvents of $A_\oo$.
    In particular, these claims hold for some $\lambda$ if and only if they hold for all $\lambda$.
\end{thm}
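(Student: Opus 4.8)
The plan is to make the resolvent statement \ref{it:resolvents} the hub and to prove \ref{it:semigroups} $\Leftrightarrow$ \ref{it:resolvents}, \ref{it:net_core} $\Leftrightarrow$ \ref{it:resolvents} and \ref{it:generator} $\Leftrightarrow$ \ref{it:resolvents}. I would use three standard facts about a dynamical semigroup $T_n(t)$ with generator $A_n$: the Laplace representation $R_n(\lambda)=\int_0^\oo e^{-\lambda t}T_n(t)\,dt$ with $\norm{R_n(\lambda)}\le(\Re\lambda)^{-1}$, the dissipativity estimate $\norm{(\lambda-A_n)x_n}\ge\Re\lambda\,\norm{x_n}$ on $\dom{A_n}$, and the Yosida approximants $A_n\up\mu=\mu(\mu R_n(\mu)-\id)$ with $e^{tA_n\up\mu}\to T_n(t)$ strongly. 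The one structural tool I would invoke repeatedly is a density-extension principle: a uniformly bounded net of operators that maps some seminorm-dense subspace of $\C(E,j)$ into $\C(\tilde E,\ojt)$ is automatically \jt\ojt-convergent. This is the argument behind \ref{it:basics_suffice}, since a uniformly bounded net is seminorm-continuous and the \jt-convergent nets are seminorm-closed in $\nets(\tilde E,\ojt)$.

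I would first dispose of the resolvent/generator/core equivalences, which stay entirely on the resolvent side and avoid time integrals. Once $R\d(\lambda)$ is known to be \jt\jt-convergent, the identity $R_n(\lambda)(\lambda-A_n)x_n=x_n$ yields the net-domain formula $\dom{A\d}=R\d(\lambda)\C(E,j)$; then $\jlim\dom{A\d}=\Ran R_\oo(\lambda)$, so by \ref{it:subspace_density} of \cref{thm:density_thm} density of $\dom{A\d}$ is equivalent to dense range of $R_\oo(\lambda)$, and taking $\D=\dom{A\d}$, for which $(\lambda-A\d)\D=\C(E,j)$, gives \ref{it:resolvents}$\Rightarrow$\ref{it:net_core}. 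For \ref{it:net_core}$\Rightarrow$\ref{it:resolvents} the dissipativity estimate passes to the seminorm, $\seminorm{(\lambda-A\d)x\d}\ge\Re\lambda\,\seminorm{x\d}$ on $\D$, so $\jlim_n(\lambda-A_n)x_n\mapsto x_\oo$ is a well-defined contraction on the dense subspace $\jlim\big((\lambda-A\d)\D\big)$; its extension $R_\oo(\lambda)$ has dense range, and since $R\d(\lambda)$ maps the dense set $(\lambda-A\d)\D$ to the \jt-convergent nets $x\d$, the density-extension principle makes $R\d(\lambda)$ \jt\jt-convergent. For \ref{it:resolvents}$\Rightarrow$\ref{it:generator} I would propagate \jt\jt-convergence of the resolvents to the whole half-plane via the Neumann series $R_n(\nu)=\sum_k(\lambda-\nu)^kR_n(\lambda)^{k+1}$, which converges in operator norm uniformly in the index on $\abs{\nu-\lambda}<\Re\lambda$ and iterates across $\Re\nu>0$; the limits form a pseudo-resolvent with $\norm{\nu R_\oo(\nu)}\le1$ for real $\nu$ and dense range, hence by the standard pseudo-resolvent theorem (e.g.\ \cite{engelnagel}) it is the resolvent of a densely defined closed operator generating a contraction semigroup, injectivity following from $\nu R_\oo(\nu)\to\id$ on the dense range as $\nu\to\oo$. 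The net-domain formula and this injectivity give well-definedness of $A_\oo$ (\ref{it:well-def} of \cref{thm:unb_op_con}) and its identification with that operator; the converse \ref{it:generator}$\Rightarrow$\ref{it:resolvents} runs the same density-extension argument, now starting from the dense set $(\lambda-A\d)\dom{A\d}$ whose \jt-limits exhaust $\Ran(\lambda-A_\oo)=E_\oo$.

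The two directions involving the semigroups connect these to \ref{it:semigroups}. For \ref{it:resolvents}$\Rightarrow$\ref{it:semigroups} I would route everything through the Yosida approximants to sidestep that $T_n(t)$ and the candidate limit $S(t)$ act on different spaces. Since $R\d(\mu)$ is \jt\jt-convergent, so is $A\d\up\mu=\mu(\mu R\d(\mu)-\id)$, and, these being bounded with $\norm{A_n\up\mu}\le2\mu$ uniformly in the index, the norm-convergent exponential series shows $e^{tA\d\up\mu}$ is \jt\jt-convergent with limit $S\up\mu(t)=e^{tA_\oo\up\mu}$ (using the composition rule \ref{it:jj_composition} for the powers and seminorm-closedness for the series). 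For $x\d\in\dom{A\d}$ the Yosida estimate $\norm{T_n(t)x_n-e^{tA_n\up\mu}x_n}\le t\,\norm{(\id-\mu R_n(\mu))A_nx_n}$ gives, after $\limsup_n$, the bound $\seminorm{T\d(t)x\d-e^{tA\d\up\mu}x\d}\le t\,\norm{(\id-\mu R_\oo(\mu))A_\oo x_\oo}\xrightarrow{\mu\to\oo}0$ uniformly on compact $t$-intervals; hence $T\d(t)x\d$ is a seminorm-limit of \jt-convergent nets, so \jt-convergent with limit $S(t)x_\oo$, and the density-extension principle together with density of $\dom{A\d}$ promotes this to \jt\jt-convergence of $T\d(t)$ with strongly continuous limit $T_\oo(t)=S(t)$. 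For \ref{it:semigroups}$\Rightarrow$\ref{it:resolvents} I would show that $R\d(\lambda)$ maps basic nets to \jt-convergent ones and then apply \ref{it:basics_suffice}: writing $x_n=R_n(\lambda)\j n{m_0}\xi$ and using $\j n{m_0}=\j nm\j m{m_0}$ one gets $x_n-\j nm x_m=\int_0^\oo e^{-\lambda t}\big(T_n(t)\j n{m_0}\xi-\j nm T_m(t)\j m{m_0}\xi\big)\,dt$, whose integrand is uniformly bounded and has vanishing $\lim_{n\gg m}$-defect for each $t$ because $T\d(t)$ carries the basic net $\j\blob{m_0}\xi$ to a \jt-convergent net. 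The limit $R_\oo(\lambda)$ then coincides with the resolvent of the generator of the strongly continuous semigroup $T_\oo(t)$, so it has dense range.

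Everything remaining drops out: $T_\oo(t)=S(t)$ is generated by $A_\oo$ because $R_\oo(\lambda)=(\lambda-A_\oo)^{-1}$, the net domain is $R\d(\lambda)\C(E,j)$ by the formula used throughout, and the $R_\oo(\lambda)$ are the resolvents of $A_\oo$; the ``some $\lambda$ iff all $\lambda$'' clause holds because \ref{it:semigroups} and \ref{it:generator} do not mention $\lambda$, so the equivalence at one $\lambda$ forces \ref{it:generator}, which re-enters at every $\lambda$. I expect the main obstacle to be the exchange of the index limit with the time integral in \ref{it:semigroups}$\Rightarrow$\ref{it:resolvents}: it is a clean dominated-convergence argument when the index set is sequential, but for a general directed set $\limsup_n$ does not commute with integration, and one has to fall back on the uniform-in-index Lipschitz control of orbits available on $\dom{A\d}$ via $\sup_n\norm{A_nx_n}=\norm{A\d x\d}_\nets$, approximating the coordinate integrals by \jt-convergent Riemann sums uniformly in the index. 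The conceptually central step is the manufacture of the limit dynamics in \ref{it:resolvents}$\Rightarrow$\ref{it:generator}, where the pseudo-resolvent must be recognized as a genuine resolvent and the Yosida approximants shown to \jt-converge to the limit semigroup uniformly in the index --- this is where the inductive-limit structure genuinely replaces single-space semigroup theory.
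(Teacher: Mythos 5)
Your proposal is correct and its global architecture coincides with the paper's: the resolvents serve as the hub, \proofdir{it:semigroups}{it:resolvents} goes through the Laplace representation, \proofdir{it:net_core}{it:resolvents} and the converse use the algebraic identity $R\d(\lambda)(\lambda-A\d)y\d=y\d$ together with the density-extension principle, the Neumann series moves \jt\jt-convergence from one $\lambda$ to the whole half-plane, and \proofdir{it:resolvents}{it:generator} recognizes the limits as a densely-ranged pseudoresolvent (the paper then invokes Lumer--Phillips where you invoke the equivalent Hille--Yosida/pseudoresolvent statement). The one substantive local difference is \proofdir{it:resolvents}{it:semigroups}: you approximate $T\d(t)$ by the exponentials of the Yosida approximants $\mu(\mu R\d(\mu)-\id)$ with the first-order error bound $t\,\norm{(\id-\mu R_n(\mu))A_nx_n}$ on $\dom{A\d}$, whereas the paper uses the backward-Euler formula with the second-order bound $\tfrac{t^2}{2k}\norm{A_n^2x_n}$ on the smaller space $R\d(\lambda)\dom{A\d}$ and then handles strong continuity of $T_\oo(t)$ as a separate step; your route buys a slightly larger domain for the estimate and gets strong continuity for free from the locally uniform convergence, at the cost of first verifying \jt\jt-convergence of the exponential series (which your appeal to $\norm{}_\nets$-closedness of $\C(E,j)$ handles correctly). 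Your explicit flagging of the interchange of $\limsup_n$ over a general directed set with the time integral in \proofdir{it:semigroups}{it:resolvents} is a point the paper passes over with a bare appeal to dominated convergence, so no gap there relative to the published argument.
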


If one only assumes that all $T\d(t)$ are \jt\jt-convergent, then it still follows that $T_\oo(t)$ is a one-parameter semigroup.
This is because $T\d(t)T\d(s) = T\d(t+s)$ holds as an equation of operators acting $\C(E,j)$.

Let us discuss some consequences of this theorem.
In the case where one already has a candidate for the semigroup on the limit space and is interested in showing that the dynamics converge, we have the following criterion:

\begin{cor}\label{thm:evolution_cor}
    Let $T\d$ be a net of dynamical semigroups and let $A\d$ be the net of generators. Let $S(t)$ be a dynamical semigroup on $E_\oo$ with generator $B$ and let $\D_\oo$ be a core for $B$.
    Suppose that there is a $\D\subset \dom{A\d}$ with $\jlim \D =\D_\oo$, such that $\jlim_n A_n x_n = B(\jlim_n x_n)$ for all $x\d\in\D$.
    Then $T\d(t)$ is \jt\jt-convergent, $A_\oo = B$ and $T_\oo(t) = S_\oo(t)$.
\end{cor}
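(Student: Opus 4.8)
The plan is to reduce Corollary \ref{thm:evolution_cor} to the equivalence \ref{it:net_core} $\Leftrightarrow$ \ref{it:generator} in \cref{thm:evolution}. The hypotheses hand us a candidate subspace $\D\subset\dom{A\d}$ with $\jlim\D=\D_\oo$ and $\jlim_n A_n x_n = B(\jlim_n x_n)$ on $\D$; the goal is to certify that this $\D$ satisfies the density conditions of item \ref{it:net_core} for a fixed $\lambda$ with $\Re\lambda>0$, at which point \cref{thm:evolution} yields \jt\jt-convergence of $T\d(t)$ and well-definedness of $A_\oo$, and the identification $A_\oo=B$ then forces $T_\oo(t)=S(t)$ by uniqueness of generators.

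First I would verify the two density requirements of \ref{it:net_core}. Seminorm density of $\D$ in $\dom{A\d}$ (equivalently, in $\C(E,j)$ in the relevant sense) follows from item \ref{it:subspace_density} of \cref{thm:density_thm}: since $\jlim\D=\D_\oo$ and $\D_\oo$ is a core for $B$, it is in particular norm-dense in $E_\oo$, so $\D$ is seminorm dense. For the second condition I must show $(\lambda-A\d)\D$ is seminorm dense, and here I would pass to the limit space: on $\D$ one has $\jlim_n(\lambda-A_n)x_n = (\lambda-B)(\jlim_n x_n)$, so $\jlim\bigl((\lambda-A\d)\D\bigr) = (\lambda-B)\D_\oo$. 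Because $\D_\oo$ is a core for $B$ and $B$ generates a dynamical semigroup, $(\lambda-B)\D_\oo$ is dense in $E_\oo$ for every $\lambda$ with $\Re\lambda>0$; applying \ref{it:subspace_density} again converts this back into seminorm density of $(\lambda-A\d)\D$. This establishes \ref{it:net_core}.

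Invoking \cref{thm:evolution}, I obtain that $T\d(t)$ is \jt\jt-convergent, that $A_\oo$ is well-defined and generates a dynamical semigroup $T_\oo(t)$, and that $\dom{A_\oo}=R_\oo(\lambda)\C(E,j)$. It remains to identify $A_\oo$ with $B$. On the core $\D_\oo$ the two generators agree by construction, $A_\oo x_\oo = \jlim_n A_n x_n = B x_\oo$ for $x_\oo=\jlim_n x_n\in\D_\oo$; since $A_\oo$ is closed and $\D_\oo$ is a core for $B$, the closed operator $A_\oo$ extends the closure of $B\restrictedto_{\D_\oo}$, which is $B$ itself. As both are generators of dynamical semigroups (hence maximal dissipative, with no proper dissipative extensions), the extension must be equality, so $A_\oo=B$. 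Finally, equality of generators gives $T_\oo(t)=S(t)$ for all $t\ge0$.

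The main obstacle is the identification $A_\oo=B$: knowing the two operators agree on the core $\D_\oo$ is not by itself enough, since a priori $A_\oo$ could be a proper extension of $B$. The argument that rules this out is the dissipativity of both generators together with the fact that $\D_\oo$ is a core, so that $B=\overline{B\restrictedto_{\D_\oo}}\subset A_\oo$ and maximality of dissipative generators forces equality; I would make sure the resolvent identity $\dom{A_\oo}=R_\oo(\lambda)\C(E,j)$ from \cref{thm:evolution} is used to pin down the domain precisely rather than relying solely on the core.
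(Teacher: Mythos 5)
Your proposal is correct and follows essentially the same route as the paper: both verify item \ref{it:net_core} of \cref{thm:evolution} by noting that $\D$ is seminorm dense because $\D_\oo$ is dense (being a core) and that $[(\lambda-A\d)\D]_\oo=(\lambda-B)\D_\oo$ is dense because $\D_\oo$ is a core for $B$. Your additional care in pinning down $A_\oo=B$ via closedness, agreement on the core, and maximal dissipativity of generators is a valid filling-in of a step the paper leaves implicit.
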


\begin{proof}
    We check item \ref{it:net_core} of \cref{thm:evolution}. $\D$ is seminorm dense because $\D_\oo$ is dense in $E_\oo$ (see \cref{thm:density_thm}).
    Similarly $[(\lambda-A\d)\D]_\oo = (\lambda - B)\D_\oo$ is dense because $\D_\oo$ is a core for $B$.
\end{proof}

Before we come to the proof, we briefly discuss the analogous theorem for uniformly continuous semigroups.
I.e., for semigroups so that $T(t)$ depends continuously on $t$ in the operator norm topology of $\L(E)$.
For such semigroups, the generator is always bounded, and the semigroup is equal to the exponential series $T(t) = e^{tA}$.
For example, the convergence of this series and \jt-convergence preservation of $A\d$ implies \jt-convergence preservation of the semigroup under the right assumptions.

\begin{cor}
    Let $T\d$ be a net of uniformly continuous dynamical semigroups, and let $A\d$ be the net of generators. The following are equivalent
    \begin{enumerate}[(1)]
        \item $T\d(t)$ is \jt\jt-convergent for all $t\geq0$ and $T_\oo(t)$ is a uniformly continuous semigroup,
        \item $A\d$ is \jt\jt-convergent.
    \end{enumerate}
    In this case, it holds that $T_\oo(t) = e^{tA_\oo}$.
\end{cor}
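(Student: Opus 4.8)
The plan is to deduce the statement from the evolution theorem (\cref{thm:evolution}) together with the classical characterisation that a strongly continuous contraction semigroup is norm-continuous (``uniformly continuous'') precisely when its generator is bounded, in which case $T(t)=e^{tA}=\sum_{k\ge0}\frac{t^k}{k!}A^k$ \cite{engelnagel}. Here ``$A\d$ is \jt\jt-convergent'' is read in the sense of the earlier definition for operator nets: $A\d$ is uniformly bounded, $M\coloneqq\sup_n\norm{A_n}<\oo$, and maps \jt-convergent nets to \jt-convergent nets; its limit is then the bounded operator $A_\oo$ with $\norm{A_\oo}\le M$ and $A_\oo\jlim_n x_n=\jlim_n A_n x_n$. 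The two implications will be handled by opposite tools: the exponential series for one direction and \cref{thm:evolution} for the other.

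First I would prove \emph{(2) $\Rightarrow$ (1)} directly through the exponential series, as already indicated in the text preceding the corollary. Assuming $A\d$ is \jt\jt-convergent with $M=\sup_n\norm{A_n}<\oo$, each composition power $A\d^{k}$ is again \jt\jt-convergent with limit $A_\oo^{k}$ and $\sup_n\norm{A_n^{k}}\le M^{k}$, by iterating item~\ref{it:jj_composition}. Consequently each truncated exponential $\sum_{k=0}^{N}\frac{t^k}{k!}A\d^{k}$ is \jt\jt-convergent with limit $\sum_{k=0}^{N}\frac{t^k}{k!}A_\oo^{k}$. Since $\norm{T_n(t)-\sum_{k=0}^{N}\frac{t^k}{k!}A_n^{k}}\le\sum_{k>N}\frac{(tM)^k}{k!}\to0$ uniformly in $n$, for every $x\d\in\C(E,j)$ the nets $\big(\sum_{k=0}^{N}\frac{t^k}{k!}A_n^{k}x_n\big)_n$ converge to $T\d(t)x\d$ in $\norm{}_\nets$; because $\C(E,j)$ is $\norm{}_\nets$-closed this forces $T\d(t)x\d\in\C(E,j)$, so $T\d(t)$ is \jt\jt-convergent. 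Pushing $\jlim$ through the uniformly convergent series then yields $T_\oo(t)=\sum_{k\ge0}\frac{t^k}{k!}A_\oo^{k}=e^{tA_\oo}$, which is norm-continuous in $t$. This settles (1) and the final formula at once.

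For the converse \emph{(1) $\Rightarrow$ (2)} I would invoke \cref{thm:evolution}: the first half of (1) is exactly its condition~\ref{it:semigroups}, so by \ref{it:semigroups}$\Rightarrow$\ref{it:generator} the operator $A_\oo$ is well-defined, closed, and generates $T_\oo(t)$; uniform continuity of $T_\oo$ then forces $A_\oo$ to be bounded, and the formula $T_\oo(t)=e^{tA_\oo}$ reappears. What remains is to promote this to \jt\jt-convergence of $A\d$ itself, and this is the step I expect to be the main obstacle. The natural route is to write $A_n=\lim_{t\searrow0}t^{-1}(T_n(t)-\id)$, observe that $t^{-1}(T_\oo(t)-\id)\to A_\oo$ in operator norm, and exchange the limits $t\searrow0$ and $n\to\oo$; however, the difference quotients $t^{-1}(T_n(t)-\id)$ need not approach $A_n$ at a rate uniform in $n$, so making the exchange rigorous requires controlling the norm-continuity of the $T_n$ uniformly in $n$. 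This is precisely where the uniform-continuity hypothesis on the limit must genuinely be exploited, and it is the delicate point of the argument; one should state (2) with this uniform control in mind, since without it the generators need not converge in the \jt\jt-sense even when the semigroups do.
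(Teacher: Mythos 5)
Your direction (2) $\Rightarrow$ (1) is correct and coincides with the route the paper intends (the corollary is stated without proof; the preceding paragraph only sketches exactly this exponential-series argument): uniform boundedness $M=\sup_n\norm{A_n}<\oo$ gives a tail estimate uniform in $n$, and $\norm{}_\nets$-closedness of $\C(E,j)$ then yields \jt\jt-convergence of $T\d(t)$ together with $T_\oo(t)=e^{tA_\oo}$.

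The gap is in (1) $\Rightarrow$ (2), which you correctly single out as the delicate point but leave open. Your suspicion should in fact be upgraded to a definite negative: that implication fails as stated, so no interchange-of-limits argument can close it without an extra hypothesis. Take the constant inductive system over $E=\ell^2(\NN)$ (\cref{exa:constant}), let $P_n$ be the orthogonal projection onto the $n$-th basis vector $e_n$, and set $A_n=-nP_n$, so that $T_n(t)=\1+(e^{-nt}-1)P_n$ is a uniformly continuous contraction semigroup for every $n$. For any norm-convergent (hence \jt-convergent) sequence $x_n\to x$ one has $\abs{\braket{e_n}{x_n}}\to0$, so $T_n(t)x_n\to x$; thus $T\d(t)$ is \jt\jt-convergent for all $t$ with $T_\oo(t)=\id$, a uniformly continuous semigroup. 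Yet $\norm{A_n}=n$ is unbounded, and applying $A\d$ to the constant sequence $x=\sum_k k^{-1}e_k$ gives $A_nx=-e_n$, which is not Cauchy; hence $A\d$ is not \jt\jt-convergent. The equivalence becomes true if one strengthens (1) by adding $\sup_n\norm{A_n}<\oo$ (equivalently, norm-continuity of $t\mapsto T_n(t)$ at $t=0$ uniformly in $n$); under that hypothesis your difference-quotient argument does go through, since $\norm{t^{-1}(T_n(t)-\id)-A_n}\le t^{-1}(e^{tM}-1)-M$ uniformly in $n$, so $A\d$ is a $\norm{}_\nets$-limit of the \jt\jt-convergent nets $t^{-1}(T\d(t)-\id)$ and is therefore \jt\jt-convergent.
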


\begin{proof}[Proof of \cref{thm:evolution}]
    For the proof, we introduce two additional statements
    \begin{enumerate}
        \item[(1$'$)]\label{it:semigroups'} For all $t\geq0$, $T\d(t)$ is \jt\jt-convergent and $\set{x\d\in\C(E,j) \given x_n\in\dom{A_n},\, \norm{A\d x\d}_\nets<\oo }$ is seminorm dense in  $\C(E,j)$.
        \item[(2$'$)]\label{it:resolvents_all_lambda} The resolvents $R\d(\mu)$ are \jt-convergent for all $\mu$ with $\Re\mu>0$ and the range of $R_\oo(\mu)$ is dense for all $\mu$.
    \end{enumerate}
    We will show the following implications
    \[
    \begin{tikzcd}
        (1) \arrow[r, Rightarrow]{} & (2') \arrow[r, Rightarrow]{} & (4) \\
        (1')\arrow[u, Rightarrow]{}\arrow[ur, Leftarrow]{} & (2) \arrow[u, Rightarrow]{} & (3)\arrow[u, Leftarrow]{} \arrow[l, Rightarrow]{}
    \end{tikzcd}
    \]

    \ref{it:semigroups} $\Rightarrow$ \hyperref[it:resolvents]{(2$'$)}:
    We use the integral formula for the resolvent $R_n(\lambda) = \int_0^\oo e^{-\lambda t}T_n(t)\, dt$ \cite[Ch.~II]{engelnagel} and get
    \[
        \seminorm{R\d(\lambda)x\d - \j\blob m R_m(\lambda)x_m} \le \int_0^\oo e^{-\Re\lambda t} \seminorm{T\d(t)x\d - \j\blob m T_m(t)x_m}\,s dt
    \]
    for \jt-convergent $x\d$ because of dominated convergence (for exchanging the limit in the definition of the seminorm with the integral).
    Applying dominated convergence again to the limit in $m$ and using that $T\d$ is \jt\jt-convergent shows that $R\d(\lambda)$ also is. This argument works for all $\lambda$.

    Similar to the above one can check that the integral formula and the resolvent formula remain valid for the limits, i.e., $R_\oo(\lambda)=\int_0^\oo e^{-\lambda t}T_\oo(t)\, dt$ and $R_\oo(\lambda)-R_\oo(\mu) =(\mu-\lambda)R_\oo(\lambda)R_\oo(\mu)$.
    The resolvent formula shows that the range of $R_\oo(\lambda)$ is independent of $\lambda$. We can now approximate any $x_\oo\in E_\oo$ by elements of the form $\lambda R_\oo(\lambda)x_\oo$ with $\lambda>0$:
    \[
        \norm{x_\oo-R_\oo(\lambda)} \le \int_0^\oo \lambda e^{-\lambda t} \norm{x_\oo -T_\oo(t)x_\oo}\, dt.
    \]

    The density of $\Ran R_\oo(\lambda)$ is equivalent to the density of the range of $R\d(\lambda)$ as an operator on $\C(E,j)$.
    From the resolvent equation $R\d(\lambda)-R\d(\mu) = (\mu-\lambda)R\d(\lambda)R\d(\mu)$ it follows that this range is independent of $\lambda$.
    We can now approximate any $x\d\in \C(E,j)$ by $\lambda R\d(\lambda)x\d$ with large $\lambda>0$:
    \begin{align*}
        \lim_{\lambda\to\oo}\seminorm{x\d-R\d(\lambda)}
        &\le \lim_{\lambda\to\oo}\lim_n \int_0^\oo \lambda e^{-\lambda t}\norm{x_n -T_n(t)x_n}\, dt \\
        &= \lim_{\lambda\to\oo}\int_0^\oo e^{-t} \norm{x_\oo-T_\oo(t/\lambda)x_\oo}\, dt
        =0
    \end{align*}
    where we used dominated convergence (twice) and strong continuity of $T_\oo(t)$.
    In particular, it also follows that $R_\oo$ satisfies the resolvent equation.

    \hyperref[it:resolvents]{(2$'$)} $\Rightarrow$ \hyperref[it:semigroups']{(1$'$)}:
    We use the following formula \cite[Ch.~X§1.2]{kato2013perturbation} valid for dynamical semigroups
    \[
        \norm*{T_n(t)x_n - \paren*{(t/k) R_n(t/k)}^k x_n } \le \frac{t^2}{2k} \norm{A_n^2x_n}, \quad x_n\in \dom{A_n^2},\, k\in\NN.
    \]
    By assumption $\paren*{(t/k) R_n(t/k)}^kx\d$ is \jt-convergent for all \jt-convergent $x\d$ and the right-hand side is uniformly bounded if $x\d\in R\d \dom{A\d}$.
    Therefore $T\d(t)$ preserves \jt-convergence of $x\d$ in $\D =R\d(\lambda) \dom{A\d}$. It remains to be shown that this subspace is seminorm dense.
    This will then automatically show density of the subspace $\set{x\d\in \C(E,j)\given x_n\in \dom{A_n},\, \norm{A\d x\d}_\nets<\oo}$.
    It is straightforward to see that $\dom{A\d}=R\d(\lambda)\C(E,j)$ and hence that $\D = R\d(\lambda)^2\C(E,j)$.
    Therefore the seminorm density of $\dom{A\d}$ is guaranteed by the assumption of the density of the range of $R_\oo(\lambda)$.
    Now let $x\d\in\C(E,j)$ and $\eps>0$. Pick a $y\d\in \dom{A\d}$ such that $\seminorm{x\d-y\d}<\frac\eps2$.
    Then we can pick a $\lambda>0$ such that $\seminorm{y\d-\lambda R\d(\lambda)y\d}<\frac\eps2$ with $z\d=\lambda R\d(\lambda)x\d\in \D$, we have $\norm{x\d-z\d} \le \norm{x\d-y\d}+\norm{y\d-z\d} < \eps$.
    This shows that $\D$ is dense.

    \hyperref[it:semigroups']{(1$'$)} $\Rightarrow$ \ref{it:semigroups}: By assumption, it suffices to show strong continuity on the dense subspace $\D_\oo$ of \jt-limits of nets in $\D=\set{x\d\in\C(E,j)\given x_n\in\dom{A_n},\,\norm{A\d x\d}_\nets<\oo}$.
    Let $x\d\in\D$.
    We use the formula $T_n(t)x_n = x_n + \int_0^t T_n(s)A_nx_n \,ds$ to get the estimate
    \[
        \norm{T_n(t)x_n-x_n} \le \int_0^t \norm{T_n(s) A_n x_n} \, ds \le t \norm{A\d x\d}_\nets\quad\forall n.
    \]
    Since the right-hand side goes to zero as $t\to0$, we obtain $\seminorm{T\d(t)x\d-x\d}=\norm{T_\oo(t)x_\oo-x_\oo}\to 0$ and hence strong continuity of $T_\oo(t)$.

    \hyperref[it:resolvents]{(2$'$)} $\Rightarrow$ \ref{it:generator}:
    It is straightforward that the limits $R_\oo(\lambda)$ satisfy the resolvent equation $R_\oo(\lambda)-R_\oo(\mu) = (\mu-\lambda)R_\oo(\lambda)R_\oo(\mu)$.
    This implies that the range of  $R_\oo(\lambda)$ is independent of $\lambda$ (and dense by assumption).
    Such families of operators are well-studied under and are usually called ``pseudoresolvents'' (see \cite[Ch.~III, 4.6]{engelnagel}).
    In fact, it follows that there is a closed operator $B:E_\oo\supset \dom{B}\to E_\oo$ with  $\dom B = \Ran R_\oo(\lambda)$ and $R_\oo(\lambda) =(\lambda-B)^{-1}$.
    We can use this to prove well-definedness of $A_\oo$ (in the sense of \cref{def:unbounded}): For $x\d\in\dom{A\d}$ with $\jlim_nx_n=0$, we have
    \begin{align*}
        \jlim_n A_nx_n
        &=\jlim_n (\lambda-A_n)x_n \\
        &= (\lambda-B) R_\oo(\lambda) \jlim_n (\lambda-A_n)x_n \\
        &= (\lambda-B) \jlim_n R_n(\lambda)(\lambda-A_n) x_n \\
        &= (\lambda-B)\jlim_nx_n =0.
    \end{align*}
    It is readily checked that we have $\dom{A\d}=R\d(\lambda)\C(E,j)$ and this implies that $\dom{A_\oo}=\Ran R_\oo(\lambda) =\dom B$.
    This can now be used to show that $A=B$, by considering
    \begin{align*}
        B R_\oo(\lambda)\jlim_n x_n
        &= (1+\lambda R_\oo(\lambda))\jlim_n x_n \\
        &=\jlim_n (x_n+\lambda R_n(\lambda)x_n) \\
        &= \jlim_n A_n R_n(\lambda)x_n \\
        &= A_\oo R_\oo(\lambda)\jlim_n x_n
    \end{align*}
    We can now use the Lumer-Phillips theorem \cite[Ch.~III, Thm.~3.15]{engelnagel} to prove that $A_\oo = B_\oo$ generates a dynamical semigroup.
    Since we already now that $\Ran(\lambda-A_\oo)^{-1}=\Ran R_\oo(\lambda)$ is dense, we only have to show that $A_\oo$ is dissipative, i.e., that $\norm{(\lambda-A_\oo)x_\oo}\ge \lambda \norm{x_\oo}$.
    This follows from dissipativity of all $A_n$ as for any $x\d\in\dom{A\d}$ we find $\norm{(\lambda-A_n)x_n} \ge \lambda \norm{x_n}$ and taking the limit  $n\to\oo$ shows the desired inequality.

    \ref{it:generator} $\Rightarrow$ \ref{it:net_core}: Put $\D = \dom{A\d}$ which is seminorm dense because $\D_\oo=D(A_\oo)$ is dense by assumption.
    Then $\D_\oo = \Ran R_\oo(\lambda)$ where $R_\oo(\lambda)$ is the resolvent of $A_\oo$.

    \ref{it:net_core} $\Rightarrow$ \ref{it:resolvents}:
    Let $x\d\in \C(E,j)$ be a net in the dense subspace $(\lambda-A\d)\D$ and let $y\d\in\D$ be so that $x\d=(\lambda-A\d)y\d$.
    Then $R\d(\lambda)x\d = y\d\in\C(E,j)$. Since the subspace we chose $x\d$ from is semi\-norm dense, this shows that $R\d(\lambda)$ is \jt\jt-convergent.
    Furthermore $R\d(\lambda)\C(E,j)$ is dense because it contains the dense subspace $R\d(\lambda-A\d)\D =\D$.

    \hyperref[it:resolvents]{(2)} $\Rightarrow$ \hyperref[it:resolvents]{(2$'$)}:
    Let $x\d$ be \jt-convergent and let $\mu$ be so that $\abs{\lambda-\mu}<\Re\lambda$. Then the series expansion
    \[
        R_n(\mu) x_n = R_n(\lambda)(1 + (\mu-\lambda)R_n(\lambda))^{-1}x_n = R_n(\lambda) \sum_{k=0}^\oo [(\mu-\lambda)R_n(\lambda)]^kx_n
    \]
    converges uniformly in $n$. Therefore $R\d(\mu)x\d$ is approximated by \jt-convergent nets in the $\norm{}_\nets$-norm and hence is itself \jt-convergent.
    By iterating the argument, we find that \jt-convergence holds for all $\mu$ in the right half-plane of $\CC$.
\end{proof}

In applications, there might be symmetries that are not present on the spaces $E_n$ but emerge in the limit.
An example is the translation symmetry of $L^2(\RR)$ viewed as the inductive limit of the spaces $L^2(I)$ where $I\subset \RR$ are finite-length intervals (ordered by inclusion with the obvious embeddings between the $L^2$-spaces).
Nevertheless, it can make sense to approximate the generator of such a symmetry by operators that do not generate anything (yet).

Recall that an operator $A$ is called \emph{dissipative}, if for all $\lambda>0$,
\begin{equation}\label{eq:dissipaive}
    \norm{(\lambda-A)y} \ge \lambda \norm y\quad \forall y\in\dom{A}.
\end{equation}
All generators of dynamical semigroups are dissipative.
In fact, a dissipative operator $A$ generates a dynamical semigroup if and only if $\Ran(\lambda - A)^{-1}$ is dense (this is the Lumer-Phillips theorem \cite[III, Thm.~3.15]{engelnagel}).

\begin{cor}
    Let $A\d$ be a net of dissipative closed operators.
    Let $\D\subset \dom{A\d}$ be seminorm dense in $\C(E,j)$ and assume that $(\lambda-A\d)\D$ is also seminorm dense.
    Then $A_\oo$ is well-defined, dissipative, and generates a dynamical semigroup.
    Furthermore, the resolvents $R\d(\lambda)$ are \jt\jt-convergent for all $\lambda$ with $\Re\lambda>0$ and $R_\oo(\lambda)=(\lambda-A_\oo)^{-1}$.
\end{cor}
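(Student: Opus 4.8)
The plan is to adapt the resolvent-to-generator part of the proof of \cref{thm:evolution}, the only genuinely new difficulty being that the $A_n$ are not assumed to generate semigroups, so the resolvents $R_n(\lambda)=(\lambda-A_n)^{-1}$ are a priori only defined on the ranges $\Ran(\lambda-A_n)$. First I would record that a closed dissipative $A_n$ makes $\lambda-A_n$ injective with \emph{closed} range and contractive partial inverse $R_n(\lambda)\colon\Ran(\lambda-A_n)\to E_n$ satisfying $\norm{R_n(\lambda)}\le(\Re\lambda)^{-1}$ and $R_n(\lambda)(\lambda-A_n)=\id$ on $\dom{A_n}$. Dissipativity also yields $\seminorm{y\d}\le(\Re\lambda)^{-1}\seminorm{(\lambda-A\d)y\d}$ for $y\d\in\D$, so the assignment $(\lambda-A\d)y\d\mapsto y_\oo$ descends to a well-defined contraction of norm $\le(\Re\lambda)^{-1}$ on the dense subspace $[(\lambda-A\d)\D]_\oo\subset E_\oo$ (dense by \cref{thm:density_thm}\ref{it:subspace_density}); extending by continuity produces a bounded operator $R_\oo(\lambda)$ on $E_\oo$ whose range contains the dense set $\D_\oo$.

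Next I would prove the identity $R_\oo(\lambda)\bigl(\jlim_n(\lambda-A_n)x_n\bigr)=x_\oo$ for \emph{every} $x\d\in\dom{A\d}$, not merely for $x\d\in\D$. Since $R_n(\lambda)(\lambda-A_n)x_n=x_n$ holds pointwise and the $R_n(\lambda)$ are uniform contractions on their ranges, approximating the \jt-convergent net $(\lambda-A\d)x\d$ in seminorm by nets $(\lambda-A\d)y\d^{(\alpha)}$ with $y\d^{(\alpha)}\in\D$ (possible because $(\lambda-A\d)\D$ is seminorm dense in $\C(E,j)$) forces $\seminorm{x\d-y\d^{(\alpha)}}\to0$, and the identity follows by continuity of $R_\oo(\lambda)$. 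I would then bootstrap from the single $\lambda$ to the whole right half-plane: the set $S=\{\mu : \Re\mu>0,\ (\mu-A\d)\D\ \text{seminorm dense}\}$ is open, because for $|\mu-\mu_0|<\Re\mu_0$ one has $[(\mu-A\d)\D]_\oo=(\id+(\mu-\mu_0)R_\oo(\mu_0))[(\mu_0-A\d)\D]_\oo$ with the prefactor invertible by Neumann series; the same disc argument, applied at points of $S$ approaching a boundary point $\mu_*$ with $\Re\mu_*>0$ (eventually $|\mu_*-\mu_k|<\Re\mu_k$), shows $S$ is also closed in the connected right half-plane. Hence $S$ is all of it, and for each such $\mu$ the construction of the first paragraph gives a contraction $R_\oo(\mu)$ with $\norm{R_\oo(\mu)}\le(\Re\mu)^{-1}$, dense range, obeying the resolvent equation.

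The family $R_\oo(\cdot)$ is therefore a pseudoresolvent with dense range and $\norm{R_\oo(\mu)}\to0$ as $\mu\to\oo$ along the reals; the standard computation shows $\mu R_\oo(\mu)x\to x$ for $x$ in the dense range, so the common kernel is trivial and $R_\oo(\mu)=(\mu-B)^{-1}$ for a closed, densely defined $B$ (pseudoresolvent theory, exactly as in \cref{thm:evolution}, cf.\ \cite[Ch.~III, 4.6]{engelnagel}). Injectivity of $R_\oo(\lambda)$ together with the identity above gives well-definedness of $A_\oo$: if $x_\oo=0$ for $x\d\in\dom{A\d}$, then $R_\oo(\lambda)\jlim_nA_nx_n=0$, whence $\jlim_nA_nx_n=0$, which is \cref{thm:unb_op_con}\ref{it:well-def}. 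Since the $A_n$ are closed, $A_\oo$ is closed by \cref{thm:unb_op_con}\ref{it:closed}, and one checks $\dom{A\d}=R\d(\lambda)\C(E,j)$ and $A_\oo=B$ as in \cref{thm:evolution}. Dissipativity of $A_\oo$ descends from that of the $A_n$ by passing to the limit in $\norm{(\mu-A_n)x_n}\ge\mu\norm{x_n}$, so the Lumer--Phillips theorem \cite[Ch.~III, Thm.~3.15]{engelnagel} shows that $A_\oo$ generates a dynamical semigroup. Finally, the net-domain of $R\d(\lambda)$ consists exactly of the nets $(\lambda-A\d)x\d$ with $x\d\in\dom{A\d}$, which are mapped back to $x\d\in\C(E,j)$, giving \jt\jt-convergence of $R\d(\lambda)$ with limit $R_\oo(\lambda)=(\lambda-A_\oo)^{-1}$ for every $\lambda$ in the right half-plane.

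The main obstacle is proving that $R_\oo(\lambda)$ is injective, equivalently that $A_\oo$ is well-defined: dissipativity only supplies lower bounds and cannot by itself rule out a nonzero $v$ with $(0,v)$ in the (closed) limit graph, since the offending nets $x\d$ have $x_\oo=0$. Overcoming this requires first upgrading the single density hypothesis at $\lambda$ to the full family $\{R_\oo(\mu)\}_{\Re\mu>0}$ by the bootstrap above—so as to obtain the global bound $\norm{R_\oo(\mu)}\le(\Re\mu)^{-1}$—and only then invoking the pseudoresolvent argument to kill the kernel.
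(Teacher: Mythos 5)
Your argument is correct and, in outline, is exactly what the paper does: its proof of this corollary is a single sentence deferring to the chain \ref{it:net_core} $\Rightarrow$ \ref{it:resolvents} $\Rightarrow$ \ref{it:generator} in the proof of \cref{thm:evolution}. Where you genuinely depart is in the passage from one $\lambda$ to the whole right half-plane. The theorem's proof does this (the step \ref{it:resolvents} $\Rightarrow$ (2$'$)) with the Neumann series $R_n(\mu)=R_n(\lambda)\sum_k[(\mu-\lambda)R_n(\lambda)]^kx_n$, which tacitly uses that each $R_n(\lambda)$ is defined on all of $E_n$; in the present corollary the $A_n$ need not be generators, so $R_n(\lambda)$ lives only on $\Ran(\lambda-A_n)$ and that series does not literally apply. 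Your replacement --- showing that $\{\mu : \Re\mu>0,\ (\mu-A\d)\D\ \text{seminorm dense}\}$ is open and closed in the right half-plane via the identity $[(\mu-A\d)\D]_\oo=(\id+(\mu-\mu_0)R_\oo(\mu_0))[(\mu_0-A\d)\D]_\oo$ and the bound $\norm{R_\oo(\mu_0)}\le(\Re\mu_0)^{-1}$ --- runs entirely on the limit space and is better adapted to partially defined resolvents; this is precisely what the paper's one-line proof glosses over. You also correctly isolate the crux, namely injectivity of $R_\oo(\lambda)$ (equivalently well-definedness of $A_\oo$, which dissipativity alone cannot give), and resolve it by the pseudoresolvent argument with $\mu R_\oo(\mu)\to\id$, exactly as in the theorem's step towards \ref{it:generator}. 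Two cosmetic points: in the well-definedness step $R_\oo(\lambda)$ should be applied to $\jlim_n(\lambda-A_n)x_n$, which equals $-\jlim_nA_nx_n$ when $x_\oo=0$, rather than to $\jlim_nA_nx_n$ directly (same conclusion); and the quoted formula $\dom{A\d}=R\d(\lambda)\C(E,j)$ should be read as your own later, correct statement that the net domain of $R\d(\lambda)$ is $(\lambda-A\d)\dom{A\d}$.
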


\begin{proof}
    We can proceed as in the proof of \cref{thm:evolution}. The arguments used for the implications \ref{it:net_core} $\Rightarrow$ \ref{it:resolvents} $\Rightarrow$ \ref{it:generator} still work under our assumptions.
\end{proof}

It is not hard to see that \cref{thm:evolution} is stable under perturbations of the generators by \jt\jt-convergent nets of operators.
In fact, we can even allow for certain unbounded perturbations.
To state the conditions, we recall the notion of relative boundedness.
A linear operator $B:\dom{B} \to F$ on a Banach space $F$ is \emph{$A$-bounded} with respect to an operator $A:\dom{A}\to F$, if $\dom A\subset \dom B$ and if there are constants $a,b>0$ such that
\begin{equation}\label{eq:relatively_bounded}
    \norm{By} \le a \norm{Ay} + b\norm{x} \quad \forall y \in \dom{A}.
\end{equation}
The infimum $a_0$ over constants $a>0$ such that there is an $b>0$ for which \eqref{eq:relatively_bounded} holds, is called the $A$-bound of $B$.
The classical perturbation result that we will make use of is the following:
Let $T(t)$ be a dynamical semigroup with generator $A$ and let $B$ be a dissipative $A$-bounded operator with $A$-bound $a_0<1$, then $A+B$ generates a dynamical semigroup.

\begin{prop}
    Let $T\d(t)$ be a net of dynamical semigroups which is convergent in the sense of \cref{thm:evolution} and let $A\d$ be the net of generators.
    Let $B\d$ be a net of dissipative operators such that
    \begin{itemize}
        \item $\dom{B\d}\supset\dom{A\d}$. In particular, $\dom{B_n}\supset\dom{A_n}$ for all $n$,
        \item there are $a<1$ and $b>0$, such that
            \begin{equation}\label{eq:relative_boundedness_net}
                \norm{B_n x_n} \le a\norm{A_nx_n} + b\norm{x_n} \quad \forall n,\, x_n\in E_n.
            \end{equation}
    \end{itemize}
    Let $S_n(t)$ be the dynamical semigroup generated by $A_n+B_n$. Then the net $S\d(t)$ of semigroups converges in the sense of \cref{thm:evolution} and the generator of $S_\oo(t)$ is $A_\oo + B_\oo$.
\end{prop}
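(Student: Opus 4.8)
The plan is to verify condition \ref{it:net_core} of \cref{thm:evolution} for the net $C\d \coloneqq A\d + B\d$ of generators of $S\d(t)$, taking as core $\D \coloneqq \dom{A\d}$. Since $T\d(t)$ converges in the sense of \cref{thm:evolution}, condition \ref{it:generator} holds for $A\d$: the limit $A_\oo$ is well-defined and generates a dynamical semigroup, $\dom{A\d} = R\d(\lambda)\C(E,j)$ is seminorm dense, and $(\lambda-A\d)\dom{A\d} = \C(E,j)$. All that will then remain is to establish the two density statements for $C\d$ and to identify the limit generator.

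First I would record that $\D = \dom{A\d} \subset \dom{C\d}$. Indeed, for $x\d\in\dom{A\d}$ we have $x_n\in\dom{A_n}=\dom{C_n}$, the net $A\d x\d$ is \jt-convergent, and the net-domain hypothesis $\dom{B\d}\supset\dom{A\d}$ means precisely that $B\d x\d$ is \jt-convergent as well; hence $C\d x\d = A\d x\d + B\d x\d\in\C(E,j)$. As $\dom{A\d}$ is already seminorm dense, the first density requirement of \ref{it:net_core} holds. Next I would check that $B_\oo$ is well-defined on $\dom{A_\oo}$ by criterion \ref{it:well-def} of \cref{thm:unb_op_con}: if $x\d\in\dom{A\d}$ with $\lim_n\norm{x_n}=0$, then well-definedness of $A_\oo$ gives $\lim_n\norm{A_nx_n}=0$, and the uniform relative bound \eqref{eq:relative_boundedness_net} forces $\norm{B_nx_n}\le a\norm{A_nx_n}+b\norm{x_n}\to0$. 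Splitting limits then yields $C_\oo=A_\oo+B_\oo$ on $\dom{A_\oo}$, and passing the same estimate and the dissipativity inequality \eqref{eq:dissipaive} of the $B_n$ to the limit shows that $B_\oo$ is dissipative and $A_\oo$-bounded with $A_\oo$-bound $\le a<1$.

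The main work is the second density requirement: $(\lambda-C\d)\D$ must be seminorm dense. By \ref{it:subspace_density} of \cref{thm:density_thm} this is equivalent to density of $[(\lambda-C\d)\dom{A\d}]_\oo=\Ran\bigl(\lambda-(A_\oo+B_\oo)\bigr)$ in $E_\oo$. Here I would invoke the classical perturbation result recalled above \emph{at the level of the limit space}: $A_\oo$ generates a dynamical semigroup, and $B_\oo$ is a dissipative, $A_\oo$-bounded operator of $A_\oo$-bound $\le a<1$, so $A_\oo+B_\oo$ generates a dynamical semigroup on $E_\oo$; in particular $\lambda-(A_\oo+B_\oo)$ has dense (indeed full) range. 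This verifies \ref{it:net_core} for $C\d$, so \cref{thm:evolution} applies: $S\d(t)$ is \jt\jt-convergent and the limit generator $C_\oo$ is well-defined and generates the limit dynamical semigroup $S_\oo(t)$. Finally, $\dom{A_\oo}\subset\dom{C_\oo}$ and $C_\oo$ extends $A_\oo+B_\oo$; since both are generators of dynamical semigroups, hence maximal dissipative, they coincide, giving $C_\oo=A_\oo+B_\oo$ as claimed.

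I expect the only genuinely delicate point to be the correct reading of the hypothesis $\dom{B\d}\supset\dom{A\d}$ as an inclusion of \emph{net} domains: this is exactly what upgrades the pointwise inclusion $\dom{B_n}\supset\dom{A_n}$ to the statement that $B\d$ carries the \jt-convergent nets in $\dom{A\d}$ to \jt-convergent nets, without which $B_\oo$ would not even be defined. Everything else is routine, as the relative-bound and dissipativity estimates are uniform in $n$ and pass to the limit termwise, and the construction of the perturbed semigroup is delegated to the classical perturbation theorem, now used once on $E_\oo$ instead of on every $E_n$.
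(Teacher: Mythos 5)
Your proof is correct and follows essentially the same route as the paper: establish well-definedness of the limit of the perturbed generators via the uniform relative bound, identify it with $A_\oo+B_\oo$, and invoke the classical perturbation theorem once on the limit space $E_\oo$. The only (immaterial) difference is that you enter \cref{thm:evolution} through condition \ref{it:net_core} with $\D=\dom{A\d}$, whereas the paper verifies condition \ref{it:generator} directly; since these are equivalent and both hinge on the same perturbation theorem, the arguments coincide in substance.
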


\begin{proof}
    We will check condition \ref{it:generator} of \cref{thm:evolution}.
    We define $C_n = A_n + B_n$ on $\dom{C_n}=\dom{B_n}$.
    The assumption $\dom{B\d}\supset\dom{A\d}$ implies $\dom{C\d}=\dom{A\d}$, $\dom{A_\oo}=\dom{C_\oo}$ and $\dom{B_\oo}\supset \dom{A_\oo}$.
    We check well-definedness of $C_\oo$: Let $x\d\in\dom{C\d}$ such that $x_\oo=0$, then $\lim_n \norm{C_n x_n} \le \lim_n(\norm{A_nx_n}+\norm{B_nx_n}) \le (1+a)\norm{A_\oo x_\oo}+b\norm{x_\oo} =0$.
    Finally, the relative-boundedness inequality \eqref{eq:relative_boundedness_net} carries over to the limit space by simply taking limits with $x\d$ being a \jt-convergent net in $\dom{A\d}$.
    In particular, we know that $C_\oo = A_\oo + B_\oo$.
    Now the standard perturbation theorem for dynamical semigroups \cite[III, Thm.~2.7]{engelnagel} implies that $C_\oo = A_\oo + B_\oo$ generates a dynamical semigroup.
\end{proof}

We can also use the technique of analytic vectors on $E_\oo$ to ensure the existence of the limit dynamics.
This follows from the easy Lemma:

\begin{lem}\label{thm:analytic_vecs}
    Let $A\d$ be a net of closed dissipative operators and assume that there is a seminorm-dense space $\D\subset\dom{A\d}$ of nets which are analytic in seminorm in the sense that there is a $t>0$ for each $x\d\in\D$ so that
    \begin{equation}\label{eq:analytic_semi}
        \sum_{k\in\NN}\, \frac{t^k\, \seminorm{A\d^kx\d}}{k!} <\oo.
    \end{equation}
    If $A_\oo$ is well-defined, then $A_\oo$ is a closed dissipative operator on $E_\oo$ and $\D_\oo =\jlim(\D)\subset\dom{A_\oo}$ is a dense set of analytical vectors.
\end{lem}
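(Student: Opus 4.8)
The plan is to dispatch the three structural claims about $A_\oo$ first and then treat the analytic-vector statement, which is the real content. Closedness of $A_\oo$ is immediate from \cref{thm:unb_op_con}\ref{it:closed}: the $A_n$ are closed, hence closable, and we have assumed $A_\oo$ well-defined, so that lemma applies verbatim. For dissipativity, I would start from the pointwise estimates $\norm{(\lambda-A_n)x_n}\ge\lambda\norm{x_n}$ that hold for every $x\d\in\dom{A\d}$ and $\lambda>0$. Since $x\d$ and $A\d x\d$ are both \jt-convergent, so is $(\lambda-A\d)x\d$, with \jt-limit $(\lambda-A_\oo)x_\oo$; hence $\norm{(\lambda-A_\oo)x_\oo}=\lim_n\norm{(\lambda-A_n)x_n}$ and $\lambda\norm{x_\oo}=\lambda\lim_n\norm{x_n}$ by \cref{thm:j_convergence}, and passing to the limit in the inequality yields dissipativity of $A_\oo$. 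Density of $\D_\oo=\jlim(\D)$ in $E_\oo$ is exactly \cref{thm:density_thm}\ref{it:subspace_density} applied to the seminorm-dense subspace $\D$, and $\D_\oo\subset\dom{A_\oo}$ because $\D\subset\dom{A\d}$ and $\dom{A_\oo}=\jlim(\dom{A\d})$.

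The heart of the proof is to show that each $x_\oo\in\D_\oo$ is an analytic vector, i.e.\ that $x_\oo\in\bigcap_k\dom{A_\oo^k}$ with $\sum_k t^k\norm{A_\oo^k x_\oo}/k!<\oo$. The key identity to establish is
\[
    A_\oo^k x_\oo=\jlim_n A_n^k x_n,\qquad k\in\NN,
\]
which I would prove by induction on $k$. Granting that all iterates $A\d^k x\d$ lie in $\C(E,j)$ (this is built into the meaning of "$\D$ analytic in seminorm," as the seminorms $\seminorm{A\d^k x\d}$ in \eqref{eq:analytic_semi} are meant to be the genuine limits of the norms $\norm{A_n^k x_n}$), the inductive step is clean: if $A_\oo^k x_\oo=\jlim_n A_n^k x_n$ and $A\d^k x\d\in\C(E,j)$, then $A\d^k x\d\in\dom{A\d}$ because $A_n(A_n^k x_n)=A_n^{k+1}x_n$ is defined and $A\d^{k+1}x\d\in\C(E,j)$; applying $A_\oo$ (well-defined) to $\jlim(A\d^k x\d)=A_\oo^k x_\oo$ therefore gives $A_\oo^{k+1}x_\oo=\jlim_n A_n^{k+1}x_n$. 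With the identity in hand, \eqref{eq:limit_space_norm} together with \cref{thm:j_convergence} gives $\norm{A_\oo^k x_\oo}=\seminorm{A\d^k x\d}$, so the analyticity bound \eqref{eq:analytic_semi} transfers verbatim, with the same radius $t$, to $\sum_k t^k\norm{A_\oo^k x_\oo}/k!<\oo$. Combined with the density of $\D_\oo$, this makes $\D_\oo$ a dense set of analytic vectors for $A_\oo$.

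The only genuinely delicate point is the bookkeeping in the induction: for $A_\oo^{k}x_\oo$ even to be defined we need $A_\oo^{k-1}x_\oo$ to lie in the range of $\jlim$ restricted to $\dom{A\d}$, which forces each iterate $A\d^k x\d$ to be \jt-convergent, not merely uniformly bounded in seminorm. I expect this to be the main obstacle, and I would handle it by reading the hypothesis that $\D$ consists of nets "analytic in seminorm" as the requirement that every $x\d\in\D$ satisfies $x_n\in\dom{A_n^k}$ for all $n,k$ with all iterates $A\d^k x\d\in\C(E,j)$; under this reading the identity $A_\oo^k x_\oo=\jlim_n A_n^k x_n$ propagates through the well-definedness of $A_\oo$ at each step, and nothing beyond \cref{thm:unb_op_con}, \cref{thm:density_thm}, and the norm formula \eqref{eq:limit_space_norm} is needed. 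No appeal to a generation theorem is required here: the existence of the limit dynamics follows afterwards by feeding the dense set of analytic vectors of the closed dissipative operator $A_\oo$ into a Nelson-type criterion.
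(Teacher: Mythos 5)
The paper offers no proof of this lemma at all --- it is introduced with ``This follows from the easy Lemma'' and left to the reader --- so there is nothing to compare against; your argument supplies the missing proof and does so correctly, using exactly the ingredients the paper has prepared (closedness via \cref{thm:unb_op_con}, dissipativity by passing the pointwise estimate $\norm{(\lambda-A_n)x_n}\ge\lambda\norm{x_n}$ to the limit along a \jt-convergent net as in the proof of \cref{thm:evolution}, density via \cref{thm:density_thm}, and the norm identity $\norm{A_\oo^kx_\oo}=\seminorm{A\d^kx\d}$ from \cref{thm:j_convergence}). You are also right to isolate the one genuine subtlety: the expression $\seminorm{A\d^kx\d}$ in \eqref{eq:analytic_semi} only presupposes that the iterates are well-defined uniformly bounded nets, whereas the induction establishing $A_\oo^kx_\oo=\jlim_nA_n^kx_n$ (and hence $x_\oo\in\dom{A_\oo^k}$) needs each $A\d^kx\d$ to be \jt-convergent, not merely bounded in seminorm. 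Your reading of ``analytic in seminorm'' as including \jt-convergence of all iterates is the only one under which the stated conclusion follows, and it is consistent with the lemma's sole application in \cref{thm:spindyn}, where the iterates $\delta_\Lambda^k(a)$ of a basic net converge in norm; with that reading adopted, the proof is complete.
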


In many situations this implies that $A_\oo$ generates a semigroup of contractions (see, e.g., \cite[Prop.~3.1.18--3.1.22]{bratteli1}, \cite{nelson1959analytic} or \cite{reed_simon_2}).

\section{Inductive systems of C*-algebras and completely positive dynamics}\label{sec:cp}

Our techniques can be combined with additional structure.
Roughly speaking, any property that is asymptotically respected by the connecting maps $\j nm$ will pass onto the limit.
This section illustrates this by analyzing (soft) inductive systems of C*-algebras.
For the connecting maps, we allow for completely positive contractions instead of requiring *-homomorphisms.
For the limit space to be a C*-algebra, we assume asymptotic multiplicativity (similar to \cite{blackadar1997generalized}).
This will, for example, allow for a commutative limit space of non-commutative algebras (see \cref{exa:mean_field}), which would otherwise not be possible.
We will see that many expected results automatically hold for such a setup.
For example, the \jts-limit of a \jts-convergent net of states $\varphi\d$ is a state on $\A_\oo$, and the limit operation of a \jt\ojt-convergent net of completely positive contractions is a completely positive contraction.
In particular, the evolution theorem holds in the category of C*-algebras and completely positive contractions.

\begin{defin}
    A {\bf soft inductive system of C*-algebras} is a soft inductive system $(\A,j)$ of Banach spaces, such that all $\A_n$ are C*-algebras, the connecting maps $\j nm$ are completely positive contractions and satisfy the asymptotic homomorphism property:
    \begin{equation}\label{eq:asymmor}
        \lim_{n\gg m}\,\norm[\big]{\j nm((\j ml a_l)(\j ml b_l)) - (\j nl a_l)(\j nl b_l)} =0.
    \end{equation}
\end{defin}

If we assume that the C*-algebras are unital, then we should also assume that this structure is preserved by the connecting maps, i.e., we assume that $\j nm \1_{\A_m}=\1_{\A_n}$.
In this case, the net of units is automatically \jt-convergent.
It follows that the limit space of a soft inductive system of (unital) C*-algebras is again a (unital) C*-algebra:

\begin{prop}
    Let $(\A,j)$ be a soft inductive system of (unital) C*-algebras.
    Then the adjoint $a\d^* \coloneqq (a_n^*)$ of a \jt-convergent net $a\d$ is again \jt-convergent and products of \jt-convergent nets $a\d$ and $b\d$ are also \jt-convergent.
    The limit space becomes a (unital) C*-algebra with the operations
     \begin{equation}\label{eq:limit_csa}
         \paren*{\jlim_n a_n}^* \coloneqq \jlim_n a_n^*\qandq \paren*{\jlim_n a_n}\paren*{\jlim_n a_n} \coloneqq \jlim_n a_n b_n.
    \end{equation}
    In the unital case, the net of units $\1\d$ is always \jt-convergent and the unit of $\A_\oo$ is $\1_{\A_\oo}= \jlim_n \1_{\A_n}$.
\end{prop}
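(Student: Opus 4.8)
The plan is to handle the involution and the multiplication separately, show each preserves \jt-convergence and descends to the quotient $\A_\oo=\C(\A,j)/\C_0(\A,j)$, and then read off the C*-norm identity from the corresponding identities at each index. I would start with the adjoint, which is the easy half. The decisive point is that a completely positive contraction $\j nm$ is automatically $*$-preserving: being positive it maps self-adjoint elements (each a difference of positive elements) to self-adjoint elements, and taking Cartesian decompositions gives $\j nm(a_m^*)=(\j nm a_m)^*$ for all $a_m$. Since the involution is isometric on each $\A_n$, this yields
\[
    \norm{a_n^*-\j nm a_m^*}=\norm{(a_n-\j nm a_m)^*}=\norm{a_n-\j nm a_m},
\]
so $a\d^*$ satisfies the \jt-convergence criterion \eqref{eq:j_convergence_2} whenever $a\d$ does.

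The multiplication is the substantial step, and it is here that the asymptotic homomorphism property \eqref{eq:asymmor} is indispensable. My plan is to read \eqref{eq:asymmor} as the assertion that the pointwise product of two \emph{basic} nets is \jt-convergent: if $p\d=\j\blob l u_l$ and $q\d=\j\blob l v_l$, then the product has $n$-th entry $p_nq_n=(\j nl u_l)(\j nl v_l)$ and $m$-th entry $p_mq_m=(\j ml u_l)(\j ml v_l)$, so \eqref{eq:asymmor} says exactly $\lim_{n\gg m}\norm{p_n q_n-\j nm (p_m q_m)}=0$, i.e.\ the \jt-convergence criterion for $p\d q\d$. For general \jt-convergent $a\d,b\d$ (whose product is uniformly bounded since $\norm{a_nb_n}\le\norm{a_n}\norm{b_n}$), I would use item~\ref{it:basics_startpoint} of \cref{thm:density_thm} to find, for each $\eps>0$, a common level $l$ and $u_l,v_l\in\A_l$ with $\seminorm{a\d-p\d}<\eps$ and $\seminorm{b\d-q\d}<\eps$, and then telescope
\[
    a\d b\d-p\d q\d=(a\d-p\d)\,b\d+p\d\,(b\d-q\d).
\]
Using $\seminorm{(a\d-p\d)b\d}\le\norm{b\d}_\nets\,\seminorm{a\d-p\d}$ and the analogous bound for the second summand, the right-hand side is a bounded multiple of $\eps$ in seminorm; since each $p\d q\d\in\C(\A,j)$ and $\C(\A,j)$ is seminorm-closed, I conclude $a\d b\d\in\C(\A,j)$.

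It then remains to transport everything to $\A_\oo$. Well-definedness of \eqref{eq:limit_csa} uses the same estimates: $\seminorm{a\d-a\d'}=0$ and $\seminorm{b\d-b\d'}=0$ force $\seminorm{a\d b\d-a\d'b\d'}=0$, while $\seminorm{a\d^*-a\d'^*}=\seminorm{a\d-a\d'}=0$. All algebraic identities ($*$-conjugate-linearity, $(a_\oo b_\oo)^*=b_\oo^*a_\oo^*$, associativity, distributivity) pass to the limit by applying $\jlim$ to the pointwise identities. Finally, using that $\lim_n\norm{a_n}$ exists and equals $\norm{a_\oo}$, I obtain $\norm{a_\oo^*}=\norm{a_\oo}$, submultiplicativity $\norm{a_\oo b_\oo}=\lim_n\norm{a_nb_n}\le\norm{a_\oo}\norm{b_\oo}$, and the C*-identity
\[
    \norm{a_\oo^*a_\oo}=\lim_n\norm{a_n^*a_n}=\lim_n\norm{a_n}^2=\norm{a_\oo}^2 ,
\]
so that the already-complete $\A_\oo$ is a C*-algebra. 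In the unital case $\j nm\1_{\A_m}=\1_{\A_n}$ makes $\1\d$ satisfy the \jt-convergence criterion with vanishing error, hence $\1\d\in\C(\A,j)$, and $\jlim_n\1_{\A_n}$ is at once a two-sided identity of norm one. I expect the multiplication step to be the only real obstacle; everything hinges on recognizing \eqref{eq:asymmor} as \jt-convergence of products of basic nets.
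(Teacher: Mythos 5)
Your proposal is correct and follows essentially the same route as the paper: the adjoint is handled via the isometry of the involution together with the $*$-preservation of the completely positive connecting maps, and the product is reduced to basic nets at a common level (using \cref{thm:density_thm}), where \eqref{eq:asymmor} is read as precisely the \jt-convergence criterion for the product. You merely make explicit the telescoping estimate behind ``it suffices to check on basic nets'' and the verification of the C*-identities on $\A_\oo$, which the paper leaves implicit.
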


\begin{proof}
    Suppose $a\d$ is \jt-convergent, then $a\d^*$ is also \jt-convergent because
    \[
        \norm{a_n^* - \j nm a_m^*} =\norm{(a_n - \j nm a_m)^*} = \norm{a_n -\j nm a_m}.
    \]
    It suffices to check \jt-convergence of products on basic nets $a_n = \j nl a_l$ and $a_n =\j nl b_l$. We can assume both basic nets to start at the same index because of \cref{thm:density_thm}.
    Set $c\d = a\d b\d =(a_n b_n)$, then
    \begin{align*}
        \lim_{n\gg m}\norm{c_n - \j nm c_m}
        & = \lim_{n\gg m}\norm{(\j nla_l)(\j nlb_l) - \j mm((\j mla_l)(\j mlb_l))} =0.    \qedhere
    \end{align*}
\end{proof}

In fact, \cref{eq:asymmor} is equivalent to \jt-convergence of products of \jt-convergent nets.
For the remainder of this section, $(\A,j)$ denotes a unital soft inductive system of C*-algebras.
For a (soft) inductive system of C*-algebras $\nets(\A,j)=\prod \A_n$ is also a C*-algebra with $\C(\A,j)$ being a C*-subalgebra (now equipped with $\norm{}_\nets$) and by the above $\jlim : \C(\A,j)\to \A_\oo$ is a *-homomorphism.
In fact, we know that $\ker(\jlim)$ is the closed two-sided ideal of null nets so that $\A_\oo$ is isomorphic to the C*-quotient $\C(\A,j)/\C_0(\A,j)$:

The C*-algebras $\A_n$ (including the limit space) form a \emph{continuous field of C*-algebras} \cite[Ch.~10]{dixmier1982} over the topological space $N\cup\set\oo$ (equipped with the order topology) where the continuous sections are precisely defined to be \jt-convergent nets including their limit.
This is discussed in more detail in \cref{sec:comparison}.

\begin{cor}
    \begin{enumerate}[(1)]
        \item\label{it:limits_unitary_normal}
            Let $a\d$ be \jt-convergent. If all $a_n$ are unitary/normal/self-adjoint/ projections, then so is the \jt-limit $\jlim_n a_n$.
        \item\label{it:limits_spectrum}
            Let $a\d$ be \jt-convergent, then $\Sp(a_\oo) \subset \bigcap_k\bigcup_{k>n}\Sp(a_k)$.
        \item\label{it:functional_calc}
            Let $a\d$ be a \jt-convergent net of (normal) elements.
            Let $\Omega\subset \CC$ be open and let $f:\Omega\to \CC$ be an analytic (continuous) function.
            If $\Sp a_n\subset \Omega$ for all $n$, then the analytic (continuous) functional calculi $f(a\d)$ are also \jt-convergent and $\jlim_n f(a_n) = f(\jlim_n a_n)$.
        \item\label{it:positivity}
            An element of the limit space $a_\oo\in\A_\oo$ is positive if and only if there is a \jt-convergent net $a\d\in\C(\A,j)$ so that $\jlim_n a_n=a_\oo$.
    \end{enumerate}
\end{cor}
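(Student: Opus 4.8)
The engine behind all four items is the fact, just established, that $\jlim\colon\C(\A,j)\to\A_\oo$ is a surjective unital $*$-homomorphism, where $\C(\A,j)$ is a unital C*-subalgebra of $\nets(\A,j)=\prod_n\A_n$ and $\A_\oo\cong\C(\A,j)/\C_0(\A,j)$. The plan is to read each assertion as the transport of a property along this $*$-homomorphism, using one tail observation throughout: altering a net on any initial segment $\set{n\given n\ngeq m}$ changes it by a null net, so neither $\jlim_n a_n$ nor the property of being j-convergent sees anything but the behaviour of $a\d$ at infinity. Item \ref{it:limits_unitary_normal} is then immediate: each defining relation---$a=a^*$, $a^*a=aa^*$, $a=a^2=a^*$, $a^*a=aa^*=\1$---holds for $a\d$ in $\C(\A,j)$ exactly when it holds entrywise, and $\jlim$ preserves the involution, products and $\1$, hence transports it to $a_\oo$.

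For \ref{it:limits_spectrum} the tail principle is decisive, and I would argue invertibility directly in the quotient. If $\lambda\notin\bigcap_m\overline{\bigcup_{n\geq m}\Sp(a_n)}$ then $\lambda$ is uniformly separated from the tail spectra, i.e.\ $\operatorname{dist}(\lambda,\Sp(a_n))\geq\delta>0$ for all $n\geq m$ and some $\delta$. Setting $r_n=(\lambda-a_n)^{-1}$ for $n\geq m$ and $r_n=0$ otherwise gives $(\lambda-a_n)r_n=r_n(\lambda-a_n)=\1$ eventually, so $\jlim_n r_n$ inverts $\lambda-a_\oo$ \emph{provided} $r\d$ lies in $\C(\A,j)$ and is uniformly bounded. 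This uniform-bound-and-j-convergence step is the main obstacle. For normal (in particular self-adjoint) $a_n$ it is painless, since $\norm{(\lambda-a_n)^{-1}}=\operatorname{dist}(\lambda,\Sp(a_n))^{-1}\le\delta^{-1}$ and j-convergence of $r\d$ is an instance of the continuous functional calculus of \ref{it:functional_calc} on the tail; in general the uniform bound has to be extracted from the uniform-invertibility description of the spectrum in $\prod_n\A_n$. (The cruder inclusion $\Sp(a_\oo)\subseteq\overline{\bigcup_n\Sp(a_n)}$ comes for free from $\Sp(a_\oo)\subseteq\Sp_{\C(\A,j)}(a\d)$ and spectral permanence.)

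For \ref{it:functional_calc} the plan is to reduce a general $f$ to polynomials. Uniform boundedness of $a\d$ puts every $\Sp(a_n)$ into a fixed compact $K\subset\Omega$, and since sums and products of j-convergent nets are j-convergent (preceding proposition) each $p(a\d)$ is j-convergent with $\jlim_n p(a_n)=p(a_\oo)$. In the continuous, normal case I would approximate $f$ uniformly on $K$ by polynomials $p_k$; the normal estimate $\norm{g(a_n)}=\sup_{\Sp(a_n)}\abs{g}\le\norm{g}_{\infty,K}$ upgrades this to a $\norm{\placeholder}_\nets$-approximation of $f(a\d)$ by the $p_k(a\d)$, so that $f(a\d)\in\C(\A,j)$ and $\jlim_n f(a_n)=f(a_\oo)$. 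In the analytic case I would instead use $f(a_n)=\tfrac1{2\pi i}\oint_\Gamma f(z)(z-a_n)^{-1}\,dz$ over a common contour $\Gamma\subset\Omega$ encircling $K$, the resolvents being j-convergent uniformly in $z\in\Gamma$ by the reasoning of \ref{it:limits_spectrum}; integrating a uniformly j-convergent family preserves j-convergence and commutes with $\jlim$.

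Item \ref{it:positivity}, whose content is that $a_\oo\ge0$ exactly when it is a j-limit of a net of positive elements, then follows by combining the previous parts: if all $a_n\ge0$ then $a\d\ge0$ in $\C(\A,j)$ and the $*$-homomorphism $\jlim$ preserves positivity, while conversely, given $a_\oo\ge0$, I would take any lift $b\d$ of $a_\oo$, pass to its self-adjoint part $\tfrac12(b\d+b\d^*)$ (which has the same j-limit by \ref{it:limits_unitary_normal}), and apply $t\mapsto\max(t,0)$ through \ref{it:functional_calc} to produce a j-convergent net of positive elements with j-limit $\max(a_\oo,0)=a_\oo$. The only genuinely delicate point in the whole corollary is the uniform resolvent control in \ref{it:limits_spectrum}; everything else is a systematic consequence of the $*$-homomorphism property, the functional calculus, and the fact that j-limits see only the tail of a net.
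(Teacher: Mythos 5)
Your proposal is correct and runs on the same engine as the paper's proof: $\jlim$ is a surjective unital $*$-homomorphism from the C*-algebra $\C(\A,j)\subset\nets(\A,j)=\prod_n\A_n$ onto $\A_\oo$, and each item is the transport of a property along it. Items \ref{it:limits_unitary_normal} and \ref{it:functional_calc} are handled as in the paper (whose proof of \ref{it:functional_calc} is literally the one-line instruction ``polynomials, then approximate,'' which you have fleshed out, correctly substituting the Riesz--Dunford integral in the analytic case where polynomial approximation on $K$ need not be available). For \ref{it:positivity} the paper takes a slightly slicker route: write $a_\oo=b_\oo^*b_\oo$, lift $b_\oo$ to a \jt-convergent $b\d$, and use the positive net $b\d^*b\d$; your detour through the self-adjoint part and $t\mapsto\max(t,0)$ also works but leans on \ref{it:functional_calc}, which the paper's factorization avoids. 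The one substantive divergence is \ref{it:limits_spectrum}: the paper asserts that $z-a\d$ is non-invertible in $\prod_n\A_n$ if and only if some $z-a_n$ is, whereas invertibility in the product also requires the inverses to be \emph{uniformly bounded} --- exactly the obstacle you single out. Your instinct is right that this is the only delicate point: for non-normal nets $\Sp(a\d)$ can strictly exceed $\overline{\bigcup_n\Sp(a_n)}$ (already in the constant system, where \jt-convergence is norm convergence and the claimed inclusion becomes lower semicontinuity of the spectrum, which fails for the classical weighted-shift example), so the closure you inserted is needed and the general case genuinely cannot be ``extracted from the product-algebra description''; your resolution via $\norm{(\lambda-a_n)^{-1}}=\operatorname{dist}(\lambda,\Sp(a_n))^{-1}$ in the normal case is the correct fix, and the same uniform-resolvent caveat should be carried into your contour-integral argument for the analytic calculus. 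In short: same strategy as the paper, more careful where the paper is thinnest, with a minor, equally valid variation in \ref{it:positivity}.
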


\begin{proof}
    \ref{it:limits_unitary_normal}:
        All of these follow from the two properties that the adjoint and the product preserve \jt-convergence and correspond to the adjoint and product on $\A_\oo$.

    \ref{it:limits_spectrum}:
        Consider the C* inclusion $\C(\A,j)\subset\nets(\A,j)$.
        To compute $\Sp(a\d)$ we may regard $a\d$ as an element of $\prod \A_n=\nets(\A,j)$ and we have $z\in\Sp(a\d)$ if and only if $(z\1\d-a\d)$ is non-invertible if and only if $(z\1-A_n)$ non-invertible for some $n$, so that $\Sp(a\d)\subset\bigcup_n \Sp(a_n)$.
        Since $\jlim:\C(\A,j)\to\A_\oo$ is a *-homomorphism we have $\Sp(\jlim_n a_n)\subset \Sp(a\d)$.

    \ref{it:functional_calc}:
        One can start by showing the claim for polynomials and then extend to continuous/analytic functions by approximation.

    \ref{it:positivity}:
        $a_\oo$ is positive if and only if $a_\oo =(b_\oo)^*b_\oo$ for some $b_\oo$.
        Now pick $b\d\in\C(E,j)$ with $b_\oo=\jlim_nb_n$. Then $b\d^*b\d$ is also \jt-convergent and its limit is equal to $a_\oo$, which proves the claim.
\end{proof}

A dynamical semigroup on a C*-algebra is a strongly continuous\footnote{This is also called ``point-norm continuous'' elsewhere.} one-parameter group of completely positive contractions.
We include the following characterization of generators of dynamical semigroups on C*-algebras, which is a consequence of the Arendt-Chernoff-Kato theorem \cite{arendt1982generalized}:

\begin{lem}\label{thm:ccp}
    Let $\TT(t)$ be a strongly continuous one-parameter semigroup on a C*-algebra $\A$, and let $\LL$ be its infinitesimal generator.
    Then $\TT(t)$ is completely positive if and only if the generator is conditionally completely positive in the sense that for all $0\le x\in \dom{\LL\ox\id_n}\subset \MM_n(\A)$ and all states $\omega$ on $\mathfrak S(\MM_n(\A))$:
    \begin{equation}\label{eq:conditionally_cp}
          \omega(x) =0 \implies \omega((\LL\ox\id_n)(x))\geq0.
    \end{equation}
    Another equivalent condition is the complete positivity of the resolvent $\R(\lambda)=(\lambda-\LL)^{-1}$ for all $\lambda>0$.
\end{lem}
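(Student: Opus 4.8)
\section*{Proof proposal}

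The plan is to reduce every assertion about \emph{complete} positivity to an assertion about ordinary positivity on the matrix amplifications $\MM_n(\A)$, and then to apply the Arendt--Chernoff--Kato characterization of positive semigroups at each level $n$. By definition, $\TT(t)$ is completely positive precisely when each amplification $\TT(t)\ox\id_n$ is a positive map on the C*-algebra $\MM_n(\A)$ for every $n\in\NN$, and similarly $\R(\lambda)$ is completely positive iff $\R(\lambda)\ox\id_n\ge0$ for all $n$. The first step is to note that matrix amplification is compatible with all the semigroup data: because $\MM_n$ is finite-dimensional, $\{\TT(t)\ox\id_n\}_{t\ge0}$ is again a strongly continuous one-parameter semigroup on $\MM_n(\A)$, its generator is $\LL\ox\id_n$ with domain the matrices over $\A$ whose entries all lie in $\dom{\LL}$ (no completion of the algebraic tensor product is needed), and its resolvents are $\R(\lambda)\ox\id_n=(\lambda-\LL\ox\id_n)^{-1}$. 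Hence it suffices to establish, for each fixed $n$, the equivalence of the three positivity conditions for the single semigroup $\TT(t)\ox\id_n$, and then to quantify over all $n$.

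Thus I reduce to the following scalar statement, to be applied with $\mathcal B=\MM_n(\A)$, $\S(t)=\TT(t)\ox\id_n$ and generator $\mathcal M=\LL\ox\id_n$: for a strongly continuous semigroup $\S(t)$ on a C*-algebra $\mathcal B$ the conditions (i) $\S(t)\ge0$ for all $t\ge0$, (ii) the positive minimum principle $\omega(y)=0\Rightarrow\omega(\mathcal My)\ge0$ for all $0\le y\in\dom{\mathcal M}$ and all $\omega\in\fS(\mathcal B)$, and (iii) $(\lambda-\mathcal M)^{-1}\ge0$ for all $\lambda>0$ are equivalent. I would prove this as the cycle (i)$\Rightarrow$(ii)$\Rightarrow$(iii)$\Rightarrow$(i). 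The implication (i)$\Rightarrow$(ii) is immediate: if $\S(t)\ge0$ and $y\ge0$ with $\omega(y)=0$, then $\S(t)y\ge0$ gives $\omega(\S(t)y)\ge0$, whence, using that $t\mapsto\S(t)y$ is $C^1$ with derivative $\mathcal My$ at $t=0$,
\[
  \omega(\mathcal My)=\lim_{t\searrow0}\frac{\omega(\S(t)y)-\omega(y)}{t}=\lim_{t\searrow0}\frac{\omega(\S(t)y)}{t}\ge 0.
\]

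The implication (iii)$\Rightarrow$(i) is equally soft and I would obtain it from the Euler exponential formula $\S(t)y=\lim_{k\to\oo}\big(\tfrac{k}{t}(\tfrac{k}{t}-\mathcal M)^{-1}\big)^{k}y$ \cite{engelnagel}: the set of positive maps on a C*-algebra is closed under composition, multiplication by nonnegative scalars, and pointwise limits, so positivity of every resolvent propagates to the semigroup. The substantive content, and the main obstacle, is the sufficiency direction (ii)$\Rightarrow$(iii): the purely infinitesimal positive minimum principle must be shown to force the resolvent to be positive. This is exactly the Arendt--Chernoff--Kato theorem \cite{arendt1982generalized}, which I would invoke rather than reprove; at its core it shows that $(\lambda-\mathcal M)y\ge0$ implies $y\ge0$, argued by choosing a state $\omega$ that detects the most negative part of the self-adjoint element $y$, so that $\omega$ annihilates a suitable positive translate of $y$ and the minimum principle can be brought to bear, together with dissipativity of $\mathcal M$. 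The only remaining care is organisational: one checks that each amplified generator $\LL\ox\id_n$ satisfies the hypotheses of Arendt--Chernoff--Kato on $\MM_n(\A)$ --- dense domain and surjectivity of $\lambda-\LL\ox\id_n$ for $\lambda>0$, both inherited from the fact that $\LL$ generates $\TT(t)$ --- and that the states occurring in (ii) are exactly the elements of $\fS(\MM_n(\A))$ appearing in the statement. Quantifying the resulting equivalence over all $n$ then yields the three characterisations of the lemma.
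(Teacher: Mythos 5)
Your proposal is correct and follows essentially the same route as the paper: reduce complete positivity to ordinary positivity of each matrix amplification $\TT(t)\ox\id_n$ on $\MM_n(\A)$ and invoke the Arendt--Chernoff--Kato theorem at each level, which is precisely the paper's one-line argument. You additionally spell out the easy implications and the resolvent equivalence via the Euler formula, which the paper leaves implicit, but this is elaboration rather than a different approach.
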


Here $\MM_n(\A)$ is the C*-algebra of $n\times n$ matrices with entries in $\A$, which is isomorphic with $\MM_n(\CC)\ox\A$. The domain of ${\LL\ox\id_n}$ consists of all matrices whose entries are in $\dom{\LL}$, i.e., $\dom{\LL\ox\id_n}=\MM_n(\dom\LL)$.
\begin{proof}
    It is sufficient to prove that $\TT(t)$ is positive if and only if \cref{eq:conditionally_cp} holds for $n=1$, which is the content of the Arendt-Chernoff-Kato theorem.
\end{proof}

The Lemma and the Lumer-Phillips Theorem \cite[Ch.~III, Thm.~3.15]{engelnagel} imply that generators of dynamical semigroups are precisely the conditionally completely positive closed dissipative operators $\LL$ so that $(\lambda-\LL)^{-1}$ has full range.
We finish this section by repeating the convergence theorem for nets of completely positive semigroups on soft inductive systems of C*-algebras.

\begin{thm}
    Let $(\A,j)$ be a (soft) inductive system of C*-algebras and let $\TT\d$ be a net of dynamical semigroups with $\LL\d$ denoting the corresponding net of generators.
    Let $\lambda\in\CC$ be such that $\Re\lambda > 0$. The following are equivalent:
    \begin{enumerate}[(1)]
        \item\label{it:semigroups*}
            $\TT\d(t)$ is \jt\jt-convergent and the limit operators $\TT_\oo(t)$ are strongly continuous in $t$, hence form a dynamical semigroup.
        \item\label{it:resolvents*}
            The net $\R\d(\lambda)$ of resolvents $\R_n(\lambda)=(\lambda-\LL_n)^{-1}$ is \jt\jt-convergent and the limit operation $\R_\oo(\lambda)$ has dense range.
        \item\label{it:net_core*}
            There is a seminorm dense subspace $\D\subset \dom{\LL\d}$ such that $(\lambda-\LL\d)\D$ is also seminorm dense.
        \item\label{it:generator_cs*}
            $\LL_\oo$ is well-defined, hence closed, and generates a dynamical semigroup on $\A_\oo$.
    \end{enumerate}
    If these hold, then it follows that $\dom{\LL\d}=\R\d(\lambda)\C(E,j)$, that the limits of the resolvents are the resolvents of $\LL_\oo$ and that the semigroup generated $\LL_\oo$ are the limits of $\TT\d(t)$.
    In particular, these claims hold for some $\lambda$ if and only if they hold for all $\lambda$ with positive real part.
\end{thm}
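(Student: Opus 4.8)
The plan is to reduce the theorem to the Banach-space evolution theorem \cref{thm:evolution} and then to observe that complete positivity is automatically inherited in the limit. A soft inductive system of C*-algebras is in particular a soft inductive system of Banach spaces, and a dynamical semigroup on a C*-algebra --- a strongly continuous one-parameter semigroup of completely positive contractions --- is in particular a strongly continuous contraction semigroup. Since each $\TT_n(t)$ is, by hypothesis, a dynamical semigroup, finite-level complete positivity holds throughout. If one deletes the phrase ``completely positive'' (equivalently ``dynamical'') wherever it occurs, conditions \ref{it:semigroups*}--\ref{it:generator_cs*} become verbatim the conditions \ref{it:semigroups}--\ref{it:generator} of \cref{thm:evolution}. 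I would therefore first invoke \cref{thm:evolution} to obtain the equivalence of the four conditions as statements about contraction semigroups, together with all the auxiliary conclusions: the net-domain identity $\dom{\LL\d}=\R\d(\lambda)\C(E,j)$, the fact that $\R_\oo(\lambda)$ is the resolvent of $\LL_\oo$, the $\lambda$-independence, and the identification of $\TT_\oo(t)$ as the semigroup generated by $\LL_\oo$.

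It then remains only to upgrade ``contraction semigroup'' to ``dynamical semigroup'' in the C*-sense by checking complete positivity of the limit. For this I would use the result recorded in the introduction to this section, that the limit operation of a \jt\ojt-convergent net of completely positive contractions is again completely positive. Concretely, the matrix amplifications $\MM_k(\A_n)$ with connecting maps $\j nm\ox\id_k$ again form a soft inductive system of C*-algebras with limit $\MM_k(\A_\oo)$, the amplified net $\TT\d(t)\ox\id_k$ is \jt\jt-convergent with limit $\TT_\oo(t)\ox\id_k$, and positivity passes to the limit by item \ref{it:positivity} of the preceding corollary. Since this holds for every $k$, the limit $\TT_\oo(t)$ is completely positive; being also a strongly continuous contraction semigroup by \cref{thm:evolution}, it is a dynamical semigroup on $\A_\oo$. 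This supplies the complete-positivity content of \ref{it:semigroups*} and \ref{it:generator_cs*}.

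Complete positivity of the resolvents is handled in the same spirit and is compatible with \cref{thm:ccp}: for $\lambda>0$, \cref{thm:ccp} identifies complete positivity of $\TT_n(t)$ with complete positivity of $\R_n(\lambda)$, so each finite-level resolvent is completely positive, and the same limit-of-completely-positive-maps argument shows $\R_\oo(\lambda)$ is completely positive as well; conversely, \cref{thm:ccp} lets one read off complete positivity of the limit semigroup from that of $\R_\oo(\lambda)$. No positivity beyond this enters conditions \ref{it:resolvents*} and \ref{it:net_core*}, which are purely Banach-space statements already covered by \cref{thm:evolution}.

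The step requiring the most care is organizational rather than analytic: one must unpack each occurrence of ``dynamical semigroup'' in the C*-statement as ``completely positive contraction semigroup'' and verify that the limit-of-completely-positive-maps result propagates the complete-positivity adjective consistently through all four equivalent conditions. All substantive semigroup-theoretic work --- dense range of $\R_\oo(\lambda)$, seminorm density of the net cores, and the Lumer--Phillips and Trotter--Kato style estimates --- is already contained in \cref{thm:evolution} and needs no modification in the C*-setting.
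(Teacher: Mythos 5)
Your proposal is correct and follows essentially the same route as the paper: invoke the Banach-space evolution theorem for the four equivalences and all auxiliary conclusions, then verify that complete positivity survives the limit via positivity of \jt-limits of positive nets together with the matrix-amplified soft inductive systems $\MM_k(\A_n)$ with limit $\MM_k(\A_\oo)$, using \cref{thm:ccp} to transfer complete positivity between the semigroups and their resolvents/generators.
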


\begin{proof}
    One only has to check that (conditional) complete positivity is preserved in the limit.
    That these are equivalent is a consequence of \cref{thm:ccp}.
    That complete positivity passes to the limit follows from combining the fact that \jt-limits of positive nets are positive  and that the matrix amplifications $\MM_\nu(\A_n)$ form a soft inductive system with the connecting maps $j_{nm}\ox\id_\nu$ whose limit space is naturally isomorphic with $\MM_\nu(\A_\oo)$.
    The former implies that the $T_\oo(t)$ are positive semigroups, and the latter implies that the same applies to all matrix amplifications so that $T_\oo(t)$ is completely positive.
\end{proof}

\newcommand\B{\mc B}
\def\jklim{(j\!\ox\!k)\mkern-.5mu\hbox{-}\mkern-3mu\lim}
\def\klim{\ensuremath{k}\mkern-1mu\hbox{-}\mkern-3mu\lim}
\renewcommand\k[2]{k_{{#1}{#2}}}

Finally, we discuss tensor products of soft inductive limits of C*-algebras.
If $\A$ and $\B$ are unital C*-algebras we denote by $\A\odot\B$, $\A\ox_{min}\B$ and $\A\ox_{max}\B$ the algebraic, minimal and maximal (C*-) tensor product respectively.
Let $(\A,j)$ and $(\mc B,k)$ be soft inductive limits of C*-algebras over the same directed set $N$ and set
$$
(\A\ox_{*}\B)_n=\A_n\ox_*\B_n \qandq (j\ox k)_{nm}=\j nm\ox k_{nm},
$$
where $*$ is either \emph{min} or \emph{max}. Note that $(j\ox k)_{nm}$ are unital completely positive maps in both cases.

\begin{prop}\label{prop:tensor}
    Let $\ox_*$ denote either the minimal or maximal C*-tensor product.
    \begin{enumerate}[(1)]
        \item\label{it:TP_soft}
            $(\A\ox_*\B,j\ox k)$ is a soft inductive system of C*-algebras.
            If $a_{i,\blob}\in\C(\A,j)$ and $b_{i,\blob}$, $i=1,\ldots k$, then $\sum_{i=1}^k a_{i,\blob} \ox b_{i,\blob} \in\C(\A\ox_*\B,j\ox k)$.
            This induces an embedding of the algebraic tensor product $\A_\oo\odot\B_\oo$ into the limit space $(\A\ox_*\B)_\oo$. This embedding has dense range.
        \item\label{it:TP_phi}
            The maps $\phi\d$ with $\phi_n:=\j\oo n \ox k_{\oo n}:\A_n\ox_*\B_n\to \A_\oo\ox_*\B_\oo$ take $(j\ox k)$-convergent nets to Cauchy nets.
            The limit operation $\phi_\oo:(\A\ox_*\B)_\oo\to\A_\oo\ox_*\B_\oo$ is a surjective *-homomorphism.
        \item\label{it:TP_max}
            $(\A\ox_{max}\B)_\oo$ is the maximal C*-tensor product of $\A_\oo$ and $\B_\oo$.
        \item\label{it:TP_min}
            $(\A\ox_{min}\B)_\oo$ is a C*-tensor product of $\A_\oo$ and $\B_\oo$.
            $\phi_\oo$ is the canonical homomorphism onto the minimal C*-tensor product.
    \end{enumerate}
\end{prop}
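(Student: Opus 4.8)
The plan is to reduce everything to two facts: that elementary tensor nets are $(j\ox k)$-convergent, and that both $\ox_{min}$ and $\ox_{max}$ are genuine C*-crossnorms; parts \ref{it:TP_max} and \ref{it:TP_min} then drop out of the universal properties of the two crossnorms. For part \ref{it:TP_soft}, the computation I would do first is that for $a\d\in\C(\A,j)$ and $b\d\in\C(\B,k)$ the net $a\d\ox b\d=(a_n\ox b_n)$ is $(j\ox k)$-convergent: from
\[
    a_n\ox b_n-(\j nm\ox k_{nm})(a_m\ox b_m)=(a_n-\j nm a_m)\ox b_n+(\j nm a_m)\ox(b_n-k_{nm}b_m)
\]
and $\norm{u\ox v}\le\norm u\,\norm v$ (valid for both crossnorms) the right-hand side is bounded by $\norm{a_n-\j nm a_m}\,\norm{b_n}+\norm{a_m}\,\norm{b_n-k_{nm}b_m}$, which tends to $0$ as $n\gg m$ by \jt-convergence of $a\d$ and $k$-convergence of $b\d$; the criterion \eqref{eq:j_convergence_2} then gives $a\d\ox b\d\in\C(\A\ox_*\B,j\ox k)$, and finite sums follow by linearity. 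That $(\A\ox_*\B,j\ox k)$ is a soft inductive system of C*-algebras I would verify on elementary tensors and extend by density together with the uniform contraction bound: $\j nm\ox k_{nm}$ is a completely positive contraction, and both the soft relation \eqref{eq:soft_inductive_sys} and the asymptotic homomorphism property \eqref{eq:asymmor} follow from the analogous properties of $(\A,j)$ and $(\B,k)$ via the same telescoping estimate; since the maps in question are uniformly bounded bilinear expressions, convergence on the dense algebraic tensor product transfers to all of $\A_n\ox_*\B_n$.

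Next I would introduce the embedding and establish part \ref{it:TP_phi} alongside it, because injectivity of the embedding is most easily read off from $\phi_\oo$. Define $\iota:\A_\oo\odot\B_\oo\to(\A\ox_*\B)_\oo$ by $\iota(a_\oo\ox b_\oo)=\jklim_n(a_n\ox b_n)$; this is well defined and linear, since changing $a\d$ or $b\d$ within their limit classes alters $a\d\ox b\d$ only by a null net, and it is multiplicative and $*$-preserving by \eqref{eq:limit_csa}. Its range is dense by \ref{it:subspace_density} of \cref{thm:density_thm}, because every $(j\ox k)$-basic net is a $\norm{}_\nets$-limit of basic nets of the form $\sum_i(\j\blob m a_i)\ox(k_{\blob m}b_i)$ (the algebraic tensor product is dense in $\A_m\ox_*\B_m$ and the connecting maps are contractions), and these lie in the span of the $a\d\ox b\d$. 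For part \ref{it:TP_phi}, each $\phi_n=\j\oo n\ox k_{\oo n}$ is a completely positive contraction into $\A_\oo\ox_*\B_\oo$, and on elementary tensors
\[
    \phi_n\big((\j nm\ox k_{nm})(a_m\ox b_m)\big)=(\j\oo n\j nm a_m)\ox(k_{\oo n}k_{nm}b_m)\xrightarrow{\,n\to\oo\,}(\j\oo m a_m)\ox(k_{\oo m}b_m)=\phi_m(a_m\ox b_m),
\]
by \ref{it:convergence_of_embeddings} of \cref{thm:density_thm} and joint continuity of $\ox$; extending by density gives the compatibility $\lim_n\norm{(\phi_n(j\ox k)_{nm}-\phi_m)z_m}=0$, so $\phi\d$ maps $(j\ox k)$-convergent nets to Cauchy nets and \cref{thm:soft_universal_property} induces $\phi_\oo:(\A\ox_*\B)_\oo\to\A_\oo\ox_*\B_\oo$. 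On the dense range of $\iota$ one gets $\phi_\oo\circ\iota(a_\oo\ox b_\oo)=a_\oo\ox b_\oo$, i.e.\ $\phi_\oo\circ\iota$ is the canonical inclusion $\A_\oo\odot\B_\oo\hookrightarrow\A_\oo\ox_*\B_\oo$. Since that inclusion is injective, so is $\iota$; and since $\phi_\oo$ is a completely positive contraction that is multiplicative on the dense subalgebra $\iota(\A_\oo\odot\B_\oo)$, it is a $*$-homomorphism whose image is a C*-subalgebra containing the dense $\A_\oo\odot\B_\oo$, hence surjective. This finishes parts \ref{it:TP_soft} and \ref{it:TP_phi}.

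Finally, parts \ref{it:TP_max} and \ref{it:TP_min} I would obtain by comparing crossnorms on $\A_\oo\odot\B_\oo$. Because $\iota$ is an injective $*$-homomorphism with dense range, $(\A\ox_*\B)_\oo$ is exactly the completion of $\A_\oo\odot\B_\oo$ in the C*-crossnorm $\gamma(u)=\norm{\iota(u)}$. This already yields part \ref{it:TP_min}: $(\A\ox_{min}\B)_\oo$ is a C*-tensor product of $\A_\oo$ and $\B_\oo$, and $\phi_\oo$, being the $*$-homomorphism restricting to the identity on $\A_\oo\odot\B_\oo$, is the canonical surjection onto the minimal tensor product $\A_\oo\ox_{min}\B_\oo$ (which exists since $\gamma\ge\norm{}_{min}$). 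For part \ref{it:TP_max} I would invoke maximality twice: $\gamma(u)\le\norm u_{max}$ because $\gamma$ is a C*-crossnorm, while $\norm u_{max}=\norm{\phi_\oo\iota(u)}_{max}\le\norm{\iota(u)}=\gamma(u)$ because $\phi_\oo$ is contractive; hence $\gamma=\norm{}_{max}$ and $\iota$ extends to an isometric isomorphism $\A_\oo\ox_{max}\B_\oo\cong(\A\ox_{max}\B)_\oo$. The main obstacle is precisely the asymmetry surfacing here: for $\ox_{max}$ the maximal crossnorm is the largest C*-norm, which pins $\gamma$ down to it, whereas for $\ox_{min}$ the induced crossnorm $\gamma$ is only forced to dominate the minimal one, so the limit construction returns merely some C*-tensor product together with the canonical projection onto $\A_\oo\ox_{min}\B_\oo$ and need not recover min on the nose.
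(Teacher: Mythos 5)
Your proposal is correct and follows essentially the same route as the paper: both verify the soft-system axioms, $(j\ox k)$-convergence, and the compatibility of $\phi\d$ on elementary tensors (equivalently, basic nets built from the algebraic tensor product) and extend by density and uniform boundedness, and both settle \ref{it:TP_max} and \ref{it:TP_min} by showing the limit norm is a C*-norm on $\A_\oo\odot\B_\oo$ dominating $\norm{}_*$ via the contractivity of $\phi_\oo$, which pins it to $\norm{}_{max}$ in the maximal case only. Your explicit observation that $\phi_\oo\circ\iota=\id$ on $\A_\oo\odot\B_\oo$, giving injectivity of the embedding and non-degeneracy of the induced norm in one stroke, is a slightly cleaner packaging of what the paper does implicitly, but it is not a different argument.
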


In general, $(\A\ox_{min}\B)_\oo$ is not equal to the minimal C*-tensor product. Counterexamples exist already for strict inductive systems $(\A,j)$ of C*-algebras where the $\j nm$ are *-homomorphisms \cite[\,\!II.9.6.5]{blackadar2006operator}.

\begin{proof}
    \ref{it:TP_soft}: It suffices to check $(j\ox k)$-convergence and asymptotic multiplicaticity on basic sequences of the form $(j\ox k)_{\blob l} x_l$ with $x_l\in\A_l\odot\B_l$. For these, the properties follow from their counterparts on $(\A,j)$ and $(\B,k)$ and the triangle inequality.
    Furthermore, such basic sequences are seminorm dense, implying that the embedding of the algebraic tensor product is dense.

    \ref{it:TP_phi}:
    It suffices to check basic nets with $(j\ox k)_{\blob l}x_l$ with $x_l= \sum a_i \ox b_i\in\A_n\odot\B_n$ where the claim follows again from the triangle inequality.
    That $\phi_\oo$ is a *-homomorphism is clear from the asymptotic multiplicativity of $\j\oo n$ and $k_{\oo n}$.
    That it is surjective follows from the fact that the algebraic tensor product $\A_\oo\odot\B_\oo$ is dense in $\A_\oo\ox_*\B_\oo$ and is part of the range of $\phi$.

    \ref{it:TP_max} and \ref{it:TP_min}:
    We start by showing that $(\A\ox_*\B)_\oo$ is a C*-tensor product, i.e., that $\norm{\sum_i a_{i,\oo}\ox b_{i,\oo}}_\beta=\norm{\jklim_n a_{i,n}\ox b_{i,n}}$ is a C*-norm on the algebraic tensor product \cite[Sec.~IV.4]{takesaki1}.
    The triangle inequality, submultiplicativity, and the C*-property $\norm{x^*x}_\beta=\norm{x}_\beta^2$ are inherited from the norm on the limit space.
    Furthermore, $\norm{\jklim_n a_n\ox b_n}=\lim_n \norm{a_n\ox b_b}\le \lim_n\norm{a_n}\norm{b_n}\le \norm{a_\oo}\norm{b_\oo}$.
    Non-degeneracy of $\norm{}_\beta$ holds by \ref{it:TP_phi} which proves $\norm{}_\beta\le \norm{}_*$.
    If $*=$ \emph{max}, then this implies $\norm{}_\beta=\norm{}_{max}$ \cite[Sec.~IV.4]{takesaki1}.
\end{proof}

\section{Examples and Applications} \label{sec:ex}

\subsection{Quantum dynamics in the classical limit}\label{sec:classical_limit}

An approach to the classical limit via soft inductive limits of C*-algebras was published in \cite{classicallimit}.
We discuss here a corresponding version in the Schr\"odinger picture, emphasizing convergence of quantum dynamics to classical dynamics in the classical limit.
We plan to publish these and more results on the classical limit with full proofs in the future and will focus on conveying the bigger picture here.

The underlying directed set for the classical limit is the interval $(0,1]$ directed towards zero.
For this reason, we write ``$0$'' instead of ``$\infty$'' to denote limit objects.
An element $\hb\in (0,1]$ is thought of as an \emph{action scale} of the quantum system with $d$ canonical degrees of freedom with Hilbert space $L^2(\RR^d, dx)$.
The role of the connecting maps $\j\hb\hbp$ that we will introduce is to change the action scale from \hb\ to \hbp.
We will refer to a family of objects (e.g.\ states) indexed by $\hb\in(0,1]$ as an (\hb-)sequence because the directed set is totally (and strictly) ordered so that "net" seems to be an unnecessarily complicated term.

The Banach space in which the states of the quantum system at scale \hb\ live is the trace class $\TC\h \coloneqq \TC(\H)$.
To define the connecting maps $\j\hb\hbp :\TC\hp\to\TC\h$, we start with the well-known coherent-state quantization and its dequantization counterpart, which we denote by $\j\hb0$ and $\j0\hb$, respectively.
By $\ket z\h$, $z\in\RR^{2d}$, we denote the coherent-state vectors obtained from displacing the ground state $\ket 0\h$ of the harmonic oscillator\footnote{The Harmonic oscillator has the \hb-scaling $H\h= \frac12(x^2 - \hb^2 \Delta_x)$ and the displacement operators are $W_z^\hb\psi (x) = \exp\{(i/\hb)(p\cdot x+i \hb \nabla_q)\}\psi(x)$, $z=(q,p)$. If $\ket0\h$ is the ground state of $H\h$, then the coherent states are $\ket z\h=W_z^\hb\ket0\h$.}.
With these, we define
\begin{align}
    &&\j\hb0(\rho_0) &\coloneqq \int_{\RR^{2d}} \rho_0(z) \, \kettbra z\h \ dz,&\rho_0&\in L^1(\RR^{2d}) \label{eq:jh0}\\
    &&\j0\hb(\rho\h)(z) &\coloneqq  \tfrac1{(2\pi\hb)^{d}} \ {\bra z \rho\h \ket z\h} ,&\rho\h&\in\TC\h.\label{eq:j0h}
\end{align}
These maps take probability distributions (or rather their densities) on phase space to quantum states and vice versa\footnote{This is a consequence of the overcompleteness property $\int \kettbra z\h\, dz = (2\pi\hb)^d\1$.}.
They are also known as upper and lower symbols or Wick dequantization and anti-Wick quantization, respectively (see \cite{folland2016harmonic}).
The connecting maps are simply defined by
\begin{equation}\label{eq:classical_split}
    \j\hb\hbp \coloneqq \j\hb0\circ \j0\hbp : \TC\hp\to\TC\h.
\end{equation}
By construction, the connecting maps are completely positive and trace-preserving (so-called \emph{quantum channels}). In particular, they are linear contractions.

\begin{lem}
    $(\TC,j)$ is a soft inductive limit of Banach spaces, i.e.,
    \begin{equation}\label{eq:CL_soft_ind}
        \lim_{\hb\ll\hbp}\norm{(\j\hb{\hb''}-\j\hb\hbp\j\hbp{\hb''} )\rho_{\hb''}}_1 =0\quad \forall {\hb''}>0,\ \rho_{\hb''}\in\TC_{\hb''}.
    \end{equation}
    The limit space $\TC_0$ is isometrically isomorphic with $L^1(\RR^{2d})$ with the isomorphism being defined by
    \begin{equation}\label{eq:CL_limit_space_iso}
        \TC_0\cong L^1(\RR^{2d}) \quad \text{via}\quad \jlim\h \rho\h \mapsto \lim\h \j0\hb\rho\h.
    \end{equation}
    Indeed, if an \hb-sequence $\rho\d$ is \jt-convergent, then its dequantizations $\j0\hb\rho\h$ converge in the topology of $L^1(\RR^{2d})$.
    The abstract maps $\j0\hb$ (defined as in \cref{eq:embedding_into_limit}) and the one defined in \cref{eq:j0h} coincide (modulo the above isomorphism), i.e.\ one has $\jlim\h \j\hb\hbp\rho\hp =\j0\hbp\rho\hp$.
\end{lem}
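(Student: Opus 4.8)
The whole argument rests on one concrete input about coherent states, which I would isolate first. Writing out the composition $\j0\hb\circ\j\hb0$ via \eqref{eq:jh0} and \eqref{eq:j0h} gives, for $f\in L^1(\RR^{2d})$,
\[
    (\j0\hb\j\hb0 f)(w) = \tfrac1{(2\pi\hb)^{d}}\int_{\RR^{2d}} f(z)\,\abs{\braket w z\h}^2\,dz,
\]
and the coherent-state overlap $\abs{\braket w z\h}^2$ is a Gaussian in $w-z$ whose $(2\pi\hb)^{-d}$-normalization $G_\hb$ is a probability density concentrating at the origin as $\hb\to0$. Hence $\j0\hb\j\hb0$ is convolution with the approximate identity $G_\hb$, so $\j0\hb\j\hb0 f=G_\hb\ast f\to f$ in $L^1$ for every $f$. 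I would also record that $\j\hb0\colon L^1(\RR^{2d})\to\TC\h$ and $\j0\hb\colon\TC\h\to L^1(\RR^{2d})$ are contractions, and that $\lim\h\norm{\j\hb0 g}_1=\norm g_{L^1}$ for every $g$ — with $\le$ being contractivity and $\ge$ following from $\norm{\j\hb0 g}_1\ge\norm{\j0\hb\j\hb0 g}_{L^1}=\norm{G_\hb\ast g}_{L^1}\to\norm g_{L^1}$. This coherent-state convolution fact is the only genuinely analytic ingredient; everything downstream is soft functional analysis.

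The soft-inductive property \eqref{eq:CL_soft_ind} is then purely formal. Using $\j\hb\hbp=\j\hb0\j0\hbp$ from \eqref{eq:classical_split}, the operator $\j\hb{\hb''}-\j\hb\hbp\j\hbp{\hb''}$ telescopes to $\j\hb0(\id-\j0\hbp\j\hbp0)\j0{\hb''}$, so applied to $\rho_{\hb''}$ its trace norm is bounded, by contractivity of $\j\hb0$, by $\norm{(\id-G_\hbp\ast)\,\j0{\hb''}\rho_{\hb''}}_{L^1}$, which is independent of $\hb$ and tends to $0$ as $\hbp\to0$. The same computation verifies the compatibility hypothesis of \cref{thm:soft_uniqueness} for the candidate pair $(L^1(\RR^{2d}),\{\j0\hb\})$, since $(\j0\hb\j\hb\hbp-\j0\hbp)\rho\hp=(G_\hb\ast-\id)\j0\hbp\rho\hp\to0$ in $L^1$ as $\hb\to0$. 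By the first part of \cref{thm:soft_uniqueness} the assignment $\ojlim\h\rho\h\coloneqq\lim\h\j0\hb\rho\h$ is then a well-defined linear contraction on $\C(\TC,j)$; in particular the dequantizations of any \jt-convergent sequence converge in $L^1$, which is the third assertion of the lemma.

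To identify the limit space I would show that the induced contraction $\bar\psi\colon\TC_0\to L^1(\RR^{2d})$, $\jlim\h\rho\h\mapsto\ojlim\h\rho\h$ (well-defined because null nets are killed, as $\norm{\j0\hb\rho\h}_{L^1}\le\norm{\rho\h}_1$), is an isometric isomorphism. The conceptual crux — and the step where the precise definition of \jt-convergence is essential, so the one I expect to be the main obstacle — is that every \jt-convergent $\rho\d$ agrees, modulo a null net, with the canonical quantized net $\j\hb0 g$ where $g\coloneqq\ojlim\h\rho\h$. This follows from
\[
    \norm{\rho\h-\j\hb0 g}_1\le\norm{\rho\h-\j\hb\hbp\rho\hp}_1+\norm{\j0\hbp\rho\hp-g}_{L^1},
\]
using $\j\hb\hbp\rho\hp-\j\hb0 g=\j\hb0(\j0\hbp\rho\hp-g)$ and contractivity for the second summand; taking $\limsup\h$ and then $\hbp\to0$ annihilates the right-hand side by \jt-convergence and by definition of $g$, leaving $\lim\h\norm{\rho\h-\j\hb0 g}_1=0$. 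Consequently $\bar\psi$ is injective, and it is surjective because $(\j\hb0 g)_\hb$ is itself \jt-convergent — its defect $\j\hb0(g-G_\hbp\ast g)$ is norm-controlled uniformly in $\hb$ — with $\ojlim\h\j\hb0 g=\lim\h G_\hb\ast g=g$. Isometry is then immediate from the norm-recovery fact: $\seminorm{\rho\d}=\lim\h\norm{\rho\h}_1=\lim\h\norm{\j\hb0 g}_1=\norm g_{L^1}=\norm{\bar\psi(\rho_0)}_{L^1}$, where the second equality uses that $\rho\h$ and $\j\hb0 g$ differ by a null net. This proves \eqref{eq:CL_limit_space_iso}.

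Finally, the coincidence of the abstract embedding with the concrete dequantization drops out of the same machinery. By \eqref{eq:embedding_into_limit} the abstract map is $\j0\hbp\rho\hp=\jlim\h\j\hb\hbp\rho\hp\in\TC_0$, and applying $\bar\psi$ gives $\bar\psi(\jlim\h\j\hb\hbp\rho\hp)=\lim\h\j0\hb\j\hb\hbp\rho\hp=\lim\h G_\hb\ast(\j0\hbp\rho\hp)=\j0\hbp\rho\hp$, the final expression being exactly the concrete dequantization \eqref{eq:j0h}. Thus, under the isomorphism \eqref{eq:CL_limit_space_iso}, the two maps denoted $\j0\hbp$ agree, as claimed.
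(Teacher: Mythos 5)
Your proof is correct, and its analytic core coincides with the paper's: the telescoping identity $\j\hb{\hb''}-\j\hb\hbp\j\hbp{\hb''}=\j\hb0(\id-\j0\hbp\j\hbp0)\j0{\hb''}$ together with the fact that $\j0\hb\j\hb0$ is convolution with a Gaussian approximate identity (the heat transform), which gives both the soft-inductive property and the $L^1$-Cauchyness of the dequantizations $\j0\hb\rho\h$. Where you diverge is in how the identification $\TC_0\cong L^1(\RR^{2d})$ is completed. The paper's written proof stops after the two estimates above and leaves the isomorphism implicit, in effect deferring to \cref{thm:soft_uniqueness}; you instead give a direct, self-contained verification built on the observation that every \jt-convergent $\rho\d$ differs by a null net from the quantized family $\j\hb0 g$ with $g=\lim\h\j0\hb\rho\h$, from which injectivity, surjectivity, and the isometry $\lim\h\norm{\j\hb0 g}_1=\norm{g}_{L^1}$ all follow by short estimates. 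Your route has the advantage of not requiring one to check the universal-property hypothesis of \cref{thm:soft_uniqueness} for the pair $(L^1(\RR^{2d}),\{\j0\hb\})$, and it makes explicit the useful structural fact that the basic nets issued from index $0$ (i.e.\ the split structure $\j\blob0$) already exhaust $\TC_0$ up to null nets; the paper's route is shorter on the page but relies on machinery proved elsewhere. Both are sound, and your final computation identifying the abstract $\j0\hbp$ with the concrete dequantization is exactly what the lemma's last claim requires.
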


\begin{proof}
    As we have $(\j\hb{\hb''}-\j\hb\hbp\j\hbp{\hb''} ) = \j\hb0(\id - \j0\hb\j\hb0)\j0{\hb''}$, it follows that the norm in \cref{eq:CL_soft_ind} is bounded by $\norm{(\id - \j0\hbp\j\hbp0)\sigma_0}_1$ with $\sigma_0 = \j0{\hb''}\rho_{\hb''}$.
    It is well-known that $\j0\hb\j\hb0$ is the heat transform at time \hb, i.e.\ it convolves with a Gaussian having variance \hb, and the claim follows from strong-continuity of the heat semigroup on $L^1(\RR^{2n})$.

    One sees that $\j0\hb\rho\h$ is a Cauchy sequence in $\hb$ for any $\rho\d\in\C(\TC,j)$ by considering
    \begin{align*}
        \norm{\j0\hb\rho\h - \j0\hbp\rho\hp}_1
        &\le \norm{\j0\hb(\rho\h-\j\hb\hbp\rho\hp)}_1+\norm{(\j0\hb\j\hb0-\id)\j0\hbp\rho\hp}_1 \xrightarrow{\hb\ll\hbp}0.\qedhere
    \end{align*}
\end{proof}

A special feature of this soft inductive system, not assumed in the abstract setup of soft inductive systems, are the quantization maps $\j\hb0$ and their properties.
For any $\rho_0\in L^1(\RR^{2d})$ one has that $\j\blob0\rho_0$ is \jt-convergent and the limit is $\jlim\h \j\hb0\rho_0 = \rho_0$.
This has the consequence that for any dense subspace $\D_0\in L^1(\RR^{2d})$, the space $\D = \set{\j\blob0\rho_0\given \rho_0\in \D_0}$ is seminorm dense in $\C(E,j)$.
Furthermore, this makes for the following simplification: \jts-convergence of a sequence $A\d$ of observables $A\h\in\LL(\H)$  is equivalent to existence of the limit $\wslim\h\j0\hb A\h$ in $L^\oo(\RR^{2d})=(L^1(\RR^{2d}))'$.
Another interesting property of this specific soft inductive system is that even though the spaces $\TC\h$ are the same for all $\hb$, the limit space is different.

We now turn to the discussion of dynamics in the classical limit. This is done separately for Hamiltonian dynamics and irreversible dynamics generated by a Lindblad operator.
Both cases will be proved by applying \cref{thm:evolution_cor}, which reduces the problem to an infinitesimal one, provided that one already has good control over the expected limit dynamics.
We start with the Hamiltonian case, where the main result is:

\begin{thm}\label{thm:CL_evolution}
    Let $V:\RR^d\to \RR$ be a $C^2$-potential with bounded second-order derivatives and consider the Schr\"odinger operator $H\h = -\frac{\hb^2}2 \Delta + V(x)$ and the classical Hamiltonian function $H_0(q,p)=\frac12 p^2+V(q)$.
    Then the quantum dynamics $\rho\h \mapsto  e^{-\frac i\hb tH\h}\rho\h e^{\frac i\hb tH\h}$ is convergent in the sense of \cref{thm:evolution}.
    The limiting operation is the classical time evolution generated by $H_0$, i.e., the classical limit of time-evolved quantum states follows the classical flow generated by $H_0$.
    In particular,
    \begin{equation}
        \frac d{dt} \bigg|_{t=0} \ \jlim\h \paren[\Big]{e^{-\frac i\hb H\h}\rho\h e^{\frac i\hb t H\h}} = \braces[\Big]{H_0,\jlim\h \rho\h}\quad \forall\rho\d\in \C(\TC,j)
    \end{equation}
    provided that $\jlim\h\rho\h$ is suitably differentiable, where $\braces{f,g}$ denotes the Poisson bracket of functions $f$ and $g$.
\end{thm}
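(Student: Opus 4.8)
The plan is to invoke \cref{thm:evolution_cor}, which collapses the convergence of the whole dynamics to a single infinitesimal identity once the expected limit semigroup and a core are in place. As the candidate limit dynamics $S(t)$ on $\TC_0\cong L^1(\RR^{2d})$ I take the classical Hamiltonian flow. Since $V\in C^2$ has bounded second derivatives, $\nabla V$ is globally Lipschitz, so the Hamiltonian vector field $X_{H_0}=(p,-\nabla V(q))$ is globally Lipschitz; hence its flow $\Phi_t$ is complete and $C^1$. By Liouville's theorem $\Phi_t$ preserves Lebesgue measure, so the transport operators $S(t)\colon\rho_0\mapsto\rho_0\circ\Phi_{-t}$ form a strongly continuous one-parameter group of isometries of $L^1(\RR^{2d})$, in particular a dynamical semigroup for $t\ge0$. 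Its generator is $B=\braces{H_0,\placeholder}=\nabla V\cdot\partial_p-p\cdot\partial_q$, and the smooth compactly supported functions $\D_0=C_c^\oo(\RR^{2d})$ form a core (they are dense, lie in $\dom B$ because $\nabla V$ is locally bounded, and their graph-norm closure exhausts $\dom B$ by a standard transport-equation argument for complete $C^1$ flows).

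Next I would produce the lift $\D\subset\dom{A\d}$ using the split structure of this system, i.e.\ the quantization maps $\j\hb0$. The quantum generator is $A\h\rho\h=-\tfrac i\hb[H\h,\rho\h]$, and for $\rho_0\in\D_0$ I set $\D=\set{\j\blob0\rho_0\given\rho_0\in\D_0}$. Because $\jlim\h\j\hb0\rho_0=\rho_0$ one has $\jlim\D=\D_0$, and since $\j\hb0\rho_0=\int\rho_0(z)\,\kettbra z\h\,dz$ is a smoothing operator mapping into the Schwartz space, it lies in $\dom{A\h}$ for every $\hb$. It then only remains to verify the hypothesis of \cref{thm:evolution_cor}, the infinitesimal correspondence
\[
    \jlim\h\paren[\Big]{-\tfrac i\hb[H\h,\j\hb0\rho_0]}=\braces{H_0,\rho_0},\qquad\rho_0\in\D_0,
\]
which under the isometry $\TC_0\cong L^1$ reads $\lim_{\hb\to0}\norm[\big]{\j0\hb(-\tfrac i\hb[H\h,\j\hb0\rho_0])-\braces{H_0,\rho_0}}_1=0$.

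To establish this I would split $H\h=-\tfrac{\hb^2}2\Delta+V$ and analyse the dequantized commutator through the Wick/anti-Wick symbol calculus. The anti-Wick symbol of $\j\hb0\rho_0$ is exactly $\rho_0$, and $\j0\hb\j\hb0$ is the heat transform (Gaussian convolution of variance $\propto\hb$), so every term reduces to explicit Gaussian integrals of $\rho_0$ and its derivatives. The kinetic commutator, being quadratic in momentum, reproduces the transport term $-p\cdot\partial_q\rho_0$ up to errors vanishing with $\hb$. For the potential commutator I would Taylor-expand $V$ to first order about the coherent-state centre: the linear term yields the force term $\nabla V\cdot\partial_p\rho_0$, while the remainder is controlled pointwise by $\sup\abs{\partial^2 V}$ times the coherent-state width $\sim\sqrt\hb$, which is precisely where the bounded-second-derivative hypothesis enters. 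Summing the leading contributions gives $\braces{H_0,\rho_0}$, and the remainders vanish in $L^1$ thanks to the compact support of $\rho_0$. Since the quantum dynamics is a group of isometries, the same argument applied to the time-reversed dynamics covers $t\le0$, and \cref{thm:evolution} then yields $\jt\jt$-convergence together with $A_\oo=B$ and $T_\oo(t)=S(t)$; the displayed Poisson-bracket formula in the theorem is the special case $\rho_0\in\dom B$, which is the meaning of ``suitably differentiable''.

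I expect the main obstacle to be this last infinitesimal identity, and specifically obtaining convergence in $L^1$-norm (equivalently trace norm) rather than merely weakly or pointwise. The delicate part is the potential term: controlling the Taylor remainder uniformly in the coherent-state parameter requires the boundedness of $\partial^2 V$, and one must simultaneously check that the unbounded commutator $[H\h,\j\hb0\rho_0]$ is genuinely trace class, so that all manipulations take place inside $\dom{A\h}$ and the heat-transform rewriting is licit. Once these symbol estimates are secured, the abstract machinery of \cref{thm:evolution_cor,thm:evolution} does the rest.
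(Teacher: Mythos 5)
Your proposal follows essentially the same route as the paper: both invoke \cref{thm:evolution_cor} with the lifted core $\D=\set{\j\blob0\rho_0\given\rho_0\in C_c^{2}(\RR^{2d})}$ (the paper uses $C_c^2$, you use $C_c^\oo$; both are cores) and reduce everything to the single infinitesimal identity $\jlim\h(-\tfrac i\hb[H\h,\j\hb0\rho_0])=\braces{H_0,\rho_0}$. The only cosmetic difference is in how that identity is attacked — the paper reduces to the anti-Wick quantized Hamiltonian $H\h=\j\hb0 H_0$ and computes explicitly, while you split off the kinetic part and Taylor-expand the potential directly — but since the paper itself only sketches this step and defers full details, your version is an equally valid instantiation of the same strategy.
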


The assumptions on the potential guarantee that $H\h$ is essentially self-adjoint on the domain of Schwartz-functions for every \hb\ \cite{bauer2022self} and for the dynamics of $H_0$ to exist for all times and all initial values (by the Picard-Lindel\"of theorem).

\begin{proof}[Sketch of proof]
    It can readily be checked that $C_c^2(\RR^{2d})$ is a core for the classical dynamics generated by $H_0$ on $L^1(\RR^{2d})$.
    Applying \cref{thm:evolution_cor} to $\D = \set{\j\blob0\rho_0\given \rho_0\in C_c^2(\RR^{2d})}$ reduces the problem to showing that commutators converge to Poisson brackets on $\D$, i.e.\ showing that
    \begin{equation}\label{eq:bracket_thm}
        \jlim\h \paren[\Big]{-\frac i\hb \bracks[\big]{H\h,\rho\h}} = \braces{H_0,\rho_0}\quad \forall \rho\d\in\D.
    \end{equation}
    One can reduce the proof to the case where $H = \j\hb0 H_0$ where the claim can be proved explicitly using the explicit form of the coherent state quantization.
\end{proof}

The proof, in fact, works for a much larger class of Hamiltonians containing all coherent-state quantized Hamiltonians $H\h = (2\pi\hb)^{-d} \int H_0(z) \kettbra z\h dz$ of $C^2$-functions with bounded second-order derivatives and all Weyl quantizations of $C^{2d+3}$-functions with uniformly bounded derivatives of second and higher orders (both of these conditions guarantee essential self-adjointness \cite{bauer2022self,fulsche2023simple}).
A version of this theorem for bounded Hamiltonians was already included in \cite{classicallimit}.
The extension to unbounded Hamiltonians is taken from \cite{masterthesis} and again builds on \cref{thm:evolution}.

We now turn to discuss irreversible dynamics.
On the quantum side, irreversible dynamics correspond to strongly continuous semigroups of trace-preserving completely positive maps \cite{siemon}.
Consider the Lindbladian of a Gaussian (or quasi-free) dynamical semigroup \cite{hybrids, BarchWer}
\begin{align}\label{eq:quasifree_lindblad}
    \LL\h(\rho\h)
    &= -\frac i{2\hb}\sum_{jk} \paren[\Big]{ A_{jk}\bracks[\big]{R_jR_k,\rho\h} + i M_{jk} \paren[\big]{R_j [R_k,\rho\h] + [\rho\h,R_j] R_k}}
\end{align}
where $R = (x_1,\ldots,x_d,-i\hb\partial_1,\ldots,-i\hb \partial_d)$ is the vector of canonical operators and where $A$ is a symmetric real matrix and $M$ is a positive semi-definite complex matrix.
We remark that \cref{eq:quasifree_lindblad} can be transferred to the form $\LL\h(\rho\h) = -(i/\hb)[H,\rho] + (1/2\hb) \sum_j (L_j[\rho\h,L_j^*] + [L_j,\rho\h]L_j^*)$ with jump operators $L_j = \sum_k (\sqrt M)_{jk}R_k$ and Hamiltonian $H = \frac12 R\cdot A R$.
Denote the symplectic matrix by $\sigma$.

\begin{thm}
    For any complex positive semi-definite matrix $M$ and any real symmetric matrix $A$, the Gaussian dynamical semigroup generated by \cref{eq:quasifree_lindblad} is convergent in the sense of \cref{thm:evolution}.
    The limit semigroup on $L^1(\RR^{2d})$ is generated by the first-order differential operator $\LL_0 = z^\top \cdot (A-\Im M)\sigma\nabla$, $z\in\RR^{2d}$.
\end{thm}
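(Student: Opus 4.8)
The plan is to reduce the assertion to an infinitesimal identity through \cref{thm:evolution_cor}, mirroring the proof of \cref{thm:CL_evolution}. The candidate limit semigroup on $L^1(\RR^{2d})$ is the transport semigroup $S(t)f = f\circ\Phi_t$ of the linear phase-space flow $\Phi_t$ determined by the matrix $(A-\Im M)\sigma$, whose generator is exactly the first-order operator $\LL_0 = z^\top(A-\Im M)\sigma\nabla$. First I would verify that $\LL_0$ generates a dynamical semigroup on $L^1(\RR^{2d})$: strong continuity is standard for transport semigroups of linear flows, and the contraction property corresponds to the trace-norm contractivity already satisfied by the approximating quasi-free semigroups $\TT\h(t)$ (concretely, to the sign condition $\operatorname{tr}[(A-\Im M)\sigma]\ge 0$). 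I would also record that $C_c^\infty(\RR^{2d})$ is a core for $\LL_0$.

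Taking $\D_0 = C_c^\infty(\RR^{2d})$ and $\D = \set{\j\blob0\rho_0 \given \rho_0\in\D_0}\subset\C(\TC,j)$, the special structure of this split system gives that $\D$ is seminorm dense and that $\jlim\h \j\hb0\rho_0 = \rho_0$, so $\jlim\D = \D_0$ is a core for $\LL_0$. It then remains to check the hypotheses of \cref{thm:evolution_cor}, namely $\D\subset\dom{\LL\d}$ and $\jlim\h\LL\h(\j\hb0\rho_0) = \LL_0\rho_0$ for all $\rho_0\in\D_0$. Since $\j\hb0\rho_0$ is a smooth, rapidly decaying trace-class operator, it lies in $\dom{\LL\h}$, and both requirements are subsumed in the single infinitesimal identity
\begin{equation*}
    \jlim\h \LL\h(\j\hb0\rho_0) = \braces{H_0,\rho_0} - z^\top\Im M\,\sigma\nabla\rho_0 = z^\top(A-\Im M)\sigma\nabla\rho_0, \qquad H_0(z)=\tfrac12 z^\top A z .
\end{equation*}
The Hamiltonian contribution $-\tfrac i\hb[H,\,\blub]$ converges to the Poisson bracket $\braces{H_0,\rho_0} = z^\top A\sigma\nabla\rho_0$ precisely as in \cref{eq:bracket_thm}, so the genuinely new content is the dissipative term.

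The main obstacle is the trace-norm control of the semiclassical expansion of the dissipator on $\j\hb0\rho_0$. Writing the dissipative part of \cref{eq:quasifree_lindblad} as $\tfrac1{2\hb}\sum_{jk}M_{jk}\paren[\big]{R_jR_k\rho + \rho R_jR_k - 2R_j\rho R_k}$ with $\rho = \j\hb0\rho_0$, I would expand each factor using the canonical commutators $[R_j,R_k] = i\hb\sigma_{jk}$ together with the explicit coherent-state calculus for $\j\hb0$ and $\j0\hb$. The decisive point is a cancellation: the bracketed operator has vanishing $O(\hb^0)$ part and leading symbol of order $\hb$ that is first order in $\nabla\rho_0$, so that after the $1/\hb$ prefactor and antisymmetrisation in $j\leftrightarrow k$ (using $M^\top = \bar M$, hence $M-M^\top = 2i\,\Im M$) one obtains exactly $-z^\top\Im M\,\sigma\nabla\rho_0$, whereas the Laplacian-type diffusion governed by $\Re M$ only enters at order $\hb$ and therefore drops out of the \jt-limit. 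The hard part is estimating the remainder of this expansion in trace norm uniformly enough in $\hb$ to justify exchanging it with $\jlim\h$; as in \cref{thm:CL_evolution}, one can reduce to coherent-state-quantised generators, where the displacement structure of $\ket z\h$ renders the required bounds explicit. With the identity established, \cref{thm:evolution_cor} yields \jt\jt-convergence of $\TT\d(t)$ in the sense of \cref{thm:evolution}, the equality $\LL_\oo = \LL_0$, and the identification of the limit semigroup, which is the claim.
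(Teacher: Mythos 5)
Your proposal follows essentially the same route as the paper's own (sketched) proof: apply \cref{thm:evolution_cor} with $\D = \set{\j\blob0\rho_0 \given \rho_0\in C_c^\infty(\RR^{2d})}$, use that $C_c^\infty$ is a core for the transport generator $\LL_0$, and reduce everything to the infinitesimal identity $\jlim\h\LL\h(\j\hb0\rho_0)=\LL_0\rho_0$. The paper likewise defers the trace-norm control of the coherent-state expansion of the dissipator to a separate publication, so your identification of that estimate as the remaining hard step (and of the $\Im M$ versus $\Re M$ cancellation mechanism) matches, and in fact slightly elaborates on, the argument given there.
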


The notation $\Im M$ means $\frac i2(M^*-M)$.
Note that any real $2d\times 2d$ matrix can be written as $K = (A-\Im M)\sigma$ for a real symmetric matrix $A$ and a complex positive semi-definite matrix $M$.
The limit semigroup that arises is the push-forward along the flow $e^{t(A-\Im M)\sigma}$ on phase space.

\begin{proof}[Sketch of proof]
    It can easily be shown that $C_c^\oo(\RR^{2d})$ is a core for the classical dynamics generated by $\LL_0$ on $L^1(\RR^{2d})$.
    The idea is to apply \cref{thm:evolution_cor} with $\D = \set{\j\blob 0\rho_0 \given \rho_0\in C_c^2(\RR^{2d})}$.
    From the intuition that the classical limit of $-(i/\hb)[R_j,\,\cdot\,]$ is the Poisson bracket $\braces{z_j,\,\cdot\,}$ with $z=(q,p)$, one already expects that the classical limit of a quasi-free Lindbladian is
    \begin{align}\label{eq:lb_qf}
        \LL_0(\rho_0)
        &= \frac12\sum_{ij} \paren[\Big]{ A_{ij}\braces{r_ir_j,\rho_0} + i \paren[\big]{ M_{ij} -\Bar{M_{ij} }} r_i \braces{r_j,\rho_0} }
        = r\cdot \paren{A-\Im M}\sigma \nabla\rho_0.
    \end{align}
    The only step that remains is proving that $\LL\d(\j\blob0\rho)$ is indeed \jt-convergent to $\LL_0(\rho_0)$ for all $\rho_0\in C^\infty_c(\RR^{2d})$.
    This relies on explicit properties of the coherent-state quantization and will be published separately.
\end{proof}

To illustrate this theorem, we discuss the \emph{damped oscillator}.
Classically, its dynamics is described by the vector field $X =  (-\alpha q-p)\partial_q + (q-\alpha p)\partial_p$ on phase space,
where $\alpha>0$ determines the strength of the damping.
Matrices implementing this are for example
\begin{equation}\label{eq:dosc_mat}
    A = \mat{1&0\\0&1} \qandq M = \alpha \mat{1&i\\-i&1}.
\end{equation}
The jump operators corresponding to this are the creation and annihilation operators $L_q =\sqrt{\alpha/2} (x-i\hb\partial_x)$ and $L_p =L_q^*$ (because $\sqrt M=\sqrt{2/\alpha}M$).
This is the standard model for describing a laser coupled to a thermal bath.

\begin{figure}[htpb]\centering
    \includegraphics[height = 6cm]{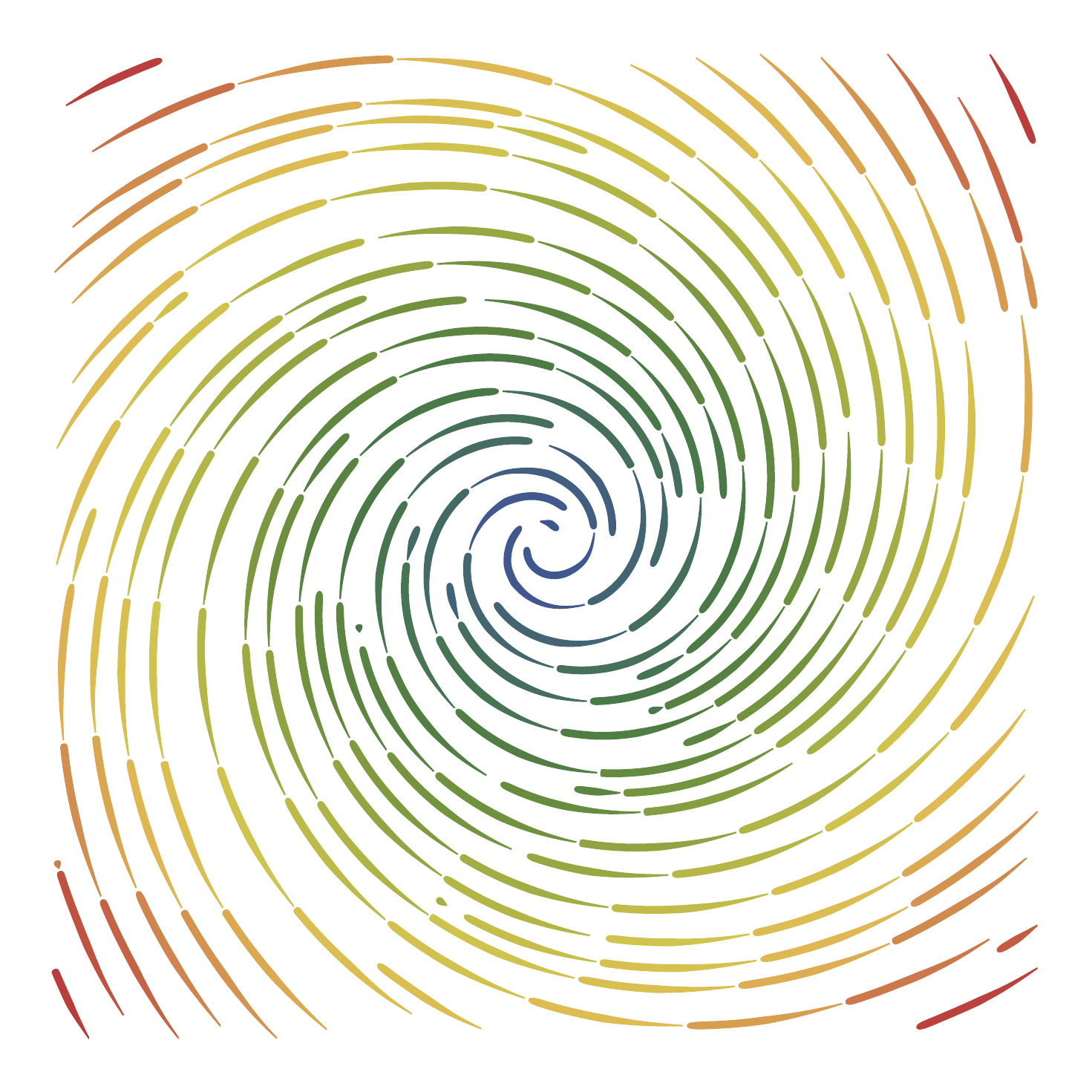}
    \begin{caption}{Streamlines of the vector field $X =  (-\alpha q-p)\partial_q + (q-\alpha p)\partial_p$ on phase space describing the (classical) damped harmonic oscillator with damping constant $\alpha>0$.}\end{caption}
\end{figure}

This example reveals an interesting difference between classical and quantum open systems.
The classical limit of irreversible Gaussian semigroups with the \hb-scaling as in \eqref{eq:quasifree_lindblad} is always reversible (but not necessarily of Hamiltonian type).
This is because the noise on the quantum side (necessary for complete positivity \cite{hybrids}) is not needed on the classical side.
Indeed, this noise is precisely described by $\Re M$.
However, one can obtain proper irreversible semigroups if one keeps the noise by rescaling $\Re M$ an \hb-dependent way \cite{hybrids}, leading to dissipative classical dynamics.

\subsection{Mean field limit}\label{exa:mean_field}

The mean field limit is a kind of thermodynamic limit in which the limit parameter is the size of the system. The size is just the number of systems, which are all  of the same kind. The main theme is permutation symmetry of the systems, and we are especially interested in ``intensive'' observables, which can be understood as functions of averages over all sites. We follow the approach of \cite{raggio1989quantum, Wer92}, for which early versions of soft inductive limits were originally developed, initially as a tool to take the limit of partition functions and equilibrium states. However, this also worked well for dynamics, and prototypes of  Theorem~\ref{thm:evolution} are found in \cite{duffield1992mean, DW92}.

When $\AA$ denotes the observable algebra of a single system, the $N$-fold minimal C*-tensor product $\AA^{\otimes N}$ describes the system of size $N\in\Nl$. (See \cite{inflate} for a discussion of maximal products in this context). The permutations of the $N$ sites act on $\AA^{\otimes N}$, and we denote by $\sym_N$ the average over the $N!$ permutations. The permutation invariant observables are then $\AA_N=\sym_N\bigl(\AA^{\otimes N}\bigr)$. Observables at different system sizes $N>M$ are connected by
\begin{equation}\label{j-mf}
  \j NM(A)=\sym_N\bigl(A\otimes\1^{\otimes(N-M)}\bigr).
\end{equation}
This is a soft C*-inductive limit in the sense of the previous section. To show this, one considers the basic sequences: In a basic sequences $\j NM A$ with $A\in\AA_M$, we have an average over all embeddings of $M$ sites into the large set of $N$ sites. In the product of two such observables for large $N$, the sites of these averages do not overlap in leading order. This not only establishes the product property \eqref{eq:asymmor}, but at the same time shows that the product is abelian. So
abstractly, we know that $\AA_\infty\cong C(\Sigma)$, for some compact space $\Sigma$, the Gelfand spectrum of $\AA_\infty$, and the points of $\sigma$ correspond to the multiplicative states on $\AA_\infty$. There is a direct way to generate these from any state $\sigma$ on $\AA$, as the weak limit of the homogeneous tensor product states $\sigma^{\otimes N}$. Indeed, $\sigma^{\otimes N}(\j NMA)=\sigma^{\otimes M}(A)$ if $N>M$, so the expectations $\sigma^{\otimes N}(A_N)$ from a Cauchy sequence
for \jt-convergent $A\d$, and we define the function $A_\infty:\Sigma\to\Cx$ by
\begin{equation}\label{mfsig}
  A_\infty(\sigma)=\lim_N\sigma^{\otimes N}(A_N).
\end{equation}
Moreover, from the combinatorics of overlaps, such states are multiplicative, and conversely, all multiplicative states are determined by their one-particle restriction $\sigma$. We conclude that $\Sigma$ is the state space of the one-particle algebra $\AA$ with its weak* topology,

Similarly, any permutation invariant state on $\AA^{\otimes\infty}$ (considered as an inductive limit algebra, see Sect.~\ref{sec:quasi-local}) defines a convergent family of expectations, and hence a state on $\AA_\infty$. Such permutation invariant states (in the classical case) were called {\it exchangeable} by de~Finetti. It is immediate from the above considerations that such states can be written as an integral over pure states, i.e., homogenous product states. This is known as the de~Finetti Theorem, which was first proved in the quantum case by St\o rmer \cite{Stoermer}. The simple characterization of the pure states is what makes mean field theories completely solvable. This becomes clear when one replaces the permutation average with a translation average, with a view to translation invariant lattice interactions. Then the product theorem fails, although there is sufficient asymptotic abelianness to make the translation invariant states, i.e., the states on the limit space analogous to $\AA_\infty$, a simplex \cite{bratteli1}. But since this does not arise as the state space of a C*-algebra, the extremal states, known in that case as ergodic states, cannot be characterized as multiplicative and have no simple parametrization.

Dynamically, it is natural to consider first Hamiltonian dynamics on $\AA_N$, where the Hamiltonian densities $H_N/N$ are a \jt-convergent sequence \cite{DW92b}. One needs a bit of extra regularity, satisfied, e.g., by basic sequences. For such a mean field interaction, Theorem~\ref{thm:evolution} gives classical Hamiltonian dynamics on the state space $\Sigma$. More precisely, there is a Poisson bracket for functions on $\Sigma$, arising as the limit of commutators
$\{A_\infty,B_\infty\}:=\jlim_N iN[A_N,B_N]$ for suitably regular \jt-convergent sequences $A\d$ and $B\d$.
Of course, the commutator $[A_N, B_N]$ itself vanishes in the limit, but the scaling by $N$ picks out the leading order of overlaps (single site overlaps). The limit dynamics is then generated by the Hamiltonian function $H_\infty=\jlim_N(H_N/N)$. It is clear from dimensional considerations that this Poisson bracket does not arise from a symplectic form but from a degenerate antisymmetric form. It has the property that, for any Hamiltonian, the non-linear flow generated on the one-particle state space $\Sigma$ respects the unitary equivalence of states, i.e., leaves the spectrum of the density operator invariant. For $2$-level systems (``qubits''), this means that the dynamics respects the foliation of the Bloch ball into concentric spheres.

This setting generalizes easily to the {\it inhomogeneous} case, in which the evolution depends on additional random variables that are associated with the sites and have a limiting distribution. For the equilibrium case, this extension covers the BCS model \cite{RW91}, and the dynamics was worked out in \cite{DW92c} and is written in terms of integro-differential equations. Another variant considers {\it Bosonic systems}, i.e., the states for finite system size do not merely commute with the permutations but are even supported by the permutation invariant subspace of the $N$-particle Hilbert space. The theory then applies with the sole modification that the one-particle state space $\Sigma$ is replaced by the set of pure states only \cite{Wer92}. The salient de~Finetti Theorem was noted before by Hudson and Moody \cite{HudsonMoody}.

More interesting behavior is seen when the finite system dynamics is allowed to be dissipative \cite{DW92}, i.e., given by a semigroup of completely positive maps. It is then interesting to consider not just the mean field dynamics as defined above, which we call the bulk evolution of the mean field system, but also the dynamics of {\it tagged particles}. This is easily incorporated \cite{DW92} by modifying the inductive system leaving some set of sites out of the symmetrization \eqref{j-mf}. The number $M$ of tagged sites can increase with $N$, but $M/N$ should go to zero. With constant $M$, the resulting limit algebra is then
\begin{equation}\label{mftaggedA}
  \AA_\infty=\CC(\Sigma)\otimes\AA^{\otimes M}\cong\CC\bigl(\Sigma;\AA^{\otimes M}\bigr)
\end{equation}
where the second form denotes the algebra of functions on the one-particle state space $\Sigma$, taking values in $\AA^{\otimes M}$. There is nothing new to show for this limit: It is just the tensor product, in the sense of Prop.~\ref{prop:tensor}, of the mean-field described so far, tensored with the identity on $\AA^{\otimes M}$. We can also let $M{\to}\infty$ here,  so the second factor would be the inductive limit for the quasi-local algebra as discussed in Sect.~\ref{sec:quasi-local} with the limit space symbolically denoted $\AA^{\otimes\infty}$.

Of course, we could discuss systems where the tagged particles play a different dynamical role from the bulk particles. However, we want to use the tagging distinction just to get a more detailed view of the mean-field dynamics. The evolutions for different $M$ are then a consistent family of semigroups: The dynamics for $M$ tagged sites reduces to the one for $M'<M$ tagged sites on the observables which have $\idty$ on the $M-M'$ sites. In particular, the bulk dynamics corresponds to $M=0$. The local dynamics is then an evolution on $\AA^{\otimes M}$, which depends on the classical state $x$ like an external parameter.

The local dynamics may generate correlations between the tagged particles. A prototype for this are squared Hamiltonian generators. This example follows the general principle \cite[Thm.~2.31]{DavieSquare} that the square of the generator of a one-parameter group of isometries generates a contraction semigroup, which is described by applying the isometry group at an evolution parameter determined by a Wiener diffusion process. One then readily checks that the evolution arising from such a selection process does not factorize over tagged sites. A typical feature of such evolutions is that the operator norm of the generator grows like $N^2$.

But there are also many evolutions for which the norm grows only like $N$, e.g., those satisfying a condition similar to the mean field condition for Hamiltonians: Denoting the generator on the finite system $\AA^{\otimes N}$ by $G_N$, we can ask that $G_N/N$ arises by permutation averaging from $G_R/R$ for some $R$, where, however, we take the action of permutations on observables rather than on Hilbert spaces. Moreover, we extend this average equally over tagged and untagged sites.
In that case, analyzed under the term ``bounded polynomial generator'' in \cite[Prop.~3.6]{DW92}, the bulk evolution is a flow $\sigma\mapsto \flow_t\sigma$ on $\Sigma$, and the local dynamics is given by completely positive maps $\Lambda_t^\sigma$, so that for $A\in\AA_\infty$, i.e., a continuous function $A:\Sigma\to\AA^{\otimes M}$,
\begin{equation}\label{mflocal}
  \bigl(\TT_\infty(t)A\bigr)(\sigma)=(\Lambda_t^\sigma)^{\otimes M}A(\flow_t\sigma).
\end{equation}
Here $\Lambda$ satisfies the cocyle equation $\Lambda_{s+t}^\sigma=\Lambda_s^\sigma\ \Lambda_t^{\flow_s\sigma}$ and the consistency condition $\sigma(\Lambda_t^\sigma a)=(\flow_t\sigma)(a)$. So $\Lambda_t^\sigma$ is a Lindblad evolution, whose generator depends on time via $\flow_t\sigma$. Since the local evolution is a tensor product, no correlations are generated between tagged sited in this class of evolutions.

For Hamiltonian systems, the local dynamics is generated by a Hamiltonian $dH_\infty(\sigma)\in\AA$, which is the gradient of $H_\infty$ at $\sigma$, a summary of all directional derivatives. Explicitly, for any $\rho\in\Sigma$,
\begin{equation}\label{mfgradient}
   \rho\bigl(dH_\infty(\sigma)\bigr)=\left.\frac d{dt}H_\infty(t\rho+(1-t)\sigma)\right\vert_{t=0}.
\end{equation}
Note that, by construction, an additive constant in $dH$ is fixed so that $\sigma(dH(\sigma))=0$.  Local dynamics is thus given by the time-dependent Hamiltonian $dH(\flow_t\sigma)$, i.e., driven by the bulk flow.

Even within the class of bounded polynomial generators, however, there is a remarkable variety in behavior: The bulk evolution may or may not be Hamiltonian in terms of the Poisson structure described above, and independently, the local evolution may or may not be generated by a state-dependent Hamiltonian. An interesting subclass, in which the local dynamics is automatically Hamiltonian, is given when the Lindblad jump operators are themselves \jt-convergent, say, in the Heisenberg picture,
\begin{equation}\label{mfsymK}
  G_N(X) = i[H_N,X]+ N\sum_\alpha V_{\alpha,N}^* [X, V_{\alpha,N}]+ [V_{\alpha,N}^* ,X] V_{\alpha,N}
\end{equation}
with $V_{\alpha,N}=\sym_N(V_\alpha,R)$, for some $R$. This is $N$ times a double average, in which the $R$ sites are permuted independently to $N$ sites. The dominant contribution thus comes from terms where these sets of sites do not overlap and can hence be realized on $2R$ sites, which are then averaged into $N$ in the operator sense. It turns out that the local dynamics is Hamiltonian with a dependence on the bulk state given by
\begin{equation}\label{mfsymKH}
  H(\sigma)= dH_\infty(\sigma)+ \sum_\alpha \Im\Bigl(\overline{V_{\alpha,\infty}(\sigma)} dV_{\alpha,\infty}(\sigma)\Bigr).
\end{equation}
The bulk flow $\sigma_t=\flow_t\sigma$ is then given by the differential equation $\dot\sigma_t(A)=\sigma_t\bigl(i[H(\sigma_t),A]\bigr)$, and thus still leaves the spectrum of $\sigma$ invariant. However, since $V_{\alpha,\infty}$ and its complex conjugate may have linearly independent derivatives, $H(\sigma)$ is no longer a gradient. Hence, a Hamiltonian function does not generate the evolution via the Poisson bracket. In fact, {\it any} spectrum preserving ordinary differential equation for $\sigma_t$ can be approximately realized in this manner \cite{DW92}.

As a counterpoint, consider a case in which the operators $V_{\alpha,N}$ and $V_{\alpha,N}^*$ are {\it not separately} averaged over permutations. In this case, the local state tends to approach the bulk state, in the sense that the relative entropy $S\bigl((\Lambda^\sigma_t)^*\rho,\flow_t\sigma\bigr)$, which is always non-increasing, even goes to zero. The simplest example is a term in the generator, which shuffles the sites, and thus makes the local state approach the bulk. We take  $\Gamma_2(X)=F[X,F]+[F,X]F=2(FXF-X)$, where $F$ is the permutation operator on two sites. This symmetrizes for larger $N$ to
\begin{equation}\label{mfflipG}
  \Gamma_N(X)=\frac1{N-1}\sum_{ij}(F_{ij}XF_{ij}-X),
\end{equation}
where $F_{ij}$ is the permutation of sites $i$ and $j$. Here we count every pair twice and allow the zero contributions from $i=j$. Let us apply this to a basic sequence with tagged site $1$, of which a prototype is $X_N=A\otimes\sym_{N-1}(B_k\otimes\idty^{\otimes(N_k-1)}\bigr)$. The limit of $\Gamma_NX_N$ is an $\AA$-valued function $(\Gamma X)_\infty$ of the one-site state $\sigma$ obtained as
\begin{equation}\label{mfflipGi}
  \rho\bigl((\Gamma X)_\infty(\sigma)\bigr)=\lim_N\rho\otimes\sigma^{\otimes(N-1)}\bigl(\Gamma_N(X)\bigr)
           =2(\rho(\idty)\sigma-\rho)\bigl(X_\infty(\sigma)\bigr).
\end{equation}
On bulk observables, for which the symmetrization is over all sites, $\Gamma_N$ vanishes by \eqref{mfflipG}, and as an additional term in a polynomial generator, it does not change the bulk flow but modifies the local dynamics $\Lambda_t^\sigma$ to a Trotter limit of the original one, interlaced with exponential contraction to the bulk state, i.e.,  $\tilde\Lambda_t^\sigma\rho=\sigma+ e^{-2t}(\rho-\sigma)$.

\subsection{Spin systems, dynamics and the thermodynamic limit}\label{sec:quasi-local}

\newcommand\La{\Lambda}

In this subsection, we discuss quantum spin systems in the thermodynamic limit using quasi-local algebras, which are an example of inductive systems of C*-algebras \cite{bratteli2}.
We will show that the usual assumptions for the existence of the dynamics in the thermodynamic limit already guarantee that the dynamics is convergent in the sense of \cref{thm:evolution} irrespective of \emph{large class of boundary conditions} imposed on finite lattices (e.g., periodic or anti-periodic boundary conditions).

The directed set $(N,\le)$ is that of finite subsets $\Lambda$ of the lattice $\ZZ^d$, ordered by inclusion.
The algebra $\A_\Lambda$ is defined as $\A_\Lambda = \bigotimes_{i\in\Lambda}\A_{\set{*}}$ where the "one-site algebra" $\A_{\set{*}}$ is a unital C*-algebra and the tensor product is the minimal one.
In the standard case we have $\A_{\set{*}}=\MM_n(\CC)$.
The connecting maps $\j\Lambda{\Lambda'}$ are the natural inclusions and will usually be suppressed unless explicitly required, i.e.\ we will simply write $\A_\Lambda\subset \A_{\Lambda'}$ if $\Lambda \subset \Lambda'$.
This yields a strict inductive system $(\A, j)$ with *-homomorphisms as connecting maps.
By construction $\Lambda\cap\Lambda'=\emptyset$ implies that $[\A_\Lambda,\A_{\Lambda'}]=\set0$ so that $\A_\oo$ becomes a quasi-local algebra \cite[Def.~2.6.3]{bratteli1}.
It is not hard to see that a net of elements $(a_\Lambda)$, $a_\Lambda\in\A_\Lambda$, is \jt-convergent if and only if $\lim_\Lambda a_\Lambda$ exists in the norm of $\A_\oo$.
This defines an isomorphism of the quasi-local algebra and the limit space via $\jlim_\Lambda a_\Lambda \equiv \lim_\Lambda a_\Lambda$, and the basic sequences are just the constant sequences $a$ so that $a\in\A_\La$ for some $\Lambda$.
From another perspective $\A_\oo$ can be understood as the infinite tensor product $\A_\oo = \bigotimes_{x\in\ZZ^d}\A_{\{*\}}$.

Formally speaking, all dynamics of quantum spin systems are defined in terms of \emph{interactions} (see the remark after \cite[Prop.~6.2.3]{bratteli2}):
A (formal) interaction is a map $\Phi$ that associates to a finite subset $\Lambda\subset \ZZ^d$ a hermitian element of $\A_\Lambda$.
From an interaction, we obtain a Hamiltonian and a bounded *-derivation
\begin{equation}\label{eq:Hloc}
    H_\Lambda = \sum_{\Lambda'\subseteq \Lambda}\Phi(\Lambda')\in \A_\Lambda \qandq \delta_\Lambda = i [H_\Lambda,\placeholder] \in \LL(\A_\Lambda),
\end{equation}
for every finite region $\Lambda$. Note that $H_\Lambda$ and $\delta_\Lambda$ are bounded for all $\Lambda$.
The net domain of the net of generators $\delta\d$ as defined in \cref{eq:net_domain} is $\dom{\delta\d} = \set{(a_\Lambda) \given \lim_\Lambda a_\Lambda,\,\lim_\Lambda \delta_\Lambda(a_\Lambda)\ \text{exist}}$.

If $\Phi$ satisfies the assumption
\begin{equation}\label{eq:abs_sum}
    p_\Phi(x) = \sum_{\La\ni x} \norm{\Phi(\La)}<\oo\quad \forall x\in \ZZ^d,
\end{equation}
we can define a *-derivation $\delta$ for the infinite system on the dense subalgebra $\D=\bigcup_{\Lambda\subset\ZZ^{d}}\A_{\Lambda}\subset\A_{\infty}$ by
\begin{equation}\label{eq:der}
    \delta(a) = i\! \sum_{\Lambda'\cap\Lambda\neq\emptyset}[\Phi(\Lambda'),a], \quad a  \in\A_{\Lambda}.
\end{equation}
This is indeed well-defined, i.e., does not depend on the choice of $\Lambda$ with $a\in\A_\Lambda$, because:
\begin{align}
\label{eq:fe}
\norm{\delta(a)} & \leq 2\sum_{x\in\La}\sum_{\La'\ni x}\norm{\Phi(\La')}\norm{a} \leq 2|\La|\sup_{x\in\La}p_{\Phi}(x)\norm{a},
\end{align}

\begin{defin}
    A {\bf net of boundary conditions} is a net $\beta\d$ of *-derivations $\beta_\La:\A_\La\to\A_\La$ so that for all $a\in \D$, $\beta_\La(a)\to0$.
\end{defin}

Applying our criteria from \cref{thm:unb_op_con} and \cref{thm:analytic_vecs} we obtain the following:

\begin{thm}
\label{thm:spindyn}
    Assume that $\Phi$ satisfies \cref{eq:abs_sum} and let $\beta\d$ be a net of boundary conditions.
    Then
    \begin{enumerate}[(1)]
        \item\label{it:delta_con} $(\delta_\La+\beta_{\La})(a) \to \delta(a)$ for all $a\in\D$.
        \item\label{it:delta_well-def} $\delta_\oo$ is well-defined (see \cref{def:unbounded}) and a closed operator, such that $\delta \subseteq\delta_{\infty}$.
    \end{enumerate}
    Assume exponential decay of long-range interactions in the sense that
    \begin{align}\label{eq:exp_decay}
         \sum_{n=0}^{\infty}\,\bigg(e^{n\lambda}\,\sup_{x\in\ZZ^{d}}\sum_{{\substack{\La\ni x \\ |\Lambda|=n+1}}}\norm{\Phi(\La)} \bigg) <\oo
    \end{align}
    for some $\lambda>0$, then
    \begin{enumerate}[resume*]
        \item\label{it:delta_analytic} the net of dynamics $e^{t(\delta\d+\beta\d)}$ satisfies the equivalent properties of \cref{thm:evolution}.
            The subspace $\D$ is a core for $\delta_\oo$, i.e., $\delta_\oo=\Bar\delta$.
            In particular, the limit dynamics $e^{t\delta_{\infty}}$ is independent of the boundary conditions.
    \end{enumerate}
\end{thm}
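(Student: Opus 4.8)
The plan is to read $\delta_\oo$ throughout as the limit operator $(\delta\d+\beta\d)_\oo$ of the net of finite-volume generators in the sense of \cref{def:unbounded}, and to treat the three claims in order. For \ref{it:delta_con} I would argue directly: fix $a\in\A_{\Lambda_0}$, so that for $\Lambda\supseteq\Lambda_0$ one has $\delta_\Lambda(a)=i\sum_{\Lambda'\subseteq\Lambda}[\Phi(\Lambda'),a]=i\sum_{\Lambda'\subseteq\Lambda,\,\Lambda'\cap\Lambda_0\neq\emptyset}[\Phi(\Lambda'),a]$, because $a$ commutes with $\A_{\Lambda'}$ whenever $\Lambda'\cap\Lambda_0=\emptyset$. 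The difference to $\delta(a)$ is the tail over $\Lambda'\not\subseteq\Lambda$ with $\Lambda'\cap\Lambda_0\neq\emptyset$, whose norm is at most $2\|a\|\sum_{x\in\Lambda_0}\sum_{\Lambda'\ni x,\,\Lambda'\not\subseteq\Lambda}\|\Phi(\Lambda')\|$; this vanishes as $\Lambda\uparrow\ZZ^d$ since each inner sum is a tail of the convergent series $p_\Phi(x)$ of \cref{eq:abs_sum}. Adding $\beta_\Lambda(a)\to0$ yields $(\delta_\Lambda+\beta_\Lambda)(a)\to\delta(a)$. In particular every basic net $a$ with $a\in\D$ lies in $\dom{\delta\d+\beta\d}$ with $\jlim(\delta_\Lambda+\beta_\Lambda)(a)=\delta(a)$, which will give $\delta\subseteq\delta_\oo$ as soon as $\delta_\oo$ is well defined.

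For \ref{it:delta_well-def}, closedness is immediate from item~\ref{it:closed} of \cref{thm:unb_op_con}, as every $\delta_\Lambda+\beta_\Lambda$ is bounded and hence closable. The substantial point is well-definedness, for which I would invoke item~\ref{it:w_graph_con} of \cref{thm:unb_op_con}: since the $\delta_\Lambda+\beta_\Lambda$ are bounded, every functional lies in the adjoint's domain, and it suffices to exhibit a $w^*$-dense set of \jts-limits $\varphi_\oo$ for which $(\delta\d+\beta\d)^*\varphi\d$ is again \jts-convergent. In the standard case $\A_{\set{*}}=\MM_n(\CC)$ the tracial functionals accomplish this: for local $\rho\in\D$ set $\varphi_\Lambda=\tau_\Lambda(\rho\,\placeholder)$, which is \jts-convergent to $\tau_\oo(\rho\,\placeholder)$; since every derivation of the matrix algebra $\A_\Lambda$ is inner, $\delta_\Lambda+\beta_\Lambda=i[\tilde H_\Lambda,\placeholder]$, and the trace property turns the adjoint into the Schr\"odinger-picture action $(\delta_\Lambda+\beta_\Lambda)^*\varphi_\Lambda=\tau_\Lambda\big({-}(\delta_\Lambda+\beta_\Lambda)(\rho)\,\placeholder\big)$, whose densities converge to $-\delta(\rho)$ by \ref{it:delta_con}, so $(\delta\d+\beta\d)^*\varphi\d$ is \jts-convergent. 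As every functional on the finite-dimensional $\A_{\Lambda_0}$ has the form $\tau_{\Lambda_0}(\rho\,\placeholder)$ and local elements are dense, the set $\{\tau_\oo(\rho\,\placeholder):\rho\in\D\}$ is $w^*$-dense in $\A_\oo'$; this yields well-definedness, and together with \ref{it:delta_con} and closedness it upgrades to $\Bar\delta\subseteq\delta_\oo$.

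For \ref{it:delta_analytic} I would first establish, under the exponential decay \cref{eq:exp_decay}, that the basic nets from $\D$ are analytic in seminorm for the net $\delta\d$ (boundary terms switched off). The key input is the standard combinatorial bound on iterated commutators: $\delta_\Lambda^k(a)$ expands over chains of $k$ interaction terms successively overlapping and anchored at the support of $a$, and \cref{eq:exp_decay} makes $\sum_k\tfrac{t^k}{k!}\|\delta_\Lambda^k(a)\|$ finite for small $t$; because passing to finite $\Lambda$ only discards interaction terms, this bound is uniform in $\Lambda$, so $\seminorm{\delta\d^k a}=\limsup_\Lambda\|\delta_\Lambda^k(a)\|$ satisfies \cref{eq:analytic_semi}. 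Combined with the well-definedness of $(\delta\d)_\oo$ from \ref{it:delta_well-def}, \cref{thm:analytic_vecs} shows that $(\delta\d)_\oo=\Bar\delta$ is closed and dissipative with the dense set $\D$ of analytic vectors, so $\Bar\delta$ generates a strongly continuous one-parameter group $e^{t\Bar\delta}$ of $*$-automorphisms of $\A_\oo$ for which $\D$ is a core. I would then feed this into \cref{thm:evolution_cor} applied to the full net $\delta\d+\beta\d$, with $B=\Bar\delta$, $S(t)=e^{t\Bar\delta}$ and core $\D_\oo=\D$: by \ref{it:delta_con} one has $\D\subset\dom{\delta\d+\beta\d}$ and $\jlim(\delta_\Lambda+\beta_\Lambda)(a)=\delta(a)=B(a)$ on $\D$, so the corollary gives that $e^{t(\delta\d+\beta\d)}$ is \jt\jt-convergent with strongly continuous limit $e^{t\Bar\delta}$, that $\delta_\oo=\Bar\delta$ (hence $\D$ is a core for $\delta_\oo$), and that the limit dynamics is $e^{t\Bar\delta}$ irrespective of $\beta\d$. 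This verifies item~\ref{it:semigroups} of \cref{thm:evolution}, and thereby all its equivalent properties, and delivers the boundary-condition independence.

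I expect the main obstacle to be the uniform-in-$\Lambda$ analytic estimate in \ref{it:delta_analytic}: one must control the combinatorics of overlapping interaction regions and verify that exponential decay dominates the factor $k!$ produced by the $k$-fold commutator, simultaneously for all finite volumes. The second delicate point is the well-definedness in \ref{it:delta_well-def}, where the tracial-functional device is precisely what reduces the adjoint to the already-established first-order limit \ref{it:delta_con}; beyond the matrix case one would replace the trace by a suitable faithful family of product functionals and argue in the same way.
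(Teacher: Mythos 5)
Your proposal is correct and, for items \ref{it:delta_con} and \ref{it:delta_well-def}, follows essentially the paper's route: the tail estimate from \cref{eq:abs_sum} via dominated convergence, and well-definedness via \cref{it:w_graph_con} of \cref{thm:unb_op_con}. The differences are worth recording. For \ref{it:delta_well-def} the paper takes as its $w^*$-dense family the projectively consistent restrictions $\varphi_\La=\varphi_\oo|_{\A_\La}$ of \emph{arbitrary} $\varphi_\oo\in\A_\oo^*$ and simply observes $\delta_\La^*(\varphi_\La)(a)=\varphi_\oo(\delta_\La(a))\to\varphi_\oo(\delta(a))$ on basic nets; your tracial-density device is a special case that only works for $\A_{\set{*}}=\MM_n(\CC)$ (the paper allows an arbitrary unital one-site C*-algebra), though it has the merit of making the adjoint net $\tau_\La(-(\delta_\La+\beta_\La)(\rho)\,\placeholder)$ manifestly uniformly bounded, which the paper's formulation leaves implicit. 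For \ref{it:delta_analytic} the paper applies \cref{thm:analytic_vecs} directly to the perturbed net $\delta\d+\beta\d$ on basic nets, importing the analyticity of local elements from the standard spin-chain literature, and concludes generation from closedness, two-sided dissipativity and the dense set of analytic vectors; you instead first treat $\delta\d$ alone to identify $\Bar\delta$ as a generator and then absorb the boundary conditions through \cref{thm:evolution_cor}. Both routes are sound; yours makes the boundary-condition independence structurally transparent (it is exactly the hypothesis of \cref{thm:evolution_cor} that only the action on the core matters), while the paper's is shorter and does not need the intermediate identification $(\delta\d)_\oo=\Bar\delta$. The uniform-in-$\La$ analytic estimate you flag as the main obstacle is indeed the point the paper does not rework but delegates to the cited result on analytic elements for interactions satisfying \cref{eq:exp_decay}; your sketch of the combinatorial bound is the standard argument behind that citation.
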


The assumption of exponential decaying long-range interactions is just a sufficient condition for $\Bar\delta$ to be a generator.
Whenever $\Bar\delta$ generates a strongly continuous group, the independence of boundary conditions and the \jt\jt-convergence of the dynamics holds.
This can, for example, be checked by showing that the range of $(\delta\pm\lambda)$ has dense range.

\begin{proof}
Consider first the case without boundary conditions $\beta\d$:
    \ref{it:delta_con}: Note that by \cref{eq:abs_sum} and \cref{eq:fe}, $\delta$ is well defined on $\D$. For $\La'$ and $a\in\A_{\La'}$, it follows that
    \begin{align*}
    \norm{(\delta_{\La}-\delta)(a)} & = \norm{\sum_{{\substack{\La''\cap\La'\neq\emptyset \\ \La''\subsetneq\La}}} [\Phi(\La'),a]} = \norm{\sum_{\La''\cap\La'\neq\emptyset} (1-\chi_{\La}(\La'')) [\Phi(\La''),a]}
    \end{align*}
    for any $\La\supseteq\La'$, where $\chi_{\La}$ is the indicator function of the subset $\La$, meaning it takes the value ``$1$'' if and only if it is evaluated on a subset $\La''\subseteq\La$, and ``$0$'' otherwise. Since this function converges pointwise to the function $1$ in the limit $\La\rightarrow\ZZ^{d}$, we obtain the results by dominated convergence due to \cref{eq:fe} (as the sum only runs over finite subsets of $\ZZ^{d}$).

    \ref{it:delta_well-def}:
    We apply \cref{it:w_graph_con} of \cref{thm:unb_op_con} to the space $P\d$ of \jts-convergent nets of the form $\varphi\d = (\varphi_\oo|_{\A_\La})_\La$ for some $\varphi_\oo\in\A_\oo^*$.
    It suffices to check \jts-convergence of $\delta\d^*(\varphi\d)$ for $\varphi\d\in P\d$ on basic nets which are essentially just elements of $\D$.
    On these it holds since $\delta_\La^*(\varphi_\La)(a) = \varphi_\oo(\delta_\La(a))\to \varphi_\oo(\delta(a))$ for all $a\in\D$.

    For the case including boundary conditions $\beta\d$: Both \cref{it:delta_con,it:delta_well-def} also apply to $\tilde\delta\d=\delta\d+\beta\d$ and $\tilde\delta_\oo=\delta_\oo$ for all $\beta\d$.

    \ref{it:delta_analytic}:
    From \cref{eq:exp_decay} it follows that all $a\in\D$ are analytic vectors for $\delta$ and hence for $\delta_\oo$ \cite[Thm.~6.2.4]{bratteli2}.
    We apply \cref{thm:analytic_vecs} to $\tilde\delta\d$ and the net domain of basic nets.
    It follows from the theory of one-parameter groups on Banach spaces that a closed operator $\delta_\oo$ with a dense set of analytic vectors such that $\pm\delta_\oo$ is dissipative, generates a strongly continuous one-parameter group \cite[Thm.~3.2.50]{bratteli1}.
    Since $\delta_\oo$ is a *-derivation, this is a group of *-automorphisms.
\end{proof}

To give some context to \cref{thm:spindyn}, in particular, the independence of the limit dynamics from boundary conditions, we consider a nearest-neighbor interaction with (anti) periodic or open boundary conditions. For simplicity, we restrict the limit to sequences of growing cubes $\La_n = [-n,n]^d$, which are cofinal for $N$, with $\La_{n}\to\ZZ^{d}$. If the long-range interactions decay exponentially, this theorem shows that the dynamics are convergent and that the limit is independent of the boundary conditions.
A possible way to see this would consist in setting all interactions on the boundary to zero, i.e., to pick $\beta_\La$ such that $\delta_\La+\beta_\La$ is zero on $\A_{\partial\Lambda}$.
Clearly, the theorem also implies that the dynamics converges to the one generated by $\bar{\delta}$ (this does not depend on how one defines the boundary $\partial\La$ as long as every $\La_0$ is contained in $\La \setminus \partial \La$ for some $\La\supset \La_0$).

\subsection{Quantum scaling limits} \label{sec:quantscale}

Another important application of inductive systems and the associated notion of convergence for dynamical semigroups and their generators concerns the construction of models in quantum field theory (QFT) via scaling limits in the framework of the Wilson-Kadanoff renormalization group (RG) \cite{wilson1975kondo, kadanoff2014renormalization}. The specific scaling limit procedure, coined \emph{operator-algebraic renormalization} (OAR), was formulated by one of the authors and has been explicitly realized in the context of bosonic and fermionic field theories \cite{stottmeister2021oar, morinelli2021scalar, osborne2023ising}, lattice gauge theory \cite{brothier2019gauge, brothier2019canonical}, conformal field theory \cite{osborne2022cftapprox, osborne2021cftsim} as well as more general anyonic models \cite{stottmeister2022anyon}. In OAR, the typical setting is that of inductive systems of C*-algebras, which are understood as realizations of the RG. Specifically, the algebras $\A_{n}$ represent a given physical system at different scales ``$n$'' and the connecting maps $\j nm$ are the renormalization group or scale transformations inducing an RG flow on the respective state spaces given an initial net $\omega_{n}\in\fS(\A_{n})$:
\begin{align}
\label{eq:rgflow}
\omega^{(m)}_{n}(A_{m}) & = \omega_{n}\circ\j nm(A_{m}), & A_{m}\in\A_{m}.
\end{align}
In this setting, $\omega^{(m)}_{n}$ is called the $n$th renormalized state at scale $m$, and we are interested in the possible cluster points of the nets $\omega^{(m)}_{n}$ (in $n$) at all scales which precisely correspond to the cluster points of the net $\omega\d$. We point out that the notation here differs from the convention that we use in most of the paper (to be consistent with the OAR notation): The upper index $m$ of $\omega_n\up m$ specifies the scale of the state while the lower index identifies it as part of a net at that scale.\\

As a specific example, we discuss the construction of a free-fermion field theory on a $d+1$-dimensional space-time cylinder, $\RR\times \T^{d}_{L}$, with spatial volume $(2L)^{d}$ from an inductive system of $d$-dimensional many-fermion models (with periodic boundary conditions) using wavelet theory \cite{daubechies1992wavelets}. We start by providing the kinematical setup followed by a discussion of dynamics, the latter, for simplicity, restricted to the case $d=1$.

The directed set $(N,\le)$ is that of the natural number $\NN_{0}$ with the usual ordering $\le$, and we associated with each natural number $n$ a finite dyadic partition $\La_{n} = \vep_{n}\set{-L_{n},...,L_{n}-1}^{d}$ of the $d$-dimensional torus $\T^{d}_{L} = [-L,L)^{d}$, where $\vep_{n} = 2^{-n}\vep_{0}$, $L_{n} = 2^{n}L_{0}$, for some positive integer $L_{0}\in\NN$ such that $\vep_{n}L_{n}=L$. The ordering $\le$ is compatible with the ordering $\subseteq$ of partitions by inclusion.

The algebra $\A_{n}$ is defined as $\A_{n} = \fA_{\CAR}(\fh_{n}))$, the C*-algebra of canonical anti-commutation relations (CAR) over the (one-particle) Hilbert space $\fh_{n} = \ltwo(\La_{n})$ with inner product denoted by $\ip{\placeholder}{\!\placeholder}_{n}$. We recall that this algebra is generated by a single anti-linear operator-valued map, $\fh_{n}\ni\xi\mapsto a(\xi)$, and that $a^{\dag}(\xi) = a(\xi)^{*}$.

The connecting maps $\j nm$ are defined as compositions, $\j nm = \j n{\!\ n-1}\circ...\circ \j {m+1}{\!\ m}$, of quasi-free unital *-homomorphisms, $\j {n+1}{\!\ n} a(\xi) = a(v_{n+1\!\ n}\xi)$, determined by isometries between successive Hilbert spaces, $v_{n+1\!\ n}:\fh_{n}\rightarrow\fh_{n+1}$, which are explicitly given in terms of a finite-length low-pass filter $h_{\alpha}\in\CC$, $\alpha\in\ZZ^{d}$, of an orthonormal, compactly supported scaling function $s\in C^{r}(\RR^{d})$:
\begin{align}
\label{eq:waveletiso}
v_{n+1\!\ n}\xi & = \sum_{x\in\La_{n}}\xi_{x}\sum_{\alpha\in\ZZ^{d}}h_{\alpha}\delta^{(n+1)}_{x+\vep_{n+1}\alpha}, & \xi &\in\fh_{n},
\end{align}
where $\delta^{(n+1)}_{y}$, $y\in\La_{n+1}$, are the standard basis vectors of $\fh_{n+1}$. The low-pass filter and the scaling function are related via the scaling equation:
\begin{align}
\label{eq:seq}
s(x) & = \sum_{\alpha\in\ZZ}h_{\alpha}\!\ 2^{\frac{1}{2}}s(2x-\alpha).
\end{align}
This yields strict inductive systems of Hilbert spaces $(\fh,v)$ and C*-algebras $(\A, j)$ with isometries respectively *-homomorphisms. The morphism property follows from the fact that the algebraic structure of $\A_{n}$ is determined by the inner product of $\fh_{n}$, i.e., we have:
\begin{align}
\label{eq:waveletj}
\{\j nm a(\xi), \j nm a^{\dag}(\eta)\} & = \{a(v_{nm}\xi),a^{\dag}(v_{nm}\eta)\}\1_{n} = \ip{v_{nm}\xi}{v_{nm}\eta}_{n}\1_{n}, & \xi,\eta & \in\fh_{m} \\ \nonumber
& = \ip{\xi}{\eta}_{m}\1_{n} = \j nm(\ip{\xi}{\eta}_{m}\1_{m}) \\ \nonumber
& = \j nm\{a(\xi),a^{\dag}(\eta)\},
\end{align}
where $\{x,y\}=xy+yx$ is the anti-commutator. It is a direct consequence of the theory of wavelets that the inductive system $(\fh,v)$ provides a multiresolution analysis of the space $L^{2}(\T^{d}_{L})$ in the sense of  Mallat and Meyer \cite{mallat1989multiresolution, meyer1989wavelets} based on the scaling function $s$. Precisely, this means that the limit space is $\fh_\oo = L^{2}(\T^{d}_{L})$ and that the closed subspaces $v_{\oo{n}}\fh_{n}$ satisfy,
\begin{align}
\label{eq:multres}
v_{\oo{0}}\fh_{0}\subset...\subset v_{\oo{n}}\fh_{n} & \subset v_{\oo{n+1}}\fh_{n+1}\subset...,
\end{align}
with
\begin{align}
\label{eq:multrescond}
\overline{\bigcup_{n\in\NN_{0}} v_{\oo{n}}\fh_{n}} & = L^{2}(\T^{d}_{L}), & \bigcap_{n\in\NN_{0}} v_{\oo{n}}\fh_{n} & = v_{\oo{0}}\fh_{0} = \CC^{2L_{0}},
\end{align}
and the additional properties:
\begin{itemize}
	\item[1.] $f(2^{n}\!\ \cdot\!\ )\in v_{\oo{n}}\fh_{n} \Leftrightarrow f \in v_{\oo{0}}\fh_{0}$ for $f\in L^{2}(\T^{d}_{L})$,
	\item[2.] $f\in v_{\oo{n}}\fh_{n} \Rightarrow f(\!\ \cdot\!\ -x)\in v_{\oo{n}}\fh_{n}$ for $x\in\La_{n}$,
	\item[3.] the functions $s^{(n)}_{L}(\!\ \cdot\!\ -x) = \sum_{\alpha\in\ZZ}\vep_{n}^{-\frac{1}{2}}s(\vep_{n}^{-1}(\!\ \cdot\!\ -x-\alpha 2L))$ for $x\in\La_{n}$ provide an orthonormal basis of $v_{\oo{n}}\fh_{n}$.
\end{itemize}
We note that $v_{\oo{n}}$ has an explicit representation in terms of the scaling function $s$:
\begin{align}
\label{eq:waveletisoasymp}
(v_{\oo{n}}\xi)(x) & = \sum_{y\in\La_{n}}\xi_{y}\!\ s^{(n)}_{L}(x-y) = (\xi\ast_{\La_{n}}s^{(n)}_{L})(x), & \xi & \in\fh_{n}.
\end{align}
It follows from the functorial properties of the CAR algebra \cite{bratteli2} that
\begin{align}
\label{eq:carlim}
\A_{\infty} & = \overline{\bigcup_{n\in\NN_{0}}\j\oo{n}\A_{n}} = \fA_{\CAR}(L^{2}(\T^{d}_{L})).
 \end{align}
Additionally, the inductive systems $(\fh, v)$ and $(\A,j)$ are split in the sense of \cref{def:split} because the coisometries $v_{\oo{n}}^{*}=p_{{n}\oo}$ define a linear contraction $p_{\blob\oo}:\fh_{\oo}\to\C(\A,j)$, with $v\textup{-lim}_{n}p_{{n}\oo}\xi = \xi$, $\xi\in\fh_\oo$,  and, in turn, we obtain quasi-free completely positive contractions $s_{{n}\oo}:\A_{\infty}\to\A_{n}$ \cite{evans1998qsym} that provide the right inverse of $\jlim$. In particular, we observe that the conditional expectations $\j\oo{n}\circ s_{{n}\oo}=\cE_{n}:\A_{\oo}\to\j\oo{n}\A_{n}\subset\A_{\infty}$, which are induced by the range projections $p_{n} = v_{\oo{n}}\circ p_{{n}\oo}$, converge strongly to the identity.
\begin{lem}
\label{lem:condconv}
Given the above, we have the following:
\begin{align*}
\lim_{n}\norm{\cE_{n}(A)-A} & = 0, & A & \in\A_{\oo}.
\end{align*}
\end{lem}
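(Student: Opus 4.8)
The plan is to exploit that each $\cE_n$ is a norm-one projection (a conditional expectation) onto the increasing subalgebra $\B_n\coloneqq\j\oo n\A_n=\fA_{\CAR}(v_{\oo n}\fh_n)$, whose union is dense in $\A_\oo$ by \eqref{eq:carlim}, and then to conclude by a routine three-term ($\eps/3$) estimate. The key structural fact is a martingale-type property: $\cE_n$ restricts to the identity on $\B_m$ for every $m\le n$. I stress that one cannot argue generator-by-generator on the CAR relations, since $\cE_n$ is \emph{not} multiplicative; it is the projection property, not any homomorphism property, that carries the proof.

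First I would verify that $\cE_n=\j\oo n\circ s_{n\oo}$ is indeed a projection onto $\B_n$. Since $s_{n\oo}$ and $\j\oo n$ are quasi-free with one-particle symbols $v_{\oo n}^{*}$ and $v_{\oo n}$, and $v_{\oo n}$ is an isometry, the composite $s_{n\oo}\circ\j\oo n$ has one-particle symbol $v_{\oo n}^{*}v_{\oo n}=\id_{\fh_n}$ and therefore equals $\id_{\A_n}$. Consequently $\cE_n^{2}=\j\oo n(s_{n\oo}\j\oo n)s_{n\oo}=\cE_n$, the range of $\cE_n$ is exactly $\B_n$, and $\cE_n$ fixes $\B_n$ pointwise; as a composition of completely positive contractions it also satisfies $\norm{\cE_n}\le 1$.

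Next I would record the nesting of ranges. From the inductive consistency $\j\oo m=\j\oo n\circ\j nm$ (equivalently, from the multiresolution inclusions $v_{\oo m}\fh_m\subseteq v_{\oo n}\fh_n$ of \eqref{eq:multres} together with functoriality of $\fA_{\CAR}$) we obtain $\B_m\subseteq\B_n$ whenever $m\le n$. Combined with the previous step this gives $\cE_n(B)=B$ for all $B\in\B_m$ with $m\le n$.

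The conclusion is then the three-term estimate. Given $A\in\A_\oo$ and $\eps>0$, the density in \eqref{eq:carlim} supplies an $m$ and some $B\in\B_m$ with $\norm{A-B}<\eps$; for every $n\ge m$,
\[
\norm{\cE_n(A)-A}\le\norm{\cE_n(A-B)}+\norm{\cE_n(B)-B}+\norm{B-A}\le 2\norm{A-B}<2\eps ,
\]
where the middle term vanishes because $\cE_n(B)=B$ and the first is controlled by $\norm{\cE_n}\le1$. Letting $\eps\to0$ yields $\lim_n\norm{\cE_n(A)-A}=0$. The only step requiring genuine care — and the one I would treat as the main obstacle — is the idempotency in the second paragraph: it hinges on $v_{\oo n}$ being an isometry, and without $v_{\oo n}^{*}v_{\oo n}=\id_{\fh_n}$ the maps $\cE_n$ would fail to fix their ranges and the whole argument would break down. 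Everything downstream is soft, using only contractivity, nesting, and density.
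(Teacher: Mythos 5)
Your proof is correct and follows essentially the same route as the paper: reduce by contractivity and the density in \eqref{eq:carlim} to elements of $\bigcup_{m}\j\oo m\A_m$, then observe that $\cE_n$ fixes $\j\oo m\A_m$ for $n\ge m$ because $s_{n\oo}\circ\j\oo m=\j nm$ (which you re-derive from the one-particle identity $v_{\oo n}^{*}v_{\oo n}=\id_{\fh_n}$ and quasi-freeness, where the paper simply cites it as holding by construction).
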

\begin{proof}
It is sufficient to prove the statement for $A\in\cup_{n\in\NN_{0}}\j\oo{n}\A_{n}$. Thus, given $m$ and $A\in\A_{m}$ we have for all $n\ge m$:
\begin{align*}
\norm{\cE_{n}(\j nm A)-\j\oo{m}A} & = \norm{\j\oo{n}\j nm A - \j\oo m A} = 0,
\end{align*}
because $s_{{n}\oo}\circ\j\oo{m} = \j nm$ by construction.
\end{proof}
Given the inductive systems $(\fh,v)$ and $(\A, j)$, we are in a position to consider scaling limits of many-fermion systems based on the algebras $\A_{n}$. This requires additional data in the form of \jts-convergent nets of states $\omega\d\in\fS(\A\d)$ associated with Hamiltonians $H\d$ that exhibit criticality, i.e., $H_{n}$ should have vanishing spectral gap for some choice of couplings  in the limit $n\to\oo$.
\begin{defin}
\label{def:sl}
A scaling limit associated with the inductive systems $(\A,j)$ is a \jts-con\-vergent net of states $\omega\d\in\fS(\A\d)$. We also call the pair $(\A_{\oo},\omega_{\oo})$ the scaling limit of $(\A\d,\omega\d)$.
\end{defin}
It is at this point that we restrict to the case $d=1$ for concreteness and simplicity, although there are no structural obstacles to discuss models associated with free fermionic quantum fields for arbitrary $d$. For $x\in\La_{n}$ we introduce the abbreviation $a_{x}=a(\delta^{(n)}_{x})$. The net of Hamiltonians acting on the anti-symmetric Fock space $\fF_{-}(\fh_{n})$ is
\begin{align}
\label{eq:H}
H_{n} & = \vep_{n}^{-1}\sum_{x\in\Lambda_{n}}\hspace{-0.2cm}\big(\!(J_{n}\!-\!h_{n})(a_{x}\!+\!a^{\dag}_{x})(a_{x}\!-\!a^{\dag}_{x})\!+\!J_{n}(\!(a_{x+\vep_{n}}\!\!+\!a^{\dag}_{x+\vep_{n}})\!-\!(a_{x}\!+\!a^{\dag}_{x})\!)(a_{x}\!-\!a^{\dag}_{x})\!\big),
\end{align}
where $h_{n},J_{n}>0$ are scale-dependent dimensionless coupling constants. The notation reflects the fact that $J_{n}$ and $h_{n}$ describe the spin-spin coupling and the transverse magnetic field of the associated spin systems, the transverse-field Ising spin chain in the Ramond sector \cite{difrancesco1997cft}, via the Jordan-Wigner transformation \cite{evans1998qsym}. Calculating the spectrum of $H_{n}$ shows that it exhibits criticality for $J_{n}=h_{n}$, and a formal analysis suggests that the scaling limit $n\to\oo$ should provide a field theory of two free Majorana fermions
\begin{align}
\label{eq:majfermion}
\psi_{\pm|x} & = e^{\pm i\frac{\pi}{4}}a_{x}+e^{\mp i\frac{\pi}{4}}a^{\dag}_{x}, & x & \in\T^{1}_{L}.
\end{align}
Intuitively, this is expected due to the following momentum-space representation of $H_{n}$:
\begin{align}
\label{eq:momH}
H_{n} & = \tfrac{J_{n}}{4L}\!\!\sum_{k\in\Gamma_{n}}\!\!\begin{pmatrix} \hat{\psi}_{+|k} \\ \hat{\psi}_{-|k} \end{pmatrix}^{\!\!*}\!\!\underbrace{\begin{pmatrix} -\sin(\vep_{n}k) & \hspace{-1cm}i((\cos(\vep_{n}k)-1) + \lambda_{n}) \\ -i((\cos(\vep_{n}k)-1) + \lambda_{n}) & \hspace{-1cm}\sin(\vep_{n}k) \end{pmatrix}}_{=J_{n}^{-1}h_{n}(k)}\!\! \begin{pmatrix} \hat{\psi}_{+|k} \\ \hat{\psi}_{-|k} \end{pmatrix},
\end{align}
where $\hat{\psi}_{\pm|k} = \psi_{\pm}(e_{k})$ for $e_{k}=e^{ik\!\ \cdot\!\ }\in\fh_{n}$ with $k\in\Gamma_{n}=\tfrac{\pi}{L}\{-L_{n},...,L_{n}-1\}$, and $\lambda_{n}=1-\tfrac{h_{n}}{J_{n}}$ is the dimensionless lattice mass.\\

We obtain a \jts-convergent net of states by considering the ground states $\omega_{n}$ of $H_{n}$ which are quasi-free and, therefore, determined by their two-point function:
\begin{align}
\label{eq:gs2p}
\omega_{n}(\psi_{\pm}(\xi)\psi_{\pm}(\eta)^{*}) & = 2\ip{\xi}{(P_{n})_{\pm\pm}\eta}_{n}, & \omega_{n}(\psi_{\pm}(\xi)\psi_{\mp}(\eta)^{*}) & = 2\ip{\xi}{(P_{n})_{\pm\mp}\eta}_{n},
\end{align}
for $\xi, \eta\in\fh_{n}$. Here, $P_{n}$ is a projection on $\fh_{n}\otimes\CC^{2}$ determined by the momentum-space kernel:
\begin{align}
\label{eq:gs2pkernel}
P_{n}(k) & = \tfrac{1}{2\mu_{n}(k)}\big(\mu_{n}(k)\1_{2}+h_{n}(k)\big), & \mu_{n}(k)^{2} & = (J_{n}-h_{n})^{2}+4J_{n}h_{n}\sin(\tfrac{1}{2}\vep_{n}k)^{2}.
\end{align}
Here, $\mu_{n}(k)^{2}=J_{n}^{2}(\lambda_{n}^{2}+4(1-\lambda_{n})\sin(\tfrac{1}{2}\vep_{n}k)^{2})$ can be recognized as the dispersion relation of a harmonic fermion lattice field of mass $\lambda_{n}$ as long as $J_{n}\ge h_{n}$, which corresponds to the disordered phase of the spin chain associated with $H_{n}$.

Now, it is an immediate consequence of the results on the multiresolution analysis associated with the inductive system $(\fh,v)$ that the net of ground states $\omega\d$ is \jts-convergent provided we impose the RG conditions $\lim_{n}\vep_{n}^{-1}\lambda_{n}=m_{0}\ge0$, $\lim_{n}J_{n}=J>0$ \cite{osborne2022cftapprox}. In particular, this results in the quasi-free state $\omega_{\oo}$ on $\A_{\oo}$ determined by a projection $P_{m_{0}}$ on $L^{2}(\T^{1}_{L})\otimes\CC^{2}$ with momentum-space kernel:
\begin{align}
\label{eq:gs2pkernellim}
P_{m_{0}}(k) & = \tfrac{1}{2\mu_{0}(k)}\big(\mu_{0}(k)\1_{2}-k\sigma_{3}-m_{0}\sigma_{2}\big), & \mu_{0}(k)^{2} & = m_{0}^{2}+k^{2},
\end{align}
where $\sigma_{2},\sigma_{3}$ are the standard Pauli matrices.
\begin{thm}
\label{thm:rglim}
Given the above, and suppose that the coupling constants $J_{n}, h_{n}$ satisfy $\lim_{n}\vep_{n}^{-1}\lambda_{n}=m_{0}\ge0$, $\lim_{n}J_{n}=J>0$. Then, the net of ground states $\omega_{n}$ of $H_{n}$ as in \cref{eq:H} is \jts-convergent, i.e., the RG flow,
\begin{align*}
\lim_{n}\omega^{(m)}_{n} & = \omega^{(m)}_{\infty},
\end{align*}
is convergent for any $m$, and the projective limit of RG-limit states equals the \jts-limit of $\omega\d$: $\varprojlim_{m}\omega^{(m)}_{\infty} = \omega_{\infty}$. Moreover, the scaling limit $(\A_{\oo},\omega_{\oo})$ induces the vacuum representation of two free Majorana fermions of mass $m_{0}$ on the space-time cylinder $\RR\times\T^{1}_{L}$.
\end{thm}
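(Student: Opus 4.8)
The plan is to prove the two assertions separately: first establish the \jts-convergence of the net of ground states $\omega\d$, then read off the convergence of the RG flow as a formal consequence of the abstract machinery of \cref{thm:js_conv}, and finally identify the limit state $\omega_\oo$ with the free Majorana vacuum. Throughout I would exploit that every $\omega_n$ is the quasi-free state \eqref{eq:gs2p} determined by the projection $P_n$ of \eqref{eq:gs2pkernel}, so that the entire analysis reduces to the behaviour of the two-point functions.

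First I would reduce \jts-convergence to a statement about two-point functions. By \cref{thm:js_conv}\,(2) it suffices to show that for each fixed scale $m$ the net $\omega_n\circ\j nm$ is $w^*$-convergent as $n\to\oo$. Since $\fh_m=\ltwo(\La_m)$ is finite-dimensional, $\A_m=\fA_{\CAR}(\fh_m)$ is finite-dimensional, so $w^*$-convergence on $\A_m$ is just convergence of $\omega_n(\j nm a)$ with $a$ ranging over a basis of field monomials. Using $\j nm\psi_\pm(\xi)=\psi_\pm(v_{nm}\xi)$ together with quasi-freeness, Wick's theorem expresses each $\omega_n(\j nm a)$ as a Pfaffian in the two-point functions $\langle v_{nm}\xi,P_n v_{nm}\eta\rangle_n$ for $\xi,\eta\in\fh_m\ox\CC^2$, so the whole problem reduces to the convergence of these matrix elements.

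The heart of the argument is this convergence. Using the multiresolution relation $v_{\oo n}v_{nm}=v_{\oo m}$ and the isometry of $v_{\oo n}$, one rewrites the matrix element as $\langle v_{\oo m}\xi,(v_{\oo n}P_n v_{\oo n}^*)\,v_{\oo m}\eta\rangle$ in $L^2(\T^1_L)\ox\CC^2$, whence it remains to show $v_{\oo n}P_n v_{\oo n}^*\to P_{m_0}$ weakly on the fixed vectors $v_{\oo m}\xi$. In momentum space $v_{\oo n}$ acts by the scaling-function filter while $P_n$ acts by the kernel \eqref{eq:gs2pkernel}; after rescaling by $\vep_n^{-1}$ and imposing the RG conditions $\vep_n^{-1}\lambda_n\to m_0$, $J_n\to J$, a direct computation gives the pointwise limit $P_n(k)\to P_{m_0}(k)$ of \eqref{eq:gs2pkernellim} for every fixed continuum momentum $k$ (the factor $J_n$ cancels in the ratio defining $P_n(k)$, and $\vep_n^{-1}\sin(\vep_nk)\to k$, $\vep_n^{-1}((\cos(\vep_nk)-1)+\lambda_n)\to m_0$). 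Dominated convergence in the momentum sum, with the decay of $\hat s$ controlling the high-frequency tail supplied by the MRA of $(\fh,v)$, upgrades this to the required weak convergence. I expect this to be the main obstacle: the delicate part is the uniform control of the UV contributions as the lattice momentum sums $\sum_{k\in\Gamma_n}$ approximate the continuum $L^2(\T^1_L)$-pairing through the wavelet isometries, simultaneously with the $\vep_n^{-1}$-rescaling that produces the physical mass, and this is essentially the content imported from \cite{osborne2022cftapprox}.

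Granting \jts-convergence, $\omega_\oo=\jslim_n\omega_n$ is the quasi-free state on $\A_\oo=\fA_{\CAR}(L^2(\T^1_L))$ (cf.\ \eqref{eq:carlim}) with two-point function given by $P_{m_0}$. The RG-flow statement is then immediate from \cref{thm:js_conv}: writing $\omega_n^{(m)}=\omega_n\circ\j nm$, formula \eqref{eq:js_limit} yields $\wslim_n\omega_n^{(m)}=\omega_\oo\circ\j\oo m=:\omega_\oo^{(m)}$ for every $m$, and since $\j\oo m\j m{m'}=\j\oo{m'}$ the family $(\omega_\oo^{(m)})_m$ is projectively consistent with $\varprojlim_m\omega_\oo^{(m)}=\omega_\oo$. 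Finally, to identify the scaling limit I would observe that $P_{m_0}$ is genuinely a projection, since $(-k\sigma_3-m_0\sigma_2)^2=(k^2+m_0^2)\1_2=\mu_0(k)^2\1_2$ gives $P_{m_0}^2=P_{m_0}$; hence $\omega_\oo$ is a Fock state and its GNS representation is the Fock representation over the one-particle structure $(L^2(\T^1_L)\ox\CC^2,P_{m_0})$. Recognizing $\mu_0(k)=\sqrt{m_0^2+k^2}$ as the relativistic dispersion and $P_{m_0}(k)$ as the positive-energy spectral projection of the $1{+}1$-dimensional free Majorana Hamiltonian of mass $m_0$ — with the two components $\psi_\pm$ of \eqref{eq:majfermion} playing the role of the two Majorana fields and the time direction $\RR$ generated by the second quantization of this one-particle Hamiltonian — identifies $(\A_\oo,\omega_\oo)$ with the claimed vacuum representation.
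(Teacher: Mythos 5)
Your proposal is correct and follows essentially the same route the paper takes: the paper itself gives no detailed proof of \cref{thm:rglim}, asserting it as an immediate consequence of the multiresolution analysis of $(\fh,v)$ and deferring the analytic core (the momentum-space convergence of the kernels under the wavelet isometries) to \cite{osborne2022cftapprox}, exactly the step you correctly single out as the main obstacle. Your reduction to two-point functions via quasi-freeness, the pointwise limit $P_n(k)\to P_{m_0}(k)$ under the RG conditions, the use of \cref{thm:js_conv} for the RG-flow and projective-limit statements, and the identification of $P_{m_0}$ as a projection yielding the Fock/Majorana vacuum representation all match the intended argument, just spelled out in more detail than the paper provides.
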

Specifically, in the massless case $m_{0}=0$, the projection $P_{m_{0}}$ becomes diagonal, i.e., the chiral fields $\psi_{\pm}$ decouple, $\omega_{\infty}(\psi_{\pm}(f)\psi_{\mp}(g))=0$ for $f,g\in L^{2}(\T^{1}_{L})$, and their respective ground states $\omega^{(\pm)}_{\infty}$ are determined by the Hardy projections $P_{\pm}(k)=\tfrac{1}{2}(1\mp\sign(k))$, with the convention $\sign(0)=1$.\\

Let us conclude by discussing the convergence of the dynamics given by $H_{n}$. By construction $H_{n}$ is a self-adjoint element of $\A_{n}$ and induces an inner *-derivation $\delta_{n} = i[H_{n},\!\ \cdot\!\ ]$ as well as an automorphism group $\TT_{n}(t) = \Ad_{U_{n}(t)}$ implemented by the unitaries $U_{n}(t)=e^{itH_{n}}$. Some general facts about the convergence of dynamics on C*-algebras and their implementations with respect to *-representations are discussed in \cref{sec:appendix_implementors}. As shown in \cref{prop:strictccomp}, the net of RG-limit states $\omega^{(\blob)}_{\infty}$ provides a compatible net of *-representations $\pi\d$ in the sense of \cref{def:comp} via the GNS construction. Consequently, we obtain an implementation of $\TT\d(t)$ with respect to $\pi\d$ by unitaries $V_{n}(t) = \pi_{n}(U_{n}(t))$ (cp.~\cref{eq:impdyn}):
\begin{align}
\label{eq:impdynunitary}
\pi_{n}(\TT_{n}(t)A_{n}) & = V_{n}(t)\pi_{n}(A_{n})V_{n}(t)^{*}, & A & \in\A_{n}.
\end{align}
The quasi-free structure implies that the $\TT_{n}(t)$ is given in terms of the one-particle Hamiltonian $h_{n}\in\LL(\fh_{n}\otimes\CC^{2})$ with momentum-space kernel $h_{n}(k)$ defined in \cref{eq:H}:
\begin{align}
\label{eq:1pdyn}
\TT_{n}(t)\psi(\xi_{+},\xi_{-}) & = \psi(e^{-2ith_{n}}(\xi_{+},\xi_{-})), & \psi(\xi_{+},\xi_{-}) & = \psi_{+}(\xi_{+})+\psi_{-}(\xi_{-}).
\end{align}
The \jt\jt-convergence of $\TT\d(t)$ follows from the strong $vv$-convergence of $e^{ith\d}$ using the results of \cite{osborne2022cftapprox} and the following estimate on basic sequences,
\begin{align}
\norm{(\j nm\TT_{m}(t)\j ml-\TT_{n}(t)\j nl)\psi(\xi_{+},\xi_{-})}  \leq \norm{(v_{nm}e^{-2ith_{m}}v_{ml}-e^{-2ith_{n}}v_{nl})(\xi_{+},\xi_{-})},
\end{align}
for $(\xi_{+},\xi_{-})\in\fh_{l}\otimes\CC^{2}$ since $\psi$ generates $\A_{l}$. In particular, it follows using \cite{osborne2022cftapprox} that $e^{ith\d}$ is $vv$-convergent to the dynamics $e^{ith_{\oo}}$ on $\fh_{\oo}$ with one-particle Hamiltonian $h_{\oo}$ determined by the momentum-space kernel:
\begin{align}
\label{eq:1pdynlim}
h_{\oo}(k) & = -k\sigma_{3}-m_{0}\sigma_{2}.
\end{align}
Finally, we would like to conclude the $JJ$-convergence of the implementing unitaries $V_{n}(t)$, where $JJ$-convergence convergence refers to the connecting maps $J_{nm}$ induced by the $\j nm$ via the GNS construction (cf.~\cref{prop:strictccomp}). This can be achieved for slightly modified $V_{n}(t)$ by \cref{prop:strictcimp} as the vacuum state $\omega_{\oo}$ given by \cref{eq:gs2pkernellim} is invariant under the quasi-free dynamics generated by $h_{\oo}$. The necessary modification is due to the fact that $H_{n}$ has a divergent vacuum expectation value with respect to $\omega^{(n)}_{\oo}$, i.e., $\lim_{n}\omega^{(n)}_{\oo}(H_{n})=\oo$, which needs to subtracted:
\begin{align}
\label{eq:impdynunitarymod}
\tilde{V}_{n}(t) & = e^{it\!\; :\!\:\pi_{n}(H_{n})\!\: :} = e^{it(\pi_{n}(H_{n})-\omega^{(n)}_{\oo}(H_{n}))}.
\end{align}
The modified implementors satisfy $\norm{(\tilde{V}_{n}(t)-\1)\Omega^{(n)}_{\infty}}=0$, where $\Omega^{(n)}_{\infty}$ is the GNS vector $\omega^{(n)}_{\infty}$, and, therefore, converge by \cref{prop:strictcimp}.

\subsection{Recovering symmetries: Thompson's group actions à la Jones}
\label{sec:thompson}
The question of convergence of dynamics in the setting of inductive systems naturally extends to that of symmetries or symmetry groups. Specifically, in the context of quantum scaling limits of \cref{sec:quantscale}, we can ask how spacetime symmetries of a continuum QFT are approximated via lattice discretizations.
In the setting of Wilson-Kadanoff RG, a distinguished role is played by the fixed points of the RG, which are generically expected to be associated with conformal field theories (CFTs), which poses a particularly large, i.e., infinite-dimensional, symmetry group in $1+1$ dimensions \cite{belavin1984cft, polyakov1984cft}.
Within the class of models considered in the previous subsection, the conformal symmetry group corresponds to the orientation-preserving diffeomorphisms $\textup{Diff}_{+}(\T^{1}_{L})$ of the spatial circle $\T^{1}_{L}$. It has been proposed by Jones to choose an approximation in terms of piecewise-linear homeomorphisms of the unit interval with dyadic-rational breakpoints and slopes of powers of $2$ \cite{jones2017unitary,jones2018nogo}, i.e., Thompson's groups $F\subset T$ (the larger including dyadic rotations), as these allow for a natural action on (strict) inductive limits over dyadic partitions $\Lambda_{n}\subset\T^{1}_{L}$ (using the notation of \cref{sec:quantscale}), see \cite{brothier2019gauge,osborne2019thompson,kliesch2020continuum,brothier2023haagerup} for further results.

In the framework discussed here, the action of an element $f\in F$ of Thompson's group $F$, rescaled to a map $f_{L}:\T^{1}_{L}\to\T^{1}_{L}$\footnote{We define $f_{L}:=2L f(\tfrac{1}{2L}(\placeholder+L)\!)-L$.}, on an inductive system over dyadic partitions $\Lambda_{n}$ can be understood as an instance of a convergent net of operations as in \cref{sec:operations}: Consider $f$ as a map between two incomplete dyadic partitions of the unit interval (see \cref{fig:treeaction} for an illustration) and, thus, $f_{L}$ as a map between two incomplete dyadic partitions $\Lambda,\Lambda'\subset \T^{1}_{L}$.
If the connecting maps $\j nm$ of the inductive system under consideration can be generated from a single map $j^{(0)}_{21}$ that implements the local refinement $\Lambda_{n}\subset\{x\}\to\{x,x+\vep_{n+1}\}\subset\Lambda_{n+1}$, we can consistently define the action of $f_{L}$ on the inductive system by shifting the local indices of elements $a\in\A_{n}$ such that $\Lambda,\Lambda'\subset\Lambda_{n}$.
\begin{figure}[ht!]
    \centering
	\begin{tikzpicture}
	
	\draw (2,2) node{$f$};
	
	\draw (3,2) node{$\simeq$};
	
	\draw (5,-0.1) to (5,0.1) (6,-0.1) to (6,0.1) (7,-0.1) to (7,0.1) (8,-0.1) to (8,0.1);
	\draw (5,0) node[below]{$\tfrac{1}{4}$} (6,0) node[below]{$\tfrac{1}{2}$} (7,0) node[below]{$\tfrac{3}{4}$} (8,0) node[below]{$1$};
	\draw[thick] (4,0) to (8,0);
	\draw (3.9,0) to (4.1,0) (3.9,1) to (4.1,1) (3.9,2) to (4.1,2) (3.9,3) to (4.1,3) (3.9,4) to (4.1,4);
	\draw (4,0) node[below]{$0$} (4,1) node[left]{$\frac{1}{4}$} (4,2) node[left]{$\frac{1}{2}$} (4,3) node[left]{$\frac{3}{4}$} (4,4) node[left]{$1$};
	\draw[thick] (4,0) to (4,4);

	\draw (4,0) to (5,2) to (6,3) to (8,4);
		
	\end{tikzpicture}
	\caption{An example of a Thompson's group element $f\in F$ as a map between two incomplete dyadic partitions.}
\label{fig:treeaction}
\end{figure}
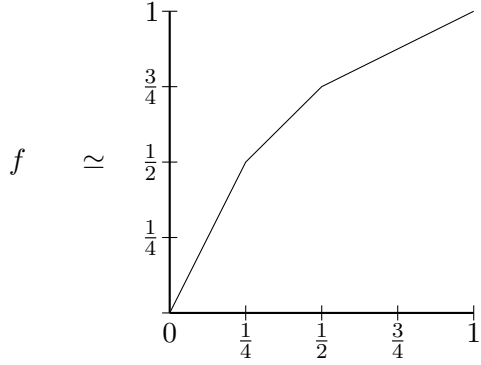

An explicit example is given by the connecting maps $\j nm$ between CAR algebras $\A_{n}=\fA_{\CAR}(\fh_{n})$ based on the Haar wavelet, i.e., \cref{eq:waveletiso} together with the specific choice $h_{\alpha}=\tfrac{1}{\sqrt{2}}(\delta_{\alpha,0}+\delta_{\alpha,1})$ corresponding to the scaling function $s = \chi_{[0,1)}$ (the indicator function of the unit interval).
By construction, the connecting maps are generated by the single map $j^{(0)}_{21}:\fA_{\CAR}(\CC_{x})\to\fA_{\CAR}(\CC_{x}\oplus\CC_{x+\vep_{n+1}})$, $a_{x}\mapsto\tfrac{1}{\sqrt{2}}(a_{x}+a_{x+\vep_{n+1}})$, for $x\in\Lambda_{n}$. Now, an element $f\in F$ acts on a basic sequence of the form $\j\blob k(a(\xi_{k}))$, $\xi_{k}\in\fh_{k}$, by:
\begin{align}
\label{eq:jonesaction}
f_{L}\cdot\j\blob k(a(\xi_{k})\!) & = \j\blob n(f_{L}\cdot\j nk(a(\xi_{k})\!)\!) = \j\blob n(f_{L}\cdot a(v_{nk}\xi_{k})\!) = \j\blob n\Big(\sum_{x\in\Lambda_{n}}(v_{nk}\xi_{k})_{x}\!\ a_{f_{L}(x)}\Big),
\end{align}
for any $n$ such $\Lambda,\Lambda'\subset\Lambda_{n}$. In particular, we have the following automorphic action of $F$ (and $T$) on $\A_{\oo} = \fA_{\CAR}(L^{2}(\T^{1}_{L}))$ determined on the dense *-subalgebra $\bigcup_{k\in\NN_{0}}\j\oo{k}\A_{n}$ by
\begin{align}
\label{eq:jonesactionlim}
f_{L}\cdot\j\oo k(a(\xi_{k})\!) & = a(f_{L}\cdot v_{\oo k}\xi_{k}) = \sum_{x\in\Lambda_{n}}(v_{nk}\xi_{k})_{x}\!\ a(\chi^{(n)}_{[0,1)|L}(\placeholder -f_{L}(x)\!)\!),
\end{align}
where $f_{L}\cdot v_{\oo k}\xi_{k} = \sum_{x\in\Lambda_{n}}(v_{nk}\xi_{k})_{x}\!\ \chi^{(n)}_{[0,1)|L}(\placeholder -f_{L}(x)\!)$, since the bound $\norm{f_{L}\cdot\j\oo k(a(\xi_{k})\!)}=\norm{\xi_{k}}_{\fh_{k}}$. \cref{eq:jonesactionlim} shows that this Jones action is the second-quantized (or quasi-free) form of a Pythagorean representation of $F$ (respectively $T$) studied in \cite{brothier2019pythagorean}, and we have:
\begin{align}
\label{eq:jonesactionlimequiv}
(f_{L}\cdot v_{\oo k}\xi_{k})(x) & = \sum_{y\in\Lambda_{n}}(v_{nk}\xi_{k})_{y}\!\ \chi^{(n)}_{[0,1)|L}(x-f_{L}(y)\!) = |(f_{L}^{-1})'(x)|^{\frac{1}{2}}(v_{\oo k}\xi_{k})(f_{L}^{-1}(x)\!)
\end{align}
which may directly be compared with the automorphic quasi-free action of $\textup{Diff}_{+}(\T^{1}_{L})$ on $\A_{\infty}$ induced by its unitary action on $L^{2}(\T^{1}_{L})$:
\begin{align}
\label{eq:diffaction}
(u_{\phi}\xi)(x) & = |(\phi^{-1})'(x)|^{\frac{1}{2}} \xi(\phi^{-1}(x)\!),
\end{align}
for $\phi\in\textup{Diff}_{+}(\T^{1}_{L})$ and $\xi\in L^{2}(\T^{1}_{L})$. On basic sequences of the form as above, such comparison boils down to a one-particle space estimate:
\begin{align}
\label{eq:1pestimate}
\norm{\phi\cdot\j\oo k(a(\xi_{k}) - f_{L}\cdot\j\oo k(a(\xi_{k})} & = \norm{u_{\phi}\cdot v_{\oo k}\xi_{k}-f_{L}\cdot v_{\oo k}\xi_{k}}_{L^{2}},
\end{align}
 Thus, the action of general conformal transformation can be strongly approximated on $\A_{\oo}$ if the associated action on the one-particle limit space $\fh_{\oo} = L^{2}(\T^{1}_{L})$ can be strongly approximated.

Finally, we note that such an approximation is typically not possible in terms of implementors (see \cref{sec:appendix_implementors}) as the action of non-trivial elements of Thompson's groups is not implementable in the presence of a non-vanishing central charge \cite{delVecchio2019solitons}. In the latter situations, a different strategy invoking the approximation of conformal symmetries by the Koo-Saleur formula \cite{koo1994virasoro} proves successful \cite{osborne2022cftapprox}.

\section{Comparison with the literature}\label{sec:comparison}

We compare our work with three related works:
The concept of generalized inductive limits of C*-algebras introduced by Blackadar and Kirchberg in \cite{blackadar1997generalized}, the concept of continuous fields of Banach spaces or C*-algebras \cite{dixmier1982} over the topological space $N\cup\set\oo$, and finally, an abstract approach due to Kurtz \cite{kurtz1970} that assumes only a set of convergent nets and also covers an evolution theorem.
While \cite{blackadar1997generalized} is concerned with generalizations of the notion of inductive limit, \cite{dixmier1982, kurtz1970} generalize to a setting where one has a certain notion of convergence already given.
We will see that our setup of \jt-convergent nets and limit space always defines the structures studied in the latter two articles.

In \cite{blackadar1997generalized}, Blackadar and Kirchberg generalize inductive limits of C*-algebras by relaxing almost all properties (e.g., linearity, multiplicativity) of the connecting maps to asymptotic versions with the exception of the strict transitivity property $\j nl = \j nm\circ \j ml$ if $n>m>l$ which is required to hold always.
They often specialize to the case where the connecting maps are completely positive linear contractions that are asymptotically multiplicative (in the same sense as in \cref{eq:asymmor}).
While they discuss a notion similar to what we call \jt\jt-convergence, they do not seem to be interested in the convergence of dynamics.
Instead, they apply their setup to discuss different classes of C*-algebras that arise as (generalized) inductive limits of finite-dimensional algebras.
Our discussion of soft inductive limits shows that the strict transitivity of the connecting maps is not essential, and it would be interesting to discuss this in the context of finite-dimensional approximations of C*-algebras.
One should try to answer the question: Is the class of C*-algebras that arise as soft inductive limits of finite-dimensional C*-algebras really larger?
We also mention recent works \cite{courtney2023nuclearity,courtney2023completely} building on and further generalizing the work of Blackadar and Kirchberg.

The concept of continuous fields of C*-algebras is often used for studying limiting phenomena such as the classical and mean-field limit \cite{rieffel1990deformation, Drago2022}.
For this, one takes as the topological space $\Bar N = N\cup \set\oo$ equipped with the order topology, i.e., the topology generated by order intervals $(n,\oo]$, $n\in N$, where $N$ is is the directed set (typically $((0,1],\ge)$ or $(\NN,\le)$).
In a sense, the idea is to generalize the fact that a net $(x_\lambda)_{\lambda\in\Lambda}$ in a topological space $X$ converges to a point $x_\oo\in X$ if and only if the function $\Bar\Lambda\in\lambda\mapsto x_\lambda\in X$ is continuous ($\Bar\Lambda=\Lambda\cup\set\oo$ is again equipped with the order topology).
A continuous field of Banach spaces \cite[Ch.~10]{dixmier1982} over a topological space $T$ is a collection of Banach spaces $E_t$, $t\in T$, together with a specified subspace $\Gamma\subset\prod_{t\in T} E_t$ of "continuous vector fields" (which we call "convergent nets" if $T=\Bar N$).
This subspace is assumed to satisfy the axioms (a) $\norm{x_t}$ is continuous in $t\in T$, (b) the set of $x_s$ with $(x_t)\in\Gamma$ is dense in $E_s$ for all $s\in T$ and (c) if $(x_t)\in\prod_{t\in T} E_t$ is such that for every $\eps>0$ and every $t_0\in T$ there is a $(y_t)\in\Gamma$ such that $\norm{x_s-y_s}<\eps$ for all $s$ in a neighborhood of $t_0$, then $(x_t)\in\Gamma$.
Using that the order topology is discrete except at $\oo$ and introducing the seminorm $\seminorm{x\d}=\limsup_{n\in N}\norm{x_n}$ on $\prod_{n\in\Bar N} E_n$ lets us rewrite the axioms for $X=\Bar N$ as:
\begin{enumerate}[(a)]
    \item $\norm{x_\oo}=\lim_n \norm{x_n} =\seminorm{x\d}$ for all $x\d\in\Gamma$,
    \item The set $\set{x_n\given x\d\in\Gamma}$ is dense in $E_n$ for all $n\in\Bar N$,
    \item $\Gamma$ is a seminorm-closed subspace of $\prod_{n\in\Bar N} E_n$.
\end{enumerate}
Since the sup-norm dominates the seminorm on $\prod_{n\in\Bar N}E_n$, it follows from (a) that $\Gamma$ is also a norm closed subspace.
For a continuous field of C*-algebras, one requires that all $E_n$ are C*-algebras and that
\begin{enumerate}[resume*]
    \item
        $\Gamma$ is closed under multiplication and involution.
\end{enumerate}
It is now clear that the (soft) inductive system structure implies that of a continuous field over $\Bar N$:

\begin{lem}
    Let $(E,j)$ be a (soft) inductive system of Banach spaces (resp.\ C*-algebras) over a directed set $N$.
    Then one obtains a continuous field of Banach spaces (resp.\ C*-algebras) over $\Bar N$ by adding the limit space $E_\oo$ and by defining $\Gamma$ to be the collection of \jt-convergent nets with $x_\oo =\jlim_n x_n$.
\end{lem}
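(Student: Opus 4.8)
The plan is to verify the four axioms (a)--(d) one at a time for the net space $\Gamma = \set{(x_n)_{n\in\bar N} \given (x_n)_{n\in N}\in\C(E,j),\ x_\oo = \jlim_n x_n}$, which is by construction a linear subspace of $\prod_{n\in\bar N}E_n$, since the \jt-convergent nets form a vector space and $\jlim$ is linear. The only genuinely delicate point will be axiom (c); the remaining axioms are bookkeeping on top of results already established.

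Axiom (a) is immediate from \cref{thm:j_convergence}: for a \jt-convergent net the numerical limit $\lim_n\norm{x_n}$ exists and coincides with the seminorm $\seminorm{x\d}=\limsup_n\norm{x_n}$, while the definition of the norm on the limit space, \eqref{eq:limit_space_norm}, gives $\norm{x_\oo}=\seminorm{x\d}$; chaining these equalities yields (a). For axiom (b) I would treat finite and infinite indices separately. At a finite index $n\in N$ the basic net $\j\blob n a$ is \jt-convergent and takes the value $\j nn a = a$ at index $n$ (using $\j nn=\id$), so $\set{x_n\given x\d\in\Gamma}$ is in fact all of $E_n$; at $n=\oo$ the same set is all of $E_\oo$, because by the quotient construction \eqref{eq:limit_space} every element of $E_\oo$ is by definition a \jt-limit. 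In both cases density is trivial.

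The crux is axiom (c), and the subtlety is that the seminorm $\seminorm{}$ is blind to the component at $\oo$, so closedness must be read through the Dixmier local-approximation condition specialised to the order topology on $\bar N$, which is discrete away from $\oo$. At a finite point the only open neighbourhood one needs is the singleton, and the approximation condition there is vacuous, since $\Gamma$ already attains every value at finite indices. At the single accumulation point $\oo$, a neighbourhood is a tail $(n,\oo]$, so the condition becomes: if $x\d\in\prod_{n\in\bar N}E_n$ can, for every $\eps>0$, be matched by some $y\d\in\Gamma$ with $\seminorm{x\d-y\d}<\eps$ and $\norm{x_\oo-y_\oo}<\eps$, then $x\d\in\Gamma$. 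I would conclude this in two steps. First, the $N$-part $(x_n)_{n\in N}$ lies in $\C(E,j)$ because $\C(E,j)$ is seminorm-closed by its very definition as a seminorm-closure, \eqref{eq:convergent_nets}. Second, I compare the $\oo$-components: using axiom (a) for the element $\jlim_n(y_n-x_n)\in E_\oo$, together with $y_\oo=\jlim_n y_n$, gives $\norm{y_\oo-\jlim_n x_n}=\seminorm{x\d-y\d}<\eps$, whence $\norm{x_\oo-\jlim_n x_n}<2\eps$ for every $\eps$, forcing $x_\oo=\jlim_n x_n$ and hence $x\d\in\Gamma$. The same computation, read for two genuine elements of $\Gamma$, shows $\norm{x_\oo-y_\oo}=\seminorm{x\d-y\d}$, which is exactly the statement that within $\Gamma$ seminorm control upgrades to control of the $\oo$-component, and thus that $\Gamma$ is norm-closed for the sup-norm on $\prod_{n\in\bar N}E_n$.

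Finally, axiom (d) in the C*-case is a direct consequence of the structural results of \cref{sec:cp}: products and adjoints of \jt-convergent nets are again \jt-convergent, and $\jlim$ is a $*$-homomorphism, so $(\jlim_n x_n)(\jlim_n y_n)=\jlim_n x_n y_n$ and $(\jlim_n x_n)^*=\jlim_n x_n^*$. Consequently, if $x\d,y\d\in\Gamma$ then the pointwise product $(x_n y_n)_{n\in\bar N}$ has $N$-part in $\C(\A,j)$ and $\oo$-component $x_\oo y_\oo=\jlim_n x_n y_n$, hence lies in $\Gamma$, and likewise for the involution; this is precisely closedness of $\Gamma$ under multiplication and involution. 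I expect axiom (c) to be the only place requiring care, precisely because the seminorm ignores the limit component and one must reintroduce control of it through axiom (a).
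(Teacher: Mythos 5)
Your verification is correct and is exactly the argument the paper intends: the lemma is stated without proof precisely because, after the axioms (a)--(d) are rewritten in terms of the seminorm, they reduce to properties already established for $\C(E,j)$ and $E_\oo$ (existence of $\lim_n\norm{x_n}$, surjectivity of basic nets at finite indices and of $\jlim$ at $\oo$, seminorm-closedness of $\C(E,j)$, and the C*-structure of the limit). Your added care in axiom (c) --- noting that the seminorm ignores the $\oo$-component and recovering control of $x_\oo$ via axiom (a) --- is a point the paper's informal rewriting glosses over, and you handle it correctly.
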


In \cite{kurtz1970}, the author works in an abstract setting similar to continuous fields over $\Bar N$, and sufficient conditions for convergence of dynamics are considered.
Given a net of Banach spaces, the author considers the product $\nets = \prod_{n\in N}E_n$ with the sup-norm (we use similar notations as in our work, not the one from \cite{kurtz1970}).
He now assumes a closed subspace $\C\subset\nets$ of "convergent nets" and a bounded linear operator, $\LIM: \C\to E_\oo$, to be given.
If $\LIM$ is surjective then $E_\oo \cong \C /\C_0$ with $\C_0=\ker(\LIM)$ just as in our construction.

If we assume that $(\set{E_n},\Gamma,\Bar N)$ is a continuous field of Banach spaces, then the above setting is implied:
It follows from linearity and axiom (a) that for each $x\d\in\Gamma$ the element $x_\oo$ is uniquely determined by the elements $(x_n)_{n\in N}$%
\footnote{To see this let $x\d,y\d\in\Gamma$ and assume that $x_n = y_n$ for all $n\in N$. Then $\norm{x_\oo-y_\oo}=\seminorm{x\d-y\d}=0$ and hence $x_\oo=y_\oo$.}.
We thus obtain a closed subspace $\C=\set{(x_n)_{n\in N}\given x\d\in \Gamma}\subset \nets = \prod_{n\in N}E_n$ of convergent nets and a well-defined linear operator $\LIM : \C \to E_\oo$ mapping $x\d$ to the unique $\LIM_nx_n = x_\oo$ so that $(x\d,x_\oo)\in \Gamma$.
This is, however, always implied if the continuous field arises from a (soft) inductive system, as discussed above.

In this setting, our notion of \jt\jt-convergence of a net of contractions $T\d$, $T_n\in \LL(E_n)$, may be generalized by requiring $T\d:\C\subset\C$ and $T\d\C_0\subset \C_0$ which is sufficient for the definition of a limit operator $T_\oo$.
Using this notion, one can discuss the convergence of dynamics.
As already said, \cite{kurtz1970} also covers an evolution theorem.
This theorem is similar to the direction \ref{it:net_core} $\Rightarrow$ \ref{it:generator} of our \cref{thm:evolution} and allows for more flexibility because of the general setup.
For instance, his theorem can be directly applied even if \jt\jt-convergence holds only on a closed subspace of $\C(E,j)$ (and then defines dynamics on a subspace of $E_\oo$).
In \cite{kurtz1972}, this evolution theorem and the abstract approach are used to prove a neat probabilistic generalization of the Lie-Trotter product formula using stochastic processes.
\\[0pt]

The comparison with the latter two approaches has shown that the connecting maps are not too essential for our analysis.
The connecting maps become important as they provide a way to check properties of the often intractable class of convergent nets (e.g., $\C(\A,j)$ is closed under products).
Ultimately this works because of the seminorm-density of basic sequences.
A downside of the latter two approaches for describing limit phenomena such as those considered in \cref{sec:ex} is that one needs to know  the limit space and notion of convergence are not derived notions but have to be defined from the outset.

\subsection*{Acknowledgments}

Lauritz van Luijk thanks Niklas Galke for helpful discussions and comments.

\appendix

\section{Convergence of implementors for dynamics in representations}\label{sec:appendix_implementors}

In applications, one often has explicit representations of the C*-algebras at hand.
The Hilbert spaces on which the algebras are represented often form a separate inductive system, i.e.\ we are dealing with an inductive system of C*-algebras and an inductive system of Hilbert space connected via a net of representations.
In such a setting, it is often of interest to understand not only the convergence of dynamics on the algebras but also of its implementors.

For a soft inductive system $(\H,J)$ of Hilbert spaces we ask that inner products $\ip{\psi_n}{\phi_n}$ converge for all $J$-convergent $\psi\d,\phi\d\in\C(\H,j)$.
Evaluating this condition on basic nets, one sees that it is equivalent to the convergence of $J_{nm}^*J_{nm}$ in the weak operator topology of $\LL(\H_l,\H_m)$ as $n\to\oo$ for sufficiently large $m\in N$, and holds automatically if the connecting maps are isometries.
In this case the limit space $\H_\oo$ is a Hilbert space with the inner product $\ip{\psi_\oo}{\phi_\oo}=\lim_n \ip{\psi_n}{\phi_n}$.
For any $J$-convergent net $\psi\d$, the "bras" $\bra{\psi\d}$ are $J^*$-convergent and converge to $\bra{\psi_\oo}$, where we use Dirac notation $\bra{\psi}=\ip\psi{\placeholder}$.

\begin{defin}
\label{def:comp}
    Let $(\H,J)$ be a soft inductive system of Hilbert spaces and let $(\A,j)$ be a soft inductive system of C*-algebras (over the same directed set).
    A net $\pi\d$ of *-representations $\pi_n:\A_n\to\LL(\H_n)$ is {\bf compatible}, if for any $a\d\in\C(\A,j)$ and $\psi\d\in\C(j,J)$ the net vectors $\pi\d(a\d)\psi\d = (\pi_n(a_n)\psi_n)_n$ is $J$-convergent\footnote{In the sense of the previous sections, $\pi\d$ maps \jt-convergent nets to $JJ$-convergent nets.}.
    In this case one obtains a *-representation $\pi_\oo$ of $\A_\oo$ on $\H_\oo$ by assigning to $a_\oo$ the limit operation of $\pi\d(a\d)$.
\end{defin}

A natural source of compatible nets of representations of inductive systems of Hilbert space are projective nets of states on strict inductive systems of C*-algebras $(\A, j)$ \footnote{This is not, in general, true for soft C*-inductive systems.}:
\begin{prop}\label{prop:strictccomp}
    Given a strict inductive system of C*-algebras $(\A,j)$ and a state $\omega_\oo$ on $\A_\oo$, we consider the net of states $\omega_n = \omega_{\infty}\circ \j\oo n$. Then, a compatible net of representations $\pi\d$ exists on a strict inductive system of Hilbert spaces $(\H,J)$ induced by $\omega\d$.
    Moreover, each $\H_{n}$ is cyclic for $\pi_{n}(\A_n)$ and there is a $J$-convergent net of cyclic unit vectors $\Omega\d\in\C(\H,J)$ that implements $\omega\d$:
    \begin{align}\label{eq:strictccompgns}
        \omega_{n}(a_{n}) & = \langle\Omega_{n}, \pi_{n}(a_{n})\Omega_{n} \rangle & \forall n, a_{n}\in\A_n.
    \end{align}
\end{prop}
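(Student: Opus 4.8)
The plan is to build the required structure by performing the GNS construction at each level of the system and then showing that the level-wise GNS data assemble into a strict inductive system of Hilbert spaces on which $\pi\d$ is automatically compatible. First I would, for each $n$, form the left kernel $N_n=\set{a\in\A_n\given \omega_n(a^*a)=0}$, let $\H_n$ be the completion of $\A_n/N_n$ under $\ip{[a]_n}{[b]_n}=\omega_n(a^*b)$, take the GNS representation $\pi_n(a)[b]_n=[ab]_n$, and put $\Omega_n=[\1_n]_n$. Because each $\omega_n$ is a state, $\Omega_n$ is a cyclic unit vector and \eqref{eq:strictccompgns} holds at once, since $\ip{\Omega_n}{\pi_n(a_n)\Omega_n}=\omega_n(a_n)$.

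The key step---and the one I expect to be the main obstacle---is to produce connecting maps $J_{nm}\colon\H_m\to\H_n$ making $(\H,J)$ a \emph{strict} inductive system. I would define them on the dense subspace $\A_m/N_m$ by $J_{nm}[a]_m=[\j nm a]_n$. Two facts make this work: the connecting maps $\j nm$ of the strict system are unital $*$-homomorphisms, and the net $\omega\d$ is projectively consistent, $\omega_m=\omega_\oo\circ\j\oo m=\omega_\oo\circ\j\oo n\circ\j nm=\omega_n\circ\j nm$ (using $\j\oo m=\j\oo n\j nm$). Together they give
\[
    \ip{J_{nm}[a]_m}{J_{nm}[b]_m}_n=\omega_n\big(\j nm(a)^*\j nm(b)\big)=\omega_n\big(\j nm(a^*b)\big)=\omega_m(a^*b)=\ip{[a]_m}{[b]_m}_m,
\]
so $J_{nm}$ is isometric and extends to $\H_m$, and strict transitivity $J_{nl}=J_{nm}J_{ml}$ is inherited directly from $\j nl=\j nm\j ml$. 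This is exactly the point where \emph{strictness together with exact multiplicativity} is indispensable: for a merely soft C*-system the $\j nm$ are only asymptotically multiplicative and asymptotically transitive, so $J_{nm}$ would not be a well-defined isometry and no honest inductive system of Hilbert spaces would result (cf.\ the footnote). Since the $J_{nm}$ are isometries, the inner-product-convergence condition preceding \cref{def:comp} holds automatically and $\H_\oo$ is a Hilbert space.

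It remains to handle the cyclic vectors and compatibility, which I expect to be routine given the above. Because $J_{nm}\Omega_m=[\j nm\1_m]_n=[\1_n]_n=\Omega_n$ for all $n\ge m$, the net $\Omega\d$ agrees with the basic net $J_{\blob m}\Omega_m$ on the cofinal set $\set{n\ge m}$ and is therefore $J$-convergent. For compatibility I would verify that $\pi\d(a\d)\psi\d$ is $J$-convergent for every $a\d\in\C(\A,j)$ and $\psi\d\in\C(\H,J)$ by reducing to basic nets: the bilinear assignment $(a\d,\psi\d)\mapsto\pi\d(a\d)\psi\d$ is uniformly bounded with $\seminorm{\pi\d(a\d)\psi\d}\le\norm{a\d}_\nets\norm{\psi\d}_\nets$ and is seminorm-continuous in each slot, while $\C(\H,J)$ is seminorm-closed and basic nets are seminorm-dense by construction. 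Hence it suffices to take $a\d=\j\blob p a_p$ and $\psi\d=J_{\blob p}\eta_p$ with a common index $p$ (permitted by item~\ref{it:basics_startpoint} of \cref{thm:density_thm}). For $n\ge p$ the homomorphism property gives the intertwining $\pi_n(\j np a_p)J_{np}=J_{np}\pi_p(a_p)$, so that $\pi_n(a_n)\psi_n=J_{np}\big(\pi_p(a_p)\eta_p\big)$; this is the basic net $J_{\blob p}(\pi_p(a_p)\eta_p)$ and hence $J$-convergent. Thus $\pi\d$ is compatible in the sense of \cref{def:comp}, and the induced $\pi_\oo$ represents $\A_\oo$ on $\H_\oo$, completing the plan.
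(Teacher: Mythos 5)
Your construction is essentially the one the paper uses (level-wise GNS, connecting maps determined by $J_{nm}\pi_m(a_m)\Omega_m=\pi_n(\j nm(a_m))\Omega_n$, reduction to basic nets), but as written it proves only a special case of the proposition. In this paper a \emph{strict} inductive system of C*-algebras has connecting maps that are unital completely positive \emph{contractions} satisfying exact transitivity $\j nl=\j nm\j ml$ and only the \emph{asymptotic} multiplicativity \eqref{eq:asymmor}; ``strict'' qualifies the transitivity, not the multiplicativity, so the $\j nm$ need not be $*$-homomorphisms. Your key computation
\[
  \ip{J_{nm}[a]_m}{J_{nm}[b]_m}_n=\omega_n\big(\j nm(a)^*\j nm(b)\big)=\omega_n\big(\j nm(a^*b)\big)=\omega_m(a^*b)
\]
uses multiplicativity of $\j nm$ in the middle step and therefore fails in general. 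One only has the Kadison--Schwarz inequality $\j nm(a)^*\j nm(a)\le\j nm(a^*a)$, which together with projective consistency $\omega_n\circ\j nm=\omega_m$ makes $J_{nm}$ a well-defined \emph{contraction} rather than an isometry; this is exactly how the paper argues, and \cref{rem:strictccom} records that isometries are obtained precisely in your special case of $*$-homomorphic connecting maps.

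Two downstream steps then need repair. First, since the $J_{nm}$ are not isometries, the convergence of inner products of $J$-convergent nets is no longer automatic; the paper obtains it by bounding $\omega_{n}(j_{nk}(a_{k}^{*})j_{nl}(b_{l}))-\omega_{m}(j_{mk}(a_{k}^{*})j_{ml}(b_{l}))$ by $\omega_\oo$ applied to the asymptotic-multiplicativity defects, i.e., by invoking \eqref{eq:asymmor}. Second, the intertwining relation $\pi_n(\j np a_p)J_{np}=J_{np}\pi_p(a_p)$ that you use to prove compatibility is \eqref{eq:strictmorcond} and likewise holds only for $*$-homomorphic connecting maps; in the general case compatibility must be extracted from the inner-product convergence together with cyclicity of the $\Omega_n$, as the paper does. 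The remaining ingredients of your argument (cyclicity and \eqref{eq:strictccompgns}, $J_{nm}\Omega_m=\Omega_n$, strict transitivity $J_{nl}=J_{nm}J_{ml}$, $J$-convergence of $\Omega\d$, and the reduction of compatibility to basic nets via uniform boundedness and seminorm density) are correct and match the paper.
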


\begin{proof}
    We apply the GNS construction to the net of states $\omega_{\infty\blob}$ which yields triples $(\H\d, \pi\d, \Omega\d)$ with each $\Omega_{n}$ being cyclic for $\pi_{n}(\A_n)$. By construction, we have \eqref{eq:strictccompgns} and, by strictness, we observe:
    \begin{align*}
        \omega_{n} \circ \j nm & = \omega_{m} & \forall m\leq n.
    \end{align*}
    Therefore, we can define linear maps $J_{nm}:\H_{m}\rightarrow\H_{n}$ by\footnote{This definition enforces the projective consistency condition, $\omega_{n} \circ \j nm = \omega_{m}$, because it entails $J_{nm}\Omega_{m} = \Omega_{n}$ and $J_{nm}^{*}\Omega_{n}=\Omega_{m}$.}:
    \begin{align*}
        J_{nm}\pi_{m}(a_{m})\Omega_{m} & = \pi_{n}(\j nm (a_{m}))\Omega_{n} & \forall m\leq n,
    \end{align*}
    which entails $J_{nm}J_{ml} = J_{nl}$ and $J_{nm}\Omega_{m} = \Omega_{n}$. Each $J_{nm}$ is linear by the linearity of $\pi_{n}$ and $\j nm$, and it is a contraction because of Kadison's inequality for completely positive contractions:
    \begin{align*}
        \norm{J_{nm}\pi_{m}(a_{m})\Omega_{m}}^{2}
        & = \langle J_{nm}\pi_{m}(a_{m})\Omega_{m}, J_{nm}\pi_{m}(a_{m})\Omega_{m}\rangle \\
        & = \langle\pi_{n}(\j nm (a_{m}))\Omega_{n}, \pi_{n}(\j nm (a_{m}))\Omega_{n} \rangle = \langle\Omega_{n}, \pi_{n}(\j nm (a_{m}^{*})\j nm (a_{m}))\Omega_{n}\rangle \\ 
        & \leq \langle\Omega_{n}, \pi_{n}(\j nm (a_{m}^{*}a_{m}))\Omega_{n}\rangle = \omega_{n}(\j nm (a_{m}^{*}a_{m})) \\
        & = \omega_{m}(a_{m}^{*}a_{m}) = \norm{\pi_{m}(a_{m})\Omega_{m}}^{2}.
    \end{align*}
    The convergence of scalar products is implied by the asymptotic morphisms property \eqref{eq:asymmor} of the connecting maps $j$:
    \begin{align*}
        & |\ip{J_{nk}\pi_{k}(a_{k})\Omega_{k}}{J_{nl}\pi_{l}(b_{l})\Omega_{l}}-\ip{J_{mk}\pi_{k}(a_{k})\Omega_{k}}{J_{ml}\pi_{l}(b_{l})\Omega_{l}}| \\
        & = |\omega_{n}(j_{nk}(a_{k}^{*})j_{nl}(b_{l}))-\omega_{m}(j_{mk}(a_{k}^{*})j_{ml}(b_{l}))| \\
        & \leq |\omega_{\infty}(j_{\infty n}(j_{nk}(a_{k}^{*})j_{nl}(b_{l})\!)\!-\!j_{\infty k}(a_{k}^{*})j_{\infty l}(b_{l})\!)|\!+\!|\omega_{\infty}(j_{\infty m}(j_{mk}(a_{k}^{*})j_{ml}(b_{l})\!)\!-\!j_{\infty k}(a_{k}^{*})j_{\infty l}(b_{l})\!)| \\
        & \xrightarrow{n,m\rightarrow\infty}0
    \end{align*}
    for all $a_{k}\in\A_{k}$ and $b_{l}\in\A_{l}$, which implies the results because each $\Omega_{n}$ is cyclic.
\end{proof}

\begin{rem}\label{rem:strictccom}
    We will obtain isometries as connecting maps for the compatible system of Hilbert spaces if the connecting maps are *-homomorphisms.
    Moreover, the compatibility between the connecting maps $j$ and $J$ becomes independent of the net of GNS vectors $\Omega\d$:
    \begin{align} \label{eq:strictmorcond}
        \pi_{n}(\j nm (a_{m}))J_{nm} & = J_{nm}\pi_{m}(a_{m}).
    \end{align}
\end{rem}
\vspace{11pt}

Now, let us assume that we are given a net of endomorphism semigroups $\TT\d(t)$ on a strict inductive system $(\A,j)$ of C*-algebras, consisting of unital *-endomorphisms $\TT_n(t)$ on each $\A_n$, together with a compatible net of representations $\pi\d$ on a strict inductive system of Hilbert spaces $(\H, J)$.
We assume further that each $\TT_{n}(t)$ is implemented by a strongly continuous semigroup of bounded operators $V_{n}(t)\in\LL(\H_{n})$ with $\sup_{n,t}\norm{V_{n}(t)}:=M<\infty$ in the sense that (cp.~\eqref{eq:strictmorcond}):
\begin{align}
\label{eq:impdyn}
V_{n}(t)\pi_{n}(a_{n}) & = \pi_{n}(\TT_{n}(t)(a_{n}))V_{n}(t).
\end{align}
\begin{rem}\label{rem:impdyn}
For any two implementing semigroups $V_{n}(t)$, $W_{n}(t)$, it follows from \eqref{eq:impdyn} that $V_{n}(t)^{*}W_{n}(t)\in\pi_{n}(\A_{n})'$ and $V_{n}(t)W_{n}(t)^{*}\in\pi_{n}(\TT_{n}(t)(\A_{n}))'$. The operators $V_{n}(t)$ are called units of $\TT(t)$ in the context of $E_{0}$-semigroups \cite{arveson2003ncdynamics}.\\
In accordance with Prop.~\ref{prop:strictccomp}, implementing semigroups $V_{n}(t)$ can be obtained from a net, $\omega\d=\omega_\oo\circ \j\oo \blob$, of $\TT\d(t)$-invariant states by defining $V_{n}(t)$ via:
\begin{align*}
V_{n}(t)\pi_{n}(a_{n})\Omega_{n} & = \pi_{n}(\TT_{n}(t)(a_{n}))\Omega_{n},
\end{align*}
which enforces $V_{n}(t)^{*}V_{n}(t)=\1$ and the invariance of GNS vectors, i.e., $V_{n}(t)\Omega_{n} = \Omega_n = V_{n}(t)^{*}\Omega_{n}$, which follows from $\TT_{n}(t)(\1_{n})=\1_{n}$ and $\ip{V_{n}(t)^{*}\Omega_{n}}{\pi_{n}(a_{n})\Omega_{n}} = \ip{\Omega_{n}}{\pi_{n}(a_{n})\Omega_{n}}$. The strong continuity of $V_{n}(t)$ follows from:
\begin{align*}
\norm{(V_{n}(t)-\1)\pi_{n}(a_{n})\Omega_{n}} & \leq \norm{(\TT_{n}(t)-\1)a_{n}} & \forall a_{n}&\in\A_{n}.
\end{align*}
\end{rem}

\begin{lem} \label{lem:strictcimp}
Let $\TT\d(t)$ be a net of dynamical semigroups on $(\A,j)$ and $V\d(t)$ be an implementing net of semigroups on $(\H, J)$ with respect to a compatible net of representations $\pi\d$ induced by a state $\omega_{\oo}$. Then, we have the following estimates for any $a_{k}\in\A_{k}$:
\begin{itemize}
	\item[1.]
	\begin{align} \label{eq:noninvstateest}
        & \norm{J_{nm}V_{m}(t)\pi_{m}(\j mk(a_{k})\!)\Omega_{m}\!-\!V_{n}(t)\pi_{n}(\j nk(a_{k})\!)\Omega_{n}} \\ \nonumber
        & \leq \norm{\j nm(\TT_{m}(t)(\j mk(a_{k})\!)\!)\!-\!\TT_{n}(t)(\j nk(a_{k})\!)}\!+\!\norm{a_{k}}(\norm{(V_{n}(t)\!-\!\1)\Omega_{n}}\!+\!\norm{(V_{m}(t)\!-\!\1)\Omega_{n}}).
	\end{align}
	If each $\omega_{n}$ is $\TT_{n}(t)$-invariant and $V_{n}(t)$ is induced by $\omega_{n}$, this reduces to:
	\begin{align} \label{eq:invstateest}
        & \norm{J_{nm}V_{m}(t)\pi_{m}(\j mk(a_{k})\!)\Omega_{m}\!-\!V_{n}(t)\pi_{n}(\j nk(a_{k})\!)\Omega_{n}} \\ \nonumber
        & \leq \norm{\j nm(\TT_{m}(t)(\j mk(a_{k})\!)\!)\!-\!\TT_{n}(t)(\j nk(a_{k})\!)}.
	\end{align}
	\item[2.] If the connecting maps of $(\A,j)$ are *-homomorphisms and $\TT\d(t)$ is implemented according to \eqref{eq:impdyn}, we have:
	\begin{align} \label{eq:morstateest}
        & \norm{J_{nm}V_{m}(t)\pi_{m}(\j mk(a_{k})\!)\Omega_{m}\!-\!V_{n}(t)\pi_{n}(\j nk(a_{k})\!)\Omega_{n}} \\ \nonumber
        & \leq \norm{\j nm(\TT_{m}(t)(\j mk(a_{k})\!)\!)\!-\!\TT_{n}(t)(\j nk(a_{k})\!)}\!+\!\norm{a_{k}}\norm{J_{nm}V_{m}(t)\Omega_{m}\!-\!V_{n}(t)\Omega_{n}}.
	\end{align}
\end{itemize}

\end{lem}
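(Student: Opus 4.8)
The plan is to reduce, in each of the three estimates, the two vectors whose difference is being measured to the common form $\pi_n(c)\,\xi$ with $c\in\A_n$ and $\xi\in\H_n$, and then to collect terms by the triangle inequality together with contractivity of $\pi_n$, of the semigroups $\TT_m(t)$, of the connecting maps $\j mk$, and of $J_{nm}$. The three ingredients are the implementing relation \eqref{eq:impdyn}, which commutes $V_m(t)$ past $\pi_m$ at the cost of inserting $\TT_m(t)$ inside the representation; the defining relation $J_{nm}\pi_m(c)\Omega_m=\pi_n(\j nm(c))\Omega_n$ from \cref{prop:strictccomp}; and strict transitivity $\j nm\circ\j mk=\j nk$. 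Abbreviating $b_m=\j mk(a_k)$, strictness gives $\j nk(a_k)=\j nm(b_m)$, and $\norm{b_m}\le\norm{a_k}$ since $\j mk$ is a contraction.

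For \eqref{eq:noninvstateest} I would first use \eqref{eq:impdyn} on both vectors to obtain $J_{nm}\pi_m(\TT_m(t)(b_m))V_m(t)\Omega_m$ and $\pi_n(\TT_n(t)(\j nm(b_m)))V_n(t)\Omega_n$, and then split $V_m(t)\Omega_m=\Omega_m+(V_m(t)-\1)\Omega_m$ and likewise at $n$. On the $\Omega$-parts the relation of \cref{prop:strictccomp} turns the first vector into $\pi_n(\j nm(\TT_m(t)(b_m)))\Omega_n$ and leaves the second as $\pi_n(\TT_n(t)(\j nm(b_m)))\Omega_n$; since $\pi_n$ is a contraction and $\Omega_n$ a unit vector, their difference is bounded by the principal operator-norm term. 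The two remainders $J_{nm}\pi_m(\TT_m(t)(b_m))(V_m(t)-\1)\Omega_m$ and $\pi_n(\TT_n(t)(\j nm(b_m)))(V_n(t)-\1)\Omega_n$ are bounded, using $\norm{\TT_m(t)(b_m)}\le\norm{a_k}$ and contractivity of $J_{nm}$, by $\norm{a_k}\norm{(V_m(t)-\1)\Omega_m}$ and $\norm{a_k}\norm{(V_n(t)-\1)\Omega_n}$ respectively (the $V_m$-remainder naturally carries $\Omega_m$ rather than $\Omega_n$). Estimate \eqref{eq:invstateest} is then immediate, because in the invariant-state situation of \cref{rem:impdyn} one has $V_m(t)\Omega_m=\Omega_m$ and $V_n(t)\Omega_n=\Omega_n$, so both remainders vanish.

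For \eqref{eq:morstateest} the key is that, when the connecting maps are $*$-homomorphisms, \cref{rem:strictccom} upgrades the defining relation to the operator intertwiner \eqref{eq:strictmorcond}, $J_{nm}\pi_m(c)=\pi_n(\j nm(c))J_{nm}$ on all of $\H_m$. Applying \eqref{eq:impdyn} and then \eqref{eq:strictmorcond} rewrites the first vector as $\pi_n(\j nm(\TT_m(t)(b_m)))J_{nm}V_m(t)\Omega_m$ and the second as $\pi_n(\TT_n(t)(\j nm(b_m)))V_n(t)\Omega_n$. Adding and subtracting $\pi_n(\j nm(\TT_m(t)(b_m)))V_n(t)\Omega_n$ produces, on one hand, $\pi_n(\j nm(\TT_m(t)(b_m)))(J_{nm}V_m(t)\Omega_m-V_n(t)\Omega_n)$, whose norm is at most $\norm{a_k}\norm{J_{nm}V_m(t)\Omega_m-V_n(t)\Omega_n}$, and on the other hand $\pi_n(\j nm(\TT_m(t)(b_m))-\TT_n(t)(\j nm(b_m)))V_n(t)\Omega_n$, whose norm is the principal term multiplied by $\norm{V_n(t)\Omega_n}$.

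The main obstacle is the bookkeeping in \eqref{eq:morstateest}: the add-and-subtract must be arranged so that the cyclic-vector discrepancy $\norm{J_{nm}V_m(t)\Omega_m-V_n(t)\Omega_n}$ is isolated with coefficient $\norm{a_k}$ while the principal term keeps coefficient one, and this last step requires $\norm{V_n(t)\Omega_n}\le1$. That bound is the only place where the norm of an implementor on a GNS vector enters, and it holds precisely when the implementors are isometric—as they are in the applications, for instance the unitaries $V_n(t)=\pi_n(U_n(t))$ of \cref{sec:quantscale} or the isometries built from invariant states in \cref{rem:impdyn}. A minor point is keeping the two remainder indices in \eqref{eq:noninvstateest} straight.
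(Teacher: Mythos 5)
Your proof is correct and follows essentially the same route as the paper's: commute $V$ past $\pi$ via \eqref{eq:impdyn}, split off $(V-\1)\Omega$ for the first estimate, and use the intertwiner \eqref{eq:strictmorcond} with an add-and-subtract for the third (you insert $\pi_n(\j nm(\TT_m(t)(b_m)))V_n(t)\Omega_n$ where the paper inserts $\pi_n(\TT_n(t)(\j nk(a_k)))J_{nm}V_m(t)\Omega_m$, an immaterial variation). Your two side remarks are also accurate: the $\Omega_n$ in the last term of \eqref{eq:noninvstateest} should indeed read $\Omega_m$, and the unit coefficient on the principal term in \eqref{eq:morstateest} does implicitly use $\norm{V_n(t)\Omega_n}\le 1$.
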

\begin{proof}
The first inequality follows from:
\begin{align*}
& \norm{J_{nm}V_{m}(t)\pi_{m}(\j mk(a_{k})\!)\Omega_{m}\!-\!V_{n}(t)\pi_{n}(\j nk(a_{k})\!)\Omega_{n}} \\
& = \norm{J_{nm}\pi_{m}(\TT_{m}(t)(\j mk(a_{k})\!)\!)V_{m}(t)\Omega_{m}\!-\!\pi_{n}(\TT_{n}(t)(\j nk(a_{k})\!)\!)V_{n}(t)\Omega_{n}} \\
& \leq \norm{\pi_{n}(\j nm(\TT_{m}(t)(\j mk(a_{k})\!)\!)\!)\Omega_{m}\!-\!\pi_{n}(\TT_{n}(t)(\j nk(a_{k})\!)\!)\Omega_{n}} \\
& \hspace{0.5cm}+\norm{J_{nm}\pi_{m}(\TT_{m}(t)(\j mk(a_{k})\!)\!)(V_{m}(t)-\1)\Omega_{m}}+\norm{\pi_{n}(\TT_{n}(t)(\j nk(a_{k})\!)\!)(V_{n}(t)-\1)\Omega_{n}},
\end{align*}
using the uniform bounds on $J_{\blob\blob}$, $\pi\d$, $\j\blob\blob$ and $\TT\d(t)$. The second inequality follows immediately because the $V_{n}(t)\Omega_{n}=\Omega_{n}$. The third inequality follows from \eqref{eq:strictmorcond} which gives:
\begin{align*}
& \norm{J_{nm}V_{m}(t)\pi_{m}(\j mk(a_{k})\!)\Omega_{m}\!-\!V_{n}(t)\pi_{n}(\j nk(a_{k})\!)\Omega_{n}} \\
& = \norm{\pi_{m}(\j nm(\TT_{m}(t)(\j mk(a_{k})\!)\!)\!)J_{nm}V_{m}(t)\Omega_{m}\!-\!\pi_{n}(\TT_{n}(t)(\j nk(a_{k})\!)\!)V_{n}(t)\Omega_{n}} \\
& \leq \norm{\pi_{n}(\j nm(\TT_{m}(t)(\j mk(a_{k})\!)\!)\!-\!\TT_{n}(t)(\j nk(a_{k})\!)\!)J_{nm}V_{m}(t)\Omega_{m}} \\
&\hspace{0.5cm}+\norm{\pi_{n}(\TT_{n}(t)(\j nk(a_{k})\!)\!)(J_{nm}V_{m}(t)\Omega_{m}\!-\!V_{n}(t)\Omega_{n})}.
\end{align*}
\end{proof}

Thus, if a dynamical semigroup of endomorphisms $\TT\d(t)$ is convergent in the sense of \cref{thm:evolution}, we can deduce the convergence of the implementing semigroups $V\d(t)$ with respect to $(\H,J)$, if the latter is induced by a net of (asymptotically) $\TT\d(t)$-invariant states.

\begin{prop}[Convergence of implementors for (asymptotically) invariant states]
\label{prop:strictcimp}
Let $\TT\d(t)$ be a net of dynamical semigroups on $(\A,j)$ and $V\d(t)$ be an implementing net of semigroups on $(\H, J)$ with respect to a compatible net of representations $\pi\d$ induced by a state $\omega_{\oo}$.\\
Then, the implementing semigroup $V\d(t)$ is $JJ$-convergent if $\TT\d(t)$ is \jt\jt-convergent and the $J$-convergent sequence of unit vectors $\Omega\d$ implementing $\omega\d$ is asymptotically $V\d(t)$-invariant, i.e., $\lim_{n}\norm{(V_{n}(t)-\1)\Omega_{n}}=0$.
\end{prop}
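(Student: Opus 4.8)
The plan is to prove, for each fixed $t\ge0$, that the uniformly bounded net $V\d(t)$ (recall $\sup_{n,t}\norm{V_n(t)}=M<\oo$) maps $J$-convergent nets to $J$-convergent nets, which is exactly the asserted $JJ$-convergence. By the basic-net criterion of item~\ref{it:basics_suffice} it is enough to treat $J$-basic nets. Each $\Omega_n$ is cyclic for $\pi_n(\A_n)$, so the vectors $\pi_k(a_k)\Omega_k$, $a_k\in\A_k$, are dense in $\H_k$; since every $J_{nk}$ is a contraction, an arbitrary basic net $J_{\blob k}\phi_k$ is approximated in $\norm{}_\nets$ by the basic nets $J_{\blob k}(\pi_k(a_k)\Omega_k)$, and, $V\d(t)$ being bounded by $M$, the same holds after applying $V\d(t)$. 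As $\C(\H,J)$ is $\norm{}_\nets$-closed, it therefore suffices to prove $J$-convergence of $V\d(t)\,J_{\blob k}(\pi_k(a_k)\Omega_k)$ for each $k$ and each $a_k\in\A_k$.

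The $n$th entry of this net is, by the defining relation of the $J_{nm}$ from \cref{prop:strictccomp},
\[
    V_n(t)\,J_{nk}\pi_k(a_k)\Omega_k = V_n(t)\,\pi_n(\j nk(a_k))\Omega_n,
\]
so by the convergence criterion \eqref{eq:j_convergence_2} I must establish
\[
    \lim_{n\gg m}\norm{V_n(t)\pi_n(\j nk(a_k))\Omega_n - J_{nm}V_m(t)\pi_m(\j mk(a_k))\Omega_m}=0.
\]
This difference is precisely the quantity estimated in \eqref{eq:noninvstateest} of \cref{lem:strictcimp}, which bounds it by
\[
    \norm{\j nm(\TT_m(t)(\j mk(a_k)))-\TT_n(t)(\j nk(a_k))} + \norm{a_k}\bigl(\norm{(V_n(t)-\1)\Omega_n}+\norm{(V_m(t)-\1)\Omega_m}\bigr).
\]

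It then remains to send $n\gg m$ and check that each summand vanishes. For the first summand I would use that $\TT\d(t)$ is \jt\jt-convergent: applying the formula \eqref{eq:jj_conv} to the \jt-convergent basic net $\j\blob k(a_k)$ and using strictness $\j nk=\j nm\j mk$ gives $\lim_{n\gg m}\norm{\j nm(\TT_m(t)(\j mk(a_k)))-\TT_n(t)(\j nk(a_k))}=0$. For the two invariance terms I would unfold $\lim_{n\gg m}=\lim_m\limsup_n$: the term $\norm{(V_n(t)-\1)\Omega_n}$ already vanishes under $\limsup_n$ by the hypothesis $\lim_n\norm{(V_n(t)-\1)\Omega_n}=0$, while $\norm{(V_m(t)-\1)\Omega_m}$ does not depend on $n$ and is killed by the outer $\lim_m$ via the same hypothesis. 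Together with the reduction of the first paragraph, this yields $JJ$-convergence of $V\d(t)$.

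Since the analytic content is already packaged in \cref{lem:strictcimp}, the argument is essentially an application, and the only genuine obstacle is the bookkeeping: making the density reduction to the GNS-vector basic nets precise (via cyclicity of $\Omega\d$ together with contractivity of the $J_{nm}$ and uniform boundedness of $V\d(t)$), and correctly resolving the iterated limit $\lim_m\limsup_n$ for the two invariance terms, where the $n$-dependent and $n$-independent contributions must be handled separately.
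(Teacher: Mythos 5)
Your argument is correct and follows exactly the route the paper intends: Proposition~\ref{prop:strictcimp} is stated as a direct consequence of the estimate \eqref{eq:noninvstateest} in Lemma~\ref{lem:strictcimp}, with the first term killed by \jt\jt-convergence of $\TT\d(t)$ via \eqref{eq:jj_conv} and the remaining terms by the asymptotic invariance hypothesis under $\lim_m\limsup_n$. You also correctly supply the two reductions the paper leaves implicit (to basic nets via item~\ref{it:basics_suffice}, and then to GNS vectors $\pi_k(a_k)\Omega_k$ via cyclicity, contractivity of the $J_{nm}$, uniform boundedness of $V\d(t)$, and $\norm{}_\nets$-closedness of $\C(\H,J)$), and you correctly read the last term of \eqref{eq:noninvstateest} as $\norm{(V_m(t)-\1)\Omega_m}$.
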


In situations where no $\TT\d(t)$-invariance is assumed on the net of states $\omega\d$ inducing $(\H,J)$, it is sufficient to show that the implementing semigroup preserves $J$-convergence of the net of implementing unit vectors $\Omega\d$.

\begin{prop}[Convergence of implementors]
\label{prop:strictcimpmor}
Let $\TT\d(t)$ be a net of dynamical semigroups on $(\A,j)$ and $V\d(t)$ be an implementing net of semigroups on $(\H, J)$ with respect to a compatible net of representations $\pi\d$ induced by a state $\omega_{\oo}$.\\
Then, the implementing semigroup $V\d(t)$ is $JJ$-convergent if $\TT\d(t)$ is \jt\jt-convergent and $V\d(t)$ preserves the $J$-convergence of the net of unit vectors $\Omega\d$ implementing $\omega\d$.
\end{prop}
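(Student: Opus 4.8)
The plan is to establish $JJ$-convergence of $V\d(t)$ by testing it on a seminorm-dense family of basic nets in $\C(\H,J)$ and then bootstrapping to all of $\C(\H,J)$ via the uniform bound $M=\sup_{n,t}\norm{V_n(t)}<\oo$. The natural family is the cyclic orbit of the GNS data: since each $\Omega_n$ is cyclic for $\pi_n(\A_n)$ (see \cref{prop:strictccomp}), the vectors $\pi_k(a_k)\Omega_k$, $a_k\in\A_k$, are dense in $\H_k$, so — using that the $J_{nk}$ are contractions — the basic nets $J_{\blob k}\big(\pi_k(a_k)\Omega_k\big)$ are seminorm-dense among all basic nets, hence seminorm-dense in $\C(\H,J)$. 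Recalling the identity $J_{nk}\pi_k(a_k)\Omega_k=\pi_n(\j nk(a_k))\Omega_n$ from the construction of the $J_{nm}$, the $n$-th entry of $V\d(t)$ applied to such a basic net is exactly $V_n(t)\pi_n(\j nk(a_k))\Omega_n$, the quantity controlled by \eqref{eq:morstateest}.

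First I would apply \eqref{eq:morstateest} of \cref{lem:strictcimp} and argue that both terms on its right-hand side vanish in the limit $n\gg m$. The first term $\norm{\j nm(\TT_m(t)(\j mk(a_k)))-\TT_n(t)(\j nk(a_k))}$ tends to zero because $\TT\d(t)$ is \jt\jt-convergent: applying the characterization \eqref{eq:jj_conv} to the basic (hence \jt-convergent) net $\j\blob k a_k$ gives $\lim_{n\gg m}\norm{(\j nm\TT_m(t)-\TT_n(t)\j nm)(\j mk a_k)}=0$, and strictness $\j nm\j mk=\j nk$ identifies this with the first term. The second term $\norm{a_k}\,\norm{J_{nm}V_m(t)\Omega_m-V_n(t)\Omega_n}$ tends to zero precisely by the hypothesis that $V\d(t)$ preserves the $J$-convergence of $\Omega\d$: then $(V_n(t)\Omega_n)_n\in\C(\H,J)$, so the $J$-convergence criterion \eqref{eq:j_convergence} yields $\lim_{n\gg m}\norm{V_n(t)\Omega_n-J_{nm}V_m(t)\Omega_m}=0$. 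Hence the left-hand side of \eqref{eq:morstateest} vanishes, which is exactly the assertion that $V\d(t)\big(J_{\blob k}\pi_k(a_k)\Omega_k\big)$ is $J$-convergent.

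Finally I would upgrade convergence on this dense family to all of $\C(\H,J)$. Given $\psi\d\in\C(\H,J)$ and $\eps>0$, approximate $\psi\d$ in seminorm by a basic net $\phi\d=J_{\blob k}\pi_k(a_k)\Omega_k$ with $\seminorm{\psi\d-\phi\d}<\eps$, and estimate $\norm{V_n(t)\psi_n-J_{nm}V_m(t)\psi_m}$ by inserting $\phi\d$: the two error terms are bounded by $M\norm{\psi_n-\phi_n}$ and $M\norm{\phi_m-\psi_m}$, and the middle term $\norm{V_n(t)\phi_n-J_{nm}V_m(t)\phi_m}$ vanishes in the limit by the $J$-convergence of $V\d(t)\phi\d$ just established. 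Taking $\lim_{n\gg m}=\lim_m\limsup_n$ gives $\lim_{n\gg m}\norm{V_n(t)\psi_n-J_{nm}V_m(t)\psi_m}\le 2M\eps$, so the limit is $0$ and $V\d(t)\psi\d\in\C(\H,J)$. This is the exact analogue of the argument that \jt\ojt-convergence need only be tested on basic sequences, with uniform boundedness of $V\d(t)$ doing the work.

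The main obstacle is not any individual estimate — \cref{lem:strictcimp} supplies all of them — but the structural reduction: recognizing that the global $JJ$-convergence of $V\d(t)$ is governed by its behaviour on the cyclic orbit of the GNS vectors, and that the hypothesis ``$V\d(t)$ preserves the $J$-convergence of $\Omega\d$'' is exactly what tames the otherwise uncontrolled vector term $J_{nm}V_m(t)\Omega_m-V_n(t)\Omega_n$ appearing in \eqref{eq:morstateest}. One should also keep in mind that \eqref{eq:morstateest} rests on the connecting maps of $(\A,j)$ being \textup{*}-homomorphisms, via \eqref{eq:strictmorcond}, which is the ambient assumption throughout this part of the appendix.
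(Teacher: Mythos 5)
Your proof is correct and follows exactly the route the paper intends: the proposition is designed to be read off from estimate \eqref{eq:morstateest} of \cref{lem:strictcimp}, whose two right-hand terms are killed respectively by the \jt\jt-convergence of $\TT\d(t)$ tested on basic nets (using strictness $\j nm\j mk=\j nk$) and by the hypothesis that $V\d(t)\Omega\d$ is $J$-convergent, after which the passage from the seminorm-dense family $J_{\blob k}\pi_k(a_k)\Omega_k$ to all of $\C(\H,J)$ is the standard uniform-boundedness bootstrap. Your closing observation that \eqref{eq:morstateest} presupposes the connecting maps of $(\A,j)$ to be *-homomorphisms (via \eqref{eq:strictmorcond}) is also apt, since the proposition's statement leaves that hypothesis implicit.
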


\begin{rem}
\label{rem:strictcimp}
Prop.~\ref{prop:strictcimp} directly extends to general dynamical semigroups $\TT\d(t)$ if each $\omega_{n}$ is $\TT_{n}(t)$-invariant and $V_{n}(t)$ is induced by $\omega_{n}$ as in Rem.~\ref{rem:impdyn}.
\end{rem}

\section{Interchanging Lie-Trotter limits and inductive limits}\label{sec:trotter}

\begin{prop}
    Let $T\d(t)$ and $S\d(t)$ be nets of dynamical semigroups with nets of generators $A\d$ and $B\d$ that converge in the sense of \cref{thm:evolution}.
    Assume that for any $n$, the Trotter product converges strongly to a dynamical semigroup $U_n(t)$, i.e.,
    \begin{equation}\label{eq:trotter_n}
        \lim_{k\to\oo} \norm{ ([T_n(t/k)S_n(t/k)]^k - U_n(t))x_n } =0\quad \forall x_n\in E_n
    \end{equation}
    uniformly on compact time intervals.
    Assume that $\D = \dom{A\d}\cap \dom{B\d}$ is seminorm-dense.
    Consider the following statements
    \begin{enumerate}[(a)]
        \item\label{it:convergent_trotter_lim}
            $U\d(t)$ is convergent in the sense of \cref{thm:evolution},
        \item\label{it:trotter_seminorm}
            For all \jt-convergent $x\d$ and all  $t\ge0$,
            \begin{equation}\label{eq:trotter_seminorm}
                \lim_{k\to\oo}\ \seminorm[\big]{ \paren[\big]{[T\d(t/k)S\d(t/k)]^k - U\d(t)}x\d } =0.
            \end{equation}
    \end{enumerate}
    Then (a) $\Leftarrow$ (b). If $(\lambda-A\d-B\d)\D$ is also dense, the converse also holds, i.e., (a) and (b) become equivalent, and it follows that
    \begin{equation}\label{eq:trotter_at_lim}
        \lim_{k\to\oo}\ \norm[\big]{\paren[\big]{[T_\oo(t/k)S_\oo(t/k)]^k - U_\oo(t)}x_\oo} =0\quad \forall x_\oo\in E_\oo
    \end{equation}
    with uniform convergence on compact time intervals.
\end{prop}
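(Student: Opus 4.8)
The plan is to reduce both \textup{(b)} and \eqref{eq:trotter_at_lim} to a single statement on the limit space and then feed it into the classical Trotter--Chernoff machinery. First I would record the consequences of the standing hypotheses: applying \cref{thm:evolution} to the convergent nets $T\d$ and $S\d$ shows that $T_\oo(t)$ and $S_\oo(t)$ are strongly continuous contraction semigroups whose generators $A_\oo,B_\oo$ satisfy $A_\oo x_\oo=\jlim_n A_nx_n$ and $B_\oo x_\oo=\jlim_n B_nx_n$ for $x\d\in\D=\dom{A\d}\cap\dom{B\d}$. Writing $F_n(s)=T_n(s)S_n(s)$ and $F_\oo(s)=T_\oo(s)S_\oo(s)$, the products $F\d(t/k)^k$ are \jt\jt-convergent, since compositions of \jt\jt-convergent nets preserve convergence and intertwine with composition of the limit operators, and the limit operation is $F_\oo(t/k)^k$. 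Consequently, \emph{once $U\d(t)$ is known to be \jt\jt-convergent}, the seminorm in \eqref{eq:trotter_seminorm} equals $\norm{(F_\oo(t/k)^k-U_\oo(t))x_\oo}$, so that \textup{(b)} and \eqref{eq:trotter_at_lim} both amount to strong convergence $F_\oo(t/k)^k\to U_\oo(t)$ on $E_\oo$.

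For \textup{(b)}$\Rightarrow$\textup{(a)}, the nets $F\d(t/k)^kx\d$ are \jt-convergent and converge in seminorm to $U\d(t)x\d$; since $\C(E,j)$ is seminorm-closed by its definition \eqref{eq:convergent_nets}, the limit $U\d(t)x\d$ is again \jt-convergent, i.e.\ $U\d(t)$ is \jt\jt-convergent, and $U_\oo(t)$ is a one-parameter semigroup of contractions (the semigroup law passes from the $U_n$ to the limit). It then remains to obtain strong continuity of $U_\oo$. For $x\d\in\D$ the Duhamel bound $\norm{(F_n(s)-\1)x_n}\le s(\norm{A_nx_n}+\norm{B_nx_n})$ telescopes to $\norm{(F_n(t/k)^k-\1)x_n}\le t(\norm{A_nx_n}+\norm{B_nx_n})$ and hence, letting $k\to\oo$, to $\norm{U_n(t)x_n-x_n}\le t(\norm{A_nx_n}+\norm{B_nx_n})$ uniformly in $n$; taking $\limsup_n$ gives $\norm{U_\oo(t)x_\oo-x_\oo}\le t(\norm{A_\oo x_\oo}+\norm{B_\oo x_\oo})\to0$. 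Density of $\D_\oo$ (\cref{thm:density_thm}) together with contractivity of $U_\oo(t)$ then yields strong continuity on all of $E_\oo$, establishing the conditions of \cref{thm:evolution}, which is \textup{(a)}.

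For the converse, assume \textup{(a)} and the extra density of $(\lambda-A\d-B\d)\D$; then $U_\oo$ is a strongly continuous contraction semigroup with generator $C_\oo$, and the crux is to identify $C_\oo$ with the closure of $A_\oo+B_\oo$. \textbf{The main obstacle} is that this is a statement about the generator, i.e.\ about differentiating in $t$, and differentiation does not commute with $\jlim_n$ in any naive way. I would circumvent this by proving an \emph{exact integral identity at each finite level} first. Fixing $x\d\in\D$, the telescoping $F_n(t/k)^k-\1=\sum_{j<k}F_n(t/k)^j(F_n(t/k)-\1)$ combined with the second-order expansion $(F_n(s)-\1)x_n=s(A_n+B_n)x_n+\mathrm{err}_n(s)$, where $\norm{\mathrm{err}_n(s)}=o(s)$ for fixed $n$, turns the first sum into a Riemann sum that converges, by the uniform-on-compacts hypothesis \eqref{eq:trotter_n}, to $\int_0^t U_n(r)(A_n+B_n)x_n\,dr$, while the error sum is bounded by $k\,\norm{\mathrm{err}_n(t/k)}\to0$. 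Passing $k\to\oo$ yields, for every $n$, the identity $U_n(t)x_n-x_n=\int_0^t U_n(r)(A_n+B_n)x_n\,dr$.

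Now the interchange of limits becomes harmless: the integrand $U\d(r)(A\d+B\d)x\d$ is uniformly bounded and \jt-convergent for each $r$, with limit $U_\oo(r)(A_\oo+B_\oo)x_\oo$, so commuting the contraction $\jlim$ with the Bochner integral (exactly as in the proof of \cref{thm:evolution}) gives $U_\oo(t)x_\oo-x_\oo=\int_0^t U_\oo(r)(A_\oo+B_\oo)x_\oo\,dr$. Differentiating at $t=0$ in the genuine $C_0$-semigroup $U_\oo$ shows $\D_\oo\subseteq\dom{C_\oo}$ with $C_\oo x_\oo=(A_\oo+B_\oo)x_\oo$. The extra hypothesis transfers through \cref{thm:density_thm} to density of $(\lambda-C_\oo)\D_\oo=(\lambda-(A_\oo+B_\oo))\D_\oo$, so $\D_\oo$ is a core for $C_\oo$ and $C_\oo=\overline{(A_\oo+B_\oo)\restrictedto\D_\oo}$ with $\D_\oo\subseteq\dom{A_\oo}\cap\dom{B_\oo}$. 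Chernoff's product formula, equivalently Trotter's theorem \cite{engelnagel,kato2013perturbation}, then gives $F_\oo(t/k)^k\to U_\oo(t)$ strongly and uniformly on compact $t$-intervals. Translating back through the identity $\seminorm{x\d}=\norm{x_\oo}$ for \jt-limits yields \textup{(b)} and \eqref{eq:trotter_at_lim} with uniform convergence on compacts, completing the equivalence.
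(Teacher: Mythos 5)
Your proposal is correct and follows essentially the same route as the paper: (b)$\Rightarrow$(a) via seminorm-closedness of $\C(E,j)$ plus a Duhamel-type estimate on $\D$ for strong continuity, and (a)$\Rightarrow$(b) by showing the generator $C_\oo$ of $U_\oo$ restricts to $A_\oo+B_\oo$ on $\D_\oo$, using the extra density hypothesis to make $\D_\oo$ a core, and invoking the classical Trotter--Chernoff theorem on $E_\oo$. The only difference is cosmetic: where the paper simply cites Chernoff's theorem to get $C_n\supseteq A_n+B_n$ at each finite level, you re-derive this via the telescoping/Riemann-sum integral identity before passing it through $\jlim$.
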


Note that the rather complicated looking condition \eqref{eq:trotter_seminorm} follows if for all $m$ and $x_m$ the Trotter product converges uniformly in $n$ in the sense that
\begin{equation*}
    \lim_{k\to\oo}\ \norm[\big]{([T_n(t/k)S_n(t/k)]^k - U_n(t))\j nm x_m}=0\quad\text{uniform in $n$.}
\end{equation*}

In the context of finite-dimensional approximations of Hilbert spaces (which are inductive systems), this idea was recently used to prove a theorem on the validity of finite-dimensional approximations (e.g., by numerics) of infinite-dimensional Trotter problems in \cite{burgarth2022state}.

\begin{proof}
    \ref{it:trotter_seminorm} $\Rightarrow$ \ref{it:convergent_trotter_lim}: Clearly $[T\d(t/k)S\d(t/k)]^k x\d$ is \jt-convergence preserving for any $k$.
    Let $x\d\in\C(E,j)$.
    The assumption implies that $U\d(t)x\d$ can be approximated by the \jt-convergent nets $[T\d(t/k)S\d(t/k)]^kx\d$ in seminorm. That $\C(E,j)$ is seminorm-closed implies that $U\d(t)x\d$ is also \jt-convergent.
    For the strong continuity of $U_\oo(t)$, we check item \hyperref[it:semigroups']{(1$'$)} of \cref{thm:evolution} to the space $\D$ (this item is introduced in the proof implies item \ref{it:semigroups}.
    Since the space $\set{x\d\in \C(E,j)\given x_n\in \dom{C_n},\, \norm{C\d x\d}_\nets<\oo}$ contains $\D$ item \hyperref[it:semigroups']{(1$'$)} indeed holds.

    We now assume that $(\lambda-A\d-B\d)\D$ is seminorm dense and prove
    \ref{it:convergent_trotter_lim} $\Rightarrow$ \ref{it:trotter_seminorm}:
    Since we know that $U\d(t)$ is \jt\jt-convergent, we can directly proof \eqref{eq:trotter_at_lim}.
    The local uniformity in $t$ guarantees that $C_n$ is an extension of $A_n+B_n$, where $C_n$ is the generator of $U_n(t)$ \cite[Thm.~3.7]{chernoff}.
    Therefore, we have for all $x\d\in\D$ that $C\d x\d =A\d x\d+B\d x\d\in\C(E,j)$ and hence $\D\subset \dom{C\d}$.
    It follows that $\D_\oo \subset \dom{A_\oo}\cap \dom{B_\oo}$ but equality need not hold, to the best of our knowledge.
    For an element $x_\oo\in\dom{A_\oo}\cap\dom{B_\oo}$ to be in $\D_\oo$ requires that there is one net $x\d$ converging to $x_\oo$ such that $A\d x\d$ and $B\d x\d$ are \jts-convergent simultaneously.

    We now prove \cref{eq:trotter_at_lim}:
    Since $U\d(t)$ satisfies the conditions of \cref{thm:evolution}, we know that the limit semigroup $U_\oo(t)$ is generated by an operator $C_\oo$ and our assumption implies that $\D_\oo$ is a core for $C_\oo$.
    This is because $(\lambda-C_\oo)\D_\oo = (\lambda-A_\oo-B_\oo)\D_\oo$ which is dense.
    We can now apply the standard Trotter-Chernoff Theorem \cite[Ch.~II, Thm.~5.8]{engelnagel}, which shows that \eqref{eq:trotter_at_lim} holds.
\end{proof}

\printbibliography

\end{document}